\documentclass[epsfig,letter,11pt]{article}

\usepackage{epsfig}
\usepackage{plain}
\usepackage{setspace}
\usepackage[outer=1in,inner=1in,top=1in,bottom=1in]{geometry}
\usepackage{titlesec}

\usepackage{mathtools}
\usepackage{amssymb}
\usepackage{amsthm}
\usepackage{amsmath}
\usepackage{mathtools}
\usepackage{bm}
\usepackage{multicol}
\usepackage{pgfplots}
\usepackage{caption}
\usepackage{chngcntr}
\usepackage{tikz-network}
\usepackage{bbm}
\usepackage{hyperref}
\usepackage{verbatim}
\usepackage[toc,page]{appendix}
\usepackage{xcolor}
\usepackage{thm-restate}
\usepackage[capitalize]{cleveref}
\usepackage{algorithm}
\usepackage[indLines=true]{algpseudocodex}
\algnewcommand{\LeftComment}[1]{\State  \textcolor{gray}{ \(\triangleright\) \emph{#1}}}

\usepackage{multicol}

\usepackage{etoolbox,refcount}

\numberwithin{equation}{section}

\newcounter{countitems}
\newcounter{nextitemizecount}
\newcommand{\setupcountitems}{
	\stepcounter{nextitemizecount}
	\setcounter{countitems}{0}
	\preto\item{\stepcounter{countitems}}
}
\makeatletter
\newcommand{\computecountitems}{
	\edef\@currentlabel{\number\c@countitems}
	\label{countitems@\number\numexpr\value{nextitemizecount}-1\relax}
}
\newcommand{\nextitemizecount}{
	\getrefnumber{countitems@\number\c@nextitemizecount}
}
\newcommand{\previtemizecount}{
	\getrefnumber{countitems@\number\numexpr\value{nextitemizecount}-1\relax}
}
\makeatother

{\makeatletter
	\gdef\xxxmark{
		\expandafter\ifx\csname @mpargs\endcsname\relax
		\expandafter\ifx\csname @captype\endcsname\relax
		\marginpar{xxx}
		\else
		xxx 
		\fi
		\else
		xxx 
		\fi}
	\gdef\xxx{\@ifnextchar[\xxx@lab\xxx@nolab}
	\long\gdef\xxx@lab[#1]#2{{\bf \color{blue} [\xxxmark #2 ---{\sc #1}]}}
	\long\gdef\xxx@nolab#1{{\bf \color{blue} [\xxxmark #1]}}
}

\newcommand{\eps}{\epsilon}
\DeclareMathOperator{\ex}{\mathbb{E}}
\DeclareMathOperator{\unif}{UNIF}
\DeclareMathOperator{\vol}{vol}
\newcommand\vt[1]{\mathbf{#1}}
\newcommand{\nm}[1]{\left\|#1\right\|}

\newcommand{\otil}{\tilde{O}}

\newcommand{\unbor}[2]{\operatorname{bnd}_{#2} (#1)}
\newcommand{\wbor}[2]{\operatorname{bnd}_{#2}^w (#1)}

\newcommand{\unvolc}[2]{\operatorname{vol}_{#2}^\circ (#1)}
\newcommand{\wvolc}[2]{\operatorname{vol}_{#2}^{\circ \,w} (#1)}

\newcommand{\unspc}[2]{\Phi_{#2}^\circ (#1)}
\newcommand{\wspc}[2]{\Phi_{#2}^{\circ \,w} (#1)}

\newcommand{\wvol}[1]{\operatorname{vol}^{w} (#1)}

\newcommand{\uncut}[2]{{\partial}_{#2} #1}
\newcommand{\uncutg}[1]{{\partial} #1}
\newcommand{\wcut}[2]{\partial^w_{#2} #1}
\newcommand{\wcutg}[1]{\partial^w #1}

\newcommand{\iuncut}[2]{{\partial}_{#2}^{i} #1}
\newcommand{\iwcut}[2]{{\partial}_{#2}^{i \,w} #1}

\newcommand{\bsc}{\textsc{BalSparseCut}_{\alpha,\lambda}}
\newcommand{\ed}{\textsc{Decompose}}
\newcommand{\trim}{\textsc{Trim}}

\DeclareMathOperator{\ind}{\mathbbm{1}}

\renewcommand{\hat}{\widehat}
\renewcommand{\tilde}{\widetilde}

\newcommand{\card}[1]{|#1|}

\DeclareMathOperator*{\argmin}{argmin}

\DeclareMathOperator*{\poly}{poly}
\DeclareMathOperator*{\pylog}{polylog}

\DeclareMathOperator*{\ber}{Ber}

\DeclareMathOperator*{\low}{low}

\newcommand*{\supp}{\mathrm{supp}}

\DeclareMathOperator*{\good}{g}

\newcommand{\e}{\mathrm{e}}

\DeclareMathOperator*{\sw}{\textsc{bsc}}

\DeclareMathOperator*{\er}{ER}

\DeclareMathOperator*{\recover}{\textsc{recover}}

\newcommand*{\remove}{\mathrm{R}}

\newcommand{\I}{\mathcal{I}}
\newcommand{\T}{\mathcal{T}}

\newcommand*{\entropy}{\textup{\textsf{H}}}
\newcommand*{\info}{\textup{\textsf{I}}}

\newtheorem{mainthsimple}{Theorem}
\newtheorem{mainthactual}{Theorem}
\newtheorem{theorem}{Theorem}[section]
\newtheorem{definition}[theorem]{Definition}
\newtheorem{lemma}[theorem]{Lemma}
\newtheorem*{lemma*}{Lemma}
\newtheorem*{definition*}{Definition}

\newtheorem{claim}[theorem]{Claim}
\newtheorem{remark}{Remark}
\newtheorem{fact}[theorem]{Fact}

\newtheorem{obs}[theorem]{Observation}
\newtheorem{assump}{Assumption}
\newtheorem{reduct}[theorem]{Reduction}
\newtheorem*{reduct*}{Reduction}

\title{On the Streaming Complexity of Expander Decomposition}
\author{
	Yu Chen\\
	EPFL\\
	\and
	Michael Kapralov\\
	EPFL\\
	\and
	Mikhail Makarov\\
	EPFL\\
	\and
	Davide Mazzali\\
	EPFL\\}
\date{}

\begin{document}

\maketitle
\thispagestyle{empty}
\begin{abstract}
\noindent
In this paper we study the problem of finding $(\epsilon, \phi)$-expander decompositions of a graph in the streaming model, in particular for dynamic streams of edge insertions and deletions. The goal is to partition the vertex set so that every component induces a $\phi$-expander, while the number of inter-cluster edges is only an $\epsilon$ fraction of the total volume. It was recently shown that there exists a simple algorithm to construct a $(O(\phi \log n), \phi)$-expander decomposition of an $n$-vertex graph using \smash{$\widetilde{O}(n/\phi^2)$} bits of space [Filtser, Kapralov, Makarov, ITCS'23]. This result calls for understanding the extent to which a dependence in space on the sparsity parameter $\phi$ is inherent. We move towards answering this question on two fronts.

We prove that a $(O(\phi \log n), \phi)$-expander decomposition can be found using~\smash{$\widetilde{O}(n)$} space, for every~$\phi$.  At the core of our result is the first streaming algorithm for computing boundary-linked expander decompositions, a recently introduced strengthening of the classical notion [Goranci et al., SODA'21]. The key advantage is that a classical sparsifier [Fung et al., STOC'11], with size independent of $\phi$, preserves the cuts inside the clusters of a boundary-linked expander decomposition within a multiplicative error.

Notable algorithmic applications use sequences of expander decompositions, in particular one often repeatedly computes a decomposition of the subgraph induced by the inter-cluster edges (e.g., the seminal work of Spielman and Teng on spectral sparsifiers [Spielman, Teng, SIAM Journal of Computing 40(4)], or the recent maximum flow breakthrough [Chen et al., FOCS'22], among others). We prove that any streaming algorithm that computes a sequence of $(O(\phi \log n), \phi)$-expander decompositions requires \smash{${\widetilde{\Omega}}(n/\phi)$} bits of space, even in insertion only streams.
\end{abstract}

\pagestyle{empty}
\newpage
\tableofcontents
\setcounter{page}{0}
\newpage

\pagestyle{plain}

\section{Introduction}
Expander graphs are known to represent a class of easier instances for many problems. Therefore, breaking down the input into disjoint expanders can allow to conveniently solve the task on each of them separately, before combining the partial results into a global solution. This approach is enabled by $(\epsilon, \phi)$-expander decompositions (for short, $(\epsilon, \phi)$-ED). For an undirected graph $G=(V,E)$, this is a partition $\mathcal{U}$ of the vertex set $V$, such that there are at most $\epsilon |E|$ inter-cluster edges, while every cluster $U \in \mathcal{U}$ induces a $\phi$-expander. The list of successful applications of this framework is long, including Laplacian system solvers~\cite{stlapl}, deterministic algorithms for minimum cut~\cite{mincutjli}, graph and hyper-graph sparsification~\cite{stspectral, chugao, hgsparse,svapprox}, dynamic algorithms for cut and connectivity problems~\cite{boundlink,sublinearconn}, fast max flow algorithms~\cite{maxflow}, distributed triangle enumeration~\cite{triang}, polynomial time algorithms for semirandom planted CSPs~\cite{csps}, and many more.

We refer to $\epsilon$ and $\phi$ as the \emph{sparsity} parameters: the former controls how sparsely connected the clusters need to be, and the latter determines how expanding (i.e. non-sparse) the cuts within clusters are. The reason for using the same term for both is that they are in fact very closely related. One can show that any $n$-vertex graph has an $(\epsilon,\phi)$-ED with $\epsilon = O(\phi \log n)$. To see this, consider the following constructive argument: if the graph has no $\phi$-sparse cut then $\{V\}$ is a valid ED of~$G$, otherwise recurse on the two sides $(S,V \setminus S)$ of a $\phi$-sparse cut and union the results to get an ED for $G$. Every cluster in this decomposition is then an expander, and a charging argument allows to bound the number of inter-cluster edges by $O(\phi |E| \log n)$~\cite{offlineexpdec}. One can also observe that no better asymptotic trade-off between $\epsilon$ and $\phi$ is possible in general~\cite{clusteringhypercube}. Therefore, this sets the benchmark for algorithmic constructions of EDs.

At a high level, many ED algorithms follow the approach suggested by the existential argument. As a naive implementation would take exponential time, the crux often lies in efficient algorithms that either certify that a large portion of the input is an expander, or find a balanced sparse cut. This would result in a small depth recursion, thanks to balancedness, where each level requires little computational resources. There are several successful examples of this approach. In the sequential setting, a recent algorithm constructs a $(O(\phi \log^3 n), \phi)$-ED in $\otil(|E|)$ time~\cite{liexpdec}, based on the previous best algorithm which runs in $\otil(|E|/\phi)$ time~\cite{offlineexpdec}. There is also a deterministic counterpart, which outputs a $(\phi \cdot n^{o(1)}, \phi)$-ED in almost linear time~\cite{determexpdec}. For the CONGEST model of distributed computing, it is possible to obtain, for instance, a $(\phi^{1/\sqrt{\log n}} \cdot n^{o(1)}, \phi)$-ED in $n^{o(1)} / \phi$ rounds~\cite{triang}. It is also possible to maintain a $(\phi \cdot n^{o(1)},\phi)$-ED for a graph undergoing edge updates in $n^{o(1)}/\phi^2$ amortized update time~\cite{boundlink,dynexpdec}.

\subsection{Previous work}
In the streaming setting, the problem of finding EDs was open until the recent work of~\cite{streamexpdec}. They obtain a dynamic stream algorithm which outputs a $(O(\phi \log n),\phi)$-ED and takes $\otil(n/\phi^2)$ space. While being optimal in the quality of the decomposition, decoding the sketch to actually output an ED takes exponential time. The authors also give a polynomial time version: one can produce a $(\phi \cdot n^{o(1)}, \phi)$-ED using space $\otil(n/\phi^2)+n^{1+o(1)}/\phi^{1-o(1)}$, with a post-processing that takes $\poly(n)$ time (where the $o(1)$'s can be tuned, allowing for a quality-space trade-off).

These streaming algorithms also adopt the recursive approach based on finding balanced $\phi$-sparse cuts, but the streaming model poses a challenge that we now illustrate. Sparsification for graph streams has been extensively studied~\cite{ahnguha,linmeas,sparsdynamicstreams,focsspars,spectralspars}, so a natural attempt would consist of maintaining a cut sparsifier as the stream comes and later run the recursive partitioning on it. However, it must be noted that these cuts are to be found in the subgraphs induced by the two sides of a previously made cut. This is not a problem in a classical computational setting, but it actually constitutes the main obstacle for the streaming model: at sketching time, the algorithm does not know which subgraphs it will need to access. Unfortunately, it is impossible to preserve cut sizes in arbitrary subgraphs with multiplicative precision. The natural work-around is to introduce an additive error term: in~\cite{streamexpdec}, the authors introduce the concept of {power-cut} sparsifiers. For any given partition $\mathcal{U}$ of $V$, these sparsifiers preserve the cuts $|E(S,U\setminus S)|$ of each cluster $U \in \mathcal{U}$ in the partition up to an error of $\delta \cdot |E(S,U\setminus S)| + \psi \cdot \vol(S)$. They further show that maintaining $\otil(n / \delta \psi)$ random linear measurements of the incidence matrix of the input graph is enough to obtain such sparsifiers. One can see that setting $\delta \ll 1$ and~$\psi \approx \phi$ preserves the sparsity of cuts to within an additive error of roughly $\phi$, using $\otil(n/\phi)$ linear measurements. This is enough to find a balanced $\phi$-sparse cut in every subgraph induced by a given partition of the vertex set. There is another caveat, though: power-cut sparsifiers give a high probability guarantee for any fixed partition, but we cannot expect them to work for all partitions simultaneously. Therefore, in order not to use the sparsifiers adaptively, we need one power-cut sparsifier for every recursion level. The authors show that the depth of the procedure cannot exceed $\otil(1/\epsilon) = \otil(1/\phi)$, thus obtaining the space complexity stated before. A few more details are involved in the polynomial time algorithm, but the underlying framework is the same.

\subsection{Our contribution}
The work of~\cite{streamexpdec} initiated the study of expander decompositions in the streaming setting, and consequently raised the question of whether a dependence in space on the sparsity parameter~$\phi$ is inherent. In this paper, we move towards settling the streaming complexity of expander decompositions by attacking the problem on two fronts: (1) we give a nearly optimal algorithm for ``one-level'' expander decomposition that avoids the sparsity dependence, and (2) we show that computing a ``repeated'' expander decomposition, commonly used in applications, cannot avoid such dependence.
\subsubsection*{Upper bound}
We give an $\otil(n)$ space algorithm for computing a $(O(\phi \log n), \phi)$-ED in dynamic streams. Specifically, we show that a ``universal'' sketch consisting of $\otil(n)$ random linear measurements of the incidence matrix can be decoded into a $(O(\phi \log n), \phi)$-ED for any $\phi$: the sketch is independent of the sparsity~$\phi$.

\begin{mainthsimple}[ED algorithm -- exponential time decoding]
	\label{cor:main}
	Let $G=(V,E)$ be a graph given in a dynamic stream. Then, there is an algorithm that maintains a linear sketch of $G$ in $\otil(n)$ space. For any $\phi \in (0,1)$, the algorithm decodes the sketch to compute a $(O(\phi \log n), \phi)$-ED of $G$ with high probability, in $\otil(n)$ space and $2^{O(n)}$ time.
\end{mainthsimple}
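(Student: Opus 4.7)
The plan is to decouple the sketching from the $\phi$-dependent task: maintain a single, $\phi$-oblivious cut sparsifier of $G$ at streaming time, and at decoding time use it to recover a boundary-linked expander decomposition (BL-ED), which is in particular an $(O(\phi \log n), \phi)$-ED.

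\emph{Sketching phase.} I would maintain $\otil(n)$ random linear measurements of the signed edge-incidence matrix of $G$ from which a classical Fung et al.\ $(1\pm\delta)$-cut sparsifier $H$ of $G$ with $\otil(n)$ edges can be decoded, for a small constant $\delta$. Linear-sketch constructions of cut sparsifiers are standard in the dynamic-streaming literature (combine $\ell_0$-sampling with the Fung et al.\ sampling scheme), and crucially both the sketch and $H$ have size independent of $\phi$.

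\emph{Decoding phase.} Given $\phi$, I would search in $2^{O(n)}$ time over all partitions $\mathcal{U}$ of $V$, testing each candidate for being a $\beta$-boundary-linked $(O(\phi \log n), \phi)$-ED using only cut queries on the explicit sparsifier $H$ in place of $G$, and return the first valid partition. Existence of BL-EDs with the required parameters is known from Goranci et al., and the paper's core structural result is supposed to guarantee that a classical cut sparsifier preserves cuts inside clusters of a BL-ED within multiplicative error: for any $\beta$-BL cluster $U$, the cut $|E_{G[U]}(S, U \setminus S)|$ is $(1 \pm O(\delta))$-approximated by the corresponding cut read off from $H[U]$. Hence verifying the BL-ED conditions through $H$ gives the correct answer up to constants, the search succeeds, and a $\beta$-BL ED is in particular an $(O(\phi \log n), \phi)$-ED.

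\emph{Main obstacle.} The substantive work is the structural lemma that a Fung et al.\ sparsifier preserves cuts inside clusters of a BL-ED within multiplicative error. Classical sparsifiers fail on arbitrary induced subgraphs because edge connectivities in $G[U]$ can differ dramatically from those in $G$. The point of boundary-linkedness is that each cluster's boundary is well-spread relative to vertex degrees, so strong connectivities inside $G[U]$ are comparable to those in $G$, making the Fung et al.\ sampling rates nearly valid for $G[U]$ as well. Turning this intuition into a proof, and resolving the circular-looking verification step of certifying a BL-ED of $G$ from $H$ alone (e.g.\ by first certifying a BL-ED of $H$ and then transferring the guarantee to $G$ via the structural lemma, absorbing small parameter losses into the constants), is the heart of the argument.
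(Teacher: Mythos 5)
Your starting point --- a $\phi$-oblivious Fung-et-al.-style linear sketch plus a decoding that uses boundary-linkedness to turn the sparsifier's error into a multiplicative one --- matches the paper's, but your decoding step has a genuine gap. The sparsification guarantee that is actually provable from an $\otil(n)$-size sketch (the paper's \cref{lem:distrcutstreamadd}) is: for each \emph{fixed} cluster $U$, with high probability every $S \subseteq U$ satisfies $|\wcut{S}{U} - \uncut{S}{U}| \le \delta\,\uncutg{S}$. This is a per-cluster, with-high-probability statement, and it cannot be upgraded to ``for all clusters simultaneously'': once the sample is fixed, $U$ can be chosen as a function of which edges were kept. For instance, in a clique $K_{d+1}$ all connectivities equal $d$, so each edge is kept with probability $\approx \delta^{-2}\pylog(n)/d$; letting $U$ consist of a vertex $v$ together with exactly those neighbours whose edge to $v$ was \emph{not} sampled gives $\uncut{\{v\}}{U} = \Theta(d)$ while $\wcut{\{v\}}{U} = 0$, far beyond the allowed $\delta\,\uncutg{\{v\}} = \delta d$, even though $G[U]$ is a (boundary-linked) expander. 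Your decoder enumerates $2^{O(n)}$ partitions and accepts the first one that looks like a BL-ED \emph{on $H$}, so the clusters it tests are functions of $H$ --- exactly the regime where the guarantee fails --- and a union bound over exponentially many candidate clusters is hopeless. Completeness is not the issue (a true BL-ED of $G$ has at most $n$ clusters fixed independently of $H$); soundness is: a partition whose clusters are not $\phi$-expanders in $G$ can pass the test on $H$, e.g.\ when a few high-connectivity, hence high-weight, sampled edges cross a cut that is sparse in $G[U]^\circ$. Note also that your claimed structural lemma only concerns clusters that genuinely are BL expanders in $G$; rejecting impostor partitions needs the converse direction (the paper's \cref{lem:iffsparsity}), which again holds only for a fixed cluster, and your proposed mechanism (connectivities inside $G[U]$ comparable to those in $G$) is not how the paper proves it: the paper establishes an \emph{additive} error $\delta\,\uncutg{S}$ via an additive--multiplicative Chernoff bound and global cut counting, and only afterwards uses boundary-linkedness, which forces $\uncut{S}{U} \ge \widetilde{\Omega}(b)\,\uncutg{S}$, to make the error multiplicative.

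Because of this adaptivity barrier the paper never searches over partitions. It runs the recursive balanced-sparse-cut framework (\cref{alg:expdec,alg:trim}), in which every cluster queried at a given recursion level is determined by the randomness of previous levels only, and it maintains a fresh, independent cluster sparsifier for every level (and every trimming iteration), so each query falls under the per-fixed-cluster guarantee via the proxying lemma (\cref{lem:reduction}). This is also why bounding the recursion depth by $\otil(1)$ --- balanced cuts plus the trimming step adapted from~\cite{offlineexpdec} --- is essential: the number of sparsifiers, hence the space, scales with the depth. The $2^{O(n)}$ time enters only through the brute-force balanced-sparse-cut subroutine run on each (small) sparsifier, not through enumerating partitions. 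To salvage your route you would either have to prove a simultaneous, for-all-$U$ cluster-sparsifier property (ruled out by the example above) or restructure the decoder so that every queried cluster is independent of the sparsifier answering the query --- which is essentially the paper's algorithm.
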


\noindent
We note that at least $\Omega(n\log n)$ space is needed for any small enough $\phi$: for example, if the input graph is a matching of size $n/10$, say, its ED gives a $1-O(\phi \log n)$ fraction of the matching edges. 

The decoding time of our sketch can be made polynomial, at the expense of some loss in the quality of the expander decomposition (similarly to~\cite{streamexpdec}), but keeping the space independent of the sparsity $\phi$.

\begin{mainthsimple}[ED algorithm -- polynomial time decoding]
	\label{cor:mainpoly}
	Let $G=(V,E)$ be a graph given in a dynamic stream. Then, there is an algorithm that maintains a linear sketch of $G$ in $n^{1+o(1)}$ space. For any $\phi \in (0,1)$, the algorithm decodes the sketch to compute a $(\phi \cdot n^{o(1)}, \phi)$-ED of $G$ with high probability, in $n^{1+o(1)}$ space and $\poly(n)$ time.
\end{mainthsimple}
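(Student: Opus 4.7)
The plan is to reuse the framework underlying Theorem~\ref{cor:main}, namely: (i) store a classical sparsifier (whose size is independent of~$\phi$) together with a ``universal'' sketch that enables recovery of a boundary-linked expander decomposition, and (ii) decode by running the recursive sparse-cut partitioning on the sparsifier. The only change from the previous theorem is that the decoding step must run in polynomial time, so we substitute the exact (exponential-time) sparse cut oracle by an approximate poly-time one, and we pay for this in the final quality of the decomposition via the standard $n^{o(1)}$ loss.

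First, I would maintain essentially the same linear sketch as in Theorem~\ref{cor:main}, but scaled so that the classical cut sparsifier has $n^{1+o(1)}$ edges and approximates every cut to within a $(1 \pm 1/n^{o(1)})$ multiplicative factor. The additional slack is needed because the downstream sparse-cut routine is only approximate, so the sparsifier must be sharp enough that its error does not dominate the $n^{o(1)}$-factor gap of the oracle. This determines the $n^{1+o(1)}$ space bound.

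Second, the decoding follows the same recursive balanced-sparse-cut template used in the proof of Theorem~\ref{cor:main}. At each recursive call on a cluster $U$, we invoke the sparse-cut routine on the induced classical sparsifier: if $U$ is not a $\phi$-expander, the routine returns a balanced cut of sparsity at most $\phi\cdot n^{o(1)}$, otherwise it certifies expansion up to a factor of $n^{o(1)}$. Concretely we plug in the almost-linear-time deterministic sparse cut algorithm of~\cite{determexpdec} (or the trimming-based variant used in~\cite{boundlink,liexpdec}), which gives the required $\phi\cdot n^{o(1)}$ approximation. The recursion depth is $n^{o(1)}$, and since each level runs in $\poly(n)$ time on an $n^{1+o(1)}$-edge sparsifier, the total decoding time is $\poly(n)$.

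The main obstacle will be arguing that the boundary-linkedness property is preserved under the approximate oracle and that the classical sparsifier continues to preserve cuts multiplicatively inside the clusters produced by it. The key point is that the strengthened boundary-linked guarantee of~\cite{boundlink} is robust to $n^{o(1)}$-factor approximations in the sparse-cut primitive, so the same cut-preservation proof used for Theorem~\ref{cor:main} goes through with all relevant constants replaced by $n^{o(1)}$ factors. Finally, a routine charging argument over the $n^{o(1)}$-depth recursion tree bounds the total inter-cluster edges by $\phi\cdot n^{o(1)}\cdot|E|$, yielding the claimed $(\phi \cdot n^{o(1)}, \phi)$-ED.
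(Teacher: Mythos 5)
Your overall route is the same as the paper's: keep the boundary-linked decomposition framework and the level-indexed cluster sparsifiers from \cref{cor:main}, and replace the brute-force balanced sparse cut subroutine by a polynomial-time approximate one, paying for the approximation in decomposition quality and sketch size. However, the crucial quantitative step is not closed, and as stated your parameter accounting would not give the theorem.

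The gap is in where the $n^{o(1)}$ loss comes from. You attribute it directly to the approximation ratio of the cut oracle and propose to plug in an $n^{o(1)}$-approximation (e.g.\ the algorithm of~\cite{determexpdec}), asserting that ``all relevant constants get replaced by $n^{o(1)}$ factors.'' In this framework the loss is not a single factor of the oracle's ratio: the trimming procedure (\cref{alg:trim}) relaxes the sparsity threshold by the approximation factor $\alpha$ at each of its $k+1$ outer iterations, and the sparsifier error parameter must shrink by the same factor at each iteration, so the final expansion guarantee degrades to $\phi/\alpha^{\Theta(k)}$ and the sketch grows to $\otil(n/b^3)\cdot\alpha^{O(k)}$, with $k \approx \log n/\log(1/b)$ (\cref{lem:tech}). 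The paper therefore uses a \emph{polylogarithmic} $(O(\log^3 n),O(1))$-approximate balanced sparse cut algorithm (\cref{th:fastbalsparse}, which also requires a reduction to handle weighted self-loops, a point you do not address) and balances $b = 2^{-\sqrt{\log n}}$ so that $\alpha^{O(k)} = \log^{O(\log n/\log\frac{1}{b})} n = n^{o(1)}$ in both quality and space. If instead the oracle itself has $\alpha = n^{o(1)}$, then $\alpha^{\Theta(k)}$ is polynomially large unless $k=O(1)$, which forces $b = n^{-\Omega(1)}$ and blows the sketch up to $n^{1+\Omega(1)}$; so your proposal's ``standard $n^{o(1)}$ loss'' hides exactly the compounding that the proof must control. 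Two smaller inaccuracies: the recursion depth is $O(\log n)$ (that it is independent of $\phi$ is precisely what the trimming adaptation buys), and the space bottleneck is the collection of independent, progressively finer cluster sparsifiers per level and per trim iteration rather than a single $(1\pm 1/n^{o(1)})$ sparsifier; and to cover truly every $\phi\in(0,1)$ one must fall back to the algorithm of~\cite{streamexpdec} for $\phi$ above the threshold $b/\log^{C\log n/\log\frac{1}{b}}n$ where the new construction applies.
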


\noindent
In this case we are off by subpolynomial factors in both quality and space complexity as compared to the optimal ones. The actual theorem that we prove allows one to trade-off the loss in quality and increase in space.

\subsubsection*{Lower bound}
Most algorithmic applications of EDs, including the ones mentioned above, do not use just one ED of the input graph. Rather, they use an ED sequence obtained by repeatedly computing an ED of the inter-cluster edges from the previous level. This can be done in two natural ways: by contracting the clusters of an ED (we call this variant CED, for “contraction”), or by removing the intra-cluster edges without changing the vertex set (we call this variant RED, for “removal”). These approaches lead to different results and are used for different applications (e.g.,~\cite{boundlink,mincutjli} contract the clusters, and~\cite{stspectral, maxflow} recurse on inter-cluster edges). In the sequential setting, both CEDs and REDs can be obtained straightforwardly given an ED algorithm. However, this is not so obvious in the streaming model.

One the one hand, one should be able get a sparsity-independent algorithm for computing CEDs in dynamic streams via \cref{cor:main} or \cref{cor:mainpoly}: observe that contracting the vertices of a sparsifier of the input graph $G$  gives a sparsifier of the graph obtained by contracting vertices in~$G$, so the idea would be to maintain an independent copy of our algorithm for each of the $O(\log n)$ levels and contracting vertices in the sparsifier based on the decomposition of the previous level. On the other hand, we show a space lower bound for computing REDs, even in insertion only streams, showing that a dependence on $1/\phi$ is necessary.

\begin{mainthsimple}[RED lower bound]
	\label{th:lbsimple}
	Let $\epsilon,\phi \in (0,1)$ such that $1/n \ll \phi \ll 1/\pylog n$ and $ \epsilon =\otil( \phi)$. Any streaming algorithm that with constant probability computes at least two levels of an $(\epsilon,\phi)$-RED requires \smash{$\widetilde{\Omega}(n/\phi)$} bits of space.
\end{mainthsimple}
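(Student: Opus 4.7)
The plan is to prove the lower bound by a reduction from the one-way two-party Augmented Indexing problem, whose randomized one-way communication complexity is $\Omega(N)$ for strings of length $N$. Setting $N = \widetilde{\Theta}(n/\phi)$, any streaming algorithm that uses $s$ bits and outputs two levels of an $(\epsilon,\phi)$-RED can be turned into a one-way protocol of cost $s$: Alice runs the algorithm on a stream prefix encoding $x \in \{0,1\}^N$, ships the state of the algorithm to Bob, and Bob continues the stream with a ``probe'' that depends on his index $i$ and the known suffix $x_{i+1},\ldots,x_N$; from the RED output he then reads off $x_i$. Hence it suffices to design this encoding/probe pair.

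For the graph construction, I would take a base graph $G_0$ consisting of $k = \Theta(1/\phi)$ vertex-disjoint $\phi$-expanders $H_1,\ldots,H_k$ on $\Theta(\phi n)$ vertices each, dense enough that $|E(G_0)|$ is large, so that the cross-edge budget $\epsilon |E| = \widetilde{\Theta}(\phi|E|)$ can accommodate enough ``candidate edges'' to encode $N$ bits. Alice's bits $x_j$ are then encoded by inserting a designated cross-cluster edge $e_j$ exactly when $x_j=1$. With parameters set so that $\{H_1,\ldots,H_k\}$ remains the canonical first-level ED regardless of $x$, the first-level inter-cluster graph $E_1$ is precisely $\{e_j : x_j=1\}$ together with Bob's additions. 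Bob's probe then consists of two parts: first, a \emph{symmetrization gadget} that, leveraging his knowledge of $x_{i+1},\ldots,x_N$, neutralizes the effect of those bits on the second-level ED; second, a \emph{localized gadget} attached to the endpoints of $e_i$ that flips a detectable macroscopic feature of the second-level RED partition (for instance whether two marked vertices end up in the same cluster) depending on whether $e_i$ is present.

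The main technical obstacle is this third step: realizing the symmetrization and the localized gadget while staying within the sparsity budget $\epsilon |E|$ of first-level inter-cluster edges, and making the induced change in the second-level ED macroscopic enough to be reliably read off. Since the second level operates on a very sparse graph (with at most $\epsilon |E|$ edges) whose ED is typically made of small clusters and singletons, one has to carefully amplify the effect of the single edge $e_i$; a natural way to do so is to engineer the construction so that a fixed small expander appears inside $(V,E_1)$ iff $x_i=1$, producing a qualitatively different second-level partition in the two cases. A secondary obstacle is respecting the insertion-only restriction of the theorem, which rules out cancellation gadgets relying on edge deletions; this can be handled by pre-coding the known bits on both Alice's and Bob's sides so that their additions combine into the intended final topology. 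Finally, one has to invoke Yao's minimax principle to translate the constant-probability correctness of the streaming algorithm into a valid protocol for Augmented Indexing, whose communication lower bound then yields $s = \widetilde{\Omega}(n/\phi)$ as desired.
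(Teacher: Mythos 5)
There is a genuine gap, and it is exactly at the step you flag as ``the main technical obstacle'': nothing in your construction \emph{forces} a valid two-level RED to reveal the single probed bit $x_i$, and in fact it provably cannot. A streaming algorithm only has to output \emph{some} valid RED, not the canonical one, and validity is extremely slack in your parameter regime. With $N=\widetilde{\Theta}(n/\phi)$ candidate edges and a typical (say uniform) $x$, the first-level inter-cluster graph has $|E\setminus\mathcal{U}_1|=\widetilde{\Theta}(n/\phi)$ edges, so the second-level crossing budget is $\epsilon\,|E\setminus\mathcal{U}_1|=\widetilde{\Omega}(n)$. The algorithm can therefore answer Bob's probe with a partition in which the endpoints of $e_i$ (and the gadget vertices) are second-level singletons: this output is valid whether or not $e_i$ is present, since toggling one edge changes the edge counts and crossing counts at both levels by $\pm 1$, which the budgets absorb with huge slack (and no local gadget can change this, because whatever Bob attaches, the presence of $e_i$ still perturbs the validity constraints by only one edge). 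Hence the same output is consistent with $x_i=0$ and $x_i=1$, Bob cannot decode, and the Augmented-Indexing reduction collapses. A secondary problem is your assumption that the first-level inter-cluster edge set is exactly $\{e_j:x_j=1\}$ plus Bob's additions: the algorithm may isolate vertices or absorb an endpoint of a candidate edge into the neighbouring cluster at a cost of only $O(d)$ crossing edges, so even the level-one structure is not pinned down without a forcing mechanism.

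The paper's proof supplies precisely such a forcing mechanism and, tellingly, decodes \emph{in aggregate} rather than one designated bit at a time. The hard instance consists of $n/m$ random near-regular expanders $S_i$ plus a fixed expander $T$, with one hidden ``important'' vertex per $S_i$ attached to $T$ by $dm/2$ edges; parameters satisfy $m\gg 1/\phi$, $m\ll 1/\epsilon^2$, $d\ll 1/\epsilon$. Then (i) each important vertex is dragged into the giant $T$-cluster while the rest of its block cannot join it (that cut would be $\phi$-sparse since $m\gg 1/\phi$), so at least a $4/5$ fraction of the important edges cross at level one; (ii) the level-two budget $\approx\epsilon^2 dn$ is much smaller than the $\approx dn/m$ important edges (since $m\ll 1/\epsilon^2$), so most of them must sit \emph{inside} level-two clusters, and since $d\ll1/\epsilon$ the non-isolated level-two vertices number only $O(\epsilon dn)\ll n$. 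Bob, who holds the $S$--$T$ edges and hence the identity of the important vertices, reruns the algorithm from Alice's state for all $m/2$ possible hidden indices and thereby recovers a constant fraction of \emph{all} of Alice's $\Theta(dn)$ random edges from a small candidate set; a direct entropy/counting argument (not Augmented Indexing) for this $\textsc{recover}$ problem then gives the $\Omega(dn)=\Omega(n/\epsilon)=\widetilde{\Omega}(n/\phi)$ bound. If you want to salvage your plan, you would have to replace per-index decoding by this kind of aggregate recovery, together with a topological forcing argument and the three parameter separations above; as written, the proposal does not go through.
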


\noindent
The result seems to challenge the intuition that EDs become weaker as $\phi$ and $\epsilon$ decrease (note that when $\phi$ is, say, $1/n^2$, an ED can simply consist of connected components). The questions of~(1)~whether this bound can be improved,~(2)~how it scales with the number of levels of RED we compute, and~(3)~whether there are algorithms matching such bounds, remain open.

\subsection{Basic notation}
\label{subsec:basic}

\paragraph{Graph streaming.} In this paper, we will be mostly working in the dynamic graph streaming model, where we know the vertex set $V=[n]$, and we receive a stream of insertions and deletions for undirected edges over $V$. In insertion-only graph streams, the only difference is that previously inserted edges cannot be deleted. At the end of the stream, the graph $G=(V,E)$ consists of the edges that have been inserted and not deleted, and we say that $G$ is given in a (dynamic) stream. When we consider a graph given in a (dynamic) stream, we are implicitly assuming it to have $n$ vertices, without introducing the parameter $n$ explicitly. Also, when we refer to $G$ and $n$ without reintroducing them we are implicitly considering the graph resulting from the input stream and its number of vertices.

A powerful tool for dynamic graph streams is linear sketching, introduced in the seminal work of Ahn, Guha, and McGregor~\cite{linmeas}. The idea is to left-multiply the $\binom{n}{2} \times n$ incidence matrix of the graph by a random $k \times \binom{n}{2}$ matrix for $k \ll \binom{n}{2}$. Since the sketch consists of linear measurements, it automatically handles the case of dynamic streams. We will not be using sketching techniques directly, but rather employ existing algorithms that do. In this paper we restrain ourselves to unweighted edge streams. One may study the same problem in general turnstile graph streams~\cite{weightedlinearspars}, but we do not do this here.

\paragraph{Cuts, volumes, expanders.} Given an unweighted graph $G=(V, E)$, possibly with multiple self-loops on the vertices, we will operate with weighted graphs that approximate $G$ in an appropriate sense. We write $G'=(V',E', w)$ to denote a weighted graph, possibly with multiple weighted self-loops on the vertices. We describe next some notation for such weighted graph $G'$. The same notation carries over to the unweighted graph $G$ by implicitly setting $w$ to assign a weight of $1$ to all edges and self-loops.

For any $A,B \subseteq V'$ we denote by $E'(A,B)$ the edges in $E'$ with one endpoint in $A$ and one in $B$, and by $w(A,B)$ the total weight of edges in $E'(A,B)$. The volume of a cut $S \subseteq V'$ is the sum of the (weighted) degrees, including self-loops, of its vertices. We denote it by $\vol_{G'}(S)$. The sparsity of a cut $\emptyset \neq S \subsetneq V'$ is defined as
\begin{equation*}
	\Phi_{G'}(S) = \frac{w(S,V' \setminus S)}{\min\{\vol_{G'}(S), \vol_{G'}(V' \setminus S)\}} \, .
\end{equation*}
For $\psi \in (0,1)$, we make a distinction between cuts having sparsity less than $\psi$, which we call $\psi$-sparse, and cuts having sparsity at least $\psi$, which we call $\psi$-expanding. The sparsity of $G'$ is the defined as
\begin{equation*}
	\Phi_{G'} = \min_{\emptyset \neq S \subsetneq V'} \Phi_{G'}(S) \, ,
\end{equation*}
and we call $G'$ a $\psi$-expander if all its cuts are $\psi$-expanding, i.e. $\Phi_{G'} \ge \psi$.

For a pair of vertices $e=\{u,v\} \in \binom{V'}{2}$, we denote by $\lambda_{e}(G')$ the edge connectivity of $e$ in $G'$, i.e.
\begin{equation*}
	\lambda_{e}(G') = \min_{\emptyset \neq S \subsetneq V'\, : u \in S, v \in V' \setminus S} w(S,V'\setminus S) \, .
\end{equation*}

\paragraph{Expander decomposition.} As we will treat expander decompositions for the input graph $G$ only, we conveniently use the following additional notation. For a cluster $U \subseteq V$, i.e. a subset of the vertices, and a cut $S \subseteq U$, which is also a subset of the vertices, we denote the number of edges crossing $S$ in $U$ by $\uncut{S}{U}$, i.e. $\uncut{S}{U}=|E(S,U \setminus S)|$. This is the \emph{local} cut of $S$ in $U$. If $U = V$, we use a shorthand notation $\uncutg{S} \coloneqq \uncut{S}{V}$. We call such a cut the \emph{global} cut of $S$, since for $S \subseteq U$ we will be interested in both $\uncut{S}{U}$ and $\uncutg{S}$. Moreover, we drop the subscript from the volume and simply write $\vol(\cdot)$ instead of $\vol_G(\cdot)$. Then, EDs can be defined as follow.

\begin{definition}[Expander decomposition]
	\label{def:expdecclassic}
	Let $G=(V,E)$ and let $\epsilon, \phi \in (0,1)$. A partition $\mathcal{U}$ of~$V$ is an $(\epsilon, \phi)$-expander decomposition (for short, $(\epsilon, \phi)$-ED) of $G$ if
	\begin{enumerate}
		\item \label{property:crossingclassic} $\frac{1}{2}\sum_{U \in \mathcal{U}} \uncutg{U} \le \epsilon |E|$, and
		\item \label{property:expanderclassic} for every $U \in \mathcal{U}$, one has that $G[U]$ is a $\phi$-expander.
	\end{enumerate}
\end{definition}

\section{Technical overview}
In this section, we outline our techniques, and give intuition for how to prove our results.
\subsection{Sparsity-independent one-level expander decomposition}
\label{sec:techoverviewub}
Given a graph $G=(V,E)$ in a  dynamic stream, and a parameter $\phi \in (0,1)$, we consider the problem of computing an \smash{$(\epsilon, \phi)$-ED} of $G$ for \smash{$\epsilon = O(\phi   \log n)$}. We show that one can do this in \smash{$\otil(n)$} bits of space, without any dependence on $\phi$.
In this section, we sketch our approach.

Let us begin by recalling the standard recursive framework for constructing expander decompositions~\cite{kvv,Trevisan08,stspectral}, concisely summarized in \cref{alg:metaed}. For any parameter $\phi \le 10^{-1}/\log n$, this algorithm produces a $(O(\phi \log n), \phi)$-ED by recursively partitioning $G$ along $\phi$-sparse cuts until no more such cuts are found.

\begin{algorithm}[H]
	\caption{\ed: recursive procedure for computing a $(O(\phi \log n), \phi)$-ED of $G$}
	\label{alg:metaed}
	
	\begin{algorithmic}[1]
		\LeftComment{$\phi \in (0,1)$ is the sparsity parameter}
		\Procedure{$\ed(G)$}{} \Comment{$G=(V,E)$ is the input graph}
		
		\If{$G$ is a $\phi$-expander} \Return $\{V\}$
		\Else
		\State $S \gets $ a $\phi$-sparse cut of $G$ \label{eq:sparsecut}
		\State \Return $\ed(G[S]) \cup \ed(G[V \setminus S])$
		\EndIf
		
		\EndProcedure
	\end{algorithmic}
\end{algorithm}
\noindent
Many algorithmic constructions of EDs are essentially efficient implementations of \cref{alg:metaed}. Adapting \cref{alg:metaed} to dynamic streams comes with its own set of challenges.
When the input is given in a dynamic stream, one can only afford to store a limited amount of information about the input graph. Since \cref{alg:metaed} only needs to measure the sparsity of cuts, it seems enough to have access to cut sizes and volumes. Both these quantities are preserved by cut sparsifiers. According to the classical definition~\cite{benczurkarger}, a $\delta$-cut sparsifier is a weighted subgraph $H=(V,E',w)$ of the input $G=(V,E)$, where for every cut $S \subseteq V$ one has
\begin{equation*}
	(1-\delta)\cdot \uncutg{S} \le \wcutg{S}\le (1+\delta) \cdot \uncutg{S} \, .
\end{equation*}
It is known that such a sparsifier can be constructed in dynamic streams using $\otil(n/\delta^2)$ bits of space~\cite{sparsdynamicstreams}. With the same space requirement we can also measure the sparsity of cuts up to a $(1\pm \delta)$ multiplicative error. Having this in mind, it is natural to consider the following algorithm: first, read the stream and construct a cut sparsifier, then run \cref{alg:metaed} on it and output the resulting clustering as the expander decomposition. However, this approach does not work as is. As it turns out, more information is needed about the graph than what is captured by the regular notion of a cut sparsifier.

The problem with this approach becomes immediately apparent when one considers the second recursion level. As the process recurses on the two sides of a sparse cut $S$, it will repeat the procedure on the subgraphs $G[S]$ and $G[V \setminus S]$. Unfortunately, the cut preservation property of the sparsifier does not carry over to those subgraphs, making it inadequate for estimating the sparsity of the cuts in those graphs. 
In fact, the notion of expander decomposition itself already operates on the subgraphs, as it is required that each cut in each cluster of the decomposition is expanding.

\subsubsection*{Testing expansion of subgraphs}
The reasoning from above gives rise to the following sketching problem: produce a sparsifier that can be used to check that any subgraph is a $\phi$-expander. To solve it, the authors of~\cite{streamexpdec} introduced the concept of a $(\delta,\psi)$-power-cut sparsifier. Its property is that, for any cluster $U \subseteq V$, with high probability all cuts $S \subseteq U$ are preserved within an additive-multiplicative error.
Recall that $\uncut{S}{U}$ is equal to the size of the cut $S$ inside the induced subgraph $G[U]$.
When talking about sparsifiers in this section, we will slightly abuse notation by assuming that there is some instance $H=(V,E',w)$ of it and denote by $\wcut{S}{U}$ the size of the cut $S$ in the subgraph $H[U]$ of this sparsifier~$H$. Then we can write the guarantee of a $(\delta, \psi)$-power-cut sparsifier as follows\footnote{A similar sparsifier construction was proposed by~\cite{sublinhierarchical}. Their construction has an additive error of $\psi |S|$, so the dependence is on the number of vertices instead of the volume}:
\begin{equation*}
	\forall \, S \subseteq U, \quad  (1-\delta)\cdot \uncut{S}{U}-\psi \cdot \vol(S) \le \wcut{S}{U} \le (1+\delta)\cdot \uncut{S}{U}+\psi \cdot \vol(S) \, .
\end{equation*}
The authors also give a dynamic stream construction, which uses $\otil(n/\delta\psi)$ bits of space by sampling edges proportionally to the degrees of their endpoints.

To check the $\phi$-expansion of subgraphs up to a small constant multiplicative error, it is enough to set $\delta$ to be a small constant. However, the multiplicative error parameter $\psi$ must be $\lesssim \phi$. 
This is a significant downside of this construction, as it was shown in \cite{streamexpdec} that a $(\delta, \phi)$-power-cut sparsifier must have at least $\Omega(n /\phi)$ edges. In fact, their lower bound is more general, and holds for general subgraphs\footnote{The lower bound instance is a $1/\phi$-regular graph. There, each edge by itself forms a $\phi$-expander, while any pair of vertices without an edge between them is not. Being able to test the $\phi$-expansion of the majority of those small subgraphs would imply being able to recover the majority of edges in the graph.}.
In other words, {$\Omega(n/\phi)$} space is necessary in order to test {$\phi$}-expansion of general subgraphs. Consequently, new tools must be used to have any hope of getting an algorithm independent~of~$1 / \phi$.

\paragraph*{Our contribution: sparsification of boundary-linked subgraphs}

As it turns out, solving the expansion testing problem, as it was stated, is not necessary. Recently, in the breakthrough work of \cite{boundlink}, it was shown that one could demand additional properties from the expander decomposition, and it will still exist at the price of increasing the number of inter-cluster edges by a small multiplicative factor.

More formally, for a set of vertices $U \subseteq V$ and $\tau > 0$, we denote by $G[U]^\tau$ an induced subgraph of $G$ on $U$, where additionally for each edge $e=\{u,v\}$ that connects a vertex $u$ inside $U$ with a vertex $v$ outside of $U$, $\tau$ self-loops are added to $u$. The graph $G[U]^\tau$ is called a $\tau$-boundary-linked subgraph. Note that if $G[U]^\tau$ is a $\phi$-expander, then so necessarily is $G[U]$, but not the other way around.

Then it is possible, for a given $\phi$, and $\eps = \otil(\phi), \, \tau \approx 1/\phi$, to construct an $(\eps, \phi)$-expander decomposition $\mathcal{U}$ where for each cluster $U \in \mathcal{U}$, $G[U]^\tau$ is a $ \widetilde{\Omega}(\phi)$-expander~\cite{boundlink}. Such a decomposition is called a boundary-linked expander decomposition.

A key observation is that if in our algorithm we were aiming to construct a boundary-linked expander decomposition, the testing problem would only involve checking that any given boundary-linked subgraph is $\phi$-expanding. Indeed, this problem is much easier than the original one and can be solved in space \smash{$\otil(n)$}. We will show how to do it in two steps: first, we will discuss how to strengthen the power-cut sparsifier, and then prove that this strengthening is enough to resolve the problem.

\paragraph*{Achieving additive error in the global cut}
As was noted in \cite{streamexpdec}, the idea behind constructing a power-cut sparsifier was to reanalyse the guarantee given by the construction of~\cite{stspectral} for sparsifying expanders. In other words, a power-cut sparsifier results from a more rigorous analysis of an existing sparsifier. The problem with this approach is that the sparsifier of~\cite{stspectral} is relatively weak to begin with: it only preserves cuts in expanders, while other constructions can preserve them in all graphs~\cite{benczurkarger,fungspars}.
To strengthen the guarantee, we give the same treatment to the construction of \cite{fungspars} in \cref{sec:spars}. For the sake of simplicity and to gain an intuition for why this kind of sparsification is at all possible, we discuss here how to do that with the classical construction of \cite{karger1994random} by closely following the original proof.

We show that, given a graph $G=(V, E)$ with a minimum cut of size $k$, it is possible to construct a sparsifier $H$ of $G$ such that every cut inside any given subgraph $G[U]$ of $G$ is preserved with high probability with the following guarantee:
\begin{equation}
	\label{eq:to-cut-guar}
	\forall \, S \subseteq U, \quad \uncut{S}{U} - \delta \cdot \uncutg{S} \le \wcut{S}{U}  \le \uncut{S}{U} + \delta \cdot \uncutg{S}\,.
\end{equation}
In this paper, a sparsifier with property~\eqref{eq:to-cut-guar} is called a cluster sparsifier (see \cref{def:spars}).
To achieve~\eqref{eq:to-cut-guar}, consider using the same process as in \cite{karger1994random}: sample each edge with the same probability $p \approx {\delta^{-2}/ k}$.

To see why this works, fix a subgraph $U \subseteq V$, and consider any cut $S$ in $U$. 
We wish to show that the size of the cut $S$ inside $U$ concentrates well after sampling. In the original proof, this is done by simply applying a Chernoff bound. In our case, this bound would look like this:
\begin{equation*}
	\Pr[|\wcut{S}{U} - \uncut{S}{U}| \geq \delta \cdot \uncut{S}{U}] \leq \exp\left(-\frac{1}{3} \delta^2 p \cdot \uncut{S}{U} \right) \, .
\end{equation*}
\noindent
However, this is insufficient, as the probability depends on the cut size inside $G[U]$. As we have no lower bound on its size, unlike with the sizes of global cuts, the second part of the argument of \cite{karger1994random} cannot be applied. Instead, we apply an additive-multiplicative version of the Chernoff bound (see for example \cite{streamexpdec}), that allows us to compare the approximation error with a bigger value than its expectation. This gives us
\begin{equation*}
	\Pr[|\wcut{S}{U} - \uncut{S}{U}| \geq \delta \cdot \uncutg{S}] \leq 2\exp\left(-\frac{1}{100} \delta^2 p \cdot \uncutg{S} \right) \, .
\end{equation*}
Expressing the probability in terms of the global cut allows us to use the cut counting lemma~\cite{karger1993global}, which bounds the number of global cuts of size at most $\alpha k$ by $n^{2\alpha}$,  for $\alpha \geq 1$. The proof is concluded by associating each global cut with a local cut in $G[U]$ and taking the union bound over them.

In order to get the guarantee of~\eqref{eq:to-cut-guar} for all graphs, not only those with a minimum cut of size $k$, one can sample edges proportionally to the inverse of their connectivity (as in the work of \cite{fungspars}) as opposed to uniformly. Moreover, one can implement such sampling scheme in dynamic streams, using the approach of Ahn, Guha, and McGregor~\cite{sparsdynamicstreams}. We defer the details of these reanalyses to \cref{sec:spars}, where we show how to construct cluster sparsifiers with property~\eqref{eq:to-cut-guar} in dynamic streams.

\paragraph*{Benefits of boundary-linked graphs}
To see why the cluster sparsifier is enough to solve the boundary-linked $\phi$-expansion testing problem, consider the following reasoning. Set the self-loop parameter $\tau$ equal to $b / \phi$, for some $b \gg \phi$ and $b \ll 1$. Fix a cluster $U$, for which $G[U]^{b / \phi}$ is an $\widetilde{\Omega}(\phi)$-expander. 
The crucial fact is that the size of any cut inside $G[U]^{b / \phi}$ is lower bounded by its size in the global graph up to a small polylogarithmic factor in the following way:
\begin{equation}
	\label{eq:to_cut_lb}
	\uncut{S}{U} \geq \widetilde{\Omega}(b) \cdot \uncutg{S} \, .
\end{equation}
We will now explain the derivation of the above equation in detail.
For a cut $\emptyset \neq S \subsetneq U$, we denote by $\unbor{S}{U} = \uncutg{S} - \uncut{S}{U}$ the number of edges going from $S$ to $V \setminus U$. Note that by the definition of~$G[U]^{b / \phi}$, the volume of any cut $S$ inside of it is equal to \smash{$\vol_{G[U]^{b / \phi}}(S) = \vol(S) + (\frac{b}{\phi} - 1)\unbor{S}{U}$}.
First, because we assume that $G[U]^{b / \phi}$ is an $\widetilde{\Omega}(\phi)$-expander, we have
\begin{equation*}
	\uncut{S}{U} \geq \widetilde{\Omega}(\phi)\cdot \vol_{G[U]^{b / \phi}}(S)\,.
\end{equation*}
Then, applying the aforementioned formula for $\vol_{G[U]^{b / \phi}}(S)$, we have
\begin{equation*}
	\uncut{S}{U} \geq \widetilde{\Omega}(\phi)\vol(S) + \widetilde{\Omega}(\phi)\left(\frac{b}{\phi} -1 \right)\unbor{S}{U} \,.
\end{equation*}
Since $b \gg \phi$ and dropping the first summand, the above simplifies to 
\begin{equation*}
	\uncut{S}{U} \geq  \widetilde{\Omega}(b)\unbor{S}{U} \,.
\end{equation*}
Finally, applying $\uncut{S}{U} + \unbor{S}{U} = \uncutg{S}$ and $b \ll 1$, we arrive back at equation~\eqref{eq:to_cut_lb}.

A consequence of equation~\eqref{eq:to_cut_lb} is that setting $\delta \approx 1/ b$ in the cluster sparsifier produces the following guarantee for $ \widetilde{\Omega}(\phi)$-expander subgraphs $G[U]^{b / \phi}$: 
\begin{equation*}
	\forall \, S \subseteq U, \quad \uncut{S}{U} - O(1) \cdot \uncut{S}{U} \le \wcut{S}{U}  \le \uncut{S}{U} + O(1) \cdot \uncut{S}{U}\,,
\end{equation*}
where the $O(1)$ can be made arbitrarily small. In other words, we achieve a multiplicative approximation guarantee on subgraphs of interest, which is enough to solve the testing problem. Since $b$ can be set to be $1/\pylog n$, the final sparsifier size does not depend on $1/\phi$. In this sense, such sparsifier can be thought of as a ``universal'' sketch of the graph that allows to solve the testing problem for all $\phi$ simultaneously.

Even though we now know how to solve the testing problem, an implementation of \cref{alg:metaed} would need to actually find a sparse cut when the test fails (and in particular, it needs to find a \textit{balanced} sparse cut, see the next section). In \cref{sec:balsparse}, we show that cluster sparsifiers enable to solve this harder task too: consider an offline algorithm that either correctly determines a graph to be a boundary-linked expander or finds a (balanced) sparse cut; we show that one can run such algorithm on a subgraph of the sparsifier as a black box and obtain essentially the same result as if the algorithm was run on the corresponding subgraph of the original graph. In this sense, cluster sparsifiers serve as small error proxies to the original graph for expander-vs-sparse-cut type of queries on vertex-induced subgraphs.

\subsubsection*{Low depth recursion}
Another problem that arises when using the recursive approach is that the subgraph sparsification guarantee of both power-cut sparsifiers and cluster sparsifiers is only probabilistic, and it can be shown that it cannot be made deterministic \cite{streamexpdec}. This means that after finding a sparse cut~$S$ in a subgraph $G[U]^{b / \phi}$ of a sparsifier, we cannot claim that the same sparsifier would preserve the cuts inside the new graph $G[S]^{b / \phi}$ with high probability, as it is dependent on another cut that we have already found inside the sparsifier. This means we cannot use the same sparsifier for two different calls to \cref{alg:metaed} inside the same execution path.
However, we can share a sparsifier among all the calls at the same recursion level since these will operate on independent portions of~$G$. Therefore, we want to have a separate sparsifier for every recursion level. This means that in order to minimize space requirements, it is crucial to have a small recursion depth.

To have small recursion depth, the algorithmic approach of \cite{streamexpdec,triang,ChangPSZ21} enforces the sparse cut $S$ from line~\eqref{eq:sparsecut} to be balanced, i.e., none of $S$ and $V\setminus S$ is much larger than the other. In particular, they only recurse on the two sides of a cut if $\vol(S) \gtrsim \epsilon \vol(V \setminus S)$. 
When there is no balanced sparse cut and yet the input is not an expander, a lemma of Spielman and Teng~\cite{stspectral} suggests there should be an $\Omega(\phi)$-expander $G[S']$ that accounts for a $(1-O(\epsilon))$-fraction of the total volume. An algorithmic version of this structural result allows us to iteratively trim off a small piece of the graph until such $S'$ is found. At this point, the algorithm of~\cite{streamexpdec} can simply return $\{S'\} \cup (\cup_{u \in V \setminus S'} \{u\})$ as an ED, with at most $O(\epsilon|E|)$ inter-cluster edges between singletons. As the volume of the cluster multiplicatively decreases by $1 - O(\eps)$ after each call, this gives recursion depth at most $\otil(1/\epsilon) \approx \otil(1/\phi)$.

\paragraph{Our contribution: adaptation of trimming.} We show that a simple refinement of this approach allows us to adapt the framework of \cite{offlineexpdec}, which leads to an algorithm with recursion depth independent of  $\phi$. We run the same algorithm, but instead of separating each vertex in $V \setminus S'$ into its own singleton cluster, we recurse with \cref{alg:metaed} on the whole set $V \setminus S'$. Conceptually, this implements an analogue of the trimming step of \cite{offlineexpdec}.

This means that at the end, fewer edges in $V \setminus S$ become inter-cluster edges. Because of that,
we can strengthen the balancedness requirement: we recurse on the two sides of a sparse cut only if \smash{$\vol(S) \gtrsim \frac{1}{C} \vol(V \setminus S)$} for a large constant $C$. This allows us to trim more vertices each time, resulting in $\otil(1)$ depth of the trimming step.  On the other hand, because in each call to \cref{alg:metaed} the volume of clusters passed to recursive calls is decreased by at least a constant factor, the total recursion depth becomes at most $\otil(1)$.

Other than the refinement discussed above, our space efficient implementation of \cref{alg:metaed}, as well as  the iterative procedure to find the large expander $S'$, are almost the same as the ones of~\cite{streamexpdec} (which in turn are  inspired by the one of~\cite{triang}). The details of the algorithms are given in \cref{sec:algexpdec}.

\subsubsection*{Putting it all together}
Combining the ideas illustrated in the two sections above, neither the sparsifier's size nor the recursion depth depend on $\phi$, thus giving a sparsity-independent space algorithm for boundary-linked expander decomposition  (BLD for short). This result is stated in terms of parameters $b,\epsilon,\phi \in (0,1), \, \gamma \ge 1$: a $(b, \epsilon, \phi, \gamma)$-BLD is a partition $\mathcal{U}$ with  at most an $\epsilon$ fraction of crossing edges and every $U \in \mathcal{U}$ induces a $\phi/\gamma$-expander $G[U]^{b/\phi}$ (see \cref{sec:prelims} and \cref{def:expdec}). Using this terminology, we obtain the following result.

\begin{restatable}[Exponential time decoding BLD]{theorem}{mainbld}
	\label{th:main}
	Let $G=(V, E)$ be a graph given in a dynamic stream, and let $b \in (0,1)$ be a parameter such that $b \le 1/\log^2 n$. Then, there is an algorithm that maintains a linear sketch of $G$ in $\otil(n/b^3)$ space. For any $\epsilon \in [n^{-2}, b \log n]$, the algorithm decodes the sketch to compute, with high probability and in $\otil(n/b^3)$ space and $2^{O(n)}$ time, a $(b, \epsilon, \phi, \gamma)$-BLD of $G$ for
	\begin{equation*}
		\phi = \Omega\left(\frac{\epsilon}{\log n}\right) \quad \text{ and } \quad \gamma = O(1) \, .
	\end{equation*}
\end{restatable}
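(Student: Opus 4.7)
The plan is to implement \cref{alg:metaed} in the streaming model by substituting $G$ with cluster sparsifiers constructed in \cref{sec:spars}. Concretely, I would maintain a sketch consisting of $L$ independent cluster sparsifiers $H_1, \dots, H_L$ with additive--multiplicative parameter $\delta = \Theta(b)$, where $L = \otil(1)$ is the recursion depth I shall establish. Each $H_\ell$ costs $\otil(n/b^2)$ bits and, after accounting for $L$ and polylogarithmic slack for high-probability correctness, the total sketch size comes out to $\otil(n/b^3)$, as required.

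The decoder mirrors \ed. At level $\ell$, for a cluster $U$ under consideration, I would invoke (in $2^{O(n)}$ time by brute-force enumeration of all cuts) an exact balanced-sparse-cut oracle on the boundary-linked graph $H_\ell[U]^{b/\phi}$. This oracle either certifies that $H_\ell[U]^{b/\phi}$ has no $\phi$-sparse cut, in which case $U$ is output as a cluster, or returns a balanced $O(\phi)$-sparse cut $S \subseteq U$, in which case we recurse on $G[S]$ and $G[U \setminus S]$ at level $\ell+1$ using the fresh sparsifier $H_{\ell+1}$. Transporting these statements from $H_\ell$ to $G$ relies on combining the cluster sparsifier guarantee $\wcut{S}{U} \in \uncut{S}{U} \pm \delta \cdot \uncutg{S}$ with the boundary-linked lower bound $\uncut{S}{U} \ge \widetilde{\Omega}(b) \cdot \uncutg{S}$ derived in \eqref{eq:to_cut_lb}: whenever $G[U]^{b/\phi}$ is boundary-linked expanding, the additive slack is absorbed into a constant fraction of $\uncut{S}{U}$, which yields a constant-factor multiplicative estimate of cut sparsities inside $G[U]^{b/\phi}$. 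This is precisely what is needed to certify $\phi/\gamma$-expansion with $\gamma = O(1)$, and to transport balanced sparse cuts between $H_\ell$ and $G$. Using a fresh independent $H_{\ell+1}$ at each new level avoids adaptivity, since the subgraphs queried at level $\ell+1$ are a deterministic function of the previous sparsifiers; a union bound over the polynomially many subgraphs ever visited gives a high-probability guarantee for the full execution.

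When the oracle returns only unbalanced sparse cuts, I would apply the trimming step described in the overview: iteratively peel off a small low-conductance piece of $U$, then recurse with \ed on the trimmed remainder as a single cluster rather than splitting it into singletons as in \cite{streamexpdec}. Under the strengthened balance threshold $\vol(S) \gtrsim \vol(U \setminus S)/C$ for a sufficiently large constant $C$, every recursive call (along balanced sparse cuts or into the trimmed remainder) shrinks the operating cluster's volume by a constant factor, giving recursion depth $L = \otil(1)$ matching the $L$ used in the sketch. Correctness then follows by induction on levels: every output cluster $U$ satisfies $\Phi_{G[U]^{b/\phi}} \ge \phi/\gamma$ because the oracle rejected all $\phi$-sparse cuts in $H_\ell[U]^{b/\phi}$ and the sparsifier approximation rules out any $\phi/\gamma$-sparse cut in the original $G[U]^{b/\phi}$. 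The total inter-cluster edge count is bounded by the standard $O(\phi \log n)$ charging argument over the recursion tree, yielding at most $\epsilon |E|$ inter-cluster edges.

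The main technical obstacle I anticipate is making the trimming step compatible with the additive--multiplicative sparsifier error: each iteratively peeled piece must remain a genuine low-conductance piece of the original $G$ despite the $\delta \cdot \uncutg{S}$ additive slack, and conductance estimates used throughout the trimming loop must retain constant multiplicative accuracy on the shrinking subgraph. The boundary-linked scaling $\uncut{S}{U} \ge \widetilde{\Omega}(b) \uncutg{S}$ together with the choice $\delta = \Theta(b)$ is exactly what permits this; tuning the constants so that one simultaneously achieves $\phi = \Omega(\epsilon/\log n)$ and $\gamma = O(1)$, while keeping recursion depth at $\otil(1)$ and sketch size at $\otil(n/b^3)$, is where most of the bookkeeping will go.
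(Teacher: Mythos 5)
Your overall architecture is the same as the paper's: maintain cluster sparsifiers with error $\delta=\Theta(b/\log n)$-ish, decode by running \cref{alg:metaed} with a brute-force balanced sparse cut oracle on boundary-linked subgraphs of the sparsifier, transport cuts/expansion via the boundary-linkedness bound $\uncut{S}{U} \ge \widetilde{\Omega}(b)\uncutg{S}$, add a trimming step that recurses on the trimmed-off part as a whole, and use fresh sparsifiers to avoid adaptivity, getting $O(\log n)$ recursion depth. However, there is a genuine gap in how you handle the trimming loop, and it is exactly where the paper's main technical work (and the $1/b^3$ in the space bound) lives. Your claim that one fresh sparsifier per recursion level suffices because ``the subgraphs queried at level $\ell+1$ are a deterministic function of the previous sparsifiers'' fails inside a single call: trimming produces a sequence of nested clusters $A,\ A\setminus R_1,\ A\setminus R_1\setminus R_2,\dots$ where each $R_h$ was found using the very sparsifier you would query again on $A\setminus R_1\cdots\setminus R_h$, so reusing $H_\ell$ across trim iterations is adaptive and the high-probability guarantee of \cref{lem:distrcutstreamadd} no longer applies. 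You therefore need a fresh sparsifier per trim iteration, and then you must bound the number of trim iterations — which your proposal never does. Without a mechanism forcing each peeled piece to have non-negligible volume, an exact sparse-cut oracle can keep returning tiny sparse cuts, and the number of peels (hence the number of independent sparsifiers, hence the space) can be polynomial in $n$, not $\otil(1/b)$.

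The paper resolves this with the staged structure of \cref{alg:trim}: an outer loop over $j=1,\dots,k+1$ with geometrically interpolating volume thresholds $\frac{1}{C\lambda}(\wvolc{U}{U})^{1-(j-1)/k}$ and mildly relaxed sparsity parameters $\phi_j$, so that within each stage every trimmed cut has volume at least the current threshold while the total trimmed volume is bounded by (roughly) the previous threshold; choosing $k \gtrsim \log(n^2/\phi)/\log\frac{c}{\lambda b}$ makes the ratio of consecutive thresholds at most $O(1/b)$, which is what caps the inner iterations at $1/b$ per stage (\cref{lem:boundinnerloop}) and yields the sparsifier count $O(\log n)\cdot(k+1)\cdot(1/b)$, i.e.\ total space $\otil(n/b^3)$; the relaxation of $\phi_j$ across stages is also the source of the $\gamma$ factor, which is $O(1)$ only because the brute-force BSCA has $\alpha=1$. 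This also exposes the inconsistency in your space accounting: $L=\otil(1)$ sparsifiers of size $\otil(n/b^2)$ give $\otil(n/b^2)$, and the extra $1/b$ you assert has no source in your scheme. Dismissing the iteration bound and the handling of sparse-but-small cuts as ``bookkeeping'' misses that this is the central new ingredient relative to~\cite{streamexpdec}; the rest of your proposal is faithful to the paper, but as written the decoding is either adaptive (hence unsound) or its space is unbounded by $\otil(n/b^3)$.
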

\noindent
From this, one can easily conclude our main result, restated here for convenience of the reader.

\begin{mainthactual}[ED algorithm -- exponential time decoding]
	\label{cor:maingeneral}
	Let $G=(V, E)$ be a graph given in a dynamic stream. Then, there is an algorithm that maintains a linear sketch of $G$ in $\otil(n)$ space. For any  \smash{$\phi \in (0, 1)$} such that $ \phi \le c/\log^2 n$ for a small enough constant $c>0$, the algorithm decodes the sketch to compute a $(O(\phi \log n), \phi)$-ED of $G$ with high probability, in $\otil(n)$ space and $2^{O(n)}$ time.
\end{mainthactual}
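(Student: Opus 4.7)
The plan is to derive \cref{cor:maingeneral} as a direct corollary of \cref{th:main}, by choosing the BLD parameters so that the boundary-linked expansion guarantee implies the required $\phi$-expansion of each induced cluster and the crossing-edge fraction matches $O(\phi \log n)$.

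First, I would fix $b = 1/\log^2 n$, which saturates the assumption $b \le 1/\log^2 n$ of \cref{th:main} and yields the desired space bound $\otil(n/b^3) = \otil(n \log^6 n) = \otil(n)$. Given the target $\phi \le c/\log^2 n$, let $\gamma = O(1)$ be the constant in the conclusion of \cref{th:main}. I would invoke the theorem with BLD sparsity parameter $\phi' := \gamma \phi$ and crossing parameter $\epsilon = \Theta(\phi \log n)$ chosen large enough that $\phi'$ meets the guarantee $\phi' = \Omega(\epsilon/\log n)$. For $c$ small enough this $\epsilon$ lies in $[n^{-2}, b \log n] = [n^{-2}, 1/\log n]$, as required. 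The corner case $\phi < 1/(n^2 \log n)$, in which $\epsilon$ would fall below $n^{-2}$, is handled separately by returning the connected components of $G$, which can be computed in $\otil(n)$ space in dynamic streams via the $\ell_0$-sampling-based spanning-forest sketch of~\cite{ahnguha}; in this regime $O(\phi \log n)|E| < 1$, so any valid ED must refine into connected components anyway.

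The output of \cref{th:main} is then a partition $\mathcal{U}$ with $\tfrac{1}{2}\sum_{U \in \mathcal{U}} \uncutg{U} \le \epsilon |E| = O(\phi \log n)|E|$, matching property~\ref{property:crossingclassic} of \cref{def:expdecclassic}. Moreover, for each $U \in \mathcal{U}$ the boundary-linked subgraph $G[U]^{b/\phi'}$ is a $\phi'/\gamma = \phi$-expander. To conclude property~\ref{property:expanderclassic}, I would observe that boundary-linked expansion implies ordinary induced expansion: for any $\emptyset \neq S \subsetneq U$ the cut size $|E(S, U \setminus S)|$ coincides in $G[U]$ and in $G[U]^{b/\phi'}$ (self-loops do not cross cuts), while $\vol_{G[U]^{b/\phi'}}(S) = \vol_{G[U]}(S) + (b/\phi') \cdot \unbor{S}{U} \ge \vol_{G[U]}(S)$, and analogously for $U \setminus S$. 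Hence the sparsity of $S$ in $G[U]$ is at least its sparsity in $G[U]^{b/\phi'}$, which is at least $\phi$, showing that $G[U]$ is a $\phi$-expander.

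I do not anticipate a real obstacle: the corollary is essentially a reparameterisation of \cref{th:main} combined with the trivial implication ``boundary-linked expansion of $G[U]^{b/\phi'}$ implies standard expansion of $G[U]$''. The only mildly delicate point is the small-$\phi$ edge case described above, which is disposed of by invoking connected-components sketching; everything else (space, decoding time, failure probability) is inherited verbatim from \cref{th:main}.
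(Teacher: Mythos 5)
Your proposal is correct and follows essentially the same route as the paper's proof: set $b = 1/\log^2 n$, reparameterise via $\epsilon = \Theta(\phi \log n)$ (checking $\epsilon \in [n^{-2}, b\log n]$), dispose of the tiny-$\phi$ regime with connected-component sketching, and use the trivial fact that a $(b,\epsilon,\phi',\gamma)$-BLD is an $(\epsilon,\phi'/\gamma)$-ED. The only cosmetic slip is speaking of ``invoking \cref{th:main} with BLD sparsity parameter $\phi' := \gamma\phi$'' --- in \cref{th:main} the inputs are $b$ and $\epsilon$ while $\phi'$ and $\gamma$ are outputs --- but your subsequent choice of $\epsilon$ large enough that $\Omega(\epsilon/\log n) \ge \gamma\phi$ is exactly how the paper handles the constants, so this does not affect correctness.
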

\noindent
We remark that for \smash{$\phi \ge \Omega(1/\log^2n)$}, one can use the algorithm of~\cite{streamexpdec} to still have an $\otil(n)$ space construction of a $(O(\phi \log n), \phi)$-ED.
\begin{proof}
	First note that without loss of generality we can assume $\phi \ge 1/n^2$, otherwise an ED can simply consists of the connected components of $G$ (which can be computed in dynamic streams in~$\otil(n)$ space~\cite{linmeas}). Then, we note that since $c$ is small enough and $\phi \le c / \log^2 n$, one can always define $\epsilon = C \cdot \phi \cdot \log n$ for an appropriate constant $C>0$ while ensuring $1/n^2 \le \epsilon \le 1/\log n$. We can thus prove the theorem by equivalently showing that there is an algorithm that maintains a linear sketch of $G$ in $\otil(n)$ space, and that for all \smash{$\epsilon \in [{1}/{n^2},{1}/{\log n}]$} decodes the sketch to compute, with high probability, an $(\epsilon, \Omega(\epsilon / \log n)))$-ED of $G$ in $\otil(n)$ space and $2^{O(n)}$ time.
	
	Let then \smash{$\epsilon \in [{1}/{n^2},{1}/{\log n}]$}. We use the algorithm from \cref{th:main} with parameter $\epsilon$ and a parameter $b$ of our choice. We need to meet two preconditions: $b \le 1/\log^2 n$ and $\epsilon \le b \log n$. Since we assume $\epsilon \le 1/\log n$, we can set $b=1/\log^2 n$, and all the prerequisites are fulfilled. Then the algorithm from \cref{th:main} runs in $2^{O(n)}$ time and takes $\otil(n/b^3)=\otil(n)$ bits of space. The output $\mathcal{U}$ is a $(b, \epsilon, \phi, \gamma)$-BLD of $G$ with high probability, where $\phi = \Omega(\epsilon/\log n)$ and $\gamma =O(1)$. Since a $(b, \epsilon, \phi, \gamma)$-BLD of $G$ is an $(\epsilon,\phi/\gamma)$-ED of $G$, we have obtained an $(\epsilon, \Omega(\epsilon/\log n))$-ED of $G$ with high probability.
\end{proof}
\noindent
The exponential time in the decoding is due to the subtask of finding a balanced sparse cut. As we show, one can make the decoding time polynomial by resorting to known offline approximation algorithms~\cite{offlineexpdec,liexpdec}. However, we only have $\log^{\Omega(1)} n$-approximations for finding a balanced sparse cut, and in particular, we do not expect (under NP-hardness and the Unique Games Conjecture) there to be a polynomial time $O(1)$-approximation~\cite{hardsparsecut}. Such super-constant factor error incurs some loss in the quality of decomposition and space requirement, which, nevertheless, remains independent of the sparsity.

\begin{restatable}[Polynomial time decoding BLD]{theorem}{mainbldpoly}
	\label{th:mainpoly}
	Let $G=(V, E)$ be a graph given in a dynamic stream, and let $b \in (0,1)$ be a parameter such that $b \le 1/\log^5 n$. Then, there is an algorithm that maintains a linear sketch of $G$ in \smash{$n/b^3 \cdot \log^{O(\log n / \log\frac{1}{b})} n $} space. For any $\epsilon \in [n^{-2}, b \log n]$, the algorithm decodes the sketch to compute, with high probability and in \smash{$n/b^3 \cdot \log^{O(\log n / \log\frac{1}{b})} n $} space and $\poly(n)$ time, a $(b, \epsilon, \phi, \gamma)$-BLD of $G$ for
	\begin{equation*}
		\phi = \Omega\left(\frac{\epsilon}{\log^4 n}\right) \quad \text{ and } \quad \gamma = \log^{O\left(\frac{\log n}{\log 1/b}\right)} n\, .
	\end{equation*}
\end{restatable}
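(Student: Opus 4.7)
The plan is to follow the same sparsifier-based framework as in the proof of \cref{th:main}, keeping the linear sketch and the cluster sparsifier construction from \cref{sec:spars} unchanged, and replacing only the exponential-time exact balanced sparse cut subroutine with a polynomial-time approximation algorithm from \cite{offlineexpdec} or \cite{liexpdec}. The cluster sparsifier still serves as a small-error proxy for the original graph on vertex-induced boundary-linked subgraphs, so it can support approximate expansion tests as a black box; in particular, the guarantees \eqref{eq:to-cut-guar} and \eqref{eq:to_cut_lb} carry over verbatim and are the only properties the decoder uses.

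The first step is to fix a polynomial-time subroutine that, on input a graph and a target parameter $\phi$, either certifies that the graph is a $\phi/\gamma_{\text{approx}}$-expander or returns an approximately balanced $\phi$-sparse cut, where $\gamma_{\text{approx}} = \log^{O(1)} n$ is the approximation factor. Plugging this subroutine into the recursive framework of \cref{alg:metaed} in place of the exact oracle has two compounding effects: a certified cluster is only guaranteed to be a $\phi/\gamma_{\text{approx}}$-expander at each level, and the balanced cuts returned may be off by a $\gamma_{\text{approx}}$ factor from the target sparsity. Since these losses multiply across recursion levels, the final expansion quality $\gamma$ of the output BLD is at most $\gamma_{\text{approx}}^d$, where $d$ is the recursion depth of the decoder.

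To control $d$, we reuse the same balancedness and trimming invariants as in \cref{th:main}: each recursive call operates on a graph whose volume has been reduced by at least a factor of $\Theta(1/b)$ compared to its parent, and the trimming phase at each level adds only $\tilde O(1)$ to the depth, so $d = O(\log n / \log(1/b))$. This yields the stated $\gamma = \log^{O(\log n / \log(1/b))} n$, and $\phi = \Omega(\epsilon/\log^4 n)$ after absorbing a single $\gamma_{\text{approx}}$ factor at the top level (matching the $\log^3 n$ gap compared to \cref{th:main}). For space, we keep one independent cluster sparsifier per recursion level to avoid adaptive reuse of randomness; each is of size $\otil(n/b^3)$, giving total space $n/b^3 \cdot \log^{O(\log n / \log(1/b))} n$. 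The tightened requirement $b \le 1/\log^5 n$ leaves enough headroom that the additive error parameter $\delta$ of the cluster sparsifier, together with the compounded approximation losses, remains small compared to the target sparsity at every level.

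The main obstacle will be pinning down the interaction between the approximation algorithm and the cluster sparsifier at every recursion level: a cut returned by the approximate subroutine on the sparsifier-subgraph $H[U]^{b/\phi}$ must still translate, through \eqref{eq:to-cut-guar} and the boundary-linked lower bound \eqref{eq:to_cut_lb}, into a cut of sparsity at most $O(\phi \cdot \gamma_{\text{approx}})$ in the original subgraph $G[U]^{b/\phi}$, and a certificate of expansion on the sparsifier must imply expansion in the original graph up to a bounded factor. Propagating these two-sided approximate guarantees through $d$ levels of recursion, while tracking how $b$, $\delta$, and $\gamma_{\text{approx}}$ interact so that the sparsifier's additive slack is never swamped by the accumulated multiplicative loss, is the bulk of the technical work.
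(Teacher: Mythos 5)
Your high-level plan---keep the sketch and the cluster sparsifiers, and swap the brute-force balanced sparse cut subroutine for a polynomial-time $O(\log^3 n)$-approximate one---is indeed how the paper proceeds (it instantiates the unified \cref{lem:tech} with the $(c_{\sw}\log^3 n, C_{\sw})$-BSCA of \cref{th:fastbalsparse}). But the quantitative mechanism you describe does not work. You claim the approximation loss compounds across recursion levels and control this by asserting that each recursive call reduces the volume by a factor of $\Theta(1/b)$, giving depth $d=O(\log n/\log(1/b))$. Neither claim holds: when the algorithm recurses on both sides of a balanced sparse cut, each side can retain a constant fraction of the volume, so the recursion depth is $\Theta(\log n)$ (in the paper, \cref{lem:bounddepth} gives depth $9C\lambda\log n$ with $\lambda=O(1)$). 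If the expansion guarantee really degraded by $\gamma_{\text{approx}}=\log^{\Theta(1)}n$ at each of $\Theta(\log n)$ levels, you would end up with $\gamma=\log^{\Theta(\log n)}n=n^{\Theta(\log\log n)}$, far worse than the stated bound. The paper avoids any compounding across the recursion: every call of \ed{} tests clusters at the \emph{same} sparsity $\phi$; the sparsity parameter is degraded by a factor $\approx\alpha$ only inside a single call of \trim{}, across its $k+1$ outer iterations, and $k=O(\log n/\log(1/b))$ is dictated by the volume-threshold schedule $(\wvolc{U}{U})^{1-(j-1)/k}$ used for trimming, not by the recursion depth. That is the source of $\gamma=O(\alpha^{k+1})$, while the single factor of $\alpha$ lost in $\phi=\Omega(\epsilon/(\alpha\log n))=\Omega(\epsilon/\log^4 n)$ comes from the $O(\alpha\phi)$ sparsity of the cuts charged to $\epsilon$ over the $O(\log n)$ recursion levels.

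Your space accounting also does not yield the stated bound even on your own terms: $O(\log n/\log(1/b))$ sparsifiers of size $\otil(n/b^3)$ would give $\otil(n/b^3)$, not $n/b^3\cdot\log^{O(\log n/\log(1/b))}n$. The extra factor in the paper arises because the $j$-th outer iteration of \trim{} works at sparsity $\phi_j\approx\phi/\alpha^{j}$ and boundary parameter $b_j\approx b/\alpha^{j}$ (keeping $b_j/\phi_j=b/\phi$ fixed), so it needs cluster sparsifiers with error $\delta_j\approx c^2 b_j/\log n$, i.e.\ of size $\otil(n/b^2)\cdot\alpha^{O(j)}$, and $\Theta(1/b)$ fresh ones per level and per $j$ to avoid adaptive reuse of randomness; summing over $j\le k+1$ gives the $\alpha^{O(k)}=\log^{O(\log n/\log(1/b))}n$ blow-up. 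Without this finer-error sparsifier schedule (and without the restriction that the BSCA is only called with sparsity parameter at most $1/(10\alpha)$, together with the self-loop reduction of \cref{apndx:selfloops} needed to run the offline approximation algorithm on $H[U]^\circ$), the proxying argument cannot be applied at the degraded sparsities inside \trim{}, and the stated $\phi$, $\gamma$, and space bounds do not follow from your outline.
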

\noindent
\cref{cor:mainpolygeneral}, restated here for convenience, then follows from \cref{th:mainpoly}.

\begin{mainthactual}[ED algorithm -- polynomial time decoding]
	\label{cor:mainpolygeneral}
	Let $G=(V, E)$ be a graph given in a dynamic stream, and let $b \in (0,1)$ be a parameter such that $b \le 1/\log^5 n$. Then, there is an algorithm that maintains a linear sketch of $G$ in \smash{$n/b^3 \cdot \log^{O(\log n / \log\frac{1}{b})} n $} space. For any {$\phi \in (0,1)$} such that \smash{$\phi \le b/\log^{C \cdot \log n / \log \frac{1}{b}}n$} for a large enough constant $C>0$, the algorithm decodes the sketch to compute a \smash{$(\phi \cdot \log^{O(\log n / \log\frac{1}{b})} n, \phi)$}-ED of $G$  with high probability, in \smash{$n/b^3 \cdot \log^{O(\log n / \log\frac{1}{b})} n $} space and~\smash{$\poly(n)$} time.
\end{mainthactual}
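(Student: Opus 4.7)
The plan is to follow exactly the same reduction scheme that was used to derive \cref{cor:maingeneral} from \cref{th:main}, but now invoking \cref{th:mainpoly} in place of \cref{th:main}. The goal is to convert the quantitative BLD guarantee provided by \cref{th:mainpoly} into the ED guarantee claimed by \cref{cor:mainpolygeneral}, by carefully matching the sparsity and quality parameters and by verifying that the input range of $\phi$ falls in the admissible range of $\epsilon$ for \cref{th:mainpoly}.

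First, I would dispose of the uninteresting regime $\phi < 1/n^2$ exactly as in the proof of \cref{cor:maingeneral}: in that case, returning the connected components of $G$ is already a valid $(O(\phi \log n), \phi)$-ED, and this can be computed in $\otil(n)$ space in dynamic streams via~\cite{linmeas}. For the remaining range $\phi \in [1/n^2, b / \log^{C \log n/\log(1/b)} n]$, I would set
\begin{equation*}
	\epsilon \;=\; \phi \cdot \gamma \cdot \log^4 n \cdot K \;=\; \phi \cdot \log^{O(\log n /\log(1/b))} n,
\end{equation*}
for a suitable absolute constant $K$ absorbed in the exponent, where $\gamma = \log^{O(\log n/\log(1/b))} n$ is the quality-loss factor from \cref{th:mainpoly}. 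The role of $K$ is to compensate the implicit constant hidden in the $\Omega(\cdot)$ of $\phi' = \Omega(\epsilon/\log^4 n)$ below, so that after dividing by $\gamma$ we still end up with a $\phi$-expander guarantee on each cluster.

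Next I would check the two preconditions of \cref{th:mainpoly}. The bound $b \le 1/\log^5 n$ is part of the hypothesis of the corollary, so it is satisfied. For the bound $\epsilon \in [n^{-2}, b \log n]$, the lower bound is immediate from $\phi \ge 1/n^2$ and the definition of $\epsilon$, and the upper bound follows precisely because the hypothesis $\phi \le b / \log^{C \log n / \log(1/b)} n$ with $C$ large enough makes $\epsilon = \phi \cdot \log^{O(\log n / \log(1/b))} n \le b \log n$. Applying \cref{th:mainpoly} with this $\epsilon$ then yields, with high probability and in the claimed space and time, a $(b, \epsilon, \phi', \gamma)$-BLD of $G$ with $\phi' = \Omega(\epsilon/\log^4 n)$ and $\gamma = \log^{O(\log n/\log(1/b))} n$.

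Finally, I would convert this BLD into an ED. By the definition recalled in \cref{sec:techoverviewub} (see \cref{def:expdec}), a $(b, \epsilon, \phi', \gamma)$-BLD of $G$ is in particular an $(\epsilon, \phi'/\gamma)$-ED of $G$, because $G[U]^{b/\phi'}$ being a $\phi'/\gamma$-expander implies the same for $G[U]$. Our choice of $\epsilon$ makes $\phi'/\gamma \ge \phi$, so the output is a $(\phi \cdot \log^{O(\log n / \log(1/b))} n, \phi)$-ED of $G$. The only part of this plan that is not entirely mechanical is the book-keeping of constants inside the exponents: I need to ensure that the $O(\cdot)$'s from $\phi'$ and $\gamma$ compose into the single exponent $O(\log n / \log(1/b))$ advertised in the statement, which is exactly why the hypothesis fixes a single large enough constant $C$ in the upper bound on $\phi$; there is no genuine obstacle here, only the need to be careful that the constant $C$ dominates the constants produced by \cref{th:mainpoly}.
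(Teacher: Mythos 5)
Your proposal is correct and follows essentially the same route as the paper's proof: assume $\phi \ge 1/n^2$, set $\epsilon$ equal to $\phi$ times a $\log^{O(\log n/\log\frac{1}{b})} n$ factor chosen so that the preconditions $n^{-2} \le \epsilon \le b\log n$ of \cref{th:mainpoly} hold, invoke that theorem, and use that a $(b,\epsilon,\phi',\gamma)$-BLD is an $(\epsilon,\phi'/\gamma)$-ED with $\phi'/\gamma \ge \phi$. The only cosmetic difference is that you define $\epsilon$ via $\gamma$ and the implicit $\Omega$-constant while the paper defines it directly as $\phi\cdot\log^{C\log n/\log\frac{1}{b}}n$ using the constant $C$ from the hypothesis; the bookkeeping is the same.
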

\noindent
We remark that setting, say, $b = 2^{-\sqrt{\log n}}$  in the above result gives a $n^{1+o(1)}$ space algorithm for computing a $(\phi \cdot n^{o(1)}, \phi)$-ED for any \smash{$\phi \le 2^{-2C\log\log n \sqrt{\log n}}$}. For larger values of $\phi$, one can use the polynomial time algorithm of ~\cite{streamexpdec} to still get a $n^{1+o(1)}$ space construction for a  $(\phi \cdot n^{o(1)}, \phi)$-ED.
\begin{proof}
	As in the proof of \cref{cor:maingeneral}, we can assume $\phi \ge 1/n^2$. Also, by virtue of $C$ being a large enough constant and  \smash{$\phi \le b/\log^{C \cdot \log n / \log \frac{1}{b}}n$}, one can always define \smash{$\epsilon = \phi \cdot \log^{C \cdot \log n / \log \frac{1}{b}}n$} while ensuring $n^{-2} \le \epsilon \le b \log n$. Then, we equivalently prove that for any $b \in (0,1)$ with $b \le 1/\log^5 n$ there is an algorithm that maintains a linear sketch of~$G$ in \smash{$n/b^3 \cdot \log^{O(\log n / \log\frac{1}{b})} n $} space, and that for any $\epsilon \in (0,1)$ such that $n^{-2} \le \epsilon \le b \log n$ decodes the sketch to compute, with high probability, an \smash{$(\epsilon, \epsilon/\log^{O(\log n / \log\frac{1}{b})} n)$}-ED of~$G$ in \smash{$n/b^3 \cdot \log^{O(\log n / \log\frac{1}{b})} n $} space and \smash{$\poly(n)$} time.
	
	Let then $b,\epsilon \in (0,1)$ with $b \le 1/\log^5 n$ and $n^{-2} \le \epsilon \le b \log n$. We use the algorithm from \cref{th:mainpoly} with the same parameters $b$ and $\epsilon$, since every admissible pair of parameters $b$ and $\epsilon$ fulfils the conditions of \cref{th:mainpoly}. The space complexity is also the same, and the running time is $\poly(n)$. Again, observe that a $(b, \epsilon, \phi, \gamma)$-BLD of $G$ is an $(\epsilon,\phi/\gamma)$-ED of $G$. Hence the claim, since \cref{th:mainpoly} gives
	\begin{equation*}
		\frac{\phi}{\gamma} = \Omega\left(\frac{\epsilon}{\log^4 n}\right) \cdot \frac{1}{\log^{O\left(\frac{\log n}{\log 1/b}\right)} n} = \frac{\epsilon}{\log^{O\left(\frac{\log n}{\log 1/b}\right)} n} \, .
	\end{equation*}
\end{proof}

\noindent
The proofs of \cref{th:main} and \cref{th:mainpoly} can be found in \cref{sec:algexpdec}.

\subsection{Two-level expander decomposition incurs a sparsity~dependence}
\label{sec:techoverviewlb}
Given a graph $G=(V,E)$ in a stream and parameters $\epsilon,\phi \in (0,1)$, we consider the problem of computing a two-level $(\epsilon,\phi)$-RED of $G$. In other words, we study the problem of computing an $(\epsilon,\phi)$-ED $\mathcal{U}$ of $G$ and an $(\epsilon,\phi)$-ED $\mathcal{U'}$ of the graph $G'=(V,E\setminus \mathcal{U})$, where $E\setminus \mathcal{U}$ denotes the set of inter-cluster edges of $\mathcal{U}$, i.e. the edges of $E$ that are not entirely contained in a cluster $U \in \mathcal{U}$ (see \cref{sec:prelims} and \cref{def:expdecrec}). We remark that we wish to do so in a single pass over the stream.

\paragraph*{A natural algorithmic approach and why it fails.} A naive attempt to solve this problem would be that of sketching the graph twice, for example by using our algorithm from \cref{cor:main}. One can use the first sketch to construct the first level ED $\mathcal{U}$. After, the hope is that one can send updates to the second sketch so as to remove the intra-cluster edges and then decode this sketch into an ED of $G'=(V,E\setminus \mathcal{U})$ using again \cref{cor:main}. However, this hope is readily dashed. Indeed, sketching algorithms break down if we send a removal update for an edge that was not there in the first place, and we do not have knowledge of which of the pairs~$\binom{U}{2}$ are in $E$ and which are not.

\paragraph*{Our contribution: space lower bound.} We show that, in sharp contrast to our algorithm for constructing a one-level expander decomposition, this problem requires $\widetilde{\Omega}(n/\phi)$ space, i.e. a dependence on $1/\phi$ is unavoidable. Formally, we obtain the following result.

\begin{restatable}[RED lower bound]{mainthactual}{lb}
	\label{th:lb}
	Let $\ell \ge 2$ and let $\epsilon, \phi \in (0,1)$ such that $ \epsilon = 1-\Omega(1)$, $\phi \le \epsilon$, and $\phi \ge C \cdot\max\{\epsilon^2 , \, 1/n\}$ for a large enough constant $C > 0$. Any  streaming algorithm that with probability at least $9/10$ computes  an $\ell$-level $(\epsilon,\phi)$-RED requires {$\Omega(n/\epsilon)$} bits of space.
\end{restatable}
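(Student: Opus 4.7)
The plan is to reduce from the one-way randomized communication complexity of the \textsc{Index} problem on $N = \Theta(n/\epsilon)$ bits, which is $\Omega(N)$ for protocols with constant error. A streaming algorithm with $s$ bits of space yields a one-way protocol of cost $s$: Alice processes her portion of the stream encoding $x$, sends her $s$-bit state to Bob, who appends a query gadget $H_i$, completes the simulation, and extracts $x_i$ from the computed $\ell$-level RED. This will give $s \ge \Omega(n/\epsilon)$. It suffices to handle $\ell = 2$, as an $\ell$-level RED contains the first two levels.

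\textbf{Hard instance and decoding.} I would design Alice's graph $G_x$ as a structured graph on $n$ vertices with a dense ``template'' part of $|E| \approx n/\epsilon^2$ edges forcing a specific level-1 ED, plus $N = \Theta(n/\epsilon)$ ``bit-carrying'' edges $\{e_j\}_{j \in [N]}$ placed across designated cuts of the template, where $e_j$ is present iff $x_j = 1$. Bob's query gadget $H_i$ is a small auxiliary expander attached to the endpoints $(u_i, v_i)$ of $e_i$, tuned so that in any valid 2-level $(\epsilon, \phi)$-RED of $G_x \cup H_i$ the following holds: (i) the level-1 ED must place $u_i$ and $v_i$ in separate clusters (because the template cut separating them has sparsity below $\phi$), so $e_i$ becomes a level-1 inter-cluster edge whenever it is present; and (ii) in the level-2 residual graph $G'$, the pair $\{u_i, v_i\}$ is a valid $2$-vertex $\phi$-expander cluster iff $e_i$ is present, whereas no larger cluster containing $u_i$ satisfies the $\phi$-expansion requirement once the auxiliary expander is attached. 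Bob then outputs $x_i = \ind[u_i, v_i \text{ share a cluster in } \mathcal{U}_2]$. The budgets work out: $|E| \approx n/\epsilon^2$, so the level-1 budget $\epsilon|E| = \Theta(n/\epsilon)$ accommodates all bit-edges as crossings, while the level-2 budget $\epsilon|E(G')| = \Theta(n)$ easily accommodates the decoded $2$-vertex clusters (which contribute essentially no level-2 crossings).

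\textbf{Main obstacle.} The crux is showing that the decoder is correct against every valid RED the algorithm might output, since both levels of the decomposition admit many valid partitions. We must engineer the template and Bob's gadget so that no valid 2-level RED of $G_x \cup H_i$ can ever decode a false answer. The parameter constraints $\phi \ge C\max\{\epsilon^2, 1/n\}$ and $\epsilon = 1-\Omega(1)$ are calibrated precisely to this end: the relation $\phi = \Theta(\epsilon^2)$ is the threshold at which the template's designated cuts become forcibly sparse (so level 1 must split along them) while a $2$-vertex cluster on two vertices of degree $\Theta(1/\phi)$ in $G'$ remains a valid $\phi$-expander at level 2, and any candidate larger absorbing cluster fails the expansion requirement; meanwhile $\epsilon = 1 - \Omega(1)$ provides the slack needed for all $\Theta(n/\epsilon)$ bit-edges to fit in the level-1 crossing budget. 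The bulk of the proof then consists of a case analysis verifying that, over the randomness of the algorithm and across all valid partition choices, the cluster of $\{u_i, v_i\}$ in $\mathcal{U}_2$ reveals $x_i$ with probability at least $9/10$ minus a small union-bound term, allowing us to invoke the \textsc{Index} lower bound.
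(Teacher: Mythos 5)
There is a genuine gap at the heart of your reduction: validity of an $(\epsilon,\phi)$-RED only constrains the \emph{aggregate} number of crossing edges at each level and the expansion of each cluster; it never forces any \emph{specific} edge to be intra-cluster at level 2. Singleton clusters are always trivially $\phi$-expanders, so even when $e_i$ is present and has become a level-1 crossing edge, a perfectly valid second-level decomposition may put $u_i$ and $v_i$ into singletons and leave $e_i$ as a level-2 crossing edge: this costs exactly one edge out of a level-2 budget of $\epsilon\,|E(G')| = \Theta(n)$, and no gadget $H_i$ can forbid it, since the budget is global and singletons never violate expansion. Consequently a correct small-space algorithm is free to output REDs from which $\ind[u_i,v_i \text{ share a cluster in } \mathcal{U}_2]$ carries no information about $x_i$, and your \textsc{Index} decoder fails; the reduction must work against \emph{every} valid output, not just the ``intended'' one. (Your level-1 claim (i) has a milder version of the same non-uniqueness issue, but the level-2 decoding is where the argument breaks.)

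This is precisely why the paper does not decode a single bit. Its hard distribution ($n/m$ random near-regular expanders $S_i$, a fixed expander $T$, and one hidden ``important'' vertex per $S_i$ wired to $T$) is calibrated so that the number of important edges, $\Theta(dn/m)$, \emph{exceeds} the level-2 crossing budget $O(\epsilon^2 dn)$ (using $m \ll 1/\epsilon^2$); hence in \emph{any} valid two-level RED at least a $2/3$ fraction of the important edges must be intra-cluster at level 2 (\cref{claim:pkrecover}), while \cref{lem:special-edge} forces most of them to be crossing at level 1. The recovery requirement is therefore aggregate: Bob, simulating the algorithm once for each of the $m/2$ possible positions of the important vertices, outputs a not-too-large set of pairs containing a constant fraction of \emph{all} of Alice's edges (the $\recover$ problem, \cref{def:recover}), and the space bound comes from an entropy/counting lower bound for $\recover$ (\cref{lem:lb-recover}) rather than from \textsc{Index}. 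If you want to salvage an \textsc{Index}-style argument, you would need the presence/absence of the queried edge to change something that every valid RED must reflect; as stated, nothing in the definition of an RED has that property for a single edge, so some aggregation over many bit-edges (as in the paper) appears unavoidable.
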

\noindent
The above theorem gives an $\widetilde{\Omega}(n/\phi)$ space lower bound for algorithms that compute a RED with near-optimal parameters, i.e. algorithms that achieve $\epsilon =\otil(\phi)$ for any $1/n \ll \phi \ll 1/\log n$.

\paragraph*{Setup and hard instances.}
Throughout this section, the symbols $\ll$ and $\gg$ mean smaller or larger by a large constant factor. Let us fix the RED parameters $\epsilon, \phi\in (0,1)$, and let us restrain ourselves to the regime $\phi \gg \eps^2$ (and of course $\phi \ll \epsilon$). We prove the lower bound by giving a distribution over hard instances $G=(V, E)$. This distribution is parametrised by integers $d$ and $m$ such that $1 \ll d \ll m \ll n$, $m \gg 1/\phi$, $ m \ll 1/\eps^2$, and $d \ll \frac{1}{\eps}$. With these parameters fixed, our hard distribution $\mathcal{G}$ is defined below in \cref{def:harddistrsimple}. An illustration is given in \Cref{fig:lb1}.

\begin{definition}[Distributions $\mathcal{G}$ and $\mathcal{G}'$ -- Informal, see \cref{def:harddistr}]
	\label{def:harddistrsimple}
 We partition $V$ arbitrarily  into two sets $S$ and $T$ with $n/2$ vertices each, and further partition $S$ into $n/m$ sets $S_1,\dots,S_{n/m}$ with $m/2$ vertices each. The edge set of the graph $G=(V,E) \sim \mathcal{G}$ is defined as follows.

\begin{enumerate}
    \item  For each $i~\in~[n/m]$, the induced subgraph $G[S_i]$ is an Erd\H{o}s-R\'{e}nyi random graph with $m/2$ vertices and degree $\approx d$. We denote by $\mathcal{G}'$ the distribution of the subgraph $G[S]$.

\item The induced subgraph $G[T]$ is a {\em fixed} $d$-regular $ \Omega(1)$-expander.

\item We fix for convenience an arbitrary labelling $s_{i,1},\dots,s_{i,m/2}$ of the vertices in each $S_i$, and we sample an index $K$ uniformly from $[m/2]$. Then, for every $i\in [n/m]$, we add $dm/2$ edges from $s_{i,K}$ to $T$ so that each $t \in T$ has $d$ incident edges connecting to $S$.
\end{enumerate}

\end{definition}
\noindent
Roughly speaking, our hard instances should be composed of $n/m$ regular expanders that are densely connected to $T$, which is also an expander, through a selection of ``special'' vertices. We will show that the hardness arises from recovering information about certain important vertices and edges, defined below and also illustrated in \Cref{fig:lb1}.

\begin{figure}[h]
	\begin{minipage}[center]{\textwidth}
		\centering
		\includegraphics[scale=0.93]{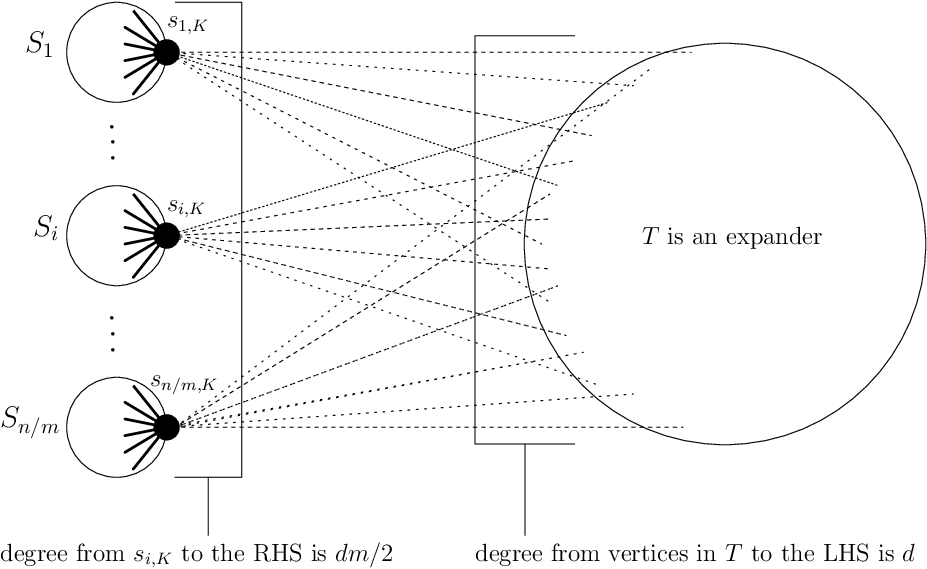}
	\end{minipage}
	\caption{Illustration of the graph we use for proving the lower bound. Thick bullets represent important vertices, thick lines represent important edges, dotted lines represent edges connecting the important vertices to $T$.}
	\label{fig:lb1}
\end{figure}

\begin{definition}[Important vertices and edges -- Informal, see \cref{def:impedges}]
	\label{def:impedgessimple}
Let $G=(V,E) \sim \mathcal{G}$.  We define the set of important vertices $V^*=\{s_{i,K}:i \in [n/m]\}$ to be the set of vertices of $S$ that are connected to $T$, and define the set of important edges $E^*=\{\{s_{i,K},v\} : i \in [n/m],\, v \in S \}$ to be the set of edges in the induced subgraph $G[S]$ that are incident on $V^*$.
\end{definition}
\noindent
The lower bound proof has two steps: we first prove that a two-level RED leaks a non-trivial amount of information about the graph $G \sim \mathcal{G}$; then we prove that, in order to obtain such amount of information, the algorithm must use a lot of space.

\subsubsection*{Two-level expander decomposition of the hard instance}
In this section, we show that any valid two-level RED reveals a lot of informations about the important edges. The following lemma shows that a non-trivial amount of important edges are inter-cluster edges in the first level decomposition. An ideal decomposition is  illustrated in \Cref{fig:decomp1}.

\begin{lemma}[Informal, see \cref{lem:special-edge}] \label{lem:special-edge-ov}
	Let $G=(V,E) \in \supp(\mathcal{G})$. Then, any $(\epsilon,\phi)$-ED~$\mathcal{U}$ of $G$ satisfies
	\begin{equation*}
		\left| E^* \setminus \mathcal{U}\right| \ge \frac{4}{5} \cdot \left| E^*\right| \, .
	\end{equation*}
\end{lemma}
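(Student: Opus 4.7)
The plan is to analyze, for each $i \in [n/m]$, the cluster $U_i \in \mathcal{U}$ containing the important vertex $s_{i,K}$, and track how many of the important edges at $s_{i,K}$ are absorbed inside $U_i$. I will set $A_i = S_i \setminus \{s_{i,K}\}$, $X_i = U_i \cap A_i$, and define $e_i$ to be the number of edges in $G[S_i]$ between $s_{i,K}$ and $X_i$ (the important edges kept in $U_i$), and $f_i$ to be the number of edges between $s_{i,K}$ and $U_i \cap T$. Since the intra-cluster important edges total exactly $\sum_i e_i$ and $|E^*| = \Theta(nd/m)$, it suffices to show $\sum_i e_i \le |E^*|/5$. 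The crucial structural fact is that the cut $(X_i, U_i \setminus X_i)$ inside $G[U_i]$ has crossing exactly $e_i$: every $v \in X_i \subseteq A_i$ has all of its $G$-neighbors in $S_i$, and $S_i \cap U_i = X_i \cup \{s_{i,K}\}$. The $\phi$-expansion of $G[U_i]$ then gives $e_i \ge \phi \cdot \min(\vol_{G[U_i]}(X_i), \vol_{G[U_i]}(U_i \setminus X_i))$, which I will exploit via a case split based on which side is smaller.

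In Case B, where $U_i \setminus X_i$ is the smaller side, combining the expansion inequality with $\vol_{G[U_i]}(U_i \setminus X_i) \ge \vol_{G[U_i]}(\{s_{i,K}\}) = e_i + f_i$ yields $f_i \le e_i(1-\phi)/\phi \le d/\phi$. Since $m\phi \gg 1$, the vertex $s_{i,K}$ therefore has at least $dm/2 - d/\phi \ge dm/4$ edges going to $T \setminus U_i$, all of which are inter-cluster and distinct across indices. The total budget $\le \epsilon|E| = \Theta(\epsilon nd)$ then forces $|I_B| = O(\epsilon n/m)$, and using $e_i \le d$ I obtain $\sum_{i \in I_B} e_i \le O(\epsilon) \cdot |E^*|$.

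In Case A, where $X_i$ is the smaller side, I have $e_i \ge \phi \vol_{G[U_i]}(X_i) = \phi(2 e(X_i) + e_i)$, where $e(X_i)$ counts the edges of $G[S_i]$ internal to $X_i$. Invoking the Erd\H{o}s--R\'enyi structure of $G[S_i]$ (by a Chernoff plus union-bound argument, $e(X_i) \gtrsim d|X_i|^2/m$ holds uniformly in $X_i$ above a small threshold), and combining with $e_i \le d$, I derive $|X_i| = O(\sqrt{m/\phi})$. Erd\H{o}s--R\'enyi concentration on $e_i = |N_{G[S_i]}(s_{i,K}) \cap X_i|$ around $(2d/m)|X_i|$ then gives $e_i = O(d/\sqrt{m\phi})$, and summing over indices in Case A yields $\sum_{i \in I_A} e_i \le O(|E^*|/\sqrt{m\phi}) = o(|E^*|)$ because $m\phi \gg 1$.

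Putting the two cases together, $\sum_i e_i \le O(\epsilon) \cdot |E^*| + o(|E^*|) \le |E^*|/5$, using that $\phi \ge C \max\{\epsilon^2, 1/n\}$ together with $\phi \le \epsilon$ forces $\epsilon \le 1/C$ for large enough $C$. The main obstacle will be executing Case A rigorously: the ER-based lower bound on $\vol_{G[U_i]}(X_i)$ and the matching upper bound on $e_i$ must hold uniformly over all adversarial subsets $X_i \subseteq A_i$ the ED could pick, which requires Chernoff and union-bound arguments treating very small $X_i$ (handled trivially via $e_i \le |X_i|$) and larger $X_i$ (handled via Erd\H{o}s--R\'enyi concentration) separately.
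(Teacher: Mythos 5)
Your reduction to bounding $\sum_i e_i$, the identity $\partial_{G[U_i]}(X_i)=e_i$, and Case~B are fine, but Case~A has a genuine gap at its final step, and it is the step carrying all the weight. The concentration claim ``$e_i=|N_{G[S_i]}(s_{i,K})\cap X_i|\approx (2d/m)|X_i|$'' holds only for a set fixed independently of the graph; here $X_i=U_i\cap(S_i\setminus\{s_{i,K}\})$ is chosen by the decomposition after seeing the graph, and no union bound can repair the claim because it is simply false for some admissible sets: taking $X_i\supseteq N_{G[S_i]}(s_{i,K})$ gives $e_i=\deg_{G[S_i]}(s_{i,K})\approx 2d$ with $|X_i|\approx 2d$, which is far below your derived ceiling $O(\sqrt{m/\phi})$, since in this regime $d\lesssim \epsilon/\phi+1/\epsilon$ and $m\approx 1/\phi$ give $d^2\phi\ll m$. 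Worse, such a cluster passes the only expansion test you use: with $X_i=N_{G[S_i]}(s_{i,K})$ one has $\vol_{G[U_i]}(X_i)=2e(X_i)+e_i=O(d^3/m+d)$, so the cut $(X_i,U_i\setminus X_i)$ has sparsity on the order of $\min\{1,\,m/d^2\}\ge\phi$ (again because $d^2\phi\ll m$), and it contributes nothing to the crossing edges you charge in either case. Hence nothing in your ledger ($e_i$, $f_i$, edges from $s_{i,K}$ to $T\setminus U_i$) excludes a decomposition that, for every $i$, keeps \emph{all} important edges of $s_{i,K}$ intra-cluster --- precisely what the lemma must rule out. Such configurations are in fact impossible, but only because of crossing edges you never count (those from $N_{G[S_i]}(s_{i,K})\cap U_i$ to $S_i\setminus U_i$) combined with structural facts about where the bulk of $S_i$ must live; your argument, which uses only the single cut around $X_i$ inside $U_i$ plus the global budget and ER densities, cannot see this.

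This is exactly where the paper's proof (\cref{lem:special-edge}) brings in machinery you never invoke: (i) a giant cluster $U^*$ containing most of $T$ and hence, via the $dm/2$ edges from each important vertex to $T$, most important vertices; (ii) the expansion of $G[S_i]$ itself, together with the \emph{local} crossing budget inside $S_i$, to show that for most $i$ a $(1-O(\epsilon))$ fraction of $S_i$ sits in a single cluster; and (iii) the key use of $m\gg 1/\phi$: that cluster cannot be $U^*$, since otherwise the cut splitting $(U^*\cap S_i)\setminus\{s_{i,K}\}$ off inside $G[U^*]$ has cut value at most $\deg_{G[S_i]}(s_{i,K})=O(d)$ against volume $\Omega(dm)$, i.e.\ sparsity $O(1/m)<\phi$, contradicting the $\phi$-expansion of $G[U^*]$ (\cref{claim:separateclusters}). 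Then at most $O(\epsilon m)\ll d$ vertices of $S_i$ can share a cluster with $s_{i,K}$ (this is where $\epsilon\ll d/m$ enters), so almost all important edges cross. Two secondary remarks: your uniform bound $e(X_i)\gtrsim d|X_i|^2/m$ survives the union bound over subsets only for $|X_i|\gtrsim (m/d)\log m$, and in the complementary ``small'' regime the trivial bound $e_i\le|X_i|$ can exceed $d$, so even that portion does not close as sketched; and since you rely on probabilistic ER events, you are really proving the statement after conditioning on typicality, which the paper does explicitly via \cref{lem:leftexpander}.
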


\begin{proof} [Proof sketch]
    By definition of the graph, there are $\Theta(dn)$ edges in the graph. Since there is at most an $\eps$ fraction of crossing edges, there are only $O(\eps dn)$ crossing edges. Note that there are~$\Theta(dn)$ edges in $G[T]$, so only an $O(\eps)$ fraction of the edges in $G[T]$ are crossing edges. Furthermore, $G[T]$ is a regular expander: this implies that there is a large cluster $U^* \in \mathcal{U}$ comprising a $1-O(\eps)$ fraction of $T$, together with a $1-O(\eps)$ fraction of important vertices. The latter is true since the edges between $S$ and $T$ make up a constant fraction of the total volume and only a small fraction of the edges can be crossing. We refer the reader to \Cref{claim:giantcluster} for more details.

    The edges in the subgraph $G[S]$ also account for a constant fraction of the total volume. Together with the fact that each $S_i$ induces a regular expander, for many of the $S_i$'s we will have a cluster in $\mathcal{U}$ that contains most of $S_i$. Now consider a set $S_i$ such that the important vertex of~$S_i$ is in~$U^*$ and most of the vertices in $S_i$ are all in the same cluster. If most of the vertices in $S_i$ are also inside~$U^*$, then consider the cut from $(U^* \cap S_i )\setminus\{s_{i,K}\} $ to $(U^* \setminus S_i) \cup \{s_{i,K}\}$. The cut size is at most the number of important edges in $S_i$, which is $O(d)$. On the other hand, the volume of the cut is $\Theta(dm)$ since most of the vertices of $S_i$ are in the cluster. Recalling that $m \gg \frac{1}{\phi}$, one concludes that the cut is sparse. Therefore, we ruled out the possibility of having many vertices of~$S_i$ in~$U^*$. See \Cref{claim:separateclusters} for a detailed discussion. 
    
    In summary, as illustrated in \Cref{fig:decomp1}, in any valid expander decomposition, most of the vertices in $T$ and most of the important vertices are inside a giant cluster $U^*$, and for most of the $S_i$'s, there is a cluster other than $U^*$ that contains most of the vertices in $S_i$. For any such $S_i$, most of the important edges inside it are then crossing edges.
\end{proof}
\noindent
The second level expander decomposition, i.e. an expander decomposition of the inter-cluster edges from the first level, is also quite structured, as illustrated in the ideal RED of \Cref{fig:decomp1}. 

\begin{lemma}[Informal, see \cref{claim:pkrecover}] \label{lem:sec-round-ov}
    Let $G=(V,E) \in \supp(\mathcal{G})$, and let $\mathcal{U}_1,\mathcal{U}_2$ be any $2$-level $(\epsilon,\phi)$-RED  of $G$. Then, there are at most $n/10$ vertices in $S$ that are non-isolated vertices\footnote{We call a vertex $v$ non-isolated in a decomposition $\mathcal{U}$ if $\mathcal{U}$ puts $v$ in a cluster with other vertices, i.e. $v$ does not constitute a singleton cluster in $\mathcal{U}$.} in~$\mathcal{U}_2$. Moreover, at least a $2/3$ fraction of important edges are not in $E\setminus \mathcal{U}_2$, i.e. a $2/3$ fraction of important edges are inside clusters of $ \mathcal{U}_2$.
\end{lemma}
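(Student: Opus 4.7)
The plan is to derive both statements as near-direct consequences of the $(\epsilon,\phi)$-ED properties of $\mathcal{U}_1$ and $\mathcal{U}_2$, together with the basic counts $|E|=\Theta(dn)$ and $|E^*|=\Theta(dn/m)$ that follow from $G \in \supp(\mathcal{G})$. Writing $G_2 = (V,E\setminus \mathcal{U}_1)$ for the ground graph of the second-level decomposition, the ED property of $\mathcal{U}_1$ yields the uniform bound
\[
	|E(G_2)| \;=\; |E\setminus\mathcal{U}_1| \;\le\; \epsilon |E| \;=\; O(\epsilon dn),
\]
which will be the quantitative engine for both parts.

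For the first claim, I would charge each non-isolated vertex to the volume of its cluster in $G_2$. Every non-singleton cluster $C \in \mathcal{U}_2$ induces a $\phi$-expander $G_2[C]$ with $|C|\ge 2$, and a $\phi$-expander with $\phi>0$ is necessarily connected, so each vertex of $C$ has degree at least one in $G_2[C]$, giving $|C|\le \vol_{G_2[C]}(C)$. Summing over non-singleton clusters,
\[
	\bigl|\{v \in V : v \text{ non-isolated in } \mathcal{U}_2\}\bigr| \;\le\; \sum_{C : |C|\ge 2}\vol_{G_2[C]}(C) \;\le\; 2|E(G_2)| \;\le\; 2\epsilon|E| \;=\; O(\epsilon d n),
\]
and the parameter assumption $d \ll 1/\epsilon$ makes $\epsilon d$ small enough that the right-hand side is at most $n/10$, which a fortiori bounds the count in $S$.

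For the second claim, the plan is even simpler. The ED property of $\mathcal{U}_2$ on $G_2$ bounds its inter-cluster edges by $|E \setminus \mathcal{U}_2| \le \epsilon|E(G_2)| \le \epsilon^2|E| = O(\epsilon^2 dn)$, and since $E^* \cap (E\setminus\mathcal{U}_2) \subseteq E\setminus\mathcal{U}_2$ the same upper bound applies to important inter-cluster edges. Dividing by $|E^*|=\Theta(dn/m)$, at most an $O(\epsilon^2 m)$ fraction of important edges lies in $E \setminus \mathcal{U}_2$, and $m \ll 1/\epsilon^2$ makes this at most $1/3$. No genuine obstacle is expected: both parts reduce to short inequality manipulations exploiting $d \ll 1/\epsilon$ and $m \ll 1/\epsilon^2$. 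The only subtle point is in part (a), where one must invoke the connectedness of a non-singleton expander to pass from an edge count to a vertex count; without that observation, the $O(\epsilon d n)$ bound on $|E(G_2)|$ could not be turned into a bound on the number of non-isolated vertices.
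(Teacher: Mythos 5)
Your argument for the first claim is correct and is essentially the paper's own: a vertex that is non-isolated in $\mathcal{U}_2$ must have at least one incident edge of $E\setminus\mathcal{U}_1$ inside its cluster (else the cluster is not a $\phi$-expander), so the number of such vertices is at most $2|E\setminus\mathcal{U}_1|\le 2\epsilon|E| = O(\epsilon dn)\le n/10$ by $d\ll 1/\epsilon$; your detour through cluster volumes and connectedness is the same counting.

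The second claim, however, has a genuine gap. You write $|E\setminus\mathcal{U}_2|\le\epsilon|E(G_2)|\le\epsilon^2|E|$, but with the paper's notation $E\setminus\mathcal{U}_2$ is the set of \emph{all} edges of $E$ that are not inside a cluster of $\mathcal{U}_2$, and the ED property of $\mathcal{U}_2$ only bounds the crossing edges of its own ground graph, i.e.\ $|(E\setminus\mathcal{U}_1)\setminus\mathcal{U}_2|\le\epsilon|E\setminus\mathcal{U}_1|$. Edges lying inside clusters of $\mathcal{U}_1$ are removed before the second level, so $\mathcal{U}_2$ promises nothing about them; in fact, by your own part (a) most vertices are singletons in $\mathcal{U}_2$, so the $\Theta(dn)$ intra-cluster edges of $\mathcal{U}_1$ typically \emph{do} lie in $E\setminus\mathcal{U}_2$, and your claimed inequality fails by a factor of roughly $1/\epsilon^2$. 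If instead you intend $E\setminus\mathcal{U}_2$ to mean $(E\setminus\mathcal{U}_1)\setminus\mathcal{U}_2$, the inequality is true but the conclusion you reach is only that few important edges cross \emph{both} levels, which does not yield the statement that a $2/3$ fraction of important edges sit inside clusters of $\mathcal{U}_2$: an important edge that is intra-cluster at level one can still have both endpoints isolated at level two.

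The missing ingredient is precisely \cref{lem:special-edge-ov} (formally \cref{lem:special-edge}): at least a $4/5$ fraction of important edges are crossing edges of $\mathcal{U}_1$, hence are edges of $G_2$. Each such edge is either crossing in $\mathcal{U}_2$ — and there are at most $\epsilon|E\setminus\mathcal{U}_1|\le 1.3\epsilon^2 dn = o(dn/m)$ of those, using $m\ll 1/\epsilon^2$ — or lies inside a cluster of $\mathcal{U}_2$. This gives at least $\frac{4}{5}|E^*| - O(\epsilon^2 dn)\ge \frac{2}{3}|E^*|$ important edges inside clusters of $\mathcal{U}_2$, which is exactly how the paper argues (see also the proof of \cref{claim:pkrecover}, part~\eqref{pk:2}). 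Without invoking the first-level lemma, the up-to-$1/5$ fraction of important edges that are intra-cluster at level one is uncontrolled, and your bound cannot be closed.
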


\begin{proof}[Proof sketch]
    The number of crossing edges in the first level decomposition is $O(\eps d n)$, which is much less than $n$ since $d \ll \frac{1}{\eps}$. This means that most of the vertices are isolated vertices in the second level. Moreover, by \Cref{lem:special-edge-ov}, most of the important edges are crossing edges in the first level decomposition. Among these $\Theta(dn/m)$ edges, at most $O(\eps^2 dn)$ edges can be crossing edges in the second level decomposition. Recalling that $m \ll \frac{1}{\eps^2}$, we see that most of the important edges are not crossing edges in the second level decomposition.
\end{proof}

\begin{figure}[h]
	\begin{minipage}[center]{\textwidth}
		\centering
		\includegraphics[scale=0.83]{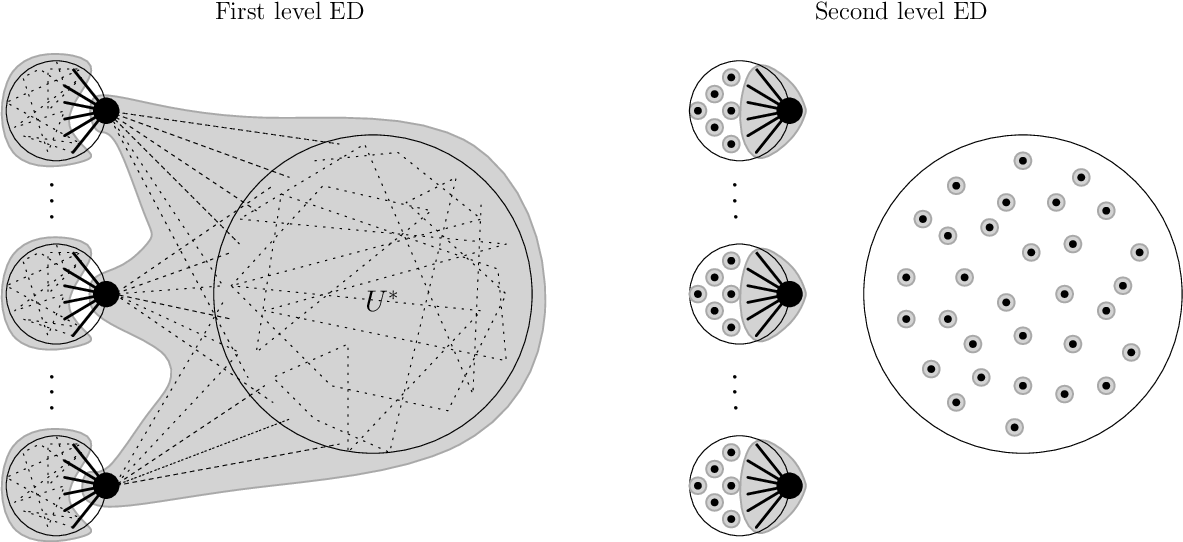}
	\end{minipage}
	\caption{Illustration of the ideal expander decomposition of the graph. Thick bullets represent important vertices, smaller bullets represent ordinary vertices, thick lines represent important edges, dotted lines represent the rest of the edges. Grey areas represent the clusters in the decomposition.}
	\label{fig:decomp1}
\end{figure}

\subsubsection*{Lower bound via communication complexity}
Our streaming lower bound will be proven in the two-player one-way communication model. In this setting, Alice gets the edges in $G[S]$ and $G[T]$, and Bob gets the edges between $S$ and $T$. We prove that in order to give a two-level RED, Alice needs to send $\Omega(dn)$ bits of information to Bob.

The high level idea is the following. Note that the identity of the important vertices can be only revealed by edges given to Bob. Thus, given Alice's input, every vertex in $S$ has the same probability to be an important vertex, which means that every edge in $S$ has the same probability to be an important edge. Therefore, in order to make sure that Bob recovers most of the important edges (which is morally equivalent to computing a two-level RED, as suggested by \Cref{fig:decomp1}), Alice needs to send most of the edges in $S$ to Bob, which is $\Omega(dn)$. 

To make the above idea concrete, we consider a communication problem where Alice is given a graph, and Bob is asked to output a not too large set of pairs that contains a good fraction of the edges of Alice's input. We first reduce this new problem to the two-level RED problem. Then, we will prove a communication complexity lower bound for this problem.
\begin{definition}[Informal, see \cref{def:recover}]
	\label{def:recoversimple}
    In the communication problem $\recover$, Alice's input is a graph $G'=(S,E')$ where $\card{S}=n/2$, and Bob's output is a set of pairs of vertices $F~\subseteq~\binom{S}{2}$ that must satisfy $|F| \le nm/10$ and $|F \cap E'| \ge \Omega(|E'|)$, i.e. at least a constant fraction of the edges in $G'$ are in $F$.
\end{definition}

\noindent
Now, the idea is to plant Alice's input for $\recover$ into our instance from \cref{def:harddistrsimple}. The input distribution for $\recover$ is sampled from  the distribution $\mathcal{G}'$. In other words, the distribution of Alice's input is the same as the left-hand side part of $G \sim \mathcal{G}$ (i.e. the subgraph $G[S]$). Then, in the reduction, Alice gets the edges of $G[S]$ and $G[T]$ while Bob gets the edges between $S$ and $T$. By virtue of the discussion in the previous section, we expect a RED of $G$ to allow Bob to recover many important edges in $G[S]$. Hence, Bob could simulate the RED algorithm for all the $m/2$ possible choices of the random index $K \sim [m/2]$ that defines the important edges (see \cref{def:harddistrsimple}): in this way, the $k$-th RED should reveal information about Alice's edges that are incident on the vertices $\{s_{i,k}\}_i$. Therefore, by varying $k$ over $[m/2]$, Bob should obtain information about all the edges in Alice's graph. More precisely, the reduction is the following.

\begin{reduct*}[Informal, see \cref{red:recover}]
	Let $\mathcal{A}$ be a deterministic streaming algorithm for computing a $2$-level $(\epsilon,\phi)$-RED. Alice, given her input graph $G'=(S,E')$ generated by $\mathcal{G}'$, feeds her edges $E'$ to $\mathcal{A}$, together with the fixed edges of $G[T]$. Then, she sends the memory state of $\mathcal{A}$ to Bob. Upon receiving the message, Bob makes $m/2$ copies  of $\mathcal{A}$ and initialises them to the memory state he received from Alice. Call these copies $\mathcal{A}_1, \dots, \mathcal{A}_{m/2}$. Next, for each $k \in [m/2]$, call $G_k = (V,E_k)$ the graph we obtain from~$\mathcal{G}$ when $K=k$ and the left-hand side $G_k[S]$ is exactly Alice's input $G'$ (so that the vertices $\{s_{i,k}\}_i$ are the important vertices in $G_k$, see \cref{def:harddistr}). Then, Bob feeds the edges $E_k(S,T)$ to $\mathcal{A}_k$. Let then $\mathcal{U}^k_1,\mathcal{U}^k_2$ be the RED output by $\mathcal{A}_k$. Bob finally constructs his output set $F$ as follows: for each $k \in [m/2]$, add the pair $\{s,s_{i,k}\}$ to $F$ for every $i \in [n/m]$ and every $s \in S_i$ that is not an isolated vertex in $\mathcal{U}^k_2$.
\end{reduct*}
\noindent
By \cref{lem:sec-round-ov}, the number of non-isolated vertices in the second-level decomposition is at most $n/10$. Hence, the total number of pairs added to $F$ is at most $nm/10$, thus satisfying the first requirement of \cref{def:recoversimple}. Moreover, by \cref{lem:sec-round-ov}, at least a $2/3$ fraction of the important edges are not crossing edges in the second level. This means that for each $k$, at least a $2/3$ fraction of the edges that are incident on $s_{i,k}$ is added to $F$. In turn, this implies that $F$ contains at least a constant fraction of the edges in~$E'$, thus satisfying the second requirement of \cref{def:recoversimple}. More precisely, one can prove the following.

\begin{lemma}[Informal, see \cref{lem:ind-recover}]
	\label{lem:informalreduction}
	If there is a deterministic $L$-bit space streaming algorithm $\mathcal{A}$ that computes a $2$-level $(\epsilon,\phi)$-RED with constant probability over inputs $G \sim \mathcal{G}$, then there is a deterministic protocol $\mathcal{R}$ that solves $\recover$ with constant probability over inputs $G' \sim \mathcal{G}'$. The communication complexity of $\mathcal{R}$ is at most $L$.
\end{lemma}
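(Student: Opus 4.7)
The plan is to verify that the protocol $\mathcal{R}$ described above meets the three requirements: (i) Alice sends at most $L$ bits; (ii) $|F| \le nm/10$; and (iii) $|F \cap E'| = \Omega(|E'|)$ with constant probability over $G' \sim \mathcal{G}'$. Item (i) is immediate, since Alice's only transmission is the $L$-bit memory state of $\mathcal{A}$. Item (ii) follows directly from \cref{lem:sec-round-ov}: each simulation $\mathcal{A}_k$ produces at most $n/10$ non-isolated vertices in $S$ and hence contributes at most $n/10$ pairs to $F$; summing over $k \in [m/2]$ gives $|F| \le nm/20 \le nm/10$.

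The substantive content lies in item (iii), and I would begin from the distribution identity underlying the reduction: for any fixed $G' \in \supp(\mathcal{G}')$ and any $k \in [m/2]$, the graph $G_k$ that Bob feeds into $\mathcal{A}_k$ has the distribution of $\mathcal{G} \mid K = k$, because $G_k[S]=G'$ and the remaining edges are a deterministic function of $k$. Since $K$ is uniform over $[m/2]$ in $\mathcal{G}$, averaging over $(G', k)$ reproduces the distribution $\mathcal{G}$. Hence if $\mathcal{A}$ succeeds with probability $p = \Omega(1)$ on $\mathcal{G}$, linearity of expectation yields
\begin{equation*}
\mathbb{E}_{G' \sim \mathcal{G}'}\Bigl[\tfrac{1}{m/2}\bigl|\{k\in[m/2] : \mathcal{A}_k \text{ succeeds on } G_k\}\bigr|\Bigr] \ge p \, ,
\end{equation*}
and a Markov inequality applied to the complement gives that, with probability at least $p/2$ over $G' \sim \mathcal{G}'$, at least a $p/2$ fraction of indices $k$ produce a valid two-level RED.

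Next I would translate each success into captured edges via \cref{lem:sec-round-ov}: for each successful $k$, at least $\tfrac{2}{3}|E_k^*|$ important edges of $G_k$ lie inside clusters of $\mathcal{U}_2^k$, which forces the non-important endpoint of every such edge to be non-isolated in $\mathcal{U}_2^k$ and hence places the edge in $F$ by Bob's rule. Using the double-counting identity $\sum_k |E_k^*| = \sum_i \sum_k \deg_{G[S_i]}(s_{i,k}) = 2|E'|$ together with the fact that each edge of $E'$ appears in exactly two of the sets $E_k^*$, summing over successful $k$'s and dividing by $2$ yields $|F \cap E'| \ge \tfrac{1}{3}\sum_{k \text{ succ.}} |E_k^*|$. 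The remaining and most delicate step, which I view as the main obstacle, is ruling out that the $p/2$ fraction of indices selected by Markov happens to be adversarially concentrated on abnormally small $|E_k^*|$'s. I would resolve this by a separate concentration argument over $G' \sim \mathcal{G}'$: a Chernoff bound on the degrees of the vertices $\{s_{i,k}\}_{i,k}$ in the Erd\H{o}s-R\'enyi subgraphs $G[S_i]$, combined with a union bound, ensures that with probability $1-o(1)$ over $G'$ every $|E_k^*|$ is within a constant factor of its mean $\Theta(dn/m)$. Intersecting this high-probability event with the Markov event then gives $|F \cap E'| = \Omega(p \, |E'|) = \Omega(|E'|)$, as required.
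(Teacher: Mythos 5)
Your overall route is the same as the paper's: the distributional identity $(G',K)\sim\mathcal{G}$ together with Markov's inequality over $G'\sim\mathcal{G}'$ to get a constant fraction of ``good'' indices $k$ is exactly \cref{claim:manypks}; the per-$k$ recovery of important edges from the second-level decomposition, the observation that each edge of $E'$ lies in exactly two sets $E^*_k$ (hence the factor $1/2$ in the double count), and the final lower bound $|F\cap E'|=\Omega(|E'|)$ mirror \cref{rmk:unionimportant}, \cref{claim:pkrecover} and \cref{lem:almostthere}; and your Chernoff-plus-union-bound step to rule out abnormally small $|E^*_k|$ is the role played by \cref{lem:leftexpander} (which the paper also needs for the expansion hypotheses of \cref{lem:special-edge}, not only for degree regularity). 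Item (i) is handled identically.

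The genuine gap is in your item (ii). The bound ``each $\mathcal{A}_k$ contributes at most $n/10$ pairs'' comes from \cref{lem:sec-round-ov}, which only applies when $\mathcal{U}^k_1,\mathcal{U}^k_2$ is actually a valid two-level RED of $G_k$. The algorithm fails with constant probability, so for a given $G'$ a constant fraction of the indices $k$ may produce garbage partitions, and Bob cannot detect this: he does not know $E'$, so he cannot verify validity of the output. A failing $\mathcal{A}_k$ can make every vertex of $S$ non-isolated and contribute up to $n/2$ pairs; with, say, a third of the indices failing this already gives roughly $nm/12$ pairs and the bound $|F|\le nm/10$ breaks, even on the Markov event you condition on for item (iii). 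The paper's reduction (\cref{red:recover}) avoids this by having Bob apply a filter he \emph{can} check from the output alone: $F_k$ is included only if the number of non-isolated vertices in $\mathcal{U}_2^k$ is at most $3\epsilon dn$ (this is what makes \cref{claim:deterministic1} unconditional, and it costs nothing for the good indices, which pass the filter by \cref{claim:pkrecover}). Your proof needs either this filtering step or an equivalent capping argument; as written, ``follows directly from \cref{lem:sec-round-ov}'' is not correct.
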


\noindent
The final component of the proof is the communication complexity lower bound for $\recover$.

\begin{lemma}[Informal, see \cref{lem:lb-recover}]
	\label{lem:informalrecover}
    The one-way communication complexity of solving $\recover$ with constant probability over inputs sampled from $\mathcal{G}'$ is $\Omega(dn)$.
\end{lemma}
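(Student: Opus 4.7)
The plan is to apply Yao's minimax principle and show that, under the input distribution $\mathcal{G}'$, any deterministic $L$-bit one-way protocol fails unless $L = \Omega(dn)$. For such a protocol, Alice's message map induces at most $2^L$ distinct outputs $F(M) \subseteq \binom{S}{2}$ with $|F(M)| \le nm/10$, so a union bound over messages gives
\[
\Pr_{G' \sim \mathcal{G}'}[\text{protocol succeeds}] \;\le\; 2^L \cdot \max_{F\,:\,|F| \le nm/10} \Pr_{G' \sim \mathcal{G}'}\bigl[|F \cap E'| \ge c \cdot |E'|\bigr] ,
\]
where $c>0$ is the constant from the definition of $\recover$. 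It therefore suffices to show that the right-hand maximum is at most $2^{-\Omega(dn)}$ for every admissible $F$.

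To establish this deviation bound, I would first use that $|E'|$ concentrates tightly around $dn/4$ via a Chernoff bound on the Erd\H{o}s--R\'enyi construction inside each $S_i$, so the event in question entails $|F \cap E'| \ge c' dn$ for some constant $c'>0$ except with negligible probability. Conditioning on the random partition $P$, the random variable $|F \cap E'|$ is distributed as $\mathrm{Bin}\bigl(k(P), p\bigr)$ with $p = 2d/m$ and $k(P) = |F \cap U(P)|$, where $U(P) := \bigcup_i \binom{S_i}{2}$. Since $F$ is fixed while $P$ is uniform over balanced partitions, a short calculation gives $\mathbb{E}[k(P)] = |F| \cdot (m/2-1)/(n/2-1) = O(m^2)$. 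The argument then splits at a threshold $\alpha = \Theta(mn)$: in the regime $k(P) \le \alpha$, a standard Chernoff bound applied to the conditional Binomial yields $\Pr[|F \cap E'| \ge c' dn \mid P] \le 2^{-\Omega(dn)}$ (choosing $\alpha$ so that $c' dn$ is a constant factor above the conditional mean $\alpha p$); in the regime $k(P) > \alpha$ the event becomes a large deviation for $k(P)$ itself, exceeding its mean by a factor of $\Omega(n/m)$.

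The main obstacle is therefore the upper tail of $k(P)$ under a uniform random balanced partition, since the indicators $\{\ind[e \subseteq S_i \text{ for some } i] : e \in F\}$ are coupled through the shared partition and are \emph{not} independent. My plan is to resolve this via a standard concentration inequality for monochromatic-pair counts under balanced random colourings. Concretely, I would decompose the graph $(S,F)$ into $O(\log n)$ subgraphs of bounded maximum degree and, within each bucket, apply Azuma--Hoeffding to the Doob martingale that reveals the group assignment of one vertex at a time; the Lipschitz constant per step is controlled by the bucket's maximum degree, which delivers $\Pr[k(P) > \alpha] \le 2^{-\Omega(mn)} \subseteq 2^{-\Omega(dn)}$ (a direct moment-method calculation of $\mathbb{E}[k(P)^K]$ using the exchangeability of the partition would yield the same tail). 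Combining the two regimes gives the per-$F$ deviation bound $2^{-\Omega(dn)}$, and the union bound from the first paragraph then forces $L = \Omega(dn)$, completing the proof.
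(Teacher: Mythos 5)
Your outer skeleton --- a union bound over the at most $2^L$ messages together with a per-message bound on the probability that the fixed output $F$ determined by that message is correct --- is sound in principle, and is essentially the counting that the paper performs via entropy in \cref{lem:lb-recover}. The gap is in the per-$F$ bound, and it is structural. First, in $\mathcal{G}'$ the partition of $S$ into the blocks $S_1,\dots,S_{n/m}$ is \emph{fixed} and public (\cref{def:harddistr}: the partition is chosen arbitrarily once and for all, and only the Erd\H{o}s--R\'enyi edges inside the blocks are random). So the quantity you make central, $k(P)=|F\cap U(P)|$ under a uniformly random balanced partition $P$, has no randomness to concentrate: an adversarial $F$ can be placed entirely inside $\bigcup_i\binom{S_i}{2}$, giving $k=|F|$ deterministically, and the Azuma/martingale step targets a source of hardness that the construction does not have. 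Second, once $k=|F|$, the numbers you use do not give an exponentially small bound. With the budget $|F|\le nm/10$ from the informal definition and intra-block edge probability $4d/m$, the mean of $|F\cap E'|$ is $|F|\cdot 4d/m=2dn/5$, which already exceeds the recovery threshold $\tfrac{1}{10}|E'|\approx dn/20$ (note also that $|E'|$ concentrates around $dn/2$, not $dn/4$). Indeed a single fixed $F$ consisting of $nm/10$ of the roughly $nm/8$ intra-block pairs is a correct output with probability $1-o(1)$, so under that size budget the per-$F$ probability is not small and the union bound yields nothing; no choice of your threshold $\alpha$ can fix this.

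What actually drives the lower bound, and what the paper's formal proof uses, is the much tighter output budget of \cref{def:recover}: $|F|\le 6\epsilon m|E'|$ with $\epsilon\le 10^{-33}/d$, so that $\mathbb{E}|F\cap E'| \le |F|\cdot\tfrac{4d}{m}=O(\epsilon d)\,|E'|\ll \tfrac{1}{10}|E'|$. With that budget your plan goes through with no partition randomness and no martingales: for each fixed $F$, $|F\cap E'|$ is a sum of independent indicators with mean $O(\epsilon d\cdot dn)$, and a Chernoff (or direct binomial-tail) bound at threshold $dn/20$ gives $\bigl(O(\epsilon d)\bigr)^{\Omega(dn)}=2^{-\Omega(dn)}$. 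This is precisely the content of the paper's count $\sum_{j\ge t/10}\binom{6\epsilon m t}{j}\binom{nm/8}{t-j}$ of inputs consistent with a given message inside the proof of \cref{claim:mutualunif}, combined with the conditioning on $|E'|=t\approx dn/2$ handled by \cref{claim:conditionlargeprob}. As written, your proposal neither invokes the $\epsilon$-dependent budget nor can replace it by the partition-randomness argument, so its key step fails.
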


\begin{proof}[Proof sketch]
    Roughly speaking, we show that the posterior distribution of the input conditioned on the output is shifted away from its prior distribution.
    
    Recall that when the input $G'=(S,E')$ is sampled from $\mathcal{G}'$, the graph is a disjoint union of $n/m$ random graphs with $m/2$ vertices each and degree~$\approx~d$. There are roughly 
    $$\left(\binom{m/2}{d}^{m/2}\right)^{n/m} \approx \left(\frac{m}{2d}\right)^{dn/2}$$ 
    possible inputs in total and the information complexity is $\Omega(dn)$. Conditioning on the the output~$F$ of  a correct protocol for $\recover$, the number of possible inputs is greatly decreased. In particular, to determine the input $E'$, we need to select a constant fraction, say $2/3$ for example, of the pairs from $F$, and select the rest of the edges (a $1/3$ fraction, in our example)  arbitrarily. Since $|F|$ is at most $nm/10$, $|E'| = \Theta(dn)$, and \smash{$\binom{S}{2} \approx nm/2$}, the total number of possible inputs is then roughly 
    $$\binom{nm/10}{2/3 \cdot dn} \cdot \binom{nm/2}{1/3 \cdot dn} \approx \left(\frac{3m}{20d}\right)^{1/3 \cdot dn} \cdot \left(\frac{3m}{2d}\right)^{1/6 \cdot dn} < \left(\frac{m}{3d}\right)^{ dn/2}.$$
    This means the information complexity of the input is decreased by a constant factor, which means that the protocol needs to communicate $\Omega(dn)$ bits of information.
\end{proof}

\noindent
Finally, one can conclude the main result (\Cref{th:lb}) combining \cref{lem:informalreduction} and \cref{lem:informalrecover}. The formal proofs and definitions are deferred to \cref{sec:lb}.

\section{Preliminaries}
\label{sec:prelims}

In this section, we introduce definitions and notation that we will use in the remainder of the paper.

\paragraph{Boundary-linked expander decomposition.}
In this paper we work heavily with boundary-linked EDs~\cite{boundlink}. A boundary-linked ED is the same as a classical ED except that Property~\eqref{property:expanderclassic} of \cref{def:expdecclassic} is strengthened. 

For a cut $S$ in a cluster $U$, the boundary $\unbor{S}{U}$ of $S$ with respect to $U$ is the number of edges that go from $S$ to the outside of $U$, i.e. $\unbor{S}{U}=|E(S,V \setminus U)|$. For $U \subseteq V$ and $\tau \geq 0$, the $\tau$-boundary linked subgraph of $G$ on $U$, denoted by $G[U]^{\tau}$, is the subgraph of $G$ induced by $U$ with additional \smash{$\tau \cdot \unbor{\{u\}}{U}$} self-loops attached to every $u \in U$.
	For cuts $S \subseteq U$, we adopt the following shorthand notation:
	\begin{itemize}
		\item the volume $\vol_{G[U]^{\tau}}(S)$ is denoted by \smash{$\vol_U^{\tau}(S)$};
		\item the sparsity $\Phi_{G[U]^{\tau}}(S)$ in $G[U]^{\tau}$ is denoted by $\Phi_U^{\tau}(S)$, which is equal to
		\begin{equation*}
			\Phi_U^{\tau}(S) = \frac{\uncut{S}{U}}{\min\{\vol_U^{\tau}(S), \vol_U^{\tau}(U \setminus S)\}} \, .
		\end{equation*}
	\end{itemize}
Let $\phi \in (0,1), b \in [\phi, 1)$ be some parameters. Most of the time in this paper, the value of~$\tau$ for all instances of boundary linked subgraphs will be the same and equal to the ratio $b/\phi$. Hence we adopt additional shorthand notation: we use \smash{$G[U]^{\circ}$}, \smash{$\vol_U^\circ(S)$}, \smash{$\Phi_U^{\circ}(S)$}, \smash{$\Phi_U^{\circ}$} for referring to \smash{$G[U]^{{b/\phi}}$, $\vol_U^{b/\phi}(S)$, $\Phi_U^{b/\phi}(S)$, $\Phi_U^{b/\phi}$} respectively.
One can observe that
\begin{equation*}
	\vol_U^\circ(S) = \vol_U^{b/\phi}(S) = \vol(S)+\left(\frac{b}{\phi}-1\right)\unbor{S}{U} \, .
\end{equation*}
\noindent
Then a boundary-linked ED can be defined as follows.

\begin{definition}[Boundary-linked expander decomposition~\cite{boundlink}]
	\label{def:expdec}
	Let $G=(V,E)$, let $b, \epsilon, \phi \in \, (0,1)$ be parameters such that $b \ge \phi$, and let $\gamma \ge 1$ be an error parameter. A partition $\mathcal{U}$ of $V$ is a $(b, \epsilon, \phi, \gamma)$-boundary-linked expander decomposition (for short, $(b, \epsilon, \phi, \gamma)$-BLD) of $G$ if
	\begin{enumerate}
		\item \label{property:crossing} $\frac{1}{2}\sum_{U \in \mathcal{U}} \uncutg{U} \le \epsilon |E|$, and
		\item \label{property:expander} for every $U \in \mathcal{U}$, $G[U]^{\circ}$ is a $ \phi/\gamma$-expander.
	\end{enumerate}
\end{definition}

\paragraph{Expander decomposition sequence.} For a partition $\mathcal{U}$ of $V$ (think of $\mathcal{U}$ as an ED of $G$), we denote by $E \setminus \mathcal{U}$ the set of inter-cluster (or crossing) edges with respect to $\mathcal{U}$, i.e.
\begin{equation*}
	E \setminus \mathcal{U} = E \setminus \bigcup_{U \in \mathcal{U}} \binom{U}{2} \, .
\end{equation*}
Analogously, we let $G\setminus \mathcal{U} = (V,E\setminus \mathcal{U})$ be the subgraph of $G$ obtained by removing the intra-cluster edges in $\mathcal{U}$.  For a sequence of partitions $\mathcal{U}_1, \dots, \mathcal{U}_\ell$ of $V$ and $i \in [\ell]$, we define $G^{\remove}_1=G$ and denote by $G^{\remove}_{i+1} = G^{\remove}_{i} \setminus \mathcal{U}_i$ the subgraph of $G$ obtained by removing the intra-cluster edges of the first $i$ partitions.

\begin{definition}[Removal-based ED sequence]
	\label{def:expdecrec}
	Let $G=(V,E)$, let $\epsilon, \phi \in (0,1)$, let $\ell \ge 1$, and let $\mathcal{U}_1, \dots, \mathcal{U}_\ell$ be a sequence of partitions of $V$. The sequence $\mathcal{U}_1, \dots, \mathcal{U}_\ell$ is an $\ell$-level removal-based $(\epsilon, \phi)$-ED sequence (for short, $\ell$-level $(\epsilon, \phi)$-RED or $(\epsilon,\phi,\ell)$-RED) of $G$ if, for all $i \in [\ell]$, $\mathcal{U}_i$ is an $(\epsilon, \phi)$-ED of the graph~$G^{\remove}_i$.
\end{definition}

\paragraph{Sparsifier-specific notation.}
Since our algorithms will be working on sparsifiers of the input graph, it is convenient to have short-hand notation for the above quantities in the sparsifier. For a weighted subgraph $H=(V,E',w)$ of $G$, a cluster $U \subseteq V$ and a cut $S \subseteq U$, we have ``sparsified'' counterparts $\wcutg{S}$, $\wvol{S}$, $\wcut{S}{U},\wbor{S}{U}$ for $\uncutg{S}$, $\vol(S)$, $\uncut{S}{U},\unbor{S}{U}$. More precisely, $\wcutg{S}=w(S,V\setminus S)$, $\wcut{S}{U}=w(S,U\setminus S)$, $\wbor{S}{U}=w(S,V\setminus U)$, and $\wvol{S}$ is the sum of weighted degrees of vertices of $S$ in $H$. Then, $H[U]^\circ$ denotes the subgraph of $H$ induced by $U$ (retaining edge weights from $w$), where every vertex is attached  \smash{$b/\phi \cdot \wbor{\{x\}}{U}$} self-loops. We then also introduce notation for the ``sparsified'' version of $\unvolc{S}{U}$ and $\unspc{S}{U}$: we denote $\vol_{H[U]^{b/\phi}}(S)$ and $\Phi_{H[U]^{b/\phi}}(S)$ by $\wvolc{S}{U}$ and $\wspc{S}{U}$ respectively. For clarity, we note that \smash{$\wvolc{S}{U} = \wvol{S}+\frac{b-\phi}{\phi}\wbor{S}{U}$}, reflecting the fact that \smash{$\vol_U^\circ(S) = \vol(S)+\frac{b-\phi}{\phi}\unbor{S}{U}$}.
\section{Sparsification for vertex-induced subgraphs}
\label{sec:spars}
In order to give a space efficient implementation of \cref{alg:metaed}, we will take a subroutine computing an approximate most-balanced sparse cut of a graph, and run it as a black box on a vertex-induced subgraph of a sparsifier. From such sparsifier, we demand that the local cuts are approximated to within an additive error proportional the corresponding global cut. Formally, we rely on the following cut preserving property for vertex-induced subgraphs.
\begin{definition}[Cluster sparsifier]
	\label{def:spars}
	Let $G=(V,E)$, let $\delta \in (0,1)$, let $U \subseteq V$ be a cluster, and let $H=(V,E',w)$ be a weighted subgraph of $G$ on the vertex set $V$. Then, $H$ is a cluster sparsifier for $U$ in $G$ with error~$\delta$, denoted by $H[U] \approx_\delta G[U]$, if one has
	\begin{equation*}
		\forall S \subseteq U,\,\uncut{S}{U} - \delta \cdot \uncutg{S} \le \wcut{S}{U}  \le \uncut{S}{U} + \delta \cdot \uncutg{S} \quad \text{ and }  \quad \forall S \subseteq V,\, (1-\delta)\cdot \uncutg{S}\le \wcutg{S} \le (1+\delta)\cdot \uncutg{S}\, .
	\end{equation*}
\end{definition}
\noindent
In \cref{def:spars}, the second property is the same as the first one when $U=V$. Therefore, if one is able to sample a sparsifier where the first property holds with high probability for any cluster $U\subseteq V$, then we get a sparsifier that with high probability embodies \cref{def:spars}. Perhaps surprisingly, we show that classical constructions of cut sparsifiers give the first property of \cref{def:spars}. Specifically, we use the approach of Fung et al.~\cite{fungspars}, based on sampling edges with probability proportional to their edge connectivity, together with its dynamic stream implementation by Ahn, Guha, and McGregor~\cite{sparsdynamicstreams}. Hence, we obtain the following sparsification lemma.

\begin{restatable}[Cluster sparsifiers in dynamic streams]{lemma}{maintechlemadd}
	\label{lem:distrcutstreamadd}
	Let $G=(V,E)$ be an $n$-vertex graph and let $\delta \in (0,1)$. Then, there is a distribution $\mathcal{D}_\delta$ such that, for any cluster $U \subseteq V$ and a sample $H=(V,E',w)  \sim \mathcal{D}_\delta$ one has $H[U] \approx_\delta G[U]$ with high probability, and $|E'| = \otil(n/\delta^2)$ edges. When $G$ is given in a dynamic stream, there is an algorithm that maintains a linear sketch of $G$ in $\otil(n/\delta^2)$ space and decodes it to output a weighted subgraph $H=(V,E',w)$  in $\otil(n/\delta^2)$ space and $\poly(n)$ time such that $H \sim  \mathcal{D}_\delta$.
\end{restatable}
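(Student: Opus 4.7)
The plan is to adapt the connectivity-based sampling scheme of Fung et al.~\cite{fungspars} and reanalyze it to obtain the local-cut guarantee of \cref{def:spars}; the dynamic-stream implementation will then piggyback on the $\ell_0$-sampling framework of Ahn, Guha, and McGregor~\cite{sparsdynamicstreams}. Concretely, the distribution $\mathcal{D}_\delta$ keeps each edge $e$ independently with probability $p_e = \min\{1, \rho/\lambda_e(G)\}$ and weight $1/p_e$, for $\rho = \Theta(\log n / \delta^2)$. The bound $\sum_e 1/\lambda_e(G) = O(n)$ (Nagamochi--Ibaraki) implies that the expected number of retained edges is $\otil(n/\delta^2)$, and standard tail bounds upgrade this to a high-probability statement. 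The second property of \cref{def:spars}, i.e., multiplicative preservation of every global cut, is then exactly the guarantee of Fung et al.

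The new ingredient is the first property of \cref{def:spars}: the additive-multiplicative preservation of local cuts inside a fixed cluster $U \subseteq V$. For any $S \subseteq U$, the random variable $\wcut{S}{U}$ is a sum of independent bounded contributions indexed by $E(S, U\setminus S)$, and I would apply an additive-multiplicative Chernoff bound of the form used in~\cite{streamexpdec} with slack term $\delta \cdot \uncutg{S}$ rather than $\delta \cdot \uncut{S}{U}$. The critical point, sketched in \cref{sec:techoverviewub}, is that every edge of $E(S, U\setminus S)$ is also an edge of the global cut $(S, V\setminus S)$, so its connectivity satisfies $\lambda_e(G) \le \uncutg{S}$; the Chernoff exponent then becomes $\Omega(\delta^2 \rho) = \Omega(\log n)$, with a hidden constant that can be tuned through $\rho$. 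To union bound, I identify each $S \subseteq U$ with the global cut $(S, V \setminus S)$ and invoke Karger's cut-counting lemma: the number of global cuts of value at most $\alpha$ times the min-cut is $n^{O(\alpha)}$, so the total failure probability is bounded by $\sum_{\alpha \ge 1} n^{2\alpha} \exp(-\Omega(\log n \cdot \alpha))$, which is $n^{-\Omega(1)}$ once $\rho$ is chosen large enough. Edges with $p_e = 1$ contribute deterministically and require no analysis.

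For the dynamic-stream construction, I rely on the AGM framework~\cite{sparsdynamicstreams}: a logarithmic chain of vertex-incidence $\ell_0$-samplers at geometrically decreasing effective sampling rates simulates the connectivity-based sampler of Fung et al.\ in linear-sketch form, using $\otil(n/\delta^2)$ bits and admitting polynomial-time decoding into a weighted subgraph whose law is $\mathcal{D}_\delta$. I expect the main obstacle to be the local-cut concentration step above: because no useful \emph{a priori} lower bound is available on $\uncut{S}{U}$, the Chernoff exponent must be expressed purely in terms of the global quantity $\uncutg{S}$ and then paid for by Karger's counting at the global scale, even though the object being estimated is intrinsically local; executing this with the non-uniform sampling rates $p_e \propto 1/\lambda_e(G)$ (rather than the uniform rates of~\cite{karger1994random}) requires the Fung et al.\ partitioning of edges by connectivity class and a per-class variant of the argument above.
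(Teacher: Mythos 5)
Your proposal is correct and follows essentially the same route as the paper: sampling at rate inversely proportional to edge connectivity as in~\cite{fungspars}, an additive--multiplicative Chernoff bound (as in~\cite{streamexpdec}) with the slack measured against the global cut $\uncutg{S}$, a union bound via the Fung et al.\ per-connectivity-class cut-projection counting (which, as you note at the end, is what is actually needed rather than plain Karger counting relative to the min cut), and an AGM-style linear-sketch implementation~\cite{sparsdynamicstreams}. The only deviations are cosmetic and absorbed by the $\otil(\cdot)$ notation: the paper defines $\mathcal{D}_\delta$ as the law of the output of the AGM-style algorithm itself (which uses $2$-approximate connectivities from a spectral sparsifier and powers-of-two weights, so it is not exactly your ideal sampler, and the concentration lemma is stated jointly with a ``bad event'' covering witness and estimation failures), and splitting the error budget $\delta\cdot\uncutg{S}$ across the $O(\log n)$ connectivity classes forces a sampling rate of order $\log^3 n/\delta^2$ rather than your $\Theta(\log n/\delta^2)$.
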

\noindent
The distribution \smash{$\mathcal{D}_\delta$} we use is a cosmetic modification of the algorithm of Ahn, Guha, and McGregor~\cite{sparsdynamicstreams}, hereafter referred to as the AGM algorithm. It provides a dynamic stream implementation of the sampling scheme of Fung et al.~\cite{fungspars} for constructing cut sparsifiers. Even though we deviate only slightly from the original analysis of Ahn, Guha, and McGregor and Fung et al., we give a complete proof for the sake completeness. In \cref{subsec:primer}, we give a few preliminaries on sparsification and dynamic streams that are needed to present and discuss the algorithm. In \cref{subsec:properties}, we outline the algorithm and prove the technical lemmas that characterise its correctness. In \cref{subsec:cutcounting}, we use standard cut counting arguments to combine such technical lemmas into proving \cref{lem:distrcutstreamadd}.

\subsection{Sparsification primer}
\label{subsec:primer}
We give a few preliminaries needed to present the AGM algorithm. We start with the following result: for a graph given in a dynamic stream, we can output a subgraph that preserves the distinction between $k$-edge connected and not $k$-edge connected cuts.

\begin{theorem}[Connectivity witness \cite{linmeas}]
	\label{th:connwit}
	Let $G=(V,E)$ be a graph given in a dynamic stream, and let $k \ge 1$ be an integer parameter. Then there is an algorithm $\textsc{ConnWit}(k)$ that maintains a linear sketch of $G$ in $\otil(kn)$ space. With high probability, the algorithm decodes the sketch to output a subgraph $G'=(V,E')$ in $\otil(kn)$ space and $\poly(n)$ time such that:
	\begin{enumerate}
		\item for all $\emptyset \neq S \subsetneq V$, if $|E(S,V\setminus S)| < k$ then $E(S, V \setminus S) \subseteq E'$;
		\item for all $\emptyset \neq S \subsetneq V$, if $|E(S,V\setminus S)| \ge k$ then $|E(S, V \setminus S) \cap E'|\ge k$;
		\item $|E'|=O(kn)$.
	\end{enumerate}
\end{theorem}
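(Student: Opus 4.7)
The plan is to reduce the general $k$ case to the $k=1$ case (spanning forest computation in dynamic streams) in a black-box way, and to handle $k=1$ via the classical sketching-based implementation of Bor\r{u}vka's algorithm. Throughout, we crucially rely on the sketch being \emph{linear}, so that edges extracted in earlier rounds of decoding can be virtually subtracted from the sketch.

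First, for the $k=1$ subroutine, I would describe the AGM spanning-forest sketch. For each vertex $v \in V$ assign the $\binom{n}{2}$-dimensional vector $x_v$ with $x_v[\{u,w\}]=+1$ if $v=u$ and $\{u,w\}\in E$, $-1$ if $v=w$ and $\{u,w\}\in E$, and $0$ otherwise (fixing an arbitrary orientation). The key identity is that for any $\emptyset\neq S \subsetneq V$,
\begin{equation*}
	\supp\Bigl(\sum_{v\in S} x_v\Bigr) = E(S, V\setminus S),
\end{equation*}
since every edge inside $S$ cancels. Maintaining an $\ell_0$-sampler sketch of each $x_v$ in $\otil(1)$ space thus enables us, by sketch linearity, to sample a uniformly random boundary edge of any component $S$ we currently hold. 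Running Bor\r{u}vka's algorithm for $O(\log n)$ rounds, each round using a fresh independent layer of $\ell_0$-sampling sketches, produces a spanning forest $F$ of $G$ in $\otil(n)$ space and $\poly(n)$ decoding time.

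Next, for general $k$, I would maintain $k$ independent copies $\mathcal{S}_1, \dots, \mathcal{S}_k$ of the spanning-forest sketch above, using $\otil(kn)$ space in total. At decoding time I would iteratively compute $F_1, F_2, \dots, F_k$: extract $F_1$ from $\mathcal{S}_1$; then, using linearity, subtract the contribution of the edges of $F_1$ from $\mathcal{S}_2$ to obtain a sketch of $G\setminus F_1$ and extract $F_2$ from it; and so on, so that $F_i$ is a spanning forest of $G\setminus (F_1\cup\dots\cup F_{i-1})$. The output is $E' = F_1\cup \dots \cup F_k$, of size at most $k(n-1)=O(kn)$, giving Property~3.

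Properties~1 and~2 will both follow from the single inductive claim that, for every $\emptyset\neq S \subsetneq V$ and every $i\in[k]$,
\begin{equation*}
	|E(S,V\setminus S)\cap (F_1\cup\dots\cup F_i)| \;\ge\; \min\{i,\,|E(S,V\setminus S)|\}.
\end{equation*}
The inductive step is the main (and essentially only) nontrivial point: if the left-hand side at step $i-1$ is already $|E(S,V\setminus S)|$ we are done; otherwise the residual graph $G\setminus(F_1\cup\dots\cup F_{i-1})$ still contains at least one edge crossing $(S,V\setminus S)$, so $S$ and $V\setminus S$ are not fully separated in that residual graph, and therefore any spanning forest $F_i$ of it must contain at least one new edge of the cut. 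Instantiating this claim at $i=k$ yields Property~2, and for cuts with fewer than $k$ edges it yields Property~1 (the minimum is then $|E(S,V\setminus S)|$, so every cut edge is captured). The main obstacle is really the $k=1$ spanning-forest sketch (the AGM contribution), which requires setting up the $\ell_0$-samplers carefully and arguing that $O(\log n)$ independent layers suffice to drive Bor\r{u}vka to completion with high probability; once that is in hand, the $k\ge 2$ reduction is a clean linearity-of-sketching argument.
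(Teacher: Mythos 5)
Your proposal is correct and follows essentially the same route as the paper: maintain $k$ independent AGM spanning-forest sketches, use linearity to peel off edge-disjoint maximal spanning forests $F_1,\dots,F_k$ sequentially, and output their union, with the cut-preservation guarantee coming from the fact that each $F_i$, being a maximal forest of the residual graph, must pick up a new cut edge whenever the residual graph still has one. The paper phrases this last combinatorial step as a contradiction with maximality rather than your explicit induction on $\min\{i,|E(S,V\setminus S)|\}$, but the argument is the same.
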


\begin{proof}
	The algorithm works by maintaining $k$ independent linear sketches for dynamic spanning forest~\cite{linmeas}, each taking $\otil(n)$ space to maintain and $\otil(n)$ space and $\poly(n)$ time to decode. We use the first sketch to recover a spanning forest of $G$. Then, we feed edge removals for every edge in the spanning forest to the second to $k$-th sketch. We proceed in this fashion until we have recovered $k$ edge-disjoint maximal spanning forest $F_1, \dots, F_k \subseteq E$ of $G$. Then we let $E'$ be the union of these spanning forests. This takes $k \cdot \otil(n)$ space and one has $|E'| = O(kn)$.
	
	Now consider a cut $\emptyset \neq S \subsetneq V$. Clearly $E'(S,V \setminus S) \subseteq E(S,V \setminus S)$. Now suppose $E(S,V \setminus S) < k$. For the sake of a contradiction, let us say $E(S,V \setminus S) \setminus E' \neq \emptyset$. By disjointness of the $F_i$'s, there must be a $j \in [k]$ such that the residual graph $G \setminus (\cup_{i=1}^{j-1} F_i)$ still contains an edge from $E(S,V \setminus S)$. Hence, there are two components of the spanning forest $F_j$ which could be joint by this edge, which contradicts its maximality.
	
	Consider the case of $E(S,V \setminus S) \ge k$, and suppose for the sake of a contradiction that $E'(S,V \setminus S) < k$. By disjointness of the $F_i$'s there must be a forest $F_j$ such that $F_j \cap E(S,V \setminus S) = \emptyset$ (as otherwise $E'(S,V \setminus S) \ge k$). As before, this contradicts the maximality of $F_j$.
\end{proof}
\noindent
Hereafter let $C>0$ be a large enough constant, and define for every $e=\{u,v\} \in \binom{V}{2}$ the ideal sampling probabilities
\begin{equation*}
	p_e = \min \left\{1, \frac{1}{\lambda_e}\cdot\frac{C \log^3 n}{\delta^2} \right\} \, ,
\end{equation*}
where $\lambda_e$ is the edge connectivity of $e$, i.e. the minimum number of edges crossing a $uv$-cut in $G$. The original AGM algorithm estimates edge connectivities (and hence sampling probabilities) via the sparsifier that is being constructed, in some sense. Here, we simplify the analysis and instead use a separate sparsifier for that purpose. For instance, one can use a spectral sparsifier, defined as follows.
\begin{definition}[Spectral sparsifier]
	Let $G=(V,E)$ and let $\xi \in (0,1)$. A $(1\pm \xi)$-spectral sparsifier of $G$ is a weighted subgraph $\tilde{G} = (V,\tilde{E},\tilde{w})$ such that
	\begin{equation*}
		\forall \vt{x} \in \mathbb{R}^V , \quad (1-\xi)\vt{x}^\top L \vt{x} \le \vt{x}^\top \tilde{L} \vt{x} \le (1+\xi)\vt{x}^\top L \vt{x} \, ,
	\end{equation*}
	where $L$ and  $\tilde{L}$ are the Laplacian matrices of $G$ and $\tilde{G}$, respectively.
	The Laplacian matrix of a graph is $L=D-A$, where $D$ denotes the degree diagonal matrix of the graph and $A$ denotes its (weighted) adjacency matrix
\end{definition}\noindent
It is known how to construct such sparsifiers in dynamic streams.
\begin{theorem}[\cite{spectralspars}]
	\label{th:spectral}
	Let $G=(V,E)$ be a graph given in a dynamic stream, and let $\xi \in (0,1)$. Then, there is an algorithm $\textsc{Spectral}(\xi)$ that maintains a linear sketch of $G$ in $\otil(n/\xi^2)$ space. With high probability, the algorithm decodes the sketch to output in $\otil(n/\epsilon^2)$ time and space a weighted subgraph  $\tilde{G} = (V,\tilde{E},\tilde{w})$ with $|\tilde{E}| = \otil(n/\epsilon^2)$ that is a $(1\pm \xi)$-spectral sparsifier of $G$.
\end{theorem}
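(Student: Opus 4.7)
The plan is to realize the effective-resistance sampling scheme of Spielman and Srivastava inside a dynamic stream. Recall that if each edge $e$ is sampled independently with probability $p_e \geq \min\{1,\, C \xi^{-2} R_e \log n\}$ (where $R_e$ is the effective resistance of $e$ in $G$) and surviving edges are reweighted by $1/p_e$, then the resulting graph is a $(1\pm \xi)$-spectral sparsifier with at most $\otil(n/\xi^2)$ edges with high probability, via matrix Chernoff together with the identity $\sum_e R_e = n-1$ for a connected graph. In a dynamic stream two obstacles must be addressed at once: we do not know the effective resistances in advance, and we must sample edges of the final graph $G$ that we only see implicitly through its updates.

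Both difficulties are handled by a chain-of-sparsifiers construction. Fix a sequence $G = G_0 \supseteq G_1 \supseteq \dots \supseteq G_\ell$ of length $\ell = O(\log n)$, where $G_{i+1}$ is obtained from $G_i$ by keeping each edge independently with probability $1/2$; with this choice $G_\ell$ is essentially empty. Each $G_{i+1}$ is a constant-factor spectral sparsifier of $G_i$, so effective resistances in the two levels agree up to a factor of $2$. Hence, given a constant-approximate spectral sparsifier $\tilde G_{i+1}$ of $G_{i+1}$, one can estimate effective resistances of edges of $G_i$ by solving Laplacian systems on $\tilde G_{i+1}$, and then drive the Spielman-Srivastava sampling at level $i$ using those estimates.

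To implement each level inside a dynamic stream, maintain for each $i$ a collection of $\ell_0$-samplers attached to Bernoulli-subsampled copies of the vertex-edge incidence matrix at geometrically decreasing rates, along with per-vertex heavy-hitter sketches for edge recovery; this uses $\otil(n/\xi^2)$ space per level and $\otil(n/\xi^2)$ total over the $O(\log n)$ levels. At decoding time the algorithm traverses the chain from the sparsest level upwards: at level $\ell$ the sketch suffices to recover $G_\ell$ exactly, and at each level $i$ the already-decoded $\tilde G_{i+1}$ yields effective-resistance estimates that select, for every edge, the subsampled sketch at the appropriate rate from which to draw it. Propagating a $(1\pm O(\xi/\log n))$ per-level spectral guarantee up the chain yields the final $(1\pm \xi)$-spectral sparsifier $\tilde G = \tilde G_0$ of $G$ in the claimed space and $\poly(n)$ time.

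The main obstacle, in my view, is that Spielman-Srivastava sampling at level $i$ uses per-edge probabilities that are only known after decoding level $i{+}1$, whereas the sketch is a single linear map fixed before the stream is observed. The resolution is that the relevant sampling rates live on $O(\log n)$ geometric scales (one per chain level), so a universal family of Bernoulli-subsampled incidence sketches is rich enough to synthesize each data-dependent $p_e$ at decoding time. The remaining technical work is (i) a union bound over the $O(\log n)$ levels and the $\otil(n/\xi^2)$ samples per level that the $\ell_0$-samplers and heavy-hitter sketches succeed, and (ii) a telescoping spectral concentration argument ensuring that the composition of $O(\log n)$ near-identity PSD distortions remains within the target $(1\pm \xi)$ factor.
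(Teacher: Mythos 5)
The statement you are proving is not actually proved in the paper: \cref{th:spectral} is imported verbatim as a black box from \cite{spectralspars}, so the only meaningful comparison is with the construction in that reference, which your outline broadly resembles (a chain of coarse approximations, Spielman--Srivastava effective-resistance sampling, and recovery of the sampled edges from linear sketches of the incidence matrix at geometric subsampling rates).

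As written, though, your argument has a genuine gap at its first step. A graph $G_{i+1}$ obtained from $G_i$ by keeping each edge independently with probability $1/2$ is \emph{not} in general a constant-factor spectral sparsifier of $G_i$: if $G_i$ contains a bridge, or any edge of constant effective resistance, that edge vanishes with probability $1/2$ and the quadratic form on the corresponding cut vector collapses, so no two-sided constant-factor bound can hold with high probability. Consequently the claim that effective resistances at consecutive levels agree up to a factor of $2$ --- which is what drives your resistance estimation and hence the entire sampling scheme --- fails, and with it the claim that the bottom of the chain is ``essentially empty'' yet still informative. This is exactly the difficulty the construction of \cite{spectralspars} is designed to avoid: instead of a random subsampling chain it uses a chain of regularized (``coarse'') Laplacians of the form $L+\gamma_i I$ with geometrically decreasing $\gamma_i$, for which the two-level factor-of-two comparison holds deterministically; the heavily regularized end of the chain is well-conditioned, so uniform sampling suffices there, and each subsequent level is sparsified by sampling edges of $G$ with probabilities proportional to leverage scores measured against the regularized matrix, recovered via sparse-recovery sketches of the (sign/JL-compressed) incidence matrix. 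If you replace your subsampling chain by this regularization chain (or otherwise restore a valid level-to-level spectral comparison, e.g.\ by exactly recovering all edges of low connectivity, as is done for cut sparsifiers), the rest of your outline tracks the cited proof. You would still need to handle the adaptivity issue you flag more carefully --- the data-dependent rates must be synthesized from sketch components independent of the samplers being decoded --- but that is a fixable technicality, whereas the subsampling-chain step is where the proposal as stated breaks.
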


\subsection{Sampling algorithm for dynamic streams}
\label{subsec:properties}
With all the preliminaries in place, we can describe (our version of) the AGM algorithm. The high level approach consists in sampling edges at geometric rates as they arrive in the stream, and then use the $i$-th connectivity witness to recover the edges $e$ that are sampled at rate $p_e \approx 2^i$. Algorithm~\ref{alg:dynsampling} outlines this process, where we denote by $\tilde{\lambda}_e$ the edge connectivity of \smash{$e \in \binom{V}{2}$} in \smash{$\tilde{G}$}. As one can see from the description of the algorithm, we actually let $w$ be a function $E \rightarrow \mathbb{R}_{\ge 0}$ where $w(e)=0$ for all $e \in E \setminus E'$. This facilitates the discussion.

\begin{algorithm}
	\caption{\textsc{agm}: sparsifier construction in dynamic streams}
	\label{alg:dynsampling}
	
	\begin{algorithmic}[1]
		\LeftComment{$\delta \in (0,1)$ is the error parameter}
		\LeftComment{$C >0$ is a large enough constant}
		\LeftComment{$k = 16 C \cdot \delta^{-2}\cdot \log^3 n$}
		\Statex
		\Procedure{\textsc{PreProcessing}}{}
		\State $\textsc{Spectral}\gets $ instance of $\textsc{Spectral}(1/2)$ from \cref{th:spectral}
		\For{$i=1,\dots,{2 \log n}$}
		\State $h_i \sim \unif\left(\{0,1\}^{\binom{V}{2}}\right)$ \Comment{sample $2 \log n$ independent uniform hash functions}
		\State $\textsc{ConnWit}_i(k) \gets $ $i$-th independent instance of $\textsc{ConnWit}(k)$ from \cref{th:connwit} \label{lineagm:feedconnwit}
		\EndFor
		\EndProcedure
		\Statex
		\Procedure{\textsc{OnUpdate}$(e$)}{} \Comment{$e \in \binom{V}{2}$}
		\State feed the update for $e$ to $\textsc{Spectral}$
		\For{$i = 0, \dots, 2\log n$}
		\If{$\prod_{j = 1}^{i} h_j(e)=1$}
		\State feed the update for $e$ to $\textsc{ConnWit}_i(k)$
		\EndIf
		\EndFor
		\EndProcedure
		\Statex
		\Procedure{\textsc{PostProcessing}}{}
		\State $\tilde{G}=(V,\tilde{E},\tilde{w}) \gets $ result of $\textsc{Spectral}$
		\For{$i=0,\dots,{2 \log n}$}
		\State $G_i= (V,E_i) \gets (V,\{e \in E: \prod_{j = 1}^{i} h_j(e)=1\})$ \Comment{used for convenience of analysis only}
		\State $G_i'=(V,E'_i) \gets $ result of $\textsc{ConnWit}_i(k)$ \Comment{used by the algorithm}
		\EndFor
		\For{$e \in \binom{V}{2}$}
		\State $\tilde{p}_e \gets \min \left\{1, \frac{2}{\tilde{\lambda}_e}\cdot\frac{C \log^3 n}{\delta^2} \right\} $
		\State $j_e \gets  \left\lfloor \log \frac{1}{\min\{1, \tilde{p}_e\}}\right\rfloor $.
		\If{$e \in E'_{j_e}$}
		\State $w(e) \gets 2^{j_e}$
		\Else
		\State $w(e)\gets 0$
		\EndIf
		\EndFor
		\State $E' \gets \supp(w)$
		\State \Return $H=(V,E',w)$.
		\EndProcedure
	\end{algorithmic}
\end{algorithm}

We remark that Algorithm~\ref{alg:dynsampling} assumes access to $\poly(n)$ bits of randomness. It is known that one can lift this assumption by using Nisan's pseudorandom generator~\cite{nisansprg,lpindyk} with only a $\pylog(n)$ space blow-up factor (and a $\poly(n)$ time blow-up). We remark that we can do this because the random bits are only used during the stream processing, and we only compute linear sketches in this phase, which in particular are oblivious to the order of updates in the stream. Then we can already conclude a bound on the space complexity, as well as on the size of the output graph.
\begin{lemma}
	\label{lem:spacesampling}
	Algorithm~\ref{alg:dynsampling} takes $\otil(n/\delta^2)$ bits of space and $H$ has at most $\otil(n/\delta^2)$ edges.
\end{lemma}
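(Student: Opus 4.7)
The plan is to bound the space and edge-count contributions of each component of Algorithm~\ref{alg:dynsampling} separately and sum them, invoking \cref{th:spectral} and \cref{th:connwit} as black boxes.

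For the space complexity, I would account for three types of stored objects: (i) the single instance of $\textsc{Spectral}(1/2)$, which by \cref{th:spectral} takes $\otil(n)$ bits since $\xi = 1/2$; (ii) the $2\log n$ hash functions $h_i$; and (iii) the $2\log n$ independent connectivity witnesses $\textsc{ConnWit}_i(k)$ with $k = 16C \cdot \delta^{-2}\log^3 n$, each taking $\otil(kn) = \otil(n/\delta^2)$ bits by \cref{th:connwit}, for a total of $2\log n \cdot \otil(n/\delta^2) = \otil(n/\delta^2)$ over all witnesses. Summing these three contributions gives $\otil(n/\delta^2)$ bits in total. A subtlety is that each $h_i$ is a uniform hash function on $\binom{V}{2}$, which cannot be stored explicitly; however, as noted above \cref{lem:spacesampling}, one may replace the truly random bits with Nisan's pseudorandom generator at a $\pylog(n)$ space overhead, because all random bits are consumed only by linear sketches, whose outputs are oblivious to the order of stream updates.

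For the edge count, I would note that $E' = \supp(w)$, and by construction $w(e) > 0$ only if $e \in E'_{j_e}$, where $j_e \in \{0, 1, \dots, 2\log n\}$. Hence $E' \subseteq \bigcup_{i=0}^{2\log n} E'_i$. By \cref{th:connwit}, each $|E'_i| = O(kn) = \otil(n/\delta^2)$, so a union bound over the $O(\log n)$ witnesses yields $|E'| \le \otil(n/\delta^2)$, as claimed.

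The only non-routine point is handling the post-processing, where the algorithm iterates over all of $\binom{V}{2}$ when computing $\tilde{p}_e$, $j_e$, and the assignment $w(e) \gets 2^{j_e}$. This loop is innocuous in terms of space as long as one processes the pairs $e$ one at a time, looking up $\tilde{\lambda}_e$ from the already-computed $\tilde{G}$ and $E'_{j_e}$ from the already-decoded witness outputs, and only records $w(e)$ when $w(e) > 0$. Since the supports of the nonzero weights lie inside $\bigcup_i E'_i$, which already fits in $\otil(n/\delta^2)$ space, the bookkeeping of $H$ stays within the same budget.
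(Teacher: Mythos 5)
Your proof is correct and follows essentially the same route as the paper: bound each $\textsc{ConnWit}_i(k)$ instance by $\otil(kn)=\otil(n/\delta^2)$ bits via \cref{th:connwit}, and observe $E' \subseteq \cup_i E'_i$ with each $|E'_i|=O(kn)$ to bound the edge count. Your additional accounting of the spectral sparsifier and of the hash-function randomness (via Nisan's generator) is a slightly more thorough bookkeeping of points the paper handles in the remark preceding the lemma, but it does not change the argument.
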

\begin{proof}
	The edges of $H$ can only come from the output of one of the instances $\textsc{ConnWit}$. In other words, $E' \subseteq \cup_i E_i'$. From \cref{th:connwit} we also know that each instance of $\textsc{ConnWit}$ takes $\otil(kn)$ bits of space and that each $E_i'$ contains at most $\otil(kn)$ edges. Hence, $|E'| = \otil(kn)=\otil(n/\delta^2)$ and Algorithm~\ref{alg:dynsampling} takes \smash{$\otil(kn) = \otil(n/\delta^2)$} bits of space.
\end{proof}

\noindent
We continue our analysis by bounding the probability that a certain bad event happens, which corresponds to the sampling or sketching primitives failing to preserve a needed value. To be more specific about this, first define for every $i =0, \dots, 2\log n$ the event $\mathcal{B}_i$ that instance $i$ of Theorem~\ref{th:connwit} $\textsc{ConnWit}_i(k)$ fails to correctly construct the graph $G_i'$. In addition, for every $e \in \binom{V}{2}$, let \smash{$\mathcal{B}^{(1)}_e$} be the event that \smash{$\tilde{\lambda}_e < 1/2 \cdot \lambda_e$ or $\tilde{\lambda}_e > 3/2 \cdot \lambda_e$}. Finally, define for every $e \in E$ the quantity
\begin{equation*}
	\tau_e = \left\lfloor \log \frac{1}{\min\{1, p_e\}}\right\rfloor \, ,
\end{equation*} and let \smash{$\mathcal{B}^{(2)}_e$} be the event that there exists $j  \in \mathbb{N}$ with $\tau_e-2 \le j \le \tau_e$ such that $\lambda_e(G_{j}) \ge k$, where $k$ is defined in \cref{alg:dynsampling} and $\lambda_e(G_i)$ is the edge connectivity of $e$ in $G_i$ for $i=0,\dots,2\log n$. With this notation, the bad event we want to avert is
\begin{equation*}
	\mathcal{B} = \Big(\vee_{i=0}^{2 \log n} \mathcal{B}_i\Big) \vee \Big(\vee_{e \in \binom{V}{2}} \mathcal{B}^{(1)}_e\Big) \vee \Big(\vee_{e \in E} \mathcal{B}^{(2)}_e\Big) \, .
\end{equation*}
Intuitively, this is a bad event because the algorithm cannot trust the connectivity witnesses nor the edge connectivities computed via the spectral sparsifiers. On the contrary, when $\mathcal{B}$ does not happen, we know that all the graphs $G_i'$ are constructed correctly (thanks to $\neg  \mathcal{B}_i$), all the edges have $j_e$ that is not too large (thanks to \smash{$\neg  \mathcal{B}_e^{(1)}$}), and we can recover the sampled edges via the graphs~$G_i'$ (thanks to \smash{$\neg  \mathcal{B}_e^{(2)}$}).

\begin{lemma}
	\label{lem:failsampling}
	When running Algorithm~\ref{alg:dynsampling}, the bad event $\mathcal{B}$ does not happen with high probability.
\end{lemma}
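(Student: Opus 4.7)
The plan is to decompose $\mathcal{B}$ via the union bound into its three families of sub-events, and to show each one happens with probability at most $1/\poly(n)$. For $\vee_{i=0}^{2 \log n} \mathcal{B}_i$, each instance of $\textsc{ConnWit}_i(k)$ succeeds with high probability by \cref{th:connwit}, and a union bound over the $O(\log n)$ instances suffices. For $\vee_e \mathcal{B}_e^{(1)}$, the $(1 \pm 1/2)$-spectral sparsifier produced by \cref{th:spectral} preserves every cut value within a factor $(1 \pm 1/2)$ (by plugging the indicator vector of a cut into the spectral inequality); taking the minimum over $uv$-cuts then yields $\lambda_e/2 \le \tilde \lambda_e \le 3 \lambda_e/2$ simultaneously for all $e \in \binom{V}{2}$, which directly negates $\mathcal{B}_e^{(1)}$.

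The technical core is bounding $\vee_{e \in E} \mathcal{B}_e^{(2)}$. I would fix $e = \{u,v\} \in E$ and $j \in \mathbb{N}$ with $\max(0,\tau_e - 2) \le j \le \tau_e$, and exploit the fact that the hash functions $h_1, \dots, h_j$ are uniform and independent across edges, so every edge of $E$ lies in $G_j$ independently with probability $2^{-j}$. Taking $S^\star$ to be any minimum $uv$-cut in $G$, one has $\lambda_e(G_j) \le |E(S^\star, V \setminus S^\star) \cap E_j| =: X_j \sim \mathrm{Bin}(\lambda_e, 2^{-j})$. From the definition $\tau_e = \lfloor \log(1/p_e) \rfloor$ one derives $2^{-\tau_e} \le 2 p_e$, and hence
\begin{equation*}
\mathbb{E}[X_j] \;\le\; 4 \lambda_e \cdot 2^{-\tau_e} \;\le\; 8 \lambda_e p_e \;=\; 8 C \log^3 n / \delta^2 \;=\; k/2 \, .
\end{equation*}
A Chernoff bound in the regime $k \ge 2 \mathbb{E}[X_j]$ (say the form $\Pr[X \ge (1+\delta)\mu] \le \exp(-\mu \delta^2/(2+\delta))$ applied with $\delta \ge 1$, so $\delta/(2+\delta) \ge 1/3$) then gives $\Pr[X_j \ge k] \le \exp(-k/6) \le n^{-\Omega(C)}$, and union bounding over the at most three admissible values of $j$ and over the $|E| \le n^2$ edges yields total failure probability at most $1/\poly(n)$ for $C$ large enough.

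The main obstacle is this last Chernoff estimate: one has to verify that even at the smallest admissible sampling level $j = \tau_e - 2$ the expected sampled size of the min $uv$-cut in $G_j$ still lies below $k$ by a constant factor, which is precisely where the slack constant $16$ in the definition of $k$ enters. A minor subtlety is that the three bad-event families involve different sources of randomness (the hashes $h_j$, the \textsc{Spectral} sketch, and the \textsc{ConnWit} sketches), but we never need to condition on one to analyse another since we only apply a crude union bound. Once the three sub-events have been controlled, that final union bound gives $\Pr[\mathcal{B}] \le 1/\poly(n)$, as required.
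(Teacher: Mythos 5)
Your proposal is correct and takes essentially the same route as the paper's proof: a union bound over the three families of sub-events, with \cref{th:spectral} handling $\mathcal{B}^{(1)}_e$, \cref{th:connwit} (together with independence of the hash randomness and the sketch randomness) handling the $\mathcal{B}_i$, and a Chernoff bound on the binomially sampled size of a fixed minimum $uv$-cut, whose expectation is at most $k/2$, handling $\mathcal{B}^{(2)}_e$. The differences are cosmetic: your Chernoff form $\exp(-\mu\delta^2/(2+\delta))$ applied at the fixed threshold $k$ gives exponent at least $k/6$ uniformly in $\mu$ and so sidesteps the paper's separate treatment of the $\tau_e=0$ (i.e.\ $p_e>1/2$) case, and the step $8\lambda_e p_e = 8C\log^3 n/\delta^2$ should be an inequality ``$\le$'' when $p_e=1$, which is all your argument uses.
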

\begin{proof}
	By \cref{th:spectral}, we know that $\vee_{e} \mathcal{B}^{(1)}_e$ does not happen with high probability for any $e \in \binom{V}{2}$. Now let $i=0,\dots, 2 \log n$, and note that the randomness used to define the input $G_i$ to $\textsc{ConnWit}_i(k)$ and the randomness used by $\textsc{ConnWit}_i(k)$ itself are independent. Hence, by \cref{th:connwit} and by union bound over $i$, we get that $\vee_{i}\mathcal{B}_i$ does not happen with high probability.  Lastly, we deal with the events $\mathcal{B}^{(2)}_e$.
	
	For ease of notation, define for any $\emptyset \neq S \subseteq V$ the size of its cut projected onto the edges of $G_i$ as $\partial^i S=|E_i \cap E(S,V \setminus S)|$. Then consider any $e=\{s,t\} \in E$ and any $j  \in \mathbb{N}$ with $\tau_e-2 \le j \le \tau_e$, and let $S_e \subseteq V$ be an $st$-cut in $G$ of minimum size, i.e. $\lambda_e = \uncutg{S_e}$. Observe that by the way the algorithm defines the sub-streams, each $f \in E$ is in $E_i$ with probability $2^{-i}$ for all $i=0, \dots, 2 \log n$, so in expectation the edge connectivity of $e$ has dropped below $4 C \cdot \delta^{-2}\cdot \log^3 n$ in the graph $G_{j}$. More precisely,
	\begin{align*}
		\ex[\partial^{j} S_e] = \uncutg{S_e} \cdot 2^{-j} & \le 4\cdot \uncutg{S_e} \cdot 2^{-\tau_e} \\
		&  \le 8\lambda_e \min\{1,  p_e\} \\
		& \le 8\lambda_e p_e  \\
		& \le 8 \lambda_e \frac{C \cdot \delta^{-2}\cdot \log n}{\lambda_e} = 8 C \cdot \delta^{-2}\cdot \log^3 n \, .
	\end{align*}
	Applying the Chernoff bound (since edges are sampled independently of each other)  we get that
	\begin{align*}
		\Pr\left[\partial^{j} S_e > 12 C \cdot \delta^{-2}\cdot \log^3 n \right] & \le \Pr\left[\partial^{j} S_e > \left(1+\frac{1}{2}\right) \ex[\partial^{j} S_e] \right] \\
		& \le \exp\left(-\frac{\ex[\partial^{j} S_e]}{12}\right) \\
		& \le \exp\left(-\frac{\ex[\partial^{\tau_e} S_e]}{12}\right)\, .
	\end{align*}
	In order to make sure that the exponential above is in fact small enough, we now have two cases to consider. First consider the possibility that $\tau_e= 0$, or equivalently $p_e > \frac{1}{2}$. This means two things: $E_{\tau_e}=E$ (see \cref{alg:dynsampling}), and $\lambda_e < 2 C \cdot \delta^{-2}\cdot \log^3 n$. Then, $\partial^{\tau_e} S_e=\lambda_e \le 12 C \cdot \delta^{-2}\cdot \log^3 n$ with probability $1$, so we need not worry about this case. For the other possibility that $ p_e \le \frac{1}{2}$, one can deduce that $p_e = C \cdot \lambda_{e}^{-1} \delta^{-2}\cdot \log^3 n$ and thus lower bound the expectation of $\partial^{\tau_e} S_e$ as
	\begin{equation*}
		\ex[\partial^{\tau_e} S_e] = \partial S_e \cdot 2^{-\tau_e} \ge \lambda_e p_e = C \cdot \delta^{-2}\cdot \log^3 n \, ,
	\end{equation*}
	Hence, the Chernoff bound above gives $\partial^{\tau_e} S_e \le 12 C \cdot \delta^{-2}\cdot \log^3 n < k$ with high probability. Thus, the edge connectivity of $e$ in $G_{j}$ must be at most $\partial^{j} S_e < k$. Taking a union bound gives that $\vee_{e} \mathcal{B}^{(2)}_e$ does not happen with high probability.
\end{proof}

\noindent
Finally, we show properties of the weight function produced by the algorithm. We first focus on the special case of edges with very low connectivity, i.e. with $p_e=1$, and show that these are indeed copied exactly in $E'$, as one would expect.

\begin{lemma}
	\label{lem:oneprobsampling}
	Let $H=(V,E',w)$ be the output of Algorithm \ref{alg:dynsampling}. If $\mathcal{B}$ does not happen, then for any edge $e \in E$ with $p_e =1$ we have $e \in E'$ and $w(e)=1$ .
\end{lemma}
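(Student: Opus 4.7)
The strategy is to trace the algorithm for an edge $e$ with $p_e=1$ and verify at each decision point that the good event $\neg\mathcal{B}$ forces the algorithm onto the desired branch. The plan breaks into four short steps.

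First, I translate $p_e = 1$ into a concrete bound on $\lambda_e$. By definition of $p_e$, the equality $p_e=1$ means $\frac{1}{\lambda_e}\cdot\frac{C\log^3 n}{\delta^2}\ge 1$, so $\lambda_e\le \frac{C\log^3 n}{\delta^2} = k/16$ (using $k = 16C\delta^{-2}\log^3 n$). In particular $\lambda_e < k$.

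Second, I argue that the algorithm computes $j_e=0$. Because $\neg\mathcal{B}_e^{(1)}$ holds we have $\tilde{\lambda}_e\le \tfrac{3}{2}\lambda_e$, so
\begin{equation*}
\tilde{p}_e = \min\left\{1,\ \frac{2}{\tilde{\lambda}_e}\cdot\frac{C\log^3 n}{\delta^2}\right\} \ge \min\left\{1,\ \frac{4}{3\lambda_e}\cdot\frac{C\log^3 n}{\delta^2}\right\} \ge \min\left\{1,\ \tfrac{4}{3}\right\} = 1,
\end{equation*}
hence $\tilde{p}_e=1$ and $j_e = \lfloor \log 1\rfloor = 0$.

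Third, I show $e\in E_0'$. By inspection of the algorithm, an edge is fed to $\textsc{ConnWit}_0(k)$ exactly when $\prod_{j=1}^{0}h_j(e)=1$, which is an empty product; therefore $G_0=G$ deterministically. Under $\neg\mathcal{B}_0$, the output $G_0'=(V,E_0')$ satisfies the guarantee of \cref{th:connwit} applied to $G$. Pick a minimum $e$-cut $S_e\subseteq V$ in $G$, so that $|E(S_e,V\setminus S_e)|=\lambda_e<k$ by the first step. Item (1) of \cref{th:connwit} then gives $E(S_e,V\setminus S_e)\subseteq E_0'$; in particular $e\in E_0'$.

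Finally, since $j_e=0$ and $e\in E'_{j_e}=E_0'$, the post-processing loop of \cref{alg:dynsampling} enters the ``then'' branch and sets $w(e)\gets 2^{j_e}=1$. Hence $e\in\supp(w)=E'$ and $w(e)=1$, as claimed. The ``obstacle'' is essentially trivial here — no probabilistic argument is needed beyond the hypothesis $\neg\mathcal{B}$; the only thing one must be careful about is correctly matching the off-by-one conventions so that $j_e=0$ lines up with the $i=0$ connectivity witness that receives the entire stream.
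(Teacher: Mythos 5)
Your proof is correct and follows essentially the same route as the paper's: use $p_e=1$ to bound $\lambda_e < k$, use $\neg\mathcal{B}^{(1)}_e$ to deduce $\tilde{p}_e=1$ and hence $j_e=0$, and use $\neg\mathcal{B}_0$ together with property (1) of \cref{th:connwit} (applied to a minimum $e$-cut in $G_0=G$) to conclude $e\in E_0'$ and $w(e)=2^0=1$. Your write-up merely spells out the steps (empty hash product, choice of the cut $S_e$) that the paper leaves implicit.
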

\begin{proof}
	Assume $\mathcal{B}$ does not happen. This in particular means that $G_0'$ is correctly constructed, so by Theorem~\ref{th:connwit} any cut of size less than $16C \cdot \delta^{-2}\cdot \log^3$ is exactly preserved in $G_0'$. On the other hand, if an edge $e \in E$ has $p_e = 1$ then its edge connectivity $\lambda_e$ is at most $C \cdot \delta^{-2}\cdot \log^3 n$. Since $\tilde{\lambda}_e \le 2 \lambda_e$, it then follows that $\tilde{p}_e = 1 $, so $j_e=0$ and hence the claim.
\end{proof}

\noindent
For edges which we do not know to have $p_e=1$, but for which we at least know a lower bound $p$ on their sampling probability, we wish to show that they are preserved with ``very high'' probability. We do this by using a Chernoff-style bound of Ahn, Guha, and McGregor.

\begin{restatable}{lemma}{probsampling}
	\label{lem:probsampling}
	Let $H=(V,E',w)$ be the output of Algorithm \ref{alg:dynsampling}. Then for any edge set $X \subseteq E$, any $p \le \min_{e \in X} p_e$, and any $x \ge |X|$ one has
	\begin{equation*}
		\Pr\left[\neg \mathcal{B} \text{ and } \left|\sum_{e \in X} w(e) - |X|\right| > \delta \cdot x \right] \le 2 \exp\left(-\frac{1}{48} \cdot \delta^2 \cdot  p x\right).
	\end{equation*}
\end{restatable}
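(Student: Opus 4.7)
The plan is to decouple the geometric sub-stream sampling from the connectivity witness recovery by introducing an idealized weight $w'_e \coloneqq 2^{j_e} \prod_{i=1}^{j_e} h_i(e)$---the value $w(e)$ would take if the algorithm had direct access to the sub-stream $G_{j_e}$ rather than only to the recovered graph $G'_{j_e}$---and then apply an additive-multiplicative Chernoff bound to $\sum_{e \in X} w'_e$. The crucial bridge is a pointwise identity: under the good event $\neg \mathcal{B}$, one has $w(e) = w'_e$ for every $e \in E$. To see this, observe that $\neg \mathcal{B}^{(1)}_e$ forces $\tilde{p}_e$ to lie within a constant factor of $p_e$ (as derived in the proof of \cref{lem:failsampling}), so $j_e \in [\tau_e - 2, \tau_e]$. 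Then $\neg \mathcal{B}^{(2)}_e$ yields $\lambda_e(G_{j_e}) < k$, and \cref{th:connwit} under $\neg \mathcal{B}_{j_e}$ preserves the membership of $e$ exactly, giving $e \in G'_{j_e} \iff e \in G_{j_e}$ and hence $w(e) = 2^{j_e} \cdot \mathbf{1}[e \in G_{j_e}] = w'_e$; the degenerate case $\tilde{p}_e = 1$ (so $j_e=0$ and $w'_e = 1$ deterministically) is already handled by \cref{lem:oneprobsampling}.

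Consequently,
\begin{equation*}
    \Pr\left[\neg \mathcal{B} \text{ and } \left|\sum_{e \in X} w(e) - |X|\right| > \delta x\right] \le \Pr\left[\neg \bigvee_{e \in E}\mathcal{B}^{(1)}_e \text{ and } \left|\sum_{e \in X} w'_e - |X|\right| > \delta x\right],
\end{equation*}
so it suffices to bound the right-hand side. I would then condition on the internal randomness of \textsc{Spectral}, which determines the values $\{j_e\}_{e \in E}$ and also determines whether $\bigvee_e \mathcal{B}^{(1)}_e$ holds. On $\neg \bigvee_e \mathcal{B}^{(1)}_e$, we get $2^{j_e} \le 1/\tilde{p}_e \le O(1/p_e) \le O(1/p)$ uniformly over $e \in X$. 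Crucially, \textsc{Spectral} is independent of the hash functions $h_1,\dots,h_{2\log n}$, so conditionally on $\{j_e\}$ the variables $(w'_e)_{e \in X}$ are independent, each supported on $\{0, 2^{j_e}\}$ with $\ex[w'_e \mid j_e] = 2^{j_e} \cdot 2^{-j_e} = 1$; in particular $\ex[\sum_{e \in X} w'_e \mid \{j_e\}] = |X|$.

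The final step is the additive-multiplicative Chernoff bound (as used, e.g., in~\cite{streamexpdec}) with maximum term $M = O(1/p)$, expectation $\mu = |X|$, and reference level $\mu' = x \ge \mu$, yielding the desired tail $2\exp(-\delta^2 p x / 48)$ after tracking the explicit constants; marginalising over the \textsc{Spectral} randomness concludes. The main obstacle I expect is the pointwise identity $w = w'$ on $\neg \mathcal{B}$: it demands simultaneously threading all three families $\mathcal{B}^{(1)}_e, \mathcal{B}^{(2)}_e, \mathcal{B}_{j_e}$ of bad events through the algorithm and carefully matching the range of $j_e$ to the exact cut-preservation clause of \cref{th:connwit}. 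Once this bridge is in place, conditional independence of the hashes from \textsc{Spectral} makes the Chernoff step routine.
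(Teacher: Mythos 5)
Your proposal is correct and follows essentially the same route as the paper's proof: the pointwise identity $w(e)=2^{j_e}\prod_{t\le j_e}h_t(e)$ on $\neg\mathcal{B}$ (via the range $\max\{0,\tau_e-2\}\le j_e\le\tau_e$ from $\neg\mathcal{B}^{(1)}_e$, the event $\neg\mathcal{B}^{(2)}_e$, and \cref{th:connwit}), followed by the additive-multiplicative Chernoff bound with $a=1/p$, $\zeta=\delta/2$. Conditioning on the internal randomness of \textsc{Spectral} rather than on the realized connectivity vector $\tilde{\lambda}$, as the paper does, is an equivalent way to exploit independence from the hash functions, and your constant bookkeeping has enough slack to recover the stated $\delta^2px/48$.
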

\begin{proof}
	For ease of notation, let $\tilde{\lambda}$ be the vector of edge connectivities $(\tilde{\lambda}_e)_{e \in E}$ from \cref{alg:dynsampling}, and $\lambda$ is the vector of edge connectivities $(\lambda_e)_{e\in E}$ of $G$. We will condition on the values $\hat{\lambda}$ that $\tilde{\lambda}$ can take, and define
	\begin{equation*}
		\hat{p}_e = \min \left\{1, \frac{2}{\hat{\lambda}_e}\cdot\frac{C \log^3 n}{\delta^2} \right\} \quad \text{ and } \quad  \hat{j}_e = \left\lfloor \log \frac{1}{\min\{1, \hat{p}_e\}}\right\rfloor  \, .
	\end{equation*}
	Then,
	\begin{align*}
		& \Pr\left[\neg \mathcal{B} \text{ and } \left|\sum_{e \in X} w(e) - |X|\right| > \delta \cdot x \right] \\
		= & \sum_{\hat{\lambda} \in \mathbb{N}^E} 	\Pr\left[\neg \mathcal{B} \text{ and } \left|\sum_{e \in X} w(e) - |X|\right| > \delta \cdot x  \text{ and } \tilde{\lambda}=\hat{\lambda} \right] \\
		= & \sum_{\hat{\lambda} \in \mathbb{N}^E \,:\, \frac{1}{2} \lambda \le {\hat{\lambda}} \le \frac{3}{2} \lambda	} \Pr\left[\neg \mathcal{B} \text{ and } \left|\sum_{e \in X} w(e) - |X|\right| > \delta \cdot x  \text{ and } \tilde{\lambda}=\hat{\lambda} \right] \, .
	\end{align*}
	By conditioning we get
	\begin{align*}
		& \Pr\left[\neg \mathcal{B} \text{ and } \left|\sum_{e \in X} w(e) - |X|\right| > \delta \cdot x \right] \\
		= & \sum_{\hat{\lambda} \in \mathbb{N}^E \,:\, \frac{1}{2} \lambda \le {\hat{\lambda}} \le \frac{3}{2} \lambda	} \Pr\left[\neg \mathcal{B}\text{ and } \left|\sum_{e \in X} w(e) - |X|\right| > \delta \cdot x  \middle| \tilde{\lambda}=\hat{\lambda} \right] \cdot \Pr\left[\tilde{\lambda}=\hat{\lambda}\right]
	\end{align*}
	We now analyse the first factor in each term of the sum above. We use three facts to do this: (1) whenever we have $\neg \mathcal{B}$, for all $e \in X$ one has that $e \in E_{j_e}'$ if and only if $e \in E_{j_e}$ (this is the case because \smash{$\frac{1}{2} \lambda \le {\tilde{\lambda}} \le \frac{3}{2} \lambda$}, so \smash{$\max\{0, \tau_e-2\} \le j_e \le \tau_e$, and the event $\mathcal{B}^{(2)}_e$} does not happen); (2) recall that, for $i=0,\dots,2\log n$, $e \in E_{i}$ if and only if $\prod_{t=1}^{i}h_t(e)=1$ (see \cref{alg:dynsampling}); (3) also recall that \cref{alg:dynsampling} assigns weight $2^{j_e}$ to an edge recovered from $E'_{j_e}$ (see \cref{alg:dynsampling}).
	Using these three facts, we get
	\begin{equation*}
		 \Pr\left[\neg \mathcal{B} \text{ and } \left|\sum_{e \in X} w(e) - |X|\right| > \delta \cdot x  \middle| \tilde{\lambda}=\hat{\lambda} \right]
		= \Pr\left[\neg \mathcal{B} \text{ and } \left|\sum_{e \in X} 2^{j_e}\prod_{t=1}^{j_e}h_t(e) - |X|\right| > \delta \cdot x  \middle| \tilde{\lambda}=\hat{\lambda} \right] \, .
	\end{equation*}
	Next, by the condition $\tilde{\lambda}=\hat{\lambda}$, we have $\hat{j}_e=j_e$ for all $e \in X$, so we get
	\begin{align*}
	\Pr\left[\neg \mathcal{B} \text{ and } \left|\sum_{e \in X} w(e) - |X|\right| > \delta \cdot x  \middle| \tilde{\lambda}=\hat{\lambda} \right]
		& =  \Pr\left[\neg \mathcal{B} \text{ and } \left|\sum_{e \in X} 2^{\hat{j}_e}\prod_{t=1}^{\hat{j}_e}h_t(e) - |X|\right| > \delta \cdot x  \middle| \tilde{\lambda}=\hat{\lambda} \right] \\
		& \le \Pr\left[\left|\sum_{e \in X} 2^{\hat{j}_e}\prod_{t=1}^{\hat{j}_e}h_t(e) - |X|\right| > \delta \cdot x  \middle| \tilde{\lambda}=\hat{\lambda} \right] \\
		& = \Pr\left[\left|\sum_{e \in X} 2^{\hat{j}_e}\prod_{t=1}^{\hat{j}_e}h_t(e) - |X|\right| > \delta \cdot x \right]  \, .
	\end{align*}
	Now we apply the following form of the Chernoff bound.
	\begin{fact}[\cite{streamexpdec}]
		Let $X_1,\dots,X_N$ be independent random variables distributed in $[0,a]$. Let $X= \sum_{r \in [N]} X_r$ and $\mu = \ex[X]$. Then, for $\zeta \in (0,1)$ and $\alpha \ge 0$ one has
		\begin{equation*}
			\Pr\left[\left| X- \mu \right| > \zeta \mu+\alpha \right] \le 2 \exp\left(-\frac{\zeta \alpha}{3a}\right)
		\end{equation*}
	\end{fact}
	\noindent
	Before applying this bound, we note that the random variables $2^{\hat{j}_e}\prod_{t=1}^{\hat{j}_e}h_t(e)$ are independent across $e \in X$ and have each mean $1$. Moreover, each of this random variables have value at most \smash{$2^{\hat{j}_e} \le 2^{\lfloor \log \frac{1}{\min\{1,p\}}\rfloor}$}, so we set $a = 1/p$. We also set $\zeta = \delta/2$ and $\alpha = \delta\cdot (x-|X|)+\delta|X|/2$. Then we get
	\begin{equation*}
		\Pr\left[\left|\sum_{e \in X} 2^{\hat{j}_e}\prod_{t=1}^{\hat{j}_e}h_t(e) - |X|\right| > \delta \cdot x \right] \le 2 \exp\left(-\frac{\delta^2}{48}px\right) \, .
	\end{equation*}
	This concludes the proof.
\end{proof}

\subsection{Cut counting and proving the sparsification lemma}
\label{subsec:cutcounting}
We hereafter fix a cluster $U\subseteq V$, and we partition the edge set $E$ into $F_i = \{e \in E: \, 2^i \le \lambda_e \le 2^{i+1} -1 \, \}$ for $i =0,\dots,\log n$. Note that these $F_i$'s are pairwise disjoint and their union gives~$E$. We recall that $E(A,B)$ denotes the set of edges in $E$ with one endpoint in $A$ and one in $B$, for $A,B \subseteq V$. Then, we write $\iuncut{S}{U}=|F_i \cap E(S,U \setminus S)|$ to denote the size of the projection of the local cut of $S$ onto the edge set $F_i$, for all $i=0,\dots, \log n$ and $\emptyset \neq S \subsetneq U$. Analogously, we also write $\iwcut{S}{U}=w(F_i \cap E(S,U \setminus S))$ to denote the total weight of the projection of the local cut of $S$ onto~$F_i$, where $w$ is the weight function of $H$.

We show that $H$ preserves the cuts of a cluster $U \subseteq V$ within each $F_i$ individually, with high probability. This follows by two of the technical lemmas we proved in the previous section, namely \cref{lem:oneprobsampling} and \cref{lem:probsampling}.

\begin{lemma} \label{lem:levelbound}
	Let $H=(V,E',w)$ be the output of \cref{alg:dynsampling}, and let $i =0,\dots,\log n$. Then
	\begin{align*}
		\Pr\left[ \neg B\,  \text{ and }\, \exists \, \emptyset \neq S \subsetneq U \, : \,  \left|\iwcut{S}{U} - \iuncut{S}{U}\right| > \delta \cdot \frac{\uncutg{S}}{1+\log n} \right] \le \frac{1}{\poly(n)}
		\, .
	\end{align*}
\end{lemma}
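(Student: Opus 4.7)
The plan is to combine a per-cut concentration bound -- obtained by re-running the Chernoff argument in the proof of \cref{lem:probsampling} with finer parameters -- with a Karger--Fung-style cut counting union bound. Fix the cluster $U \subseteq V$ and a level $i \in \{0,\dots,\log n\}$. I would first dispose of the ``dense'' regime $2^{i+1} \le 2C\log^3 n/\delta^2$: every $e \in F_i$ then has $\lambda_e \le 2C\log^3 n/\delta^2$ and hence $p_e = 1$, so by \cref{lem:oneprobsampling}, conditional on $\neg \mathcal{B}$, each such edge is preserved exactly with weight $1$ in $H$, making $\iwcut{S}{U} = \iuncut{S}{U}$ for every $S$ and the lemma trivial. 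Henceforth I would assume $2^i > C\log^3 n/\delta^2$, so that every $e \in F_i$ satisfies $p_e \ge p \coloneqq C\log^3 n/(\delta^2 \cdot 2^{i+1})$.

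Fix next an arbitrary $\emptyset \neq S \subsetneq U$ and let $X \coloneqq F_i \cap E(S, U\setminus S)$, so that $|X| = \iuncut{S}{U}$ and $\sum_{e \in X}w(e) = \iwcut{S}{U}$. If $X = \emptyset$ there is nothing to prove; otherwise, any $e \in X$ also crosses $S$ globally, so $\uncutg{S} \ge \lambda_e \ge 2^i$. I would re-do the additive-multiplicative Chernoff computation from the proof of \cref{lem:probsampling} with the sharper choice $\zeta = \delta/(2(1+\log n))$ and $\alpha = \delta\uncutg{S}/(2(1+\log n))$. Since $|X| \le \uncutg{S}$ this yields a deviation at most $\zeta|X| + \alpha \le \delta \uncutg{S}/(1+\log n)$, and the tail becomes
\begin{equation*}
\Pr\!\left[\neg\mathcal{B} \,\text{ and }\, \left|\iwcut{S}{U} - \iuncut{S}{U}\right| > \frac{\delta \cdot \uncutg{S}}{1+\log n}\right] \le 2\exp\!\left(-\Omega\!\left(\frac{\delta^2 \, p \, \uncutg{S}}{(1+\log n)^2}\right)\right).
\end{equation*}
Substituting the lower bound on $p$ and writing $\uncutg{S} = 2^i \cdot \beta$ with $\beta \ge 1$, the exponent becomes $\Omega(C\beta \log n)$, so the per-cut failure probability is at most $n^{-\Omega(C\beta)}$.

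Finally, I would union-bound over cuts. Any $S \subseteq U$ with $X \neq \emptyset$, viewed as a subset of $V$, is a global cut of $G$ containing at least one edge of connectivity at least $2^i$. By the Karger--Fung cut counting lemma (in the form used in~\cite{fungspars}), for every $\alpha \ge 1$ the number of such cuts with $\uncutg{S} \le \alpha \cdot 2^i$ is at most $n^{O(\alpha)}$. Bucketing cuts by $\uncutg{S} \in [2^i \cdot 2^j, 2^i \cdot 2^{j+1})$ for $j \ge 0$ and summing,
\begin{equation*}
\sum_{j \ge 0} n^{O(2^j)} \cdot n^{-\Omega(C \cdot 2^j)} \le \frac{1}{\poly(n)}
\end{equation*}
for $C$ chosen large enough, which is the desired bound. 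The main obstacle in this plan is choosing the right cut counting statement: classical Karger counting is parametrised by the global minimum cut of $G$, which may be arbitrarily smaller than $2^i$ (so the dyadic sum would fail to converge), and only the Fung-style refinement that counts global cuts containing an edge of prescribed connectivity converts our certified lower bound $\uncutg{S} \ge 2^i$ into the $n^{O(\alpha)}$ count we need.
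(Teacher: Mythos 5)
Your concentration step is essentially the paper's: re-running the additive--multiplicative Chernoff bound behind \cref{lem:probsampling} with the split $\zeta=\delta/(2(1+\log n))$ and $\alpha=\delta\uncutg{S}/(2(1+\log n))$ gives exactly the per-cut tail $n^{-\Omega(C\beta)}$ for $\uncutg{S}=\beta\cdot 2^i$ that the paper obtains in \cref{claim:boundprobsingleset}. (The factor-of-two slip in your ``dense regime'' split is harmless: $p=\min\{1,\,C\,2^{-(i+1)}\delta^{-2}\log^3 n\}$ is a valid lower bound on $p_e$ for every $e\in F_i$ in all regimes, and edges with $p_e=1$ are taken care of by \cref{lem:oneprobsampling}.)

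The genuine gap is in the union bound. You union-bound over cuts $S\subseteq U$ and assert that the number of cuts with $\uncutg{S}\le\alpha\cdot 2^i$ containing an edge of connectivity at least $2^i$ is $n^{O(\alpha)}$. That is false: cuts here are vertex subsets (and, for your events, what matters is also the full size $\uncutg{S}$), and neither the number of such vertex subsets nor the number of distinct full edge sets $E(S,V\setminus S)$ is polynomially bounded. For instance, if $G$ is a $2^i$-edge-connected blob plus $n/2$ isolated vertices, adding an arbitrary subset of the isolated vertices to a fixed sparse cut of the blob produces $2^{n/2}$ distinct cuts with the same size and the same $F_i$-projection; attaching a long path of connectivity-$1$ edges instead produces $n^{\Theta(2^i)}$ distinct full cut edge-sets of size $O(2^i)$. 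What \cref{th:counting} (Fung et al., Theorem 1.6) bounds is only the number of distinct \emph{projections} $E(S,V\setminus S)\cap E^{2^i}$ of cuts of size at most $s\cdot 2^i$, namely $n^{2s}$ --- not the number of cuts, and not the number of full cut edge-sets. To use it you need the extra observation that your failure event depends on $S$ only through its projection onto $F_i\cap\binom{U}{2}$ and through the threshold $\uncutg{S}$, and that the event is monotone in that threshold; hence within each dyadic bucket it suffices to union-bound over projections, charging each projection to a representative cut of minimum global size, whose size is still at least $2^{i+j-1}$. This is precisely the content of \cref{claim:worstcaseproj} and \cref{claim:smallunionbound}, and without it the first factor in your sum $\sum_j n^{O(2^j)}\,n^{-\Omega(C 2^j)}$ is unjustified. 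With that reduction added, your argument coincides with the paper's proof.
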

\begin{proof}
We proceed by first showing a probability bound for each cut $S$, and then combine them using a union bound to get the final result. Specifically, for all $j \in [2\log n]$ we define the set
\begin{equation*}
	\mathcal{Y}_{ij}=\{E(S,V \setminus S) : \, \emptyset \neq S \subsetneq U \text{ and } 2^{i+j-1} \le \uncutg{S} \le 2^{i+j}-1 \} \, ,
\end{equation*}
and we will show a probability bound for all $j \in [2\log n]$ and $Y \in \mathcal{Y}_{ij}$. Taking a careful union bound over all such choices will be enough to get the lemma statement. It shall be useful to define $\Pi_U^i(Y)=Y \cap F_i \cap \binom{U}{2}$ to be the projection onto $F_i$ and onto the cluster $U$ of the edges in $Y$, for any $Y \subseteq E$. Moreover, we denote by $\mathcal{F}_{Y}$ the failure event that
\begin{equation*}
	\left|w(\Pi_U^{i}(Y)) - |\Pi_U^{i}(Y)|\right| > \delta \frac{|Y|}{1+\log n} \, ,
\end{equation*}
For ease of writing we shall denote $\Pi_U^i(E(S,V \setminus S))$ and $\mathcal{F}_{E(S,V \setminus S)}$ by $\Pi_U^i(\uncutg{S})$ and $\mathcal{F}_{\uncutg{S}}$ respectively. To appreciate their purpose, we remark that for any $\emptyset \neq S \subsetneq U$,
\begin{align*}
	\mathcal{F}_{\uncutg{S}} \quad \iff \quad \left|\iwcut{S}{U} - \iuncut{S}{U}\right| > \delta \frac{\uncutg{S}}{1+\log n} \, 
\end{align*}
since $\Pi_U^{i}(\uncutg{S})=\Pi_U^{i}(E(S,V \setminus S)) = E(S,V \setminus S) \cap F_i \cap \binom{U}{2} = F_i \cap E(S,U\setminus S)$.

\begin{claim}
	\label{claim:boundprobsingleset}
	Let $j \in [2\log n]$ and $Y \in \mathcal{Y}_{ij}$. Then
	\begin{equation*}
		\Pr[\neg \mathcal{B} \text{ and } \mathcal{F}_Y] \le 2n^{-\frac{C}{768}2^j} \, .
	\end{equation*}
\end{claim}

\begin{proof}
	We start by observing that we can consider only the set $X = \{e \in \Pi_U^i(Y) \, : \, p_e < 1\}$, since edges with $p_e=1$ are always added to $E'$ with weight set to $1$ (when $ \mathcal{B}$ does not occur) by Lemma~\ref{lem:oneprobsampling}, so
	\begin{equation*}
		\left|w(\Pi_U^{i}(Y)) - |\Pi_U^{i}(Y)|\right| = \left|w(X) - |X|\right| \, .
	\end{equation*}
	If $X = \emptyset$, the probability of $\mathcal{F}_Y$ is zero, so let us consider the case that $X \neq \emptyset$, for which we want to upper bound the probability
	\begin{equation*}
		\Pr\left[\neg \mathcal{B} \text{ and } \left|w(X) - |X|\right| > \delta \frac{|Y|}{1+\log n}\right] \, .
	\end{equation*}
	\noindent
	To do so, we use Lemma~\ref{lem:probsampling} and set $p=\min \{1, C \cdot 2^{-(i+1)} \delta^{-2} \log^3 n\}$, which is a lower bound on all the probabilities $\{p_e\}_{e \in X}$. This is the case because every $e \in X$ has $p_e < 1$ and $\lambda_e \le 2^{i+1}$ (since $X \subseteq F_i$). We also note that $|Y| \ge |X|$, thus we can safely apply Lemma~\ref{lem:probsampling} and get
	\begin{align*}
		\Pr\left[\neg \mathcal{B} \text{ and } \left|w(X) - |X|\right| > \delta \frac{|Y|}{1+\log n}\right] & \le 2 \exp\left(-\frac{1}{48}\cdot \delta^2 \cdot p \cdot \frac{|Y|}{(1+\log n)^2} \right) \, .
	\end{align*}
	If we recall $Y \in \mathcal{Y}_{ij}$, and in particular $|Y| \ge 2^{i+j-1}$, we can then conclude the claim by plugging the value of $p$ into the exponent.
\end{proof}
\noindent
Now that we have bounded the probability that a single cut from $\mathcal{Y}_{ij}$ is not well preserved once projected onto $U$ and $F_i$, we show that the union over $j$ of the collections $\mathcal{Y}_{ij}$ is representative of all the local cuts of $U$.
\begin{claim}
	\label{claim:calyisenough}
	Let $\emptyset \neq S \subsetneq U$ such that $\Pi_U^i(\uncutg{S})\neq~\emptyset$. Then there exists $j \in [2\log n]$ so that $ E(S,V \setminus~S)~\in~\mathcal{Y}_{ij}$.
\end{claim}
\begin{proof}
	Since $\Pi_U^i(\uncutg{S}) \neq~\emptyset$, there exists $e=\{u,v\} \in E(S,V\setminus S)$ that is also in $F_i$. In particular this means that $E(S,V\setminus S)$ is a $uv$-cut in $G$, and $e$ has edge connectivity at least $2^i$, thus $\uncutg{S} \ge 2^{i}$. At the same time, $\uncutg{S} \le 2^{i+j}-1$ for some $j \in [2\log n]$ since $\uncutg{S} \le |E| \le n^2-1$.
\end{proof}
\noindent
Now, for all $j \in [2 \log n]$ and all $Z \subseteq F_i$, we define the worst-case edge set $Y^*_j(Z) \subseteq E$ of $Z$ as
\begin{equation*}
	Y^*_j(Z) \in \argmin_{Y \in \mathcal{Y}_{ij}: \Pi_U^i(Y) = Z}  |Y|\, .
\end{equation*}
The reason why we call $Y^*_j(Z)$ worst case for $Z$ is the following. Among all the global cuts in $\mathcal{Y}_{ij}$ whose projection onto $F_i$ and $U$ is $Z$,  the cut \smash{$Y^*_j(Z)$} is the one which allows for the smallest error term, which intuitively makes $\mathcal{F}_Y$ more likely to happen. Formally we have the following claim, which follows from the definition.

\begin{claim}
	\label{claim:worstcaseproj}
	If there exist $j \in [2\log n]$ and $Y \in \mathcal{Y}_{ij}$ such that the event $\mathcal{F}_{Y}$ happens, then there exist $j \in [2\log n]$ and $Z \in \{\Pi_U^i(Y'): Y' \in \mathcal{Y}_{ij}\}$ such that $\mathcal{F}_{Y_j^*(Z)}$ happens.
\end{claim}
\noindent
Claim~\ref{claim:calyisenough} and~\ref{claim:worstcaseproj} together suggest that we can restrain ourselves to union bound over $\{\Pi_U^i(Y): Y \in \mathcal{Y}_{ij}\}$, which is exactly what we are going to do. Before that, we make sure that this set is not too large.
\begin{claim}
	\label{claim:smallunionbound}
	For every $j \in [2\log n]$ we have $|\{\Pi_U^i(Y): Y \in \mathcal{Y}_{ij}\}| \le n^{2 \cdot 2^{j}}$.
\end{claim}
\begin{proof}
	We will employ the following cut counting result, where $E^k$ denotes the set of edges $e \in E$ with edge connectivity $\lambda_e \ge k$, for any integer $k$.
	\begin{theorem}[\cite{fungspars} Theorem 1.6]\label{th:counting}
		For any $k, s \ge 1$ one has
		\begin{equation*}
			|\{E(S,V\setminus S) \cap E^k \, : \, \emptyset \neq S \subsetneq V \text{ and } \uncutg{S} \le s\cdot k\}| \le n^{2s} \, ,
		\end{equation*}
		i.e. the number of distinct subsets of edges of connectivity at least $k$ from cuts of size at most $s \cdot k$ is at most $n^{2s}$.
	\end{theorem}
	\noindent
	Fix $j \in [2\log n]$. We recall that every $e \in F_i$ has $\lambda_e \ge 2^i$, which means $F_i \subseteq E^{2^{i}}$. Also, for any $Y \in \mathcal{Y}_{ij}$ there is by definition a cut $\emptyset \neq S \subsetneq V$ such that $\uncutg{S} \le 2^{i+j}-1$ and $Y=E(S,V\setminus S)$. Hence
	\begin{align*}
		|\{\Pi_U^i(Y) : \, Y \in \mathcal{Y}_{ij}\}|
		& \le |\{E^{2^i} \cap E(S,V\setminus S) : \emptyset \neq S \subsetneq V \text{ and } \uncutg{S} \le 2^{i+j}\}| \, ,
	\end{align*}
	and applying Theorem~\ref{th:counting} with $k \coloneqq 2^i$ and $s \coloneqq 2^j$ we conclude the claim.
\end{proof}
\noindent
Using the above claims we finally have
\begin{align*}
	& \Pr[\neg\mathcal{B} \text{ and } \exists \, \emptyset \neq S \subsetneq U \text{ s.t. } \mathcal{F}_{\uncutg{S}}] \\
	\text{by Claim~\ref{claim:calyisenough}} \quad  \quad \le & \Pr[\exists j \in [2\log n] \text{ s.t. } \exists Y \in \mathcal{Y}_{ij} \text{ s.t. } \neg\mathcal{B} \text{ and } \mathcal{F}_{Y}] \\
	\text{by Claim~\ref{claim:worstcaseproj}} \quad  \quad \le & \Pr[\exists j \in [2\log n] \text{ s.t. } \exists Z \in \{\Pi_U^i(Y): Y \in \mathcal{Y}_{ij}\} \text{ s.t. }  \neg\mathcal{B} \text{ and } \mathcal{F}_{Y_j^*(Z)}] \\
	\text{by Claim~\ref{claim:boundprobsingleset}} \quad  \quad \le & \sum_{j \in [2\log n]} \quad \sum_{ Z \in \{\Pi_U^i(Y) : \, Y \in \mathcal{Y}_{ij}\}} 2n^{-\frac{C}{768}2^j} \\
	\text{by Claim~\ref{claim:smallunionbound}} \quad  \quad \le & \sum_{j \in [2\log n]} n^{2 \cdot 2^{j}}\, 2n^{-\frac{C}{768}2^j} \\
	\le & \frac{1}{\poly(n)}
\end{align*}
\end{proof}
\noindent
Finally, we can conclude the main sparsification lemma \cref{lem:distrcutstreamadd} by a simple combination of \cref{lem:failsampling}, \cref{lem:levelbound}, and \cref{lem:spacesampling}.

\begin{proof}[Proof of \cref{lem:distrcutstreamadd}]
Recall that the $F_i$'s partition and cover $E$ entirely, and that $\neg\mathcal{B}$ happens with high probability by \cref{lem:failsampling}. Then, one can see that conditioning on $\neg \mathcal{B}$, summing over $i=0,\dots, \log n$, and applying \cref{lem:levelbound} gives \cref{lem:distrcutstreamadd}.
\end{proof}
\section{Testing expansion and finding sparse cuts in sparsifiers}
\label{sec:balsparse}
The key ingredient to our ED algorithms is that sparsifiers as per \cref{def:spars} serve as good proxies to the subgraphs of $G$ for any subroutine that either asserts expansion or outputs an approximate most-balanced sparse cut. We will in fact show that the result of running such subroutine on a boundary-linked subgraph of $G$ is almost the same as if we run it on the corresponding subgraph of a sparsifier.

Before we proceed further, we need to define what we demand exactly from the approximate most-balanced sparse cut subroutine. Loosely speaking, it is a procedure that takes as input a graph and a sparsity parameter, and either asserts that the graph is an expander, or outputs a sparse cut. In particular, the returned cut should be not too small compared to the largest cut among those certifying that the input is not an expander. For technical reasons, we ask that the algorithm also returns an approximation to the volume of the returned cut. Inspired by the definition of~\cite{streamexpdec}, we have the following definition, which captures what the subroutine should return.

\begin{definition}[Balanced sparse cut witness]
	\label{def:bscw}
	Let $\psi, \xi \in [0,1]$ and $\alpha,\lambda \ge 1$, and let $H=(U,E',w)$ be a weighted graph with self-loops. A $(\psi,\xi,\alpha,\lambda)$-balanced sparse cut witness of $H$ (for short, $(\psi,\alpha,\lambda,\xi)$-BSCW) is an element $\omega$ of $\{\bot\} \cup (\{R\,:\,\emptyset \neq R \subsetneq U\} \times \mathbb{R}_{>0})$ such that:
	\begin{enumerate}
		\item \label{item:expr} if $\omega = \bot$, then $H$ is a $\psi$-expander;
		\item if $\omega = (R,\nu)$, then:
		\begin{enumerate}
			\item \label{item:sparser} $\Phi_{H}(R) < \alpha \cdot \psi $,
			\item \label{item:balr} for every other cut $\emptyset \neq T \subsetneq U$ with $\Phi_H(T) <  \psi$ and $\vol_H(T) \le \vol_H(U \setminus T)$ one has $\vol_H(T) \le \lambda\cdot \vol_H(R)$,
			\item \label{item:smallsider} $\vol_H(R) \le (1+\xi) \vol_H(U \setminus R)$,
			\item \label{item:goodestimr} $ (1-\xi)  \cdot \unvolc{R}{U} \le \nu \le  (1+\xi)  \cdot \unvolc{R}{U}$.
		\end{enumerate}
	\end{enumerate}
\end{definition}
\noindent
A balanced sparse cut subroutine is then simply an algorithm returning a balanced sparse cut witness. For convenience of later discussions, we also have the following definition.

\begin{definition}[Balanced sparse cut algorithm]
	\label{def:bsca}
	Let $\alpha,\lambda \ge 1$. An algorithm $\bsc$ is an $(\alpha, \lambda)$-balanced sparse cut algorithm (for short, $(\alpha,\lambda)$-BSCA) if, taken as input a weighted graph with self-loops $H=(U,E',w)$ and a parameter $\psi \in (0,1)$, $\bsc(H,\psi)$ outputs a $(\psi,\alpha,\lambda,0)$-BSCW of $H$.
\end{definition}
\noindent
Ultimately, we will use either an exponential time brute force BSCA, or a fast BSCA with worse parameters, leading to our exponential and polynomial time ED algorithms, respectively.

Using these definitions, the main result of this section is the following: one can almost seamlessly run a BSCA on the sparsifier, without heavily deteriorating the quality of the BSCW hence obtained.
\begin{restatable}[Proxying lemma]{lemma}{reduction}
	\label{lem:reduction}
	Let $b, \phi, c,\delta \in (0,1)$ such that $c \leq 1/30$ is a constant, $\phi < b$, $b \le c$, $\delta \le c^2b/ \log n$, and  let $\bsc$ be an $(\alpha,\lambda)$-BSCA with $ \alpha \le \frac{1}{2b}$. Fix a cluster $U \subseteq V$ and also let $H=(V,E',w)$ such that $H[U] \approx_\delta G[U]$. Let $\omega$ be the output of $\bsc(H[U]^\circ, (1+\frac{1}{2\log n}) \cdot \phi)$. Then, $\omega$ is a  $(\phi,(1+\frac{1}{\log n})\alpha,(1+c)\lambda,c)$-BSCW of~$G[U]^\circ$.
\end{restatable}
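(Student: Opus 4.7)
The plan is to establish a dictionary translating cut, boundary, and volume quantities between $H[U]^\circ$ and $G[U]^\circ$ via the cluster sparsifier guarantee $H[U] \approx_\delta G[U]$, and then verify each clause of \cref{def:bscw} for $G[U]^\circ$ by invoking the corresponding clause for $H[U]^\circ$ (which is supplied by $\bsc$). First I would derive, for every $S \subseteq U$: (i) $\wvol{S} = (1 \pm \delta)\vol(S)$ by applying the second clause of $\approx_\delta$ vertex by vertex to $\deg_H(v) = \wcutg{\{v\}}$; (ii) $\wbor{S}{U} = \unbor{S}{U} \pm 2\delta \cdot \uncutg{S}$ by writing $\wbor{S}{U} = \wcutg{S} - \wcut{S}{U}$ and combining both clauses of $\approx_\delta$; (iii) substituting into $\wvolc{S}{U} = \wvol{S} + (b/\phi - 1)\wbor{S}{U}$ yields
\begin{equation*}
\wvolc{S}{U} = \unvolc{S}{U} \pm \bigl(\delta \cdot \vol(S) + 2\delta (b/\phi - 1)\uncutg{S}\bigr).
\end{equation*}
Using $\vol(S) \le \unvolc{S}{U}$, $(b/\phi - 1)\unbor{S}{U} \le \unvolc{S}{U}$, and $\unbor{S}{U} \le \tfrac{\phi}{b-\phi}\unvolc{S}{U}$, this right-hand error splits into a multiplicative $O(\delta)\unvolc{S}{U}$ piece plus a residual $2\delta(b/\phi - 1)\uncut{S}{U}$.

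Next I would carry out the case analysis. If $\omega = \bot$, then $\wspc{S}{U} \ge \psi' = (1+\tfrac{1}{2\log n})\phi$ for every $S$. Fixing $S$ with $\unvolc{S}{U} \le \unvolc{U\setminus S}{U}$, the inequality $\wcut{S}{U} \ge \psi' \min\{\wvolc{S}{U}, \wvolc{U\setminus S}{U}\}$ combined with (i)-(iii) and $\wcut{S}{U} \le \uncut{S}{U} + \delta \cdot \uncutg{S}$ rearranges to
\begin{equation*}
(1 + O(\delta))\uncut{S}{U} \ge \Bigl(\psi'(1 - O(\delta)) - \tfrac{\delta \phi}{b-\phi}\Bigr)\unvolc{S}{U},
\end{equation*}
and the slack $\tfrac{\phi}{2\log n}$ in $\psi'$ absorbs every error term (using $\delta\log n \le c^2 b$, $b \le c$, $c \le 1/30$) to give $\uncut{S}{U} \ge \phi \cdot \unvolc{S}{U}$; hence $G[U]^\circ$ is a $\phi$-expander. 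If $\omega = (R,\nu)$, I would verify the four subclauses one at a time. For (2a), the reverse translation applied to $\wspc{R}{U} < \alpha\psi'$, together with $\alpha \le 1/(2b)$, yields $\unspc{R}{U} < (1+\tfrac{1}{\log n})\alpha\phi$. For (2b), any candidate cut $T$ with $\unspc{T}{U} < \phi$ and $\unvolc{T}{U} \le \unvolc{U\setminus T}{U}$ satisfies $\wspc{T}{U} < \psi'$ and $\wvolc{T}{U} \le \wvolc{U\setminus T}{U}$ by the same translations, so the balancedness guarantee from $\bsc$ applies on $H[U]^\circ$ and gives $\wvolc{T}{U} \le \lambda \wvolc{R}{U}$, which the two-sided volume approximation converts into $\unvolc{T}{U} \le (1+c)\lambda\,\unvolc{R}{U}$. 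Clauses (2c) and (2d) follow directly from the two-sided volume approximation applied to $R$ (with $\nu = \wvolc{R}{U}$ for (2d)).

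The main obstacle is the $2\delta(b/\phi - 1)\uncut{S}{U}$ error term in (iii), which does not yield a multiplicative volume approximation on cuts that are not already known to be sparse in $G[U]^\circ$. The redeeming observation is that every cut appearing in the verification — the output $R$ and the candidates $T$ in clause (2b) — satisfies $\uncut{\cdot}{U} = O(\phi)\unvolc{\cdot}{U}$, so the offending term collapses to $O(\delta (b/\phi)\phi)\unvolc{\cdot}{U} = O(\delta b)\unvolc{\cdot}{U} = O(c^2 b^2/\log n)\unvolc{\cdot}{U}$, comfortably within the $c$-multiplicative tolerance. Threading this conditional two-sided volume approximation through each subclause, while verifying that the constants $(1+\tfrac{1}{\log n})$, $(1+c)$, and $c$ all come out correctly under the given bounds on $\delta$, $b$, $c$, and $\alpha$, is where the bookkeeping is most delicate.
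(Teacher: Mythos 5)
Your plan is, at bottom, the paper's own argument with its intermediate lemmas inlined: the paper packages your translation dictionary into \cref{lem:blinkcuts}, \cref{lem:volumesparsecuts}, \cref{lem:iffsparsity} and \cref{lem:boundlinkexpsparse} (phrased through the boundary-linked/non-boundary-linked dichotomy rather than through your ``redeeming observation''), and then verifies the four BSCW clauses exactly as you do; your observation that the residual $2\delta(b/\phi)\uncut{S}{U}$ is harmless on the cuts that matter is precisely the content of \cref{lem:volumesparsecuts}, with your use of $\alpha \le \tfrac{1}{2b}$ playing the role of the condition $\hat{\phi}\le \phi/b$ there. So this is the same route in different packaging, and the skeleton is sound.

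Two places need repair before the bookkeeping closes. First, you bound the boundary only by $\unbor{S}{U} \le \tfrac{\phi}{b-\phi}\unvolc{S}{U}$; in the $\omega=\bot$ case this leaves the error term $\tfrac{\delta\phi}{b-\phi}$, which is \emph{not} absorbed by the slack $\tfrac{\phi}{2\log n}$ when $\phi$ is close to $b$ (e.g.\ $b-\phi \le 2c^2 b$, which the hypothesis $\phi<b$ permits). Use instead $\vol(S)\ge \unbor{S}{U}$, so that $\unvolc{S}{U}\ge \tfrac{b}{\phi}\unbor{S}{U}$ and the term becomes $\tfrac{\delta\phi}{b}\le \tfrac{c^2\phi}{\log n}$; this is the bound the paper uses. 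Second, in clause (2a) you cannot invoke the redeeming observation for $R$, since its sparsity in $G[U]^\circ$ is the conclusion; there the volume comparison must go in the overshoot direction, $\min\{\wvolc{R}{U},\wvolc{U\setminus R}{U}\} \le (1+O(\delta))\min\{\unvolc{R}{U},\unvolc{U\setminus R}{U}\} + 2\delta\tfrac{b}{\phi}\uncut{R}{U}$, with the $\uncut{R}{U}$-dependent part moved to the left-hand side — which is exactly where $\alpha\psi'\cdot 2\delta\tfrac{b}{\phi} \le 2\delta$, i.e.\ $\alpha\le\tfrac{1}{2b}$, is needed (the paper instead splits on whether $R$ is $b/2$-boundary-linked and argues by contradiction via \cref{lem:boundlinkexpsparse}). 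A last cosmetic point: in (2b) the ordering $\unvolc{T}{U}\le\unvolc{U\setminus T}{U}$ need not survive in $H$ after a $(1\pm c/\log n)$ perturbation, so apply the balancedness guarantee of $\bsc$ to whichever side of $T$ has smaller $H$-volume, conclude $\min\{\wvolc{T}{U},\wvolc{U\setminus T}{U}\}\le\lambda\wvolc{R}{U}$, and translate the minimum, as the paper does.
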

\noindent
In \cref{subsec:boundlink} and \cref{subsec:preservingcuts}  we show properties that will let us compare cuts and volumes in $H[U]^\circ$ and $G[U]^\circ$. In \cref{subsec:proofproxying} we use these tools to prove \cref{lem:reduction}.

\subsection{Characterizing boundary-linked cuts}
\label{subsec:boundlink}
Structurally, we will divide the cuts of a subgraph $G[U]^\circ$ into two groups: those those that are boundary-linked and those that are not. This concept\footnote{The notion of boundary-linked cuts was already used in previous works, but not in the context of sparsification and streaming~\cite{boundlink,mincutjli}}., which we introduce next, is key to our goal: it captures the cuts that we are able to sparsify to within multiplicative error. Loosely speaking, a cut of $G[U]^\circ$ is boundary-linked if its local cut is a decent fraction of its global cut.

\begin{definition}[Boundary-linked cuts]
	\label{def:boundlinkcuts}
	Let $\beta \in (0,1)$ be a parameter, let $U \subseteq V$ be a cluster, and let $S \subseteq U$ be a cut. We say that $S$ is $\beta$-boundary-linked in $U$ if
	\begin{equation*}
		\uncut{S}{U} \ge \beta \cdot \min\{\uncutg{S}, \uncutg{(U \setminus S)}\} \, .
	\end{equation*}
\end{definition}
\noindent
We shall then group the cuts of $G[U]^\circ$ into boundary-linked ones and those that are not. As shown by the two lemmas below, each of these groups has a useful properties for our goal: for a correct set of parameters, boundary-linked cuts are very well approximated by a sparsifier, while those that are not boundary-linked must be sparse.

\begin{lemma}[Characterization of boundary-linked cuts]
	\label{lem:blinkcuts}
	Let $b, \phi, c,\delta \in (0,1)$ such that $c \leq 1/2$ is a constant, $\phi \le b$, $b \le c$, and $\delta \le c^2b/ \log n$.  Fix a cluster $U \subseteq V$ and also let $H=(V,E',w)$ such that $H[U] \approx_\delta G[U]$. Then, the following hold:
	\begin{enumerate}
		\item \label{charact:nonblink} for all non $b/2$-boundary-linked cuts $\emptyset \neq S \subsetneq U$ we have $\unspc{S}{U} < \phi$, and
		\item \label{charact:blink} for all $b/2$-boundary-linked cuts $S \subseteq U$ we have $(1-\frac{c}{\log n})\cdot \uncut{S}{U} \le \wcut{S}{U} \le 	(1+\frac{c}{\log n})\cdot \uncut{S}{U}$.
	\end{enumerate}
\end{lemma}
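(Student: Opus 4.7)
The two parts are essentially independent calculations, both leveraging the cluster sparsifier guarantee together with the definition of the boundary-linked subgraph $G[U]^\circ$. The key identity I will use throughout is that $\vol(A) \ge \unbor{A}{U}$ for any $A \subseteq U$, so $\unvolc{A}{U} = \vol(A) + (b/\phi - 1)\unbor{A}{U} \ge (b/\phi)\unbor{A}{U}$.

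\textbf{Part 1 (non boundary-linked cuts are sparse).} Let $\emptyset \neq S \subsetneq U$ be non $b/2$-boundary-linked. The crucial observation is that the single inequality $\uncut{S}{U} < (b/2) \min\{\uncutg{S}, \uncutg{U \setminus S}\}$ actually yields two separate inequalities, one for each side. Combined with $\uncutg{A} = \uncut{A}{U} + \unbor{A}{U}$ (for $A \in \{S, U\setminus S\}$), I get
\begin{equation*}
\unbor{S}{U} > (1-b/2)\uncutg{S} \quad \text{and} \quad \unbor{U\setminus S}{U} > (1-b/2)\uncutg{U\setminus S}.
\end{equation*}
Plugging either of these into the lower bound $\unvolc{A}{U} \ge (b/\phi)\unbor{A}{U}$ and splitting into the two cases according to whichever side achieves $\min\{\unvolc{S}{U}, \unvolc{U\setminus S}{U}\}$, a short computation gives in both cases
\begin{equation*}
\unspc{S}{U} \le \frac{(b/2)\cdot \uncutg{A}}{(b/\phi)(1-b/2)\uncutg{A}} = \frac{\phi/2}{1-b/2} < \phi
\end{equation*}
using $b \le c \le 1/2$. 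This is the step where I need both inequalities from the definition of non-boundary-linked, since the ordering of global cuts does not determine the ordering of boundary-linked volumes.

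\textbf{Part 2 (boundary-linked cuts are preserved multiplicatively).} Let $S \subseteq U$ be $b/2$-boundary-linked. The cluster sparsifier property $H[U] \approx_\delta G[U]$ applied to both $S$ and $U \setminus S$, combined with the symmetries $\wcut{S}{U} = \wcut{U\setminus S}{U}$ and $\uncut{S}{U} = \uncut{U\setminus S}{U}$, yields
\begin{equation*}
\left|\wcut{S}{U} - \uncut{S}{U}\right| \le \delta \cdot \min\{\uncutg{S}, \uncutg{U\setminus S}\}.
\end{equation*}
The boundary-linked hypothesis rewrites as $\min\{\uncutg{S}, \uncutg{U\setminus S}\} \le (2/b)\uncut{S}{U}$, so the error is at most $(2\delta/b)\uncut{S}{U}$. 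Finally, the hypothesis $\delta \le c^2 b /\log n$ together with $c \le 1/2$ gives $2\delta/b \le 2c^2/\log n \le c/\log n$, closing the argument.

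\textbf{Main obstacle.} The only nontrivial point is in Part 1: one might naively try to WLOG assume $\uncutg{S} \le \uncutg{U\setminus S}$ and then bound only $\unvolc{S}{U}$, but the sparsity takes the minimum over both volumes, and the ordering of global cuts does not in general match the ordering of boundary-linked volumes. The fix is to observe that the non-boundary-linked hypothesis simultaneously forces large boundary on both sides, so the lower bound $\unvolc{A}{U} \ge (b/\phi)\unbor{A}{U}$ can be used on whichever side attains the minimum.
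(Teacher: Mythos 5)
Your proof is correct and follows essentially the same route as the paper: part 2 is the identical calculation (apply the sparsifier guarantee to both $S$ and $U\setminus S$, use the symmetry of local cuts, and absorb $2\delta/b \le c/\log n$), and part 1 rests on the same two facts, namely $\unvolc{A}{U} \ge (b/\phi)\unbor{A}{U}$ and $\uncutg{A}=\uncut{A}{U}+\unbor{A}{U}$. The only difference is cosmetic: you argue part 1 directly, using the non-boundary-linked inequality on both sides to control whichever side attains the volume minimum, whereas the paper proves the contrapositive (an expanding cut is boundary-linked), where the analogous WLOG is legitimate; both are valid.
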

\begin{proof}
	To show~\eqref{charact:nonblink}, let $\emptyset \neq S \subsetneq U$ be a cut with $\unspc{S}{U} \ge \phi$. By definition this means that $\uncut{S}{U} \ge \phi \min \{\unvolc{S}{U},\unvolc{U \setminus S}{S}\}$. Since $\uncut{S}{U}=\uncut{(U\setminus S)}{U}$, we can in fact assume without loss of generality that $\uncut{S}{U} \ge \phi \unvolc{S}{U}$. Recalling how these volumes are defined, we have that every vertex in $u \in S$ has a contribution of at least $b\phi \cdot \unbor{\{u\}}{U}$ to $\unvolc{S}{U}$, so $\unvolc{S}{U} \ge b/\phi \cdot \unbor{S}{U}$. Hence, $\uncut{S}{U} \ge \phi \unvolc{S}{U}$ yields $\uncut{S}{U} \ge b \cdot \unbor{S}{U}$. Finally, we observe that the global cut of $S$, i.e.~$\uncutg{S}$, can be decomposed as the sum of $\uncut{S}{U}$ and $\unbor{S}{U}$. By virtue of this decomposition, from $\uncut{S}{U} \ge b \cdot \unbor{S}{U}$ we get $\uncut{S}{U} \ge b \cdot (\uncutg{S}-\uncut{S}{U})$. Rearranging the terms, and since we assume that~$b$ is bounded by $c \le 1/2$, we conclude $\uncut{S}{U} \ge b/2 \cdot \uncutg{S}$, i.e. $S$ is $b/2$-boundary-linked.
	
	To show~\eqref{charact:blink}, from definition of boundary-linked we know $\uncut{S}{U} \ge b/2 \cdot \min\{\uncutg{S}, \uncutg{(U \setminus S)}\} $, while from the assumption that $H[U] \approx_\delta G[U]$ we know $|\uncut{S}{U} - \wcut{S}{U} | \le \delta \cdot \uncutg{S}$. Again, without loss of generality we can assume $\uncut{S}{U} \ge b/2 \cdot \uncutg{S}$, since $\uncut{S}{U}=\uncut{(U\setminus S)}{U}$ and $\wcut{S}{U}=\wcut{(U\setminus S)}{U}$. Therefore, the sparsification error $|\uncut{S}{U} - \wcut{S}{U} | $ can actually be bounded by $\delta \cdot \uncutg{S} \le c^2b/ \log n \cdot 2/b \cdot \uncut{S}{U} \le  c/\log n \cdot \uncut{S}{U}$ (since $c\le 1/2$).
\end{proof}

\subsection{Preserving volumes and sparsities}
\label{subsec:preservingcuts}
 \cref{lem:blinkcuts} says that for any cut $S$ in $U$, either $S$ is sparse, or the sparsifier gives a multiplicative approximation to its local cut. Intuitively, for the latter cuts, one would expect the sparsifier to preserve their sparsity as well. However, this is not trivially true, because the sparsity depends on both the local cut and the volume in the boundary-linked subgraph: while we have a multiplicative preservation of boundary-linked cuts, we may not have the same guarantee for their volume. To circumvent this issue, we consider the case of sparse and expanding cuts separately: the former have in fact their volumes preserved within a relative error, while for the latter we get that their volume in the sparsifier is still not too large. We combine these observations by treating separately boundary-linked and non boundary-linked cuts, and conclude that the distinction between sparse and expanding is very accurately preserved in a sparsifier. We begin with analysing the volume of sparse cuts in the sparsifier.

\begin{lemma}[Multiplicative error for volumes of sparse cuts]
	\label{lem:volumesparsecuts}
	Let $b, \phi,\hat{\phi}, c,\delta \in (0,1)$ such that $c \leq 1/5$ is a constant, $\phi < b$, $\delta \le c^2 / \log n$, and \smash{$\hat{\phi} \le \phi/b$}. Fix a cluster $U \subseteq V$ and also let $H=(V,E',w)$ such that $H[U] \approx_\delta G[U]$. Then, for all cuts $\emptyset \neq S \subsetneq U$ with \smash{$\unspc{S}{U} < \hat{\phi}$} we have
	\begin{equation*}
		\left(1-\frac{c}{\log n}\right) \cdot \unvolc{S}{U} \le \wvolc{S}{U} \le \left(1+\frac{c}{\log n}\right) \cdot \unvolc{S}{U} \, .
	\end{equation*}
\end{lemma}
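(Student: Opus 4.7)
The plan is to express $\unvolc{S}{U}$ and $\wvolc{S}{U}$ in terms of quantities that the cluster sparsifier controls, then combine the resulting error bounds with the sparsity hypothesis to convert additive errors into a multiplicative one. Recall that $\unvolc{S}{U} = \vol(S) + (b/\phi - 1)\unbor{S}{U}$ and $\wvolc{S}{U} = \wvol{S} + (b/\phi - 1)\wbor{S}{U}$, so
\begin{equation*}
\left|\wvolc{S}{U} - \unvolc{S}{U}\right| \le \left|\wvol{S} - \vol(S)\right| + (b/\phi - 1)\left|\wbor{S}{U} - \unbor{S}{U}\right| \, .
\end{equation*}
I will bound each term separately using the two parts of \cref{def:spars}.

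First I would bound $|\wvol{S} - \vol(S)|$. Writing $\vol(S) = \sum_{v \in S}\uncutg{\{v\}}$ and similarly $\wvol{S} = \sum_{v \in S}\wcutg{\{v\}}$, the second (global multiplicative) property of the cluster sparsifier applied to each singleton $\{v\}$ gives $|\wvol{S} - \vol(S)| \le \delta \cdot \vol(S) \le \delta \cdot \unvolc{S}{U}$. Next, since $\unbor{S}{U} = \uncutg{S} - \uncut{S}{U}$ and $\wbor{S}{U} = \wcutg{S} - \wcut{S}{U}$, combining both sparsifier properties yields
\begin{equation*}
\left|\wbor{S}{U} - \unbor{S}{U}\right| \le \left|\wcutg{S} - \uncutg{S}\right| + \left|\wcut{S}{U} - \uncut{S}{U}\right| \le 2\delta \cdot \uncutg{S} \, .
\end{equation*}

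The remaining task is to bound $\uncutg{S}$ by a small multiple of $\unvolc{S}{U}$, which is where the sparsity hypothesis enters. On the one hand, every boundary edge contributes $b/\phi$ self-loops at a vertex in $S$, so $\unvolc{S}{U} \ge (b/\phi)\cdot \unbor{S}{U}$ and hence $\unbor{S}{U} \le (\phi/b)\cdot \unvolc{S}{U}$. On the other hand, $\unspc{S}{U} < \hat{\phi}$ gives $\uncut{S}{U} < \hat{\phi}\cdot \unvolc{S}{U}$. Summing and using $\hat{\phi} \le \phi/b$:
\begin{equation*}
\uncutg{S} = \uncut{S}{U} + \unbor{S}{U} \le \bigl(\hat{\phi} + \phi/b\bigr)\unvolc{S}{U} \le \frac{2\phi}{b}\cdot \unvolc{S}{U} \, .
\end{equation*}

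Plugging everything back in, and using $b/\phi - 1 \le b/\phi$,
\begin{equation*}
\left|\wvolc{S}{U} - \unvolc{S}{U}\right| \le \delta\cdot \unvolc{S}{U} + \frac{b}{\phi}\cdot 2\delta \cdot \frac{2\phi}{b}\cdot \unvolc{S}{U} = 5\delta\cdot \unvolc{S}{U} \, .
\end{equation*}
By the assumption $\delta \le c^2/\log n$ and $c \le 1/5$, we have $5\delta \le c/\log n$, which is exactly the claimed multiplicative bound. The only subtlety worth double-checking is the passage from the cluster sparsifier's global multiplicative guarantee to a volume approximation via singletons, and the algebra that converts the two-term additive error into a clean multiplicative one by way of the sparsity hypothesis; everything else is routine.
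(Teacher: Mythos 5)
Your proof is correct and follows essentially the same route as the paper's: decompose the volume error into the degree term and the $\frac{b-\phi}{\phi}$-scaled boundary term, bound the former by $\delta\cdot\vol(S)$ via singleton cuts, bound the latter by $2\delta\cdot\uncutg{S}$ via the two sparsifier properties, and then use $\unspc{S}{U}<\hat{\phi}\le\phi/b$ together with $\unvolc{S}{U}\ge \frac{b}{\phi}\unbor{S}{U}$ to convert $\uncutg{S}$ into $O(\frac{\phi}{b})\unvolc{S}{U}$, giving the same $5\delta\le c/\log n$ conclusion. The only differences are cosmetic rearrangements of the same algebra.
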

\begin{proof}
	By definition of $\wvolc{S}{U}$, the error in estimating the volume can be written as
	\begin{equation}
		\left|\wvolc{S}{U}- \unvolc{S}{U}\right| \le \left|\wvol{S}- \vol_U(S)\right| + \frac{b-\phi}{\phi}\left|\wbor{S}{U} - \unbor{S}{U}\right| \, . \label{eq:triangineqselfloops}
	\end{equation}
	The first term is easily upper-bounded by $\delta \cdot \vol(S)$, because of the global cut preservation property of $H[U] \approx_\delta G[U]$ (see \cref{def:spars}) applied to singleton cuts. For the second term, we can further split it as
	\begin{equation}
		\left|\wbor{S}{U} - \unbor{S}{U}\right| \le \left|\wcutg{S}-\uncutg{S}\right|+\left|\wcut{S}{U}-\uncut{S}{U}\right| \, , \label{eq:trianginequality}
	\end{equation}
We now upper-bound the right hand side using again the cut preservation properties from \cref{def:spars}: we employ the multiplicative error for the global cut $\uncutg{S}$, and the additive error for $\uncut{S}{U}$. Combining these with the assumption that $S$ is sparse, one gets
	\begin{align}
		\Big|\wcutg{S} - \uncutg{S}\Big| + \Big|\wcut{S}{U}-\uncut{S}{U}\Big| 
		& \le  2\delta\cdot \uncutg{S} \\
		& =   2\delta \cdot\big(\uncut{S}{U} + \unbor{S}{U}\big) \\
		  \text{ since $\unspc{S}{U} < \hat{\phi}$ } \quad & <  2\delta \cdot \hat{\phi} \cdot \vol_U^\circ(S) + 2\delta \cdot \unbor{S}{U}\\
	 \text{ since $\hat{\phi} \le \phi/b$ and $\vol_U^{\circ}(S) \ge \frac{\phi}{b}\unbor{S}{U}$ } \quad & \le 2\delta \cdot \frac{\phi}{b} \cdot \vol_U^\circ(S) + 2\delta \cdot \frac{\phi}{b} \cdot \vol_U^{\circ}(S)\\
		& \le 4\delta\cdot\frac{\phi}{b-\phi} \cdot \vol_U^{\circ}(S) \, , \label{eq:sumgloballocalerror}
	\end{align}
We now backtrack and plug~\eqref{eq:sumgloballocalerror} into~\eqref{eq:trianginequality}, and then~\eqref{eq:trianginequality} into~\eqref{eq:triangineqselfloops} to conclude the claim:
	\begin{align*}
		\left|\wvolc{S}{U}- \unvolc{S}{U}\right| & \le \left|\wvol{S}- \vol_U(S)\right| + \frac{b-\phi}{\phi}\left|\wbor{S}{U} - \unbor{S}{U}\right| \\
		& \le \delta \cdot \vol(S) + 4\delta \cdot \vol_U^{\circ}(S) \\
		& \le 5\delta \cdot \vol_U^{\circ}(S) \\
		& \le \frac{5c^2}{\log n}\cdot \unvolc{S}{U} \\
		& \le \frac{c}{\log n}\cdot \unvolc{S}{U} \, ,
	\end{align*}
since $c \le 1/5$.
\end{proof}
\noindent
Using this fact together with the cut preservation properties of a cluster sparsifier, we can already conclude that there cannot be cuts that look ``very'' expanding in $H[U]^\circ$ but are actually sparse in~$G[U]^\circ$. In particular, if the sparsity of a cut in the sparsifier is above $\phi$ by a factor that dominates the error of the sparsifier, then its sparsity must be at least $\phi$ in the original graph. We show this in \cref{lem:iffsparsity}.
\begin{lemma}[Very expanding cuts in a sparsifier are indeed expanding]
	\label{lem:iffsparsity}
	Let $b, \phi,\hat{\phi}, c,\delta \in (0,1)$ such that $c \leq 1/30$ is a constant, $\phi < b$, $b \le c$, $\delta \le c^2 b/ \log n$, and \smash{$\hat{\phi} \ge (1+\frac{1}{3\log n})\cdot \phi$}.  Fix a cluster $U \subseteq V$ and also let $H=(V,E',w)$ such that $H[U] \approx_\delta G[U]$. Then, for all cuts $\emptyset \neq S \subsetneq U$ with \smash{$\wspc{S}{U} \ge \hat{\phi}$} we have $\unspc{S}{U} \ge \phi$.
\end{lemma}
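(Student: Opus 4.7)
The plan is to argue the contrapositive: assume $\unspc{S}{U} < \phi$ and derive $\wspc{S}{U} < \hat\phi$. Fix such an $S$ and assume without loss of generality that $\unvolc{S}{U} \le \unvolc{U \setminus S}{U}$, so that $\uncut{S}{U} < \phi \cdot \unvolc{S}{U}$.

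The first step is to bound the global cut $\uncutg{S}$ in terms of $\unvolc{S}{U}$. From the identity $\unvolc{S}{U} = \vol(S) + (b/\phi - 1)\unbor{S}{U}$ and the trivial inequality $\vol(S) \ge \unbor{S}{U}$ (every edge from $S$ to $V \setminus U$ contributes at least $1$ to the volume of its endpoint in $S$), one obtains $\unbor{S}{U} \le (\phi/b) \unvolc{S}{U}$, hence $\uncutg{S} = \uncut{S}{U} + \unbor{S}{U} < \phi(1 + 1/b)\unvolc{S}{U} \le (2\phi/b)\unvolc{S}{U}$ using $b \le 1$. Plugging this into the additive-in-$\uncutg{S}$ guarantee of $H[U] \approx_\delta G[U]$ from \cref{def:spars} yields $\wcut{S}{U} \le \uncut{S}{U} + \delta \uncutg{S} < \phi(1 + 2\delta/b)\unvolc{S}{U} \le \phi(1 + 2c^2/\log n)\unvolc{S}{U}$, where the last step uses $\delta \le c^2 b/\log n$.

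For the denominator I would invoke \cref{lem:volumesparsecuts} with its parameter ``$\hat\phi$'' set to our $\phi$; all preconditions are inherited from ours since $b \le 1$ turns $\hat\phi \le \phi/b$ and $\delta \le c^2/\log n$ into our hypotheses. Since sparsity is symmetric in the two sides of a cut, the lemma applies to both $S$ and $U \setminus S$, giving $\wvolc{S}{U} \ge (1 - c/\log n)\unvolc{S}{U}$ and $\wvolc{U\setminus S}{U} \ge (1 - c/\log n)\unvolc{U\setminus S}{U} \ge (1 - c/\log n)\unvolc{S}{U}$, so the minimum of the two sparsifier volumes is at least $(1 - c/\log n)\unvolc{S}{U}$. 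Combining with the bound on $\wcut{S}{U}$,
\[ \wspc{S}{U} = \frac{\wcut{S}{U}}{\min\{\wvolc{S}{U},\,\wvolc{U\setminus S}{U}\}} < \phi \cdot \frac{1 + 2c^2/\log n}{1 - c/\log n}. \]
A short arithmetic check shows that $c \le 1/30$ suffices to make $(1 + 2c^2/\log n)/(1 - c/\log n) \le 1 + 1/(3\log n)$ (this reduces to $2c^2 + c + c/(3\log n) \le 1/3$, which holds with large slack for $c = 1/30$), yielding $\wspc{S}{U} < (1 + 1/(3\log n))\phi \le \hat\phi$ as desired. The only delicate point is the symmetric application of \cref{lem:volumesparsecuts} to both sides of the cut, ensuring that the minimum in the denominator cannot collapse even if the cut is imbalanced in the original graph; the rest is routine bookkeeping.
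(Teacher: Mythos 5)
Your proof is correct, but it takes a genuinely different route from the paper's. The paper argues by contradiction and splits into two cases according to whether $S$ is $b/2$-boundary-linked (its \cref{lem:blinkcuts}): for boundary-linked cuts it combines the multiplicative preservation of the local cut with \cref{lem:volumesparsecuts}, and for non-boundary-linked cuts it runs a separate argument showing that a cut that is $\phi$-sparse in $G[U]^\circ$ yet $\hat{\phi}$-expanding in $H[U]^\circ$ would force $\uncut{S}{U} > \frac{b}{2}\cdot\uncutg{S}$, contradicting non-boundary-linkedness. You instead prove the contrapositive directly and uniformly: the unconditional inequality $\unbor{S}{U} \le (\phi/b)\,\unvolc{S}{U}$ (from $\vol(S) \ge \unbor{S}{U}$ and the definition of $\unvolc{S}{U}$, using $\phi<b$), combined with $\uncut{S}{U} < \phi\,\unvolc{S}{U}$ for the smaller-volume side, gives $\uncutg{S} \le (2\phi/b)\,\unvolc{S}{U}$, so the additive sparsification error $\delta\cdot\uncutg{S}$ is at most a $2c^2/\log n$ fraction of $\phi\,\unvolc{S}{U}$ regardless of whether the cut is boundary-linked; together with the symmetric application of \cref{lem:volumesparsecuts} (with its internal parameter set to $\phi$, which satisfies $\phi \le \phi/b$) to both $S$ and $U\setminus S$ — legitimate since $\unspc{S}{U}$ is symmetric in the two sides, so both satisfy its hypothesis — this closes the argument, and your final arithmetic reduction to $2c^2 + c + c/(3\log n) \le 1/3$ for $c \le 1/30$ is right. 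Your version is shorter and bypasses \cref{lem:blinkcuts} entirely; the paper's dichotomy buys the stronger intermediate fact that boundary-linked cuts have their local cuts preserved multiplicatively, which it reuses elsewhere in the section, but for this particular lemma your uniform bound on the additive error suffices.
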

\begin{proof}
	For the sake of a contradiction, let $\emptyset \neq S \subsetneq U$ be a cut with \smash{$\wspc{S}{U} \ge \hat{\phi}$} such that $\unspc{S}{U} < \phi$.
	
	Consider first the case that $S$ is $b/2$-boundary-linked. Then we know from $H[U] \approx_\delta G[U]$ (see \cref{def:spars}) that $\wcut{S}{U} = (1 \pm c/\log n) \uncut{S}{U}$. Also, as we are assuming $\unspc{S}{U} < \phi$, from \cref{lem:volumesparsecuts} (our parameters verify stronger conditions than what is demanded by the lemma, so it applies) we know $\wvolc{S}{U}= (1 \pm c/\log n)\unvolc{S}{U}$. These two together give $\wspc{S}{U} < (1+10c/\log n)\phi < \hat{\phi}$ for small enough $c$, a contradiction.
	
	Next consider the case that $S$ is not $b/2$-boundary-linked. Then we do not have a multiplicative approximation to $\wcut{S}{U}$. However, $S$ is $\phi$-sparse in $G[U]^\circ$ so we still have $\wvolc{S}{U}= (1 \pm c/\log n)\unvolc{S}{U}$ and $\wvolc{U \setminus S}{U}= (1 \pm c/\log n)\unvolc{U \setminus S}{U}$ by Lemma~\ref{lem:volumesparsecuts}. Therefore, because $\hat{\phi} \ge (1+\frac{1}{3\log n}) \phi$ we get
	\begin{align*}
		\wcut{S}{U} & \ge \left(1-\frac{c}{\log n}\right) \left(1+\frac{1}{3\log n}\right)\cdot \phi \cdot \min\{\unvolc{S}{U}, \unvolc{U \setminus S}{U}\} \\
		& \ge \left(1+\frac{1}{10\log n}\right)\cdot \phi \cdot \min\{\unvolc{S}{U}, \unvolc{U \setminus S}{U}\} \, ,
	\end{align*}
	and at the same time the additive error approximation from \cref{def:spars} is enough to yield
	\begin{equation*}
		\wcut{S}{U} \le \uncut{S}{U} + \delta \cdot \uncutg{S} < \phi \cdot \min\{\unvolc{S}{U}, \unvolc{U \setminus S}{U}\}  + \delta\cdot \uncutg{S} \, .
	\end{equation*}
	Therefore,
	\begin{align*}
		\uncutg{S} & \ge \frac{1}{\delta \cdot10 \log n} \cdot \phi \cdot \min\{\unvolc{S}{U}, \unvolc{U \setminus S}{U}\} \\
		\text{ since $\delta \le c^2 b/ \log n$ } \quad & \ge \frac{1}{10c^2} \cdot \min \{\unbor{S}{U}, \unbor{U \setminus S}{U}\}\, .
	\end{align*}
	Now let us assume $\unbor{S}{U} \le \unbor{U \setminus S}{U}$ without loss of generality. Then the above is essentially saying that $\uncutg{S}$ is much larger than $\unbor{S}{U}$.	Recalling now that we can decompose $\uncutg{S}=\uncut{S}{U}+\unbor{S}{U}$, the above lower bound is actually saying that much of the advantage that~$\uncutg{S}$ has over~$\unbor{S}{U}$ must come from $\uncut{S}{U}$. More precisely, since $c \le 1/30$ we get
	\begin{equation*}
		\uncut{S}{U} \ge\frac{1}{c} \cdot \uncutg{S} > \frac{b}{2} \cdot \uncutg{S} \, ,
	\end{equation*}
	which contradicts the assumption that $S$ is not $b/2$-boundary-linked.
\end{proof}
\noindent
Next, we handle the case of expanding cuts, and show that their volume does not get overshot too much. Even though this is a much weaker guarantee than the one we obtained in Lemma~\ref{lem:volumesparsecuts} for sparse cuts, it will  still be enough for our purposes: this will later be useful in concluding that their sparsity has not dropped by a lot.

\begin{lemma}[Upper bound for volumes of expanding cuts]
	\label{lem:testexpandingcuts}
	Let $b, \phi,\hat{\phi}, c,\delta \in (0,1)$ such that $c \leq 1/2$ is a constant, $\phi \le b$, $\delta \le c^2 / \log n$, and \smash{$\hat{\phi} \le \phi/b$}. Fix a cluster $U \subseteq V$ and also let $H=(V,E',w)$ such that $H[U] \approx_\delta G[U]$. Then, for all cuts $\emptyset \neq S \subsetneq U$ with \smash{$\Phi_U^\circ(S) \ge \hat{\phi}$} we have
	\begin{equation*}
		\min \{\wvolc{S}{U},\wvolc{U \setminus S}{U}\} \le \frac{1+\frac{c}{\log n}}{\hat{\phi}} \cdot \uncut{S}{U}\, .
	\end{equation*}
\end{lemma}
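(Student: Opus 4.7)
The plan is to bound the volume of one specific side of the cut in the sparsifier and appeal to the fact that the minimum is no larger. First I would assume without loss of generality that $\unvolc{S}{U} \le \unvolc{U\setminus S}{U}$; the local cut size is symmetric in $S$ and $U\setminus S$, so a relabelling costs nothing. Under this assumption, the $\hat{\phi}$-expansion hypothesis gives directly
\begin{equation*}
\uncut{S}{U} \;\ge\; \hat{\phi}\cdot \unvolc{S}{U},
\end{equation*}
and since $\min\{\wvolc{S}{U}, \wvolc{U\setminus S}{U}\}\le \wvolc{S}{U}$, it suffices to bound $\wvolc{S}{U}$.

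Next I would control the discrepancy $\wvolc{S}{U}-\unvolc{S}{U}$ through the standard decomposition
\begin{equation*}
\wvolc{S}{U}-\unvolc{S}{U} \;=\; \big(\wvol{S}-\vol(S)\big) \;+\; \tfrac{b-\phi}{\phi}\,\big(\wbor{S}{U}-\unbor{S}{U}\big),
\end{equation*}
exactly as in the proof of \cref{lem:volumesparsecuts}. The first summand is bounded by $\delta\,\vol(S)\le \delta\,\unvolc{S}{U}$, applying the global cut-preservation property of $H[U]\approx_\delta G[U]$ to singleton cuts. For the second summand, writing $\wbor{S}{U}=\wcutg{S}-\wcut{S}{U}$ and $\unbor{S}{U}=\uncutg{S}-\uncut{S}{U}$, the multiplicative bound on $\wcutg{S}$ and the additive bound on $\wcut{S}{U}$ together yield $|\wbor{S}{U}-\unbor{S}{U}|\le 2\delta\,\uncutg{S}$.

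Now I would split $\uncutg{S}=\uncut{S}{U}+\unbor{S}{U}$ and use $\unbor{S}{U}\le \tfrac{\phi}{b-\phi}\,\unvolc{S}{U}$ (a direct consequence of $\unvolc{S}{U}=\vol(S)+\tfrac{b-\phi}{\phi}\unbor{S}{U}$). Combining everything, after a short calculation, one obtains
\begin{equation*}
\wvolc{S}{U} \;\le\; (1+3\delta)\,\unvolc{S}{U} \;+\; 2\delta\,\tfrac{b-\phi}{\phi}\,\uncut{S}{U}.
\end{equation*}
Finally I would use $\hat{\phi}\,\unvolc{S}{U}\le \uncut{S}{U}$ to replace $\unvolc{S}{U}$, and the crucial algebraic fact $\hat{\phi}\cdot\tfrac{b-\phi}{\phi}\le 1$, which follows immediately from $\hat{\phi}\le \phi/b$. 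This gives
\begin{equation*}
\wvolc{S}{U} \;\le\; \frac{1+5\delta}{\hat{\phi}}\cdot \uncut{S}{U},
\end{equation*}
and under $\delta \le c^2/\log n$ with $c\le 1/5$, the factor $1+5\delta$ is at most $1+c/\log n$, completing the bound.

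The argument is essentially routine once the right decomposition is set up, and I do not foresee a real obstacle; the only delicate point is the interplay between $b$, $\phi$, and $\hat{\phi}$ that makes the coefficient $\hat{\phi}(b-\phi)/\phi\le 1$ hold. This is exactly what lets the $(b/\phi-1)$-sized ``self-loop error'' be absorbed into the multiplicative approximation $1+O(\delta)$ on the right-hand side rather than into a $(b/\phi)$-sized blow-up, which would be fatal in the regime $\phi\ll b$.
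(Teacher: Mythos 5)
Your proof is correct and follows essentially the same route as the paper's: fix the smaller-volume side of the cut, bound its sparsifier volume by controlling the boundary term through the multiplicative guarantee on global cuts plus the additive guarantee on the local cut, and absorb the $\frac{b-\phi}{\phi}$ factor via $\hat{\phi}\le \phi/b$ together with the expansion hypothesis $\unvolc{S}{U}\le \uncut{S}{U}/\hat{\phi}$. The only difference is constant bookkeeping — your $1+5\delta$ forces $c\le 1/5$ rather than the stated $c\le 1/2$ (applying the two error terms to the disjoint parts $\vol(S)$ and $\frac{b-\phi}{\phi}\unbor{S}{U}$ of $\unvolc{S}{U}$, rather than to all of $\unvolc{S}{U}$ twice, tightens this), but the slack is immaterial since the lemma is only ever invoked with $c\le 1/6$ or smaller.
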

\begin{proof}
	Let $S^* \in \{S, U \setminus S\}$ be the side of the cut achieving the minimum $\min\{\unvolc{S}{U}, \unvolc{U \setminus S}{U}\}$. Even though $\wbor{S^*}{U}$ may not be a very good approximation of $\unbor{S^*}{U}$, we can upper-bound it by decomposing  $\wcutg{S^*}=\wcut{S}{U}+\wbor{S}{U}$ and using the local cut preservation property of $H[U] \approx_\delta G[U]$ (see \cref{def:spars}): we get
	\begin{align*}
		\wbor{S^*}{U} & = \wcutg{S^*} - \wcut{S^*}{U} \le  \unbor{S^*}{U}+\delta \cdot \uncutg{S^*} \, .
	\end{align*}
Using the above bound, together with the global approximation from \cref{def:spars} applied to singleton cuts, we can upper-bound $\wvolc{S^*}{U}$ as
	\begin{align*}
		\wvolc{S^*}{U} & = \wvol{S}+\frac{b-\phi}{\phi}\wbor{S}{U} \\
		& \le  (1+\delta) \cdot \vol(S^*)+\frac{b-\phi}{\phi} \unbor{S^*}{U} + \frac{b-\phi}{\phi}  \cdot \delta \cdot \uncutg{S^*}\\
		\text{ since $\uncutg{S^*}=\uncut{S}{U}+\unbor{S}{U}$} \quad & \le (1+\delta) \cdot \vol(S^*)+ (1+\delta) \cdot\frac{b-\phi}{\phi}  \cdot \unbor{S^*}{U}+ \frac{b-\phi}{\phi}  \cdot \delta \cdot \uncut{S^*}{U}\\
		\text{ by definition of $\unvolc{S}{U}$} \quad & \le (1+\delta) \cdot \vol_U^{\circ}(S^*)+ \frac{b-\phi}{\phi}  \cdot \delta \cdot \uncut{S^*}{U}\\
		\text{ since $\hat{\phi} \le \frac{\phi}{b}$} \quad & \le (1+\delta) \cdot  \unvolc{S^*}{U} + \delta \cdot  \frac{1}{\hat{\phi}} \cdot \uncut{S^*}{U} \, .
	\end{align*}
By our assumption that $\Phi_U^{b/\phi}(S) \ge \hat{\phi}$, we also know $\vol_U^{b/\phi}(S^*) \le \frac{1}{\hat{\phi}} \uncut{S}{U}$. We then conclude
\begin{align*}
	\wvolc{S^*}{U}
	& \le (1+2\delta) \cdot \frac{1}{\hat{\phi}} \uncut{S}{U} \le \left(1+\frac{c}{\log n}\right) \frac{1}{\hat{\phi}} \uncut{S}{U}\, ,
\end{align*}
since $c \le 1/2$.
\end{proof}
\noindent
Having a bound on the volumes of expanding cuts, allows to bound the sparsity of those cuts in the sparsifier (since all such cuts are boundary-linked, and hence the local cut is well preserved). This is almost the reverse of \cref{lem:iffsparsity}.

\begin{lemma}[Expanding cuts are almost expanding in the sparsifier]
	\label{lem:boundlinkexpsparse}
Let $b, \phi,\hat{\phi}, c,\delta \in (0,1)$ such that $c \leq 1/6$ is a constant, $\phi \le b$, $b \le c$, $\delta \le c^2 b/ \log n$, and \smash{$\phi \le \hat{\phi} \le \phi/b$}.  Fix a cluster $U \subseteq V$ and also let $H=(V,E',w)$ such that $H[U] \approx_\delta G[U]$. Then, for all cuts $\emptyset \neq S \subsetneq U$ with \smash{$\unspc{S}{U} \ge \hat{\phi}$} we have $\wspc{S}{U} \geq (1 - \frac{1}{3\log n}) \hat{\phi}$.
\end{lemma}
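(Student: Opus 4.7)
The plan is to combine the two structural ingredients already proved in the section: a tight multiplicative control on the numerator $\wcut{S}{U}$, and an upper bound on the denominator $\min\{\wvolc{S}{U},\wvolc{U\setminus S}{U}\}$ coming from the fact that $S$ is expanding in $G[U]^\circ$.

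First I would unlock the multiplicative approximation for the cut value in the sparsifier by showing that $S$ is $b/2$-boundary-linked in $U$. This is immediate from the contrapositive of part~\eqref{charact:nonblink} of \cref{lem:blinkcuts}: since $\unspc{S}{U}\ge\hat\phi\ge\phi$, the cut cannot be among the non $b/2$-boundary-linked ones, whose sparsity is strictly below $\phi$. (The hypotheses of \cref{lem:blinkcuts} are satisfied, since $c\le 1/6\le 1/2$, $\phi<b$, $b\le c$, and $\delta\le c^2b/\log n$.) Then part~\eqref{charact:blink} of the same lemma gives
\begin{equation*}
\wcut{S}{U}\ \ge\ \left(1-\tfrac{c}{\log n}\right)\uncut{S}{U}.
\end{equation*}

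Next I would bound the denominator by invoking \cref{lem:testexpandingcuts}. The parameter hypotheses of that lemma are weaker than those assumed here (in particular $c\le 1/2$, $\phi\le b$, $\delta\le c^2/\log n$, $\hat\phi\le\phi/b$), so it applies directly with the same $\hat\phi$, yielding
\begin{equation*}
\min\{\wvolc{S}{U},\wvolc{U\setminus S}{U}\}\ \le\ \frac{1+\tfrac{c}{\log n}}{\hat\phi}\cdot\uncut{S}{U}.
\end{equation*}

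Dividing the two displays gives
\begin{equation*}
\wspc{S}{U}\ \ge\ \hat\phi\cdot\frac{1-c/\log n}{1+c/\log n}\ \ge\ \hat\phi\cdot\left(1-\frac{2c}{\log n}\right),
\end{equation*}
using $\frac{1-x}{1+x}\ge 1-2x$ for $x\ge 0$. Since $c\le 1/6$, we have $2c\le 1/3$, so the right-hand side is at least $(1-\tfrac{1}{3\log n})\hat\phi$, which is the claim. There is no real obstacle here: both ingredients were proved precisely to be combined this way, and the only computation is verifying that the constant $c\le 1/6$ is exactly what turns the combined multiplicative slack of size $c/\log n$ into the stated $1/(3\log n)$ slack.
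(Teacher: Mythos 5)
Your proposal is correct and follows essentially the same route as the paper: invoke \cref{lem:blinkcuts} to get that $S$ is $b/2$-boundary-linked (hence $\wcut{S}{U} \ge (1-\tfrac{c}{\log n})\uncut{S}{U}$), invoke \cref{lem:testexpandingcuts} to bound $\min\{\wvolc{S}{U},\wvolc{U\setminus S}{U}\}$, and combine the two with the constant $c \le 1/6$ absorbing the multiplicative slack into the $1/(3\log n)$ term. The only cosmetic difference is that you state the conclusion as a lower bound on the ratio $\wspc{S}{U}$ directly, while the paper phrases the same inequality as an upper bound on the denominator in terms of $\wcut{S}{U}$; the content is identical.
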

\begin{proof}
	First note that our parameter regime is strictly stronger than that of \cref{lem:blinkcuts} and \cref{lem:testexpandingcuts}. Hence, by \cref{lem:blinkcuts}, we know that $S$ is $b/2$-boundary-linked, since it is $\hat{\phi}$-expanding and $\hat{\phi} \ge \phi$. \cref{lem:blinkcuts} then also guarantees that $\wcut{S}{U} \ge (1-\frac{c}{\log n}) \uncut{S}{U}$. On the other hand, \cref{lem:testexpandingcuts} gives \smash{$	\min \{\wvolc{S}{U},\wvolc{U \setminus S}{U}\} \le (1+\frac{c}{\log n})/{\hat{\phi}} \cdot \uncut{S}{U}$}. Hence,
	\begin{equation*}
		\min \{\wvolc{S}{U},\wvolc{U \setminus S}{U}\} \le \frac{1+\frac{c}{\log n}}{1-\frac{c}{\log n}}\frac{1}{\hat{\phi}} \cdot \wcut{S}{U} \leq  \frac{1}{1-\frac{1}{3\log n}}\cdot \frac{1}{\hat{\phi}} \cdot \wcut{S}{U}  \, ,
	\end{equation*}
since $c \le 1/6$ is small enough.
\end{proof}

\subsection{Proving the proxying lemma}
\label{subsec:proofproxying}
From the previous section, we have that the distinction between sparse and expanding cuts of $G[U]^\circ$ is very accurately preserved by $H[U]^\circ$, and also the volume of sparse cuts of $G[U]^\circ$ is approximated to within a small relative error in $H[U]^\circ$. As these are exactly the quantities that a BSCA is interested in, we can then prove the main lemma of the section, restated here for convenience of the reader.

\reduction*

\begin{proof}[Proof]
	By definition of $\bsc$, we know that $\omega$ is a $((1+\frac{1}{2\log n}) \cdot \phi, \alpha,\lambda,0)$-BSCW of~$H[U]^\circ$, i.e.
	\begin{enumerate}
		\item \label{itdef:exp2} if $\omega=\bot$, then $H[U]^\circ$ is a $(1+\frac{1}{2\log n}) \cdot \phi$-expander;
		\item if $\omega=(R,\nu)$, then:
		\begin{enumerate}
			\item \label{itdef:sparse2} $\wspc{R}{U} < (1+\frac{1}{2\log n}) \cdot \alpha \cdot \phi $,
			\item \label{itdef:smallside2} $\wvolc{R}{U} \le \wvolc{U \setminus R}{U} $,
			\item \label{itdef:bal2} for every other cut $\emptyset \neq T \subsetneq U$ with $\wspc{R}{U} <(1+\frac{1}{2\log n}) \cdot \phi$ and $\wvolc{T}{U} \le \wvolc{U \setminus T}{U}$ one has $\wvolc{T}{U} \le B \wvolc{R}{U} $:
			\item \label{itdef:goodestim2} $\nu =\wvolc{R}{U}$.
		\end{enumerate}
	\end{enumerate}
Our goal is to translate these properties to $G[U]^\circ$ by showing that $\omega$ is a  $(\phi,(1+\frac{1}{\log n})\alpha,(1+c)\lambda,c)$-BSCW of~$G[U]^\circ$, i.e.
\begin{enumerate}
	\item \label{itdef:exp3} if $\omega=\bot$, then $G[U]^\circ$ is a $ \phi$-expander;
	\item if $\omega=(R,\nu)$, then:
	\begin{enumerate}
		\item \label{itdef:sparse3} $\unspc{R}{U} < (1+\frac{1}{\log n}) \cdot\alpha \cdot  \phi $,
		\item \label{itdef:smallside3} $\unvolc{R}{U} \le (1+c)\unvolc{U \setminus R}{U} $,
		\item \label{itdef:bal3} for every other cut $\emptyset \neq T \subsetneq U$ with $\unspc{R}{U} <\phi$ and $\unvolc{T}{U} \le \unvolc{U \setminus T}{U}$ one has $\unvolc{T}{U} \le (1+c)B \unvolc{R}{U} $:
		\item \label{itdef:goodestim3} $(1-c)\cdot\unvolc{R}{U}\le \nu \le (1+c)\cdot\unvolc{R}{U}$.
	\end{enumerate}
\end{enumerate}

	\paragraph{Correctness of the case $\omega = \bot$.} The idea is that if the cluster $U$ is non-trivially more than a $\phi$-expander in the sparsifier, then it should be at least a $\phi$-expander in $G$. Let us say that $\omega=\bot$, i.e. $\wspc{S}{U} \ge  (1+\frac{1}{2\log n})\cdot \phi $ for all $\emptyset \neq S \subsetneq U$. Let \smash{$\hat{\phi}= (1+\frac{1}{2\log n}) \cdot \phi$}, so that we are in the parameter regime of \cref{lem:iffsparsity}. Then, this lemma ensures that $\unspc{S}{U} \ge \phi$ for all $\emptyset \neq S \subsetneq U$. In other words, $G[U]^\circ$ is a $\phi$-expander.
	
	\paragraph{Correctness of the case $\omega=(R,\nu)$.}
Assume $\omega=(R,\nu)$. We need to show the four properties of a BSCW for~$G[U]^\circ$.
	\begin{itemize}
		\item \textit{Property~\eqref{itdef:sparse3}}. Intuitively, $R$ cannot be much more expanding in $G$ than what it looks like in $H$, so if it is sparse in $H$ it is basically just as sparse in $G$. To prove this formally, it is convenient to distinguish the case of this cut being boundary-linked or not.
		
		If $R$ is not $b/2$-boundary-linked, then \cref{lem:blinkcuts} implies $\unspc{R}{U} < \phi$. The property is  then already proved in this case. 
		
		If it is $b/2$-boundary-linked, suppose for the sake of a contradiction that $R$ has {$\unspc{R}{U} \ge (1+{1}/{\log n})\alpha \cdot \phi$}.  Let \smash{$\hat{\phi}=(1+\frac{1}{\log n})\alpha \cdot \phi$}. This value of \smash{$\hat{\phi}$} fulfils the condition that \smash{$\hat{\phi} \le 1/b$} from \cref{lem:boundlinkexpsparse}, since $\alpha\le 1/(2b)$ by assumption. We then meet the requirements to apply \cref{lem:boundlinkexpsparse} (all other parameters also meet its preconditions, as our parameter regime for $b,\phi,c,\phi$ is no weaker than its), so
		
		\[
			\wspc{R}{U} \geq \left(1-\frac{1}{3\log n}\right)\cdot \left(1+\frac{1}{\log n}\right)\cdot\alpha \cdot \phi \geq \left(1+\frac{1}{2\log n}\right) \alpha \cdot \phi,
		\]
		which contradicts the definition of $\bsc$. We have then showed property~\eqref{itdef:sparse3}.

		\item \textit{Properties~\eqref{itdef:smallside3} and~\eqref{itdef:goodestim3}}. From the previous point we know that $\unspc{R}{U} <(1+\frac{1}{\log n})\alpha \cdot \phi$. Since our parameter regime is only stronger than that of \cref{lem:volumesparsecuts} if we set \smash{$\hat{\phi}=(1+\frac{1}{\log n})\alpha \cdot \phi$}, we get $\wvolc{R}{U}= (1 \pm c/\log n)\unvolc{R}{U}$ and $\wvolc{U \setminus R}{U}= (1 \pm c/\log n)\unvolc{U \setminus R}{U}$. Recalling that $\wvolc{R}{U} \le \wvolc{U\setminus R}{U}$ (see~\eqref{itdef:smallside2} for $H[U]^\circ$), this yields
		\begin{equation*}
			\unvolc{R}{U} \le \frac{1}{1-\frac{c}{\log n}} \wvolc{R}{U} \le  \frac{1}{1-\frac{c}{\log n}}\wvolc{U\setminus R}{U} \le \frac{1+\frac{c}{\log n}}{1-\frac{c}{\log n}}\unvolc{U\setminus R}{U} \, ,
		\end{equation*}
		so $\unvolc{R}{U} \le (1+c)\unvolc{U\setminus R}{U}$. Recalling that $\nu = \wvolc{R}{U}$ (see~\eqref{itdef:goodestim2} for $H[U]^\circ$), one also has
		\begin{equation*}
			\left|\nu-\unvolc{R}{U}\right|=\left|\wvolc{R}{U}-\unvolc{R}{U}\right|\le c/\log n \cdot \unvolc{R}{U} \le c \unvolc{R}{U} \, .
		\end{equation*}

		\item \textit{Property~\eqref{itdef:bal3}}. This property aims to bound the volume of $R$ against the volume of every other $\phi$-sparse cut. We want to exploit the bound that $\bsc$ gives for these cuts in $H$, i.e. that for every cut $\emptyset \neq T \subsetneq U$ with $\wspc{R}{U} <  (1+\frac{1}{2\log n})  \phi$ and $\wvolc{T}{U} \le \wvolc{U \setminus T}{U}$ one has $\wvolc{T}{U} \le \lambda \wvolc{R}{U} $ (see~\eqref{itdef:bal2} for $H[U]^\circ$). Now one can see that if we prove all such $T$ to be also $\phi$-sparse in $G[U]^\circ$, then we could conclude simply using the fact that the volumes of $H[U]^\circ$ are good proxies for those in $G[U]^\circ$. We now implement this strategy.
		
		By virtue of property~\eqref{itdef:sparse3}, we have that $\unspc{R}{U}<(1+\frac{1}{\log n})\alpha \cdot \phi$, and from an application of \cref{lem:volumesparsecuts} we have $\wvolc{R}{U}= (1 \pm c/\log n)\unvolc{R}{U}$ (again, the preconditions of the lemma are met if we take \smash{$\hat{\phi}=(1+\frac{1}{\log n})\alpha \cdot \phi$}).
		
		Consider now any other cut $\emptyset \neq T \subsetneq U$ with $\unspc{T}{U} < \phi$ and $\unvolc{T}{U} \le \unvolc{U \setminus T}{U}$. Note that for any such $T$ we have $\wvolc{T}{U}= (1 \pm c/\log n)\unvolc{T}{U}$ and $\wvolc{U \setminus T}{U}= (1 \pm c/\log n)\unvolc{U \setminus T}{U}$, again by applying \cref{lem:volumesparsecuts} (setting $\hat{\phi} =\phi \le \phi/b$, and all other parameters satisfy the needed preconditions).
		
		Moreover, any such $T$ is also $((1+\frac{1}{2\log n})\ \cdot \phi)$-sparse in $H[U]^\circ$, by applying Lemma~\ref{lem:iffsparsity} with parameter \smash{$\hat{\phi}=(1+\frac{1}{2\log n}) \phi \ge (1+\frac{1}{3\log n}) \phi$}. From~\eqref{itdef:bal2} for $H[U]^\circ$, we then know
		\begin{equation*}
			\min\{\wvolc{T}{U},\wvolc{U \setminus T}{U}\} \le \lambda \wvolc{R}{U} \, .
		\end{equation*}
		Recall that $\wvolc{R}{U}$, $\wvolc{T}{U}$, $\wvolc{U \setminus T}{U}$ all are within a $(1 \pm c/\log n)$ multiplicative error of $\unvolc{R}{U}$, $\unvolc{T}{U}$, $\unvolc{U \setminus T}{U}$ respectively. Hence, $\min\{\wvolc{T}{U},\wvolc{U \setminus T}{U}\} \le \lambda \wvolc{R}{U}$ translates to
		\begin{equation*}
			\unvolc{T}{U} = \min\{\unvolc{T}{U},\unvolc{U \setminus T}{U}\} \le (1+10c/\log n) \lambda\cdot \unvolc{R}{U} \le (1+c) \lambda\cdot \unvolc{R}{U} \, .
		\end{equation*}
	\end{itemize}
	\noindent
	We have then proved~\eqref{itdef:bal3}.
	\end{proof}
\section{Space efficient recursive partitioning}
\label{sec:algexpdec}

\noindent
The main goal of this section is to show Theorems~\ref{th:main} and~\ref{th:mainpoly}, restated here for convenience.

\mainbld*
\mainbldpoly*
\noindent
The only difference between the algorithms of Theorem~\ref{th:main} and~\ref{th:mainpoly} lies in the balanced sparse cut subroutine that they employ. Therefore, we present and analyse them in a unified manner by deferring the choice of the balanced sparse cut algorithm to later. Using the results from \cref{sec:spars} and \cref{sec:balsparse} (i.e. \cref{lem:distrcutstreamadd} and \cref{lem:reduction} respectively), we prove the following.

\begin{restatable}[Unified algorithm]{lemma}{unified}
	\label{lem:tech}
	Let $G=(V,E)$ be a graph given in a dynamic stream, let $b \in (0,1)$ be a parameter such that $b \le {1}/{\log n}$. Let $\bsc$ be an $(\alpha,\lambda)$-BSCA where \smash{$\alpha \le \frac{1}{b\log n}$} and \smash{$\lambda= O(1)$}. Also let $k$ be an integer such that \smash{$ k \ge {\log {(n^5 \alpha )} }/{ \log  {(b^{-1/2}}/{\lambda})} $}, and $k \le \log n$. Then there is an algorithm that maintains a linear sketch of $G$ in $\otil( n/b^3) \cdot \alpha^{O(k)}$ space. For all $\epsilon \in [n^{-2},b\log n]$, the algorithm decodes the sketch using $\bsc$ to compute, with high probability, a $(b,\epsilon,\phi,\gamma)$-BLD of $G$ for
	\begin{equation*}
		\phi = \Omega\left(\frac{\epsilon}{\alpha \log n}\right) \quad \text{ and } \quad \gamma = O(\alpha^{k+1}) \, .
	\end{equation*}
	Moreover, the decoding runs in $\otil( n/b^3) \cdot \alpha^{O(k)}+S(n,\otil( n/b^2) \cdot \alpha^{O(k)})$ space and  {$T(n,\otil( n/b^2) \cdot \alpha^{O(k)}) \cdot \poly(n) $} time, where $S(p,q)$ and $T(p,q)$, denote the space and time complexity of $\bsc$ on a graph with $p$ vertices and $q$ edges respectively. Furthermore, the decoding only makes calls to $\bsc(\cdot, \psi)$ with sparsity parameters $\psi \in (0,1)$ that satisfy $\psi \le \frac{1}{10\alpha}$.
\end{restatable}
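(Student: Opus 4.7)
The plan is to implement the recursive framework of \cref{alg:metaed} in the streaming model along the lines of \cref{sec:techoverviewub}, using cluster sparsifiers (from \cref{lem:distrcutstreamadd}) as streaming proxies for $G$ and the proxying lemma (\cref{lem:reduction}) to transfer the action of $\bsc$ from the sparsifier to $G$. During the stream, I would maintain a family of independent cluster sparsifiers, each with error $\delta = \Theta(c^2 b/\log n)$ so that the preconditions of \cref{lem:reduction} are met with the same small constant $c$ required there. One sparsifier is allocated per node of the recursion tree, so that every invocation of $\bsc$ is non-adaptive with respect to the sparsifier it consults; multiplying the $\otil(n/\delta^2)$ sparsifier size by the number of tree nodes, together with the auxiliary spectral and connectivity witnesses used inside \cref{alg:dynsampling}, accounts for the stated $\otil(n/b^3) \cdot \alpha^{O(k)}$ space bound.

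The decoding procedure runs \cref{alg:metaed} on a cluster $U$ as follows: draw a fresh sparsifier $H$, invoke $\bsc(H[U]^\circ, (1+\tfrac{1}{2\log n})\phi)$, and by \cref{lem:reduction} treat the output as a $(\phi,(1+\tfrac{1}{\log n})\alpha,(1+c)\lambda,c)$-BSCW of $G[U]^\circ$. If the witness is $\bot$, declare $U$ an expander cluster; otherwise it is a pair $(R,\nu)$ with $\nu$ a $(1\pm c)$-approximation to $\vol_U^\circ(R)$. I would use $\nu$ to classify $R$ as balanced (both sides of comparable volume up to a large constant $C$) or unbalanced. In the balanced case, recurse on both $R$ and $U\setminus R$ with fresh sparsifiers; in the unbalanced case, enter a trimming loop in the style of~\cite{offlineexpdec,streamexpdec}: iteratively apply $\bsc$ to peel off the smaller side of successive unbalanced sparse cuts until the residual $S'$ certifies no sparse cut, declare $S'$ an expander cluster, and recurse on $U\setminus S'$ with a fresh sparsifier.

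The quality analysis follows the standard two-part charging. Each balanced cut contributes at most $(1+\tfrac{1}{\log n})\alpha\phi$ times its smaller-side volume to the inter-cluster edge count, and balancedness ensures geometric decay over the $\otil(1)$ balanced levels, yielding a total of $O(\phi\alpha\log n)\cdot |E|\le \epsilon |E|$ crossings once $\phi=\Omega(\epsilon/(\alpha\log n))$. The unbalanced branches add a comparable contribution since each trim step is charged to the small boundary it creates, with the condition $k\ge \log(n^5\alpha)/\log(b^{-1/2}/\lambda)$ guaranteeing that the trimming potential shrinks fast enough to terminate within $k$ iterations per unbalanced call. Cluster expansion is established inductively: each trimming layer composes a factor-$\alpha$ loss from the approximation of $\bsc$ into the effective expansion guarantee of the boundary-linked subgraph, giving $\gamma=O(\alpha^{k+1})$ after $k+1$ such compositions.

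The main obstacle is the joint analysis of the balanced and trimming branches in the presence of the multiplicative slack of \cref{lem:reduction}. The proxying lemma introduces a $(1+\tfrac{1}{\log n})$-factor slack in the sparsity and a $(1+c)$-factor slack in the balance; these errors compound through $\otil(1)$ recursion levels and must stay controlled so that both the inter-cluster budget and the cluster expansion targets are met. A second delicate point is the union bound: \cref{lem:distrcutstreamadd} guarantees the cluster sparsifier property only for a \emph{fixed} cluster, while the decoding queries $\poly(n)$ subgraphs along the recursion tree; since each sparsifier is consulted only along one root-to-leaf path and its failure probability is $1/\poly(n)$, a routine union bound suffices once $k\le \log n$. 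Finally, the constraint $\alpha\le 1/(b\log n)$ is needed precisely so that every call $\bsc(\cdot,\psi)$ is made with $\psi\le 1/(10\alpha)$, respecting the parameter regime of \cref{lem:reduction} and preserving the space bound through the $\alpha^{O(k)}$ factor.
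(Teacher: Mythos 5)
Your overall architecture matches the paper's (cluster sparsifiers from \cref{lem:distrcutstreamadd}, the proxying lemma \cref{lem:reduction} to transfer $\bsc$'s answers to $G$, and a decompose-plus-trim recursion), but there is a genuine gap in the space accounting, which is the heart of this lemma. You allocate ``one sparsifier per node of the recursion tree'' and claim that multiplying $\otil(n/\delta^2)$ by the number of tree nodes gives $\otil(n/b^3)\cdot\alpha^{O(k)}$. It does not: the recursion tree can have $\Theta(n)$ nodes (its leaves are the clusters of the output partition), so a per-node allocation costs $\otil(n^2/b^2)$ bits, far above the claimed near-linear bound; moreover, the sparsifiers must be instantiated during the stream, before the tree is known, so ``per node'' would have to mean per possible node, which is even worse. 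The paper's proof avoids this by \emph{sharing} a single sparsifier among all calls at the same recursion level (and one per triple of level, trimming outer iteration $j\in[k+1]$, and trimming inner iteration $h\le 1/b$): this is sound because clusters at the same level are disjoint and are determined only by the randomness of earlier levels/iterations, so each query is non-adaptive with respect to the sparsifier it consults. Making that work requires exactly the quantitative bounds you do not establish: recursion depth $O(\lambda\log n)$ (Lemma~\ref{lem:bounddepth}), at most $1/b$ inner-loop iterations per outer iteration (Lemma~\ref{lem:boundinnerloop}), and at most $k+1$ outer iterations (Lemma~\ref{lem:boundouterloop}), giving only $\otil(k/b)$ sparsifiers in total and hence the stated space. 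Your later remark that ``each sparsifier is consulted only along one root-to-leaf path'' is also inconsistent with a per-node allocation and does not substitute for this argument.

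Two secondary points. First, your trimming loop only has the ``peel until the residual is an expander'' exit; the paper's \trim{} also relaxes the sparsity parameter by a factor $(1+\tfrac{1}{\log n})\alpha$ across the $k+1$ outer iterations and may instead return a less sparse but balanced cut (condition~\eqref{enum:relaxouter}), on which \ed{} recurses on both sides --- this exit, together with the volume thresholds $(\wvolc{U}{U})^{1-(j-1)/k}$, is what makes the iteration and depth bounds go through; your appeal to the condition on $k$ gestures at this but does not supply the mechanism. Second, the charging of inter-cluster edges is not simply ``$\alpha\phi$ times the smaller-side volume'': the witness bounds $\uncut{X}{U}$ against the boundary-linked volume $\unvolc{X}{U}=\vol(X)+\frac{b-\phi}{\phi}\unbor{X}{U}$, whose self-loop part must be charged to edges cut at earlier levels; the paper controls this via the recurrence $T(\ell)=\mu\phi\vol(V)+2\mu b\sum_{\ell'<\ell}T(\ell')$, and it is precisely here that the factor $e^{O(b\alpha\lambda\log n)}$ and the hypothesis $\alpha\le\frac{1}{b\log n}$ are needed to conclude $\phi=\Omega(\epsilon/(\alpha\log n))$. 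Your sketch omits this step.
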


\noindent
From Lemma~\ref{lem:tech} one readily obtains the theorems by either using a brute force $(1,1)$-BSCA or a polynomial time $(O(\log^3 n), O(1))$-BSCA.

\begin{proof}[Proof of Theorem~\ref{th:main}]
	It is easy to get an exponential time $(1,1)$-BSCA as per Definition~\ref{def:bsca}. Given a graph and a parameter $\psi \in (0,1)$, one can do this by brute-forcing all possible cuts in the input graph: either return $\omega=\bot$ if all the cuts are $\psi$-expanding, or return $\omega=(R,\nu)$ where~$R$ is the $\psi$-sparse cut with maximum volume, and $\nu$ is its volume. The space requirement of this algorithm is linear in the size of its input.
	
	We then have parameters $\alpha=1,\lambda=1$, and we set $k=\log n$.  Hence, for any $b \le 1/\log^2 n$ and $\epsilon \le b \log n$, we meet the preconditions  of  \cref{lem:tech}: $b \le 1/\log n$, $\alpha = 1 \le \frac{1}{b\log n}$, $\lambda= O(1)$, $k \le \log n$, and
	\begin{equation*}
		\frac{\log {n^5\alpha }}{\log  \frac{1}{\lambda \sqrt{b}}} \le \frac{O(\log n)}{\Omega(\log \log n)} \le k \, ,
	\end{equation*}
	where the first inequality follows because $\alpha=\lambda=1$, and $b \le 1/\log^2 n$.
	Therefore, by \cref{lem:tech} there is a dynamic stream algorithm that with high probability outputs a $(b, \epsilon, \phi, \gamma)$-BLD of $G$ where
	\begin{equation*}
		\phi = \Omega\left(\frac{\epsilon}{\alpha \log n}\right) = \Omega\left(\frac{\epsilon}{ \log n}\right) \quad \text{ and } \quad \gamma = O(\alpha^{k+1})= O(1) \, .
	\end{equation*}
	Its space complexity is
	\begin{equation*}
		\otil(k \cdot n/b^3) \cdot \alpha^{O(k)} = \otil(n/b^3) \, ,
	\end{equation*}
	since for the brute force algorithm we have $S(p,q)=\otil(q)$. The running time is $2^{O(n)}$ since the brute force algorithm has $T(p,q)=\otil(q)\cdot 2^p$.
\end{proof}
\begin{proof}[Proof of Theorem~\ref{th:mainpoly}]
	For the polynomial time version, we use the following BSCA.
	\begin{restatable}[\cite{offlineexpdec, liexpdec,leightonrao}]{theorem}{fastbalsparse}
		\label{th:fastbalsparse}
		For $b \in (0,1)$, let $p \le n$, \smash{$q \le   \min\{p^2, n/b^2 \cdot \log^{O(\frac{\log n}{\log 1/b})}n \}$} and $W \le \poly(n)$ be integers.
		For $p$-vertex $q$-edge input graphs with weighted edges and weighted self-loops, and total volume bounded by $W$, there is a $(c_{\sw} \log^3 n,C_{\sw})$-BSCA that runs in $\poly(n)$ time and \smash{$ n/b^2 \cdot \log^{O(\frac{\log n}{\log 1/b})}n $} bits of space, uses \smash{ $n/b^2 \cdot \log^{O(\frac{\log n}{\log 1/b})}n $} random bits, and works with probability \smash{$1-1/\poly(n)$} for every input sparsity parameter \smash{$0 <\psi \le \frac{1}{10 c_{\sw} \log^3 n}$},  where \smash{$c_{\sw},C_{\sw} \ge 1$} are absolute constants.
	\end{restatable}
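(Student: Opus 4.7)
The plan is to realize the BSCA as a black-box wrapping of the balanced-sparse-cut routine underlying the Saranurak-Wang expander decomposition framework~\cite{offlineexpdec}, as refined by Li~\cite{liexpdec}, with Leighton-Rao~\cite{leightonrao} providing the initial $O(\log n)$-approximation to sparsest cut. Embedding the Leighton-Rao solver inside the cut-matching game and trimming loop of~\cite{offlineexpdec, liexpdec} yields a routine that, on input $(H', \psi)$, runs in $\poly(n)$ time and either certifies that $H'$ is a $\psi$-expander or returns a cut $R$ on the small side with $\Phi_{H'}(R) \le c_{\sw} \log^3 n \cdot \psi$ whose volume is within a constant factor $C_{\sw}$ of the largest-volume $\psi$-sparse cut on that side.

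First I would verify the interface of Definition~\ref{def:bsca}. Returning $\omega = \bot$ in the expander case gives Property~\ref{item:expr} of Definition~\ref{def:bscw}; in the cut case we output $(R, \nu)$ where $\nu$ is the exact volume of $R$ in $H'$, so Properties~\ref{item:sparser}, \ref{item:balr}, \ref{item:smallsider}, \ref{item:goodestimr} all follow from the routine's guarantees with $\xi = 0$. Weighted edges and weighted self-loops are accommodated natively: self-loops enter only through $\vol_{H'}$, and the max-flow and cut-matching subroutines admit polynomial edge capacities since $W \le \poly(n)$. The precondition $\psi \le 1/(10 c_{\sw}\log^3 n)$ keeps the approximate sparsity $c_{\sw}\log^3 n \cdot \psi$ at most $1/10$, which is the regime where the cut-matching game converges in every recursive subcall and the approximation guarantees compose cleanly.

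The main obstacle I expect is tracking the space and randomness, which must land at $n/b^2 \cdot \log^{O(\log n/\log(1/b))} n$. The multiplicative $\log^{O(\log n/\log(1/b))}$ factor arises from the recursion depth of the cut-matching game inside~\cite{offlineexpdec, liexpdec}: achieving a $\log^3 n$-approximation via recursive use of an intermediate ED takes $O(\log n / \log(1/b))$ levels, each paying a $\pylog(n)$ factor in space and random bits for the approximate-max-flow and embedding primitives, with Nisan's PRG bounding the randomness at a $\pylog(n)$ blow-up. The base $n/b^2$ then comes from the fact that every intermediate flow network the algorithm operates on has at most $q \le n/b^2 \cdot \log^{O(\log n/\log(1/b))} n$ edges by hypothesis, and the max-flow solvers store state linear in edge count up to $\pylog$ factors.
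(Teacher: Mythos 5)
Your proposal has the right flavor (wrap the balanced-sparse-cut machinery of~\cite{offlineexpdec,liexpdec} and check the BSCW interface of \cref{def:bscw}), but it has two concrete gaps that the paper's proof is specifically built to close. First, self-loops are not ``accommodated natively'': the cited routines are stated for graphs without weighted self-loops, and the paper performs an explicit reduction (replace each self-loop by a dangling vertex joined by an edge of half the loop's weight, \cref{alg:selfloop}) together with a nontrivial transfer claim showing that an $(\alpha,\lambda)$-BSCA on the loop-free graph yields a $(2\alpha,4\lambda)$-BSCA on the original one. That argument must rule out cuts that separate a vertex from its dangling copy and re-establish the sparsity, balancedness, and volume-estimate properties, and it is exactly there --- not in any ``cut-matching convergence'' condition --- that the precondition $\psi \le \frac{1}{10 c_{\sw}\log^3 n}$ is used. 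Second, you do not address the success probability on small inputs: the near-linear-time algorithm of~\cite{liexpdec} succeeds with probability $1-1/\poly(p)$ in the number of vertices $p$ of the graph it is run on, and the decoding calls the BSCA on subgraphs that may have $p \ll n$ vertices, so a single black-box invocation does not give the required $1-1/\poly(n)$ guarantee (and one cannot boost by repetition, since expansion of the output cannot be cheaply verified). The paper resolves this with a case split: for $p \ge n^{1/c_{\mathrm{time}}}$ it uses the randomized ED-based routine, and for small $p$ it switches to a \emph{deterministic} iterative-trimming algorithm built on the Leighton--Rao approximation (ellipsoid plus deterministic rounding), which is affordable precisely because $p$ is small.

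Relatedly, your space and randomness accounting rests on an incorrect mechanism: the factor $\log^{O(\log n/\log\frac{1}{b})} n$ is not produced by any $b$-dependent recursion depth inside the cut-matching game (whose recursion does not depend on $b$ at all); it is simply inherited from the hypothesis $q \le n/b^2 \cdot \log^{O(\log n/\log\frac{1}{b})} n$ on the input size, the point being that in the large case the algorithm's space is near-linear in $q$, while in the small case the $\poly(p)$-space Leighton--Rao-based procedure fits the budget by the choice of $c_{\mathrm{time}}$. Using Leighton--Rao as the approximation oracle \emph{inside} the cut-matching game on large instances, as you suggest, would not match this accounting (the LP has $\Omega(p^2)$ size), and is not what~\cite{offlineexpdec,liexpdec} do; in the paper Leighton--Rao appears only in the small-graph branch.
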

	\noindent
	Hereafter let us set $\alpha=c_{\sw} \log^3 n,\lambda=C_{\sw}$ and \smash{$ k = \frac{\log {n^5 \alpha } }{ \log  {b^{-1/2}}/{\lambda}} $}. We want to apply \cref{lem:tech} in this parameter regime with \cref{th:fastbalsparse} as $\bsc$.
	
	One can see that the decoding of \cref{lem:tech} (i.e. \Cref{alg:expdec,alg:trim}) calls $\bsc$ on disjoint portions of \smash{$b^{-1} \cdot O(k \log n)=\otil(1/b) $} many sparsifiers (see the proof of \cref{lem:tech}). Then, because each sparsifier has at most \smash{$ \otil( n/b^2) \cdot \alpha^{O(k)}=\otil(n/b^2)\cdot \log^{O(\frac{\log n}{\log 1/b})}n $} edges (see the proof of \cref{lem:tech}), we only need  \smash{$\otil(n/b^3)\cdot \log^{O(\frac{\log n}{\log 1/b})}n$} random bits in order to avoid adaptivity and at the same time ensure that all the calls work correctly with probability $1-1/\poly(n)$. As this fits into our claimed space budget, we can then assume that this BSCA always outputs correctly. From \cref{lem:tech} we also know that the algorithm never calls $\bsc$ with sparsity parameters larger than $\frac{1}{10\alpha}$, so our setting matches the requirement of \cref{th:fastbalsparse}. We defer the proof of \cref{th:fastbalsparse}  to \cref{apndx:selfloops}.
	 
	We now verify the rest of the preconditions of \cref{lem:tech}. For any $b \le 1/\log^5 n$ and $\epsilon \le b \log n$, have that $\lambda=C_{\sw} = O(1)$ and
	\begin{equation*}
		\alpha=c_{\sw} \log^3 n \le \log^4 n \le \frac{1}{b \log n} \, .
	\end{equation*}
	We also have that $k = O(\log n /\log\log n) \le \log n$, since $\alpha = O(\log^3 n)$, $\lambda = O(1)$, and $b \le 1/\log^5 n$. Therefore, the lemma gives a dynamic stream algorithm that with high probability outputs a $(b, \epsilon, \phi, \gamma)$-BLD of $G$ where
	\begin{equation*}
		\phi = \Omega\left(\frac{\epsilon}{\alpha \log n}\right) = \Omega\left(\frac{\epsilon}{ \log^4 n}\right) \quad \text{ and } \quad \gamma = O(\alpha^{k+1})= \log^{O(\frac{\log n}{\log 1/b})}n \, .
	\end{equation*}
	Since the BSCA from \cref{th:fastbalsparse} takes at most \smash{$n/b^2\cdot \log^{O(\frac{\log n}{\log 1/b})}n$} bits of space, the overall space complexity is
	\begin{equation*}
		\otil(k \cdot n/b^3) \cdot \alpha^{O(k)} + n/b^2\cdot \log^{O(\frac{\log n}{\log 1/b})}n = n/b^3\cdot \log^{O(\frac{\log n}{\log 1/b})}n \, .
	\end{equation*}
 The running time is $\poly(n)$, since  the BSCA from \cref{th:fastbalsparse} takes at most $\poly(n)$ time.
\end{proof}

\paragraph{Overview of the algorithm.} We now give the BLD construction for Lemma~\ref{lem:tech}. Algorithms~\ref{alg:expdec} and~\ref{alg:trim} outline the core procedure, which is almost the same as the one of~\cite{streamexpdec}. For BLD parameters $b$ and $\epsilon$ we set $\phi$ appropriately in terms of~$\epsilon$, and our goal is to show that calling Algorithm~\ref{alg:expdec} as $\textsc{Decompose}(V,0)$ gives a BLD of $G$.

\begin{algorithm}[p]
	\caption{\ed: space efficient implementation of \cref{alg:metaed} for BLD construction}
	\label{alg:expdec}
	
	\begin{algorithmic}[1]
		\LeftComment{$b, \phi \in (0,1)$ are global BLD parameters}
		
		\LeftComment{$\alpha, \lambda$ are the parameters of $\bsc$}
		\LeftComment{$C \ge 1, \, c \in (0,1)$ are constant parameters}
		\Procedure{$\ed(U, \ell)$}{} \Comment{cluster $U \subseteq V$, recursion level $\ell \ge 0$ }
		\State $H=(V,E',w) \gets \textsc{Sparsifier}^c_\ell(U,b)$ \label{enum:freshsampleed} \Comment{sparsifier for $U$ in this level}
		\State $\omega \gets \bsc(H[U]^\circ, (1+\frac{1}{2 \log n}) \cdot \phi)$ \label{enum:balsparsecall}
		\Statex
		\If{$\omega = \bot$}  \Return $\{U\}$ \label{enum:stoprec} \EndIf \Comment{if $U$ is an expander, return}
		\State $(R,\nu) \gets \omega$ \Comment{otherwise, if  $R$ is balanced enough, recurse on both sides}
		\If{ $\nu\ge \frac{1}{C\lambda}\wvolc{U}{U}$}  \Return $\ed(R,\ell+1)\,  \cup \, \ed(U \setminus R,\ell+1)$ \label{enum:balrecurse} \EndIf
		\Statex
		\State $(S,\exp) \gets \trim(U,\ell)$ \label{enum:gotrim}  \Comment{find an expander inside $U$, or settle for a less sparse balanced cut}
		\If{$\exp = \top$} \Comment{if $S$ induces an expander, recurse on the complementary only} \label{condition:exp}
		\State \Return $\{S\} \cup \ed(U \setminus S,\ell+1)$ \label{rec:onside}
		\Else \Comment{otherwise, $S$ should be decently sparse and balanced, so recurse on both sides}
		\State \Return $\ed(S,\ell+1)\,  \cup \, \ed(U \setminus S,\ell+1)$  \label{rec:twosides}
		\EndIf
		\EndProcedure
	\end{algorithmic}
\end{algorithm}

\begin{algorithm}[p]
	\caption{\trim: trim some cuts to find an expander, or find a less sparse but balanced cut}
	\label{alg:trim}
	
	\begin{algorithmic}[1]
		\LeftComment{$b, \phi \in (0,1)$ are global BLD parameters}
		\LeftComment{$\alpha, \lambda$ are the parameters of $\bsc$}
		\LeftComment{$C \ge 1, \, c \in (0,1)$ are constant parameters}
		\LeftComment{$k$ is a positive integer}
		\Procedure{$\trim(U, \ell)$}{}\Comment{cluster $U \subseteq V$, recursion level $\ell \ge 0$}
		
		\State $b_0 \gets b, \,\,\, \phi_0 \gets \phi, \,\,\, A \gets U$
		\For{$j=1,\dots,k+1$} \label{enum:outerloop}
		\State \label{line:alpha} $b_j \gets \frac{ b_{j-1}}{(1+\frac{1}{\log n}) \cdot \alpha}, \,\,\, \phi_j \gets \frac{ \phi_{j-1}}{(1+\frac{1}{\log n}) \cdot \alpha}$
		\Statex
		\For{$h=1,2,\dots, \infty$} \label{enum:innerloop}
		\State \smash{$H=(V,E',w) \gets \textsc{Sparsifier}^c_{\ell,j,h}(A,b_j)$} \label{enum:freshsampletrim} \Comment{sparsifier for $A$ in this level and iteration}
		\State $\omega \gets \bsc(H[A]^\circ, (1+\frac{1}{2\log n}) \cdot \phi_j)$ \label{enum:trimcallbalsparse} 
		\Statex
		\If{$\omega = \bot$} \Return $(A,\top)$ \Comment{$\top$ means $A$ is an expander} \label{enum:expfound}\EndIf
		\State $(R,\nu) \gets \omega$
		\Statex
		\If{$\nu \ge \frac{1}{C\lambda}\wvolc{U}{U}$} \Return $(R,\bot)$  \Comment{$\bot$ means $R$ is a cut} \EndIf \label{enum:relaxouter} 
		\If{$\nu \ge \frac{1}{C\lambda}\left(\wvolc{U}{U}\right)^{1-({j-1})/{k}}$} \label{condition:baltrim} 
		\State $A \gets A \setminus R$ \label{enum:trim}  \Comment{trim}
		\Else 
		\State \textbf{break} \label{enum:break}\Comment{go to next outer loop iteration, if this happens we should have $j \le k$}
		\EndIf
		\EndFor
		\EndFor
		
		\EndProcedure
	\end{algorithmic}
\end{algorithm}

Algorithm~\ref{alg:expdec} is a space efficient implementation of \cref{alg:metaed} for BLD. In line~\eqref{enum:balsparsecall}, it uses a BSCA on a sparsifier either to certify that the given cluster is already an expander, or to find a sparse cut in it. In the former case it can simply return the cluster. In the latter case, if the condition~\eqref{enum:balrecurse} finds the cut to be balanced enough, the algorithm can afford to recurse on both sides keeping the recursion depth under control. Otherwise, it calls Algorithm~\ref{alg:trim}. This procedure is supposed to either trim off a small volume mass from the cluster in line~\eqref{enum:trim} so that the remainder induces an expander (see return statement in line~\eqref{enum:expfound}), or find a less sparse but more balanced cut than the one obtained in line~\eqref{enum:balsparsecall} of \cref{alg:expdec} (see return statement in line~\eqref{enum:relaxouter}). In the former case, \cref{alg:expdec} can then recurse only on the trimmed part of the cluster (recursion of line~\eqref{rec:onside}), whereas in the latter it is still acceptable to recurse on both sides while keeping the recursion depth under control (recursion of line~\eqref{rec:twosides}).

As both algorithms run a BSCA on weighted subgraphs of $G$, we use \cref{lem:reduction} to analyse their behaviour with respect to the input $G$. It is then crucial that the graphs on which the BSCA is run are sparsifiers for the cluster at hand. Therefore, we will implement lines~\eqref{enum:freshsampleed} and~\eqref{enum:freshsampletrim}  in \Cref{alg:expdec,alg:trim} respectively via the algorithm from Lemma~\ref{lem:distrcutstreamadd}. Crucially, we want to instantiate enough independent copies of the sparsification algorithm, so that lines~\eqref{enum:freshsampleed} and~\eqref{enum:freshsampletrim} can always access a fresh sparsifier. The memory requirement is then determined by the sparsifiers that we maintain, while the correctness will be proved using \cref{lem:reduction}.

\paragraph{Roadmap.} In the next sections we prove Lemma~\ref{lem:tech}. In \cref{subsec:offline}, we work under the assumption that  line~\eqref{enum:freshsampleed} of \cref{alg:expdec} and~\eqref{enum:freshsampletrim}  of \cref{alg:trim} always behave as expected, meaning that they return a sparsifier for the cluster at hand. In this ideal setting, we show that Algorithm~\ref{alg:expdec} outputs a BLD in small recursion depth and few iterations. Then, with \cref{lem:distrcutstreamadd}, in \cref{subsec:dyn} we lift the assumption that  line~\eqref{enum:freshsampleed} of \cref{alg:expdec} and~\eqref{enum:freshsampletrim}  of \cref{alg:trim} are deterministically correct, thus concluding a proof of \cref{lem:tech}.

\subsection{Offline analysis}
\label{subsec:offline}
The algorithms have hardwired parameters $k,\alpha,\lambda,C,c$. The BLD parameters $b,\phi \in (0,1)$ are given as input (for now, we parametrize the BLD by $b$ and $\phi$ instead of $b$ and $\epsilon$, and will later specify how to define $\phi$ from $\epsilon$).
Hereafter, whenever we use $b,\phi,k,C,c,\alpha,\lambda,(b_j)_{j},(\phi_j)_{j}$, we are referring to the corresponding parameter in \Cref{alg:expdec,alg:trim}.

As one can see from the pseudocode, the input graph $G$ is not simply fed to \Cref{alg:expdec,alg:trim}, but  they have restricted access to it: their interaction with $G$ is limited to lines~\eqref{enum:freshsampleed} and~\eqref{enum:freshsampletrim} respectively, i.e. they only have sparsifier-access to it. In this section we work under the assumption that the algorithms can always obtain a sparsifier that meets the precondition of \cref{lem:reduction}, i.e. they are cluster sparsifiers  with appropriate error (see \cref{def:spars}).

\begin{assump}
	\label{assump:sparsifiers1}
	Every time line~\eqref{enum:freshsampleed} of \cref{alg:expdec} calls \smash{$\textsc{Sparsifier}^c_{\ell}(U,b)$} for some $\ell \ge 0,\, U \subseteq V$, it gets a graph $H=(V,E',w)$ such that $H[U] \approx_\delta G[U]$, where $\delta = c^2b/\log n$.
\end{assump}

\begin{assump}
	\label{assump:sparsifiers2}
	Every time line~\eqref{enum:freshsampletrim} of \cref{alg:trim} calls $\textsc{Sparsifier}^c_{\ell,j,h}(A,b_j)$ for some $\ell~\ge~0$, $j,h~\ge~1$, $A \subseteq V$, it gets a graph $H=(V,E',w)$ such that $H[A] \approx_{\delta_j} G[A]$, where $\delta_j = c^2b_j/\log n$.
\end{assump}

\noindent
Throughout this section, we will use the definition $\delta=c^2b/\log n$ and  $\delta_j = c^2b_j/\log n$ for all $j \in [k+1]$, as in \cref{assump:sparsifiers1} and \cref{assump:sparsifiers2}. To illustrate why in the second assumption we have $\delta_j$ instead of $\delta$, we make the following observation.
\begin{obs}
	\label{rmk:bsctrim}
	For  a constant $c \le 1/30$, $\phi < b$, $b \le c$, \smash{$ \alpha \le \frac{1}{2b}$}, a cluster $A~\subseteq~V$, and a sparsifier  $H=(V,E',w)$ such that \smash{$H[A] \approx_{\delta_j} G[A]$}, \cref{lem:reduction} guarantees that calling an $(\alpha,\lambda)$-BSCA as $\bsc(H[A]^\circ, (1+\frac{1}{2\log n}) \cdot \phi_j)$ gives a \smash{$(\phi_j,(1+\frac{1}{\log n})\alpha,(1+c)\lambda, c)$}-BSCW of~$G[A]^\circ$  (as in line~\eqref{enum:trimcallbalsparse} of \cref{alg:trim}).
\end{obs}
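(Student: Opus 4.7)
The plan is to deduce this as a direct corollary of the proxying lemma (\cref{lem:reduction}) applied with the parameter substitution $(b,\phi,U,\delta) \mapsto (b_j,\phi_j,A,\delta_j)$. The bulk of the work is therefore a routine verification that every precondition of \cref{lem:reduction} carries over to the substituted parameters; once that is done, the conclusion of the observation is read off directly from the conclusion of \cref{lem:reduction}.

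First I would check the hypotheses of \cref{lem:reduction} one by one. The constant bound $c \le 1/30$ is assumed verbatim. For $\phi_j < b_j$: the recursion in line~\eqref{line:alpha} of \cref{alg:trim} divides both $\phi_{j-1}$ and $b_{j-1}$ by the same scalar $(1+\tfrac{1}{\log n}) \alpha$, so the ratio $\phi_j/b_j$ is invariant in $j$ and equals $\phi/b$, which is less than $1$ by the hypothesis $\phi < b$. For $b_j \le c$: since $(1+\tfrac{1}{\log n}) \alpha \ge 1$, the sequence $(b_j)_{j \ge 0}$ is non-increasing, so $b_j \le b_0 = b \le c$. The sparsifier error condition $\delta \le c^2 b_j / \log n$ holds with equality by the definition $\delta_j = c^2 b_j / \log n$ stipulated in \cref{assump:sparsifiers2}. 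Finally, $\alpha \le 1/(2 b_j)$ follows from the hypothesis $\alpha \le 1/(2b)$ combined with $b_j \le b$.

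With all preconditions verified, \cref{lem:reduction} asserts that the output of $\bsc(H[A]^\circ, (1+\tfrac{1}{2\log n}) \phi_j)$ is a $(\phi_j, (1+\tfrac{1}{\log n}) \alpha, (1+c) \lambda, c)$-BSCW of $G[A]^\circ$, which is exactly the claim. There is no substantive obstacle here: the entire content of the observation is repackaging \cref{lem:reduction} in the form that will be most convenient for analyzing \cref{alg:trim}, where $b$ and $\phi$ shrink geometrically across the outer-loop iterations but all other structural assumptions on $b,\phi,\alpha,c$ are preserved under the shrinkage.
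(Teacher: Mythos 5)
Your proposal is correct and follows essentially the same route as the paper: verify the preconditions of \cref{lem:reduction} under the substitution $(b,\phi,\delta)\mapsto(b_j,\phi_j,\delta_j)$ (constant $c$, $\phi_j<b_j$, $b_j\le c$, $\alpha\le \frac{1}{2b_j}$, $\delta_j=c^2b_j/\log n$) and read off the conclusion. The only point the paper makes explicit that you leave implicit is that the ratio invariance $b_j/\phi_j=b/\phi$ also ensures $H[A]^\circ=H[A]^{b_j/\phi_j}$ and $G[A]^\circ=G[A]^{b_j/\phi_j}$, so the boundary-linked subgraphs appearing in the observation are exactly those referenced by \cref{lem:reduction} with the substituted parameters — your ratio-invariance observation already supplies this fact, it just deserves to be stated.
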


\begin{proof}[Proof of \cref{rmk:bsctrim}]
For  a constant $c \le 1/30$, parameters $b, \phi, \delta,\alpha$ such that $\phi < b$, $b \le c$, $ \alpha \le \frac{1}{2b}$, a cluster $A~\subseteq~V$, and a sparsifier  $H=(V,E',w)$ such that $H[A] \approx_\delta G[A]$, \cref{lem:reduction} guarantees that calling an $(\alpha,\lambda)$-BSCA as $\bsc(H[A]^\circ, (1+0.5/{\log n}) \cdot \phi)$ gives a  \smash{$(\phi,(1+\frac{1}{\log n})\alpha,(1+c)\lambda, c)$}-BSCW of~$G[A]^\circ$. Then, we recall that \smash{$G[A]^\circ = G[A]^{b/\phi}$} and \smash{$H[A]^\circ = H[A]^{b/\phi}$}, so it also holds that \smash{$G[A]^\circ = G[A]^{b_j/\phi_j}$} and \smash{$H[U]^\circ = H[U]^{b_j/\phi_j}$}, since we can see from \cref{alg:trim} that $b_j/\phi_j=b/\phi$ for all $j \in [k+1]$. Moreover, observe the following: $\phi < b$ if and only if $\phi_j < b_j$, $ \alpha \le \frac{1}{2b}$ implies \smash{$ \alpha \le \frac{1}{2b_j}$} (by definition of $b_j$ in \cref{alg:trim}), and $b \le c$ implies $b_j \le c$ (again, by definition of $b_j$ in \cref{alg:trim}). One can then use \cref{lem:reduction} replacing $b,\phi,\delta$ with $b_j,\phi_j,\delta_j$ respectively, hence getting \cref{rmk:bsctrim}.
\end{proof}

\noindent
The goal of this section is threefold: in \cref{subsubsec:iter}, we bound the number of iterations that any execution of \cref{alg:trim} can go through; in \cref{subsubsec:depth}, we bound the recursion depth of \cref{alg:expdec}; in \cref{subsubsec:correct}, we show the output of calling \cref{alg:expdec} as $\ed(V,0)$ is a BLD of $G$. Interestingly,  \cref{subsubsec:correct} crucially uses the results from \cref{subsubsec:iter} and  \cref{subsubsec:depth} about recursion depth and number of iterations. Before, we give some preliminary results: in \cref{subsubsec:nested} we give properties of cuts of boundary-linked subgraphs; in \cref{subsubsec:estim}, we show how relations between volume estimates (such as the comparisons performed by \cref{alg:expdec} in line~\eqref{enum:balrecurse} and by \cref{alg:trim} in lines~\eqref{enum:relaxouter} and~\eqref{condition:baltrim}) translate to relations between the actual volumes of the corresponding cuts.

\subsubsection{Properties of nested cuts}
\label{subsubsec:nested}
When \cref{alg:expdec} makes a cut in the input cluster $U$, this cut is then fed as input to the recursive call in line~\eqref{enum:balrecurse}, which will in turn try to find a cut inside it. Similarly, \cref{alg:trim} makes a cut in the current cluster $A$ in line~\eqref{enum:trim}, and in the next iteration it will try to find a cluster inside the remainder of $A$. The leitmotif of these algorithms is that they try to make cuts inside a cluster that is one side of a cut previously made. It is then useful for the analysis to relate volume and sparsity of nested cuts.

\begin{lemma}[Properties of nested cut]
	\label{lem:relationsvol}
	Let $b, \phi \in (0,1)$ and $\delta > 0$ be such that $\phi \le b$, $\delta < 1/b$. Fix a cluster $U \subseteq V$, let $\emptyset \neq S \subsetneq U$ be a cut such that $\unspc{S}{U}<\delta \cdot \phi$, and let $T \subseteq S$. Then we have the following relations between volumes:
	\begin{equation*}
		\unvolc{T}{U} \overset{(1)}{\le} \unvolc{T}{S} \overset{(2)}{<} \unvolc{T}{U}+\delta b \cdot \min\{\unvolc{S}{U}, \unvolc{U \setminus S}{U}\}   \overset{(3)}{\le} \unvolc{U}{U}\, .
	\end{equation*}
	If it also holds that $\emptyset \neq T \subsetneq S$ has $\unspc{T}{S} < \delta \cdot \phi$ and $\unvolc{T \cup (U \setminus S)}{U} < \frac{1}{2} \unvolc{U}{U}$ then
	\begin{equation*}
		\unspc{T \cup (U \setminus S)}{U} < \delta \cdot \phi \, .
	\end{equation*}
\end{lemma}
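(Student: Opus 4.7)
My plan is to derive (1)--(3) mechanically from the definition $\unvolc{X}{Y} = \vol(X) + (b/\phi - 1)\unbor{X}{Y}$ together with the combinatorial identity $\unbor{T}{S} = \unbor{T}{U} + |E(T, U\setminus S)|$, which holds because $T \subseteq S \subseteq U$ makes $V\setminus S$ the disjoint union of $V\setminus U$ and $U\setminus S$. Subtracting gives the key relation
\[
\unvolc{T}{S} - \unvolc{T}{U} = \left(\frac{b}{\phi} - 1\right) \cdot |E(T, U\setminus S)| \ge 0,
\]
which proves~(1). For~(2), I would bound $|E(T, U\setminus S)| \le \uncut{S}{U}$ and apply the sparsity hypothesis on $S$, producing a factor of $(b-\phi)\delta < b\delta$ in front of the minimum volume. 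For~(3), I would use additivity $\unvolc{U}{U} = \unvolc{S}{U} + \unvolc{U\setminus S}{U}$ in the boundary-linked subgraph, combine $\unvolc{T}{U} \le \unvolc{S}{U}$ (since $T \subseteq S$) with $\delta b \cdot \min\{\unvolc{S}{U}, \unvolc{U\setminus S}{U}\} \le \unvolc{U\setminus S}{U}$: this is trivial when the minimum is on the $U\setminus S$ side, and otherwise $\delta b \unvolc{S}{U} \le \unvolc{S}{U} \le \unvolc{U\setminus S}{U}$ since $\delta b < 1$.

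For the final assertion, let $W = T \cup (U\setminus S)$, so that $U\setminus W = S\setminus T$. The hypothesis $\unvolc{W}{U} < \frac{1}{2}\unvolc{U}{U}$ and additivity give $\min\{\unvolc{W}{U}, \unvolc{U\setminus W}{U}\} = \unvolc{W}{U} = \unvolc{T}{U} + \unvolc{U\setminus S}{U}$, so it suffices to show $\uncut{W}{U} < \delta\phi \cdot \unvolc{W}{U}$. The algebraic trick is to decompose
\[
\uncut{W}{U} = \uncut{T}{S} + |E(U\setminus S, S\setminus T)| = \uncut{T}{S} + \uncut{S}{U} - |E(T, U\setminus S)|,
\]
and to substitute $|E(T, U\setminus S)| = \frac{\phi}{b-\phi}\bigl(\unvolc{T}{S} - \unvolc{T}{U}\bigr)$ from the identity above. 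Applying the strict bounds $\uncut{T}{S} < \delta\phi\,\unvolc{T}{S}$ and $\uncut{S}{U} < \delta\phi\,\unvolc{U\setminus S}{U}$ (both valid regardless of which side attains the min) and collecting terms rearranges this as
\[
\uncut{W}{U} < \delta\phi \cdot \unvolc{W}{U} + \left(\delta\phi - \frac{\phi}{b-\phi}\right)\bigl(\unvolc{T}{S} - \unvolc{T}{U}\bigr).
\]
Since $\delta < 1/b < 1/(b-\phi)$, the parenthesized coefficient is negative, and by~(1) the factor $\unvolc{T}{S} - \unvolc{T}{U}$ is nonnegative, so the correction term is $\le 0$ and the desired strict inequality follows upon dividing by $\unvolc{W}{U}$.

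I expect this last step to be the main obstacle. The naive bound $\uncut{W}{U} \le \uncut{T}{S} + \uncut{S}{U}$ only yields $\unspc{W}{U} \lesssim 2\delta\phi$; the crux is to recognise that the edges in $E(T, U\setminus S)$ that inflate $\unvolc{T}{S}$ above $\unvolc{T}{U}$ (and hence inflate the sparsity-based upper bound on $\uncut{T}{S}$) are exactly the edges that are double-counted when one bounds $|E(U\setminus S, S\setminus T)|$ loosely by $\uncut{S}{U}$. Exposing this cancellation through the identity above, and invoking $\delta < 1/b$, makes the cancellation strictly beat the inflation.
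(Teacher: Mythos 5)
Your proof is correct and follows essentially the same route as the paper's: parts (1)--(3) are the same volume manipulations, and for the last claim you use the identical decomposition $\uncut{(T\cup(U\setminus S))}{U} = \uncut{T}{S} + \uncut{S}{U} - |E(T,U\setminus S)|$ and the same cancellation of the boundary-edge term, which the paper obtains by bounding $\uncut{T}{S} < \delta\phi\,\unvolc{T}{U} + |E(T,U\setminus S)|$ (using $\delta b \le 1$) and finishing with a mediant bound, while you track it through an exact substitution. The one blemish is that your substitution $|E(T,U\setminus S)| = \frac{\phi}{b-\phi}\bigl(\unvolc{T}{S}-\unvolc{T}{U}\bigr)$ divides by $b-\phi$, whereas the hypothesis only guarantees $\phi \le b$; in the boundary case $b=\phi$ all $\circ$-volumes reduce to plain volumes, so one can simply drop the subtracted term (or, as in the paper, multiply by $\delta(b-\phi)\le 1$ instead of dividing), making this gap purely cosmetic.
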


\begin{proof}For convenience let $\psi= \delta \cdot \phi$. Inequality~(1) holds simply because $T \subseteq S \subseteq U$, so the number of self-loops attached to $T$ cannot decrease when going from $G[U]^\circ$ to $G[S]^\circ$. Inequality~(2) can be obtained by rewriting the volumes in $G[S]^\circ$ as
	\begin{align}
		\unvolc{T}{S} = \vol(T) + \frac{b-\phi}{\phi} \unbor{T}{S} & = \unvolc{T}{U} + \frac{b-\phi}{\phi} |E(T,U \setminus S)| \\
		& \leq \unvolc{T}{U} + \frac{b}{\phi} |E(T,U \setminus S)| \, .  \label{ineq:bound-st-ut}
	\end{align}
	Since $S$ is $\psi$-sparse in $G[U]^\circ$, we can upper-bound $|E(T,U \setminus S)|$ by $\uncut{S}{U} < \psi \min\{\unvolc{S}{U}, \unvolc{U \setminus S}{U}\}$, thus getting the bound. For inequality~(3), note $\unvolc{T}{U} \le \unvolc{S}{U}$ because $T \subseteq S$ and $\delta  b\cdot \min\{\unvolc{S}{U}, \unvolc{U \setminus S}{U}\} \le \unvolc{U \setminus S}{U}$ for $\delta \le 1/b$. Since $\unvolc{S}{U}+\unvolc{U \setminus S}{U}=\unvolc{U}{U}$, we have the bound (3).
	
	Suppose now that $\emptyset \neq T \subsetneq S$ has $\unspc{T}{S} < \psi$ and $\unvolc{T \cup (U \setminus S)}{U} < \frac{1}{2} \unvolc{U}{U}$. First observe that because $\unspc{T}{S} < \psi$, the definition of $\unspc{T}{S} $ yields $\uncut{T}{S} < \psi \unvolc{T}{S}$. By \cref{ineq:bound-st-ut}, we also have
	\begin{equation}
		\label{ineq:boundsvsu}
		\uncut{T}{S} < \psi \unvolc{T}{S} \leq \psi \unvolc{T}{U} + \frac{\psi b}{\phi} \cdot |E(T,U \setminus S)| \le \psi \unvolc{T}{U} +|E(T,U \setminus S)| \, ,
	\end{equation}
	where we used that $\psi  b / \phi  \le 1$ since $\psi = \delta\phi$ and $\delta \le 1/b$. Due to $\unvolc{T \cup (U \setminus S)}{U} < \frac{1}{2} \unvolc{U}{U}$, the definition of $\unspc{T \cup (U \setminus S)}{U}$ finally yields
	\begin{align*}
		\unspc{T \cup (U \setminus S)}{U} & = \frac{\uncut{(T \cup (U \setminus S))}{U}}{\unvolc{T \cup (U \setminus S)}{U}} \\
		\text{since $T \subseteq S$, so $T \cap (U \setminus S) = \emptyset$} \quad & = \frac{\uncut{T}{S} +\uncut{S}{U}-|E(T,U \setminus S)|}{\unvolc{T}{U}+\unvolc{U \setminus S}{U}} \\
		\text{by~\eqref{ineq:boundsvsu}} \quad& < \frac{\psi \unvolc{T}{U} + |E(T,U \setminus S)| +\uncut{S}{U}-|E(T,U \setminus S)|}{\unvolc{T}{U}+\unvolc{U \setminus S}{U}} \\
		& = \frac{\psi \unvolc{T}{U} +\uncut{S}{U}}{\unvolc{T}{U}+\unvolc{U \setminus S}{U}} \\
		& \le \max\left\{\psi, \frac{\uncut{S}{U}}{\unvolc{U \setminus S}{U}}\right\} = \max\left\{\psi, \unspc{S}{U}\right\} \\
		\text{since $\unspc{S}{U} <\psi$}\quad& \le \psi \, .
	\end{align*}
\end{proof}
\noindent
\Cref{alg:trim} makes cuts iteratively. It finds a cut in the cluster, it trims it off in line~\eqref{enum:trim}, it finds another cut in the remainder, it trims it off, and so on, provided that these cuts verify certain conditions: their individual volumes are neither too large (if one of them was very large in volume, the algorithm would return in line~\eqref{enum:relaxouter}) nor too small (see condition~\eqref{condition:baltrim}). The following lemma proves properties of such cut sequences. \Cref{fig:nested} should help the reader to visualize the statement and the proof.

\begin{figure}[h]
	\begin{minipage}[center]{\textwidth}
		\centering
		\includegraphics[scale=0.8]{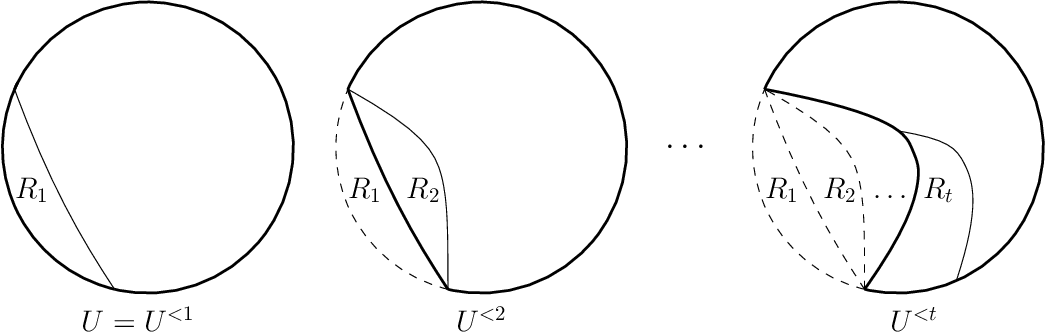}
	\end{minipage}
	\caption{Sequence of nested cuts.}
	\label{fig:nested}
\end{figure}

\begin{lemma}[Properties of a sequence of nested cuts]
	\label{lem:seqbounds}
	Let $b, \phi \in (0,1)$ be such that $\phi \le b$, and $\delta \in (0,1]$. Fix a  cluster $U \subseteq V$, let $\theta \in \mathbb{R}_{\ge 0}$ be such that $\theta \le \frac{1}{5}\unvolc{U}{U}$, and let $R_1, \dots, R_t$ be a sequence of cuts \smash{$\emptyset \neq R_i \subsetneq U^{<i}$} where \smash{$ U^{<i} = U \setminus (\cup_{z =1}^{i-1} R_z)$} for all $i \in [t]$. If
	\begin{enumerate}
		\item \label{prec:sparse} for all $\emptyset \neq  X \subsetneq U$ with $\unspc{X}{U} < \delta \phi$ and $ \unvolc{X}{U} \le \unvolc{U \setminus X}{U}$ one has $\unvolc{X}{U}~<~\theta$,~and
		\item \label{prec:balsparse} for all $i \in [t]$ one has \smash{$\unspc{R_i}{U^{<i}} < \delta \phi$} and $ \unvolc{R_i}{U^{<i}} < \frac{1}{5}\unvolc{U}{U}$,
	\end{enumerate}
	then we have the following.
	\begin{enumerate}
		\item \label{part:1} For all $i \in [t]$, one has $\unspc{\cup_{z=1}^i R_z}{U} < \delta \phi$ and  $\unvolc{\cup_{z=1}^i R_z}{U} < \theta$.
		\item  \label{part:2b} For all $i \in [t]$, if there is $\theta^{\low} \in \mathbb{R}_{\ge 0}$ such that $\unvolc{R_i}{U^{<i}} \ge \theta^{\low}$ then one also has $\unvolc{R_i}{U} \ge \theta^{\low}-b\cdot \theta$.
		\item  \label{part:2c} Let $ U^{<t+1} = U \setminus (\cup_{z =1}^{t} R_z)$. If $\emptyset \neq R \subsetneq U^{<t+1}$ is a cut such that \smash{$\unspc{R}{U^{<t+1}} < \delta \phi$} and \smash{$\theta^{\low} \le \unvolc{R}{U^{<t+1}} \le \frac{5}{9}\unvolc{U^{<t+1}}{U^{<t+1}}$} for some $\theta^{\low} \le \frac{4}{9}\unvolc{U}{U}$, then \smash{$\unspc{R}{U} < \delta \phi \cdot \frac{\unvolc{U}{U}}{\theta^{\low}-b \cdot \theta}$} and $\theta^{\low}-b \cdot \theta \le \unvolc{R}{U} \le \frac{5}{9}\unvolc{U}{U}$.
	\end{enumerate}
\end{lemma}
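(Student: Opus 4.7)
My plan is an induction on $i$ for part~(1), with parts~(2) and~(3) following as essentially direct consequences, each time invoking \cref{lem:relationsvol} with the outer cluster being $U^{<i}$ (or $U^{<t+1}$) and the inner subset being $R_i$ (or $R$). The structural fact unlocking everything, established in part~(1), is that every prefix $P_i := \bigcup_{z \le i} R_z$ is itself $\delta\phi$-sparse in $U$, so the complement $U^{<i+1}$ is a sparse cut in $U$ and can legally be fed as the $S$ of \cref{lem:relationsvol}.

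For part~(1), the base case $i=1$ follows directly from precondition~(2) and precondition~(1), since $R_1$ is the smaller side of its cut in $U$ by the volume bound. For the inductive step I would apply the ``moreover'' clause of \cref{lem:relationsvol} with $S := U^{<i}$ (the complement of the sparse $P_{i-1}$, hence itself $\delta\phi$-sparse in $U$) and $T := R_i$. The hypothesis $\vol_U^\circ(T\cup(U\setminus S)) < \tfrac12 \vol_U^\circ(U)$ reduces to $\vol_U^\circ(P_i) < \tfrac12 \vol_U^\circ(U)$, which follows from $\vol_U^\circ(P_i) = \vol_U^\circ(P_{i-1}) + \vol_U^\circ(R_i) \le \theta + \vol_{U^{<i}}^\circ(R_i) < \tfrac25 \vol_U^\circ(U)$ using inequality~(1) of \cref{lem:relationsvol} and the inductive volume bound. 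The lemma then gives $\unspc{P_i}{U} < \delta\phi$, and precondition~(1) supplies $\vol_U^\circ(P_i) < \theta$ because $P_i$ is the smaller side.

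Parts~(2) and~(3) share the same volume argument: inequality~(2) of \cref{lem:relationsvol} applied with $S := U^{<i}$ (resp.\ $U^{<t+1}$) and $T := R_i$ (resp.\ $R$) gives $\vol_S^\circ(T) - \vol_U^\circ(T) < \delta b \cdot \vol_U^\circ(U\setminus S) \le b\theta$ by part~(1), which rearranges to $\vol_U^\circ(T) > \theta^{\low} - b\theta$. For the upper bound in part~(3) I would chain $\vol_U^\circ(R) \le \vol_{U^{<t+1}}^\circ(R) \le \tfrac{5}{9}\vol_{U^{<t+1}}^\circ(U^{<t+1}) \le \tfrac{5}{9}\vol_U^\circ(U)$, where the last inequality is inequality~(3) of \cref{lem:relationsvol} with $T = S = U^{<t+1}$ (trivial when $t=0$).

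The only step worth highlighting is the sparsity bound in part~(3). I would decompose $\uncut{R}{U} = \uncut{R}{U^{<t+1}} + |E(R,P_t)|$ and bound each term: the first by $\tfrac{\delta\phi}{2}\vol_U^\circ(U)$ via the sparsity of $R$ in $U^{<t+1}$ together with the chain above for $\vol_{U^{<t+1}}^\circ(U^{<t+1})$, and the second by $\uncut{P_t}{U} < \delta\phi \cdot \vol_U^\circ(P_t) < \delta\phi \cdot \theta$, using part~(1) and the fact that $P_t$ is its own smaller side. Summing yields $\uncut{R}{U} < \delta\phi\, \vol_U^\circ(U)$. For the denominator, a short case split on which of $\vol_U^\circ(R)$ and $\vol_U^\circ(U\setminus R)$ is smaller shows $\min\{\vol_U^\circ(R),\vol_U^\circ(U\setminus R)\} \ge \theta^{\low} - b\theta$: when $R$ is the smaller side this is the lower bound just proved; when $U\setminus R$ is smaller, the upper bound $\vol_U^\circ(R) \le \tfrac{5}{9}\vol_U^\circ(U)$ forces $\vol_U^\circ(U\setminus R) \ge \tfrac{4}{9}\vol_U^\circ(U) \ge \theta^{\low}$. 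Dividing gives the stated sparsity bound. I do not anticipate a serious obstacle, as the whole argument is bookkeeping between the subgraphs $G[U^{<i}]^\circ$ and $G[U]^\circ$, which is precisely what \cref{lem:relationsvol} is designed to handle.
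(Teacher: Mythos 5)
Your proposal is correct and follows essentially the same route as the paper's proof: induction for part~(1) via the ``moreover'' clause of \cref{lem:relationsvol} applied with $S=U^{<i}$, $T=R_i$, part~(2) via its inequality~(2) with the prefix volume bounded by $\theta$, and part~(3) via the same decomposition $\uncut{R}{U}\le\uncut{R}{U^{<t+1}}+\uncut{(\cup_z R_z)}{U}$ together with the same case split giving $\min\{\unvolc{R}{U},\unvolc{U\setminus R}{U}\}\ge\theta^{\low}-b\theta$. The only (immaterial) differences are that you bound the first cut term by $\tfrac{1}{2}\delta\phi\,\unvolc{U}{U}$ instead of the paper's $\tfrac{5}{9}\delta\phi\,\unvolc{U}{U}$, and you leave implicit the degenerate cases $i=1$ and $t=0$, where $S=U$ makes \cref{lem:relationsvol} inapplicable but the claims follow trivially from the preconditions, as the paper notes in one line.
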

 
 \noindent
 The end goal of part~\eqref{part:1} is to upper bound the number of edges crossing and the remaining volume when \cref{alg:trim} returns $(A,\top)$ and therefore \cref{alg:expdec} recurses on $U \setminus A$ only (see line~\eqref{enum:stoprec} in \cref{alg:expdec}). The idea is that in this case $U \setminus A$ corresponds to the union of nested cuts that are trimmed off by \cref{alg:trim}.
 
 Part~\eqref{part:2b} will be useful to bound the number of iterations that \cref{alg:trim} can take. In every iteration a cut is trimmed off, and from condition~\eqref{condition:baltrim} we know the volume of these cuts to be not too small in $U^{<i}$. From part~\eqref{part:2b} we then know these volumes to be not too small in $U$ either, and hence the number of such cuts cannot be too large, since they are disjoint.
 
  The end goal of part~\eqref{part:2c} is to upper bound the number of edges crossing and the remaining volume when \cref{alg:trim} returns $(R,\bot)$ and therefore \cref{alg:expdec} recurses on $R$ and $U \setminus R$ (see line~\eqref{enum:balrecurse} in \cref{alg:expdec}). The idea is that in this case, every cut trimmed off by \cref{alg:trim} is sparse and far from being balanced (because of condition~\eqref{enum:relaxouter} in \cref{alg:trim}), whereas the last one is both sparse and somewhat balanced (because of condition~\eqref{enum:relaxouter} in \cref{alg:trim}).
 \\~\\
 In the remainder of this section we prove the three parts of this lemma.
 \begin{proof}[Proof of \cref{lem:seqbounds}, part~\eqref{part:1}]
 	Intuitively, the union of a sequence of nested sparse and small cuts should be sparse and small too. We prove this by induction on $i \in [t]$. Denote for convenience~$\psi~=~\delta~\phi$.
 	
 	The base case for $i=1$ readily follows. Since $U^1 =U$, we have $\unspc{R_1}{U} < \psi$ and also $\unvolc{R_1}{U} < \frac{1}{5}\unvolc{U}{U}$ from precondition~\eqref{prec:balsparse}. Moreover,  by precondition~\eqref{prec:sparse}, there is no $\psi$-sparse cut $X$ with $\theta \le \unvolc{X}{U} < \frac{1}{2}\unvolc{U}{U}$. Hence, it must be the case that $\unvolc{R_1}{U} < \theta$.
 	
 	For the inductive step, take $i \in [t-1]$ and assume that $\unspc{\cup_{z=1}^i R_z}{U} < \psi$ and $\unvolc{\cup_{z=1}^i R_z}{U} < \theta$. We now apply Lemma~\ref{lem:relationsvol} with $S = U^{<i+1}$ and $T = R_{i+1}$. We can do so because $\unspc{S}{U} = \unspc{\cup_{z=1}^i R_z}{U} < \psi$, so the lemma gives
 	\begin{align*}
 		\unvolc{R_{i+1}}{U} \le \unvolc{R_{i+1}}{U^{<i+1}} \le \frac{1}{5}\unvolc{U}{U}  < \frac{1}{4}\unvolc{U}{U}\, .
 	\end{align*}
 	Combining this bound with the inductive hypothesis that $\unvolc{\cup_{z=1}^i R_z}{U} < \theta$, one has
 	\begin{equation*}
 		\unvolc{\cup_{z=1}^{i+1} R_z}{U} = \unvolc{\cup_{z=1}^i R_z}{U}+\unvolc{R_{i+1}}{U} < \theta+\unvolc{R_{i+1}}{U} < \frac{1}{5}\unvolc{U}{U}+ \frac{1}{4}\unvolc{U}{U}  < \frac{1}{2}\unvolc{U}{U} \, ,
 	\end{equation*}
 which rewrites as $\unvolc{T \cup (U \setminus S)}{U} < \frac{1}{2} \unvolc{U}{U}$ with  $S = U^{<i+1}$ and $T = R_{i+1}$. Also, from precondition~\eqref{prec:balsparse}, we have $\unspc{R_{i+1}}{U^{<i+1}} < \psi$, or in other words $\unspc{T}{S} < \psi$ with $S = U^{<i+1}$ and $T = R_{i+1}$. Then, the second part of Lemma~\ref{lem:relationsvol} gives $\unspc{\cup_{z=1}^{i+1} R_z}{U} < \psi$, so one part of the induction is proved. For the other part, we combine $\unspc{\cup_{z=1}^{i+1} R_z}{U} < \psi$ and $\unvolc{\cup_{z=1}^{i+1} R_z}{U} < \frac{1}{2}\unvolc{U}{U}$ with precondition~\eqref{prec:sparse}: since we assume there is no $\psi$-sparse cut in $G[U]^\circ$ with volume sandwiched between $\theta$ and $\frac{1}{2}\unvolc{U}{U}$, it must be the case that $\unvolc{\cup_{z=1}^{i+1} R_z}{U} < \theta$.
\end{proof}

 \begin{proof}[Proof of \cref{lem:seqbounds}, part~\eqref{part:2b}]
 	We prove this using part~\eqref{part:1} and \cref{lem:relationsvol}. The case of $i=1$ follows from the preconditions, since $U^1=U$. Let then $i =2, \dots, t$, and denote for convenience~$\psi~=~\delta~\phi$. Then, part~\eqref{part:1} gives that $\unspc{\cup_{z=1}^{i-1} R_z}{U} < \psi$. If we set $S=U^{<i}$ and $T=R_i$ we have $\unspc{S}{U} = \unspc{\cup_{z=1}^{i-1} R_z}{U} < \psi$, and can then apply \cref{lem:relationsvol} to get the lower bound
	\begin{align*}
		\unvolc{R_i}{U} & > \unvolc{R_i}{U^{<i}} - \delta b \cdot \unvolc{\cup_{z=1}^{i-1} R_z}{U} \\
		\text{since $\delta \le 1$} \quad & \ge  \unvolc{R_i}{U^{<i}} - b \cdot \unvolc{\cup_{z=1}^{i-1} R_z}{U} \, .
	\end{align*}
	Part~\eqref{part:1} also gives $\unvolc{\cup_{z=1}^{i-1} R_z}{U} < \theta $, and since we assume that $ \unvolc{R_i}{U^{<i}} \ge \theta^{\low}$, one concludes
		\begin{equation*}
		\unvolc{R_i}{U} > \unvolc{R_i}{U^{<i}} - b \cdot \unvolc{\cup_{z=1}^{i-1} R_z}{U}  \ge \theta^{\low} - b\cdot \theta \, .
	\end{equation*}
\end{proof}

 \begin{proof}[Proof of \cref{lem:seqbounds}, part~\eqref{part:2c}]
	We begin by analysing the volume $\unvolc{R}{U}$. Again, from part~\eqref{part:1} we have
	\begin{equation*}
		\unspc{\cup_{i=1}^{t} R_i}{U} < \delta \phi\, , \quad \unvolc{\cup_{i=1}^{t} R_i}{U} < \theta \,.
	\end{equation*}
	Setting $\psi = \delta \phi$, $S=U^{<t+1}$ and $T=R$,  we have $\unspc{S}{U} = \unspc{\cup_{z=1}^t R_z}{U} < \psi$, so we can apply \cref{lem:relationsvol} to yield
	\begin{equation} \label{eq:lbR}
		\unvolc{R}{U} > \unvolc{R}{U^{<t+1}} - b \cdot \unvolc{\cup_{i=1}^{t} R_i}{U} > \theta^{\low}-b \cdot \theta \, ,
	\end{equation}
	and
	\begin{equation}  \label{eq:ubR}
		\unvolc{R}{U} \le \unvolc{R_t}{U^{<t+1}} \le \frac{5}{9}\unvolc{U^{<t+1}}{U^{<t+1}} \, .
	\end{equation}
	We can once more apply Lemma~\ref{lem:relationsvol} with $S=T=U^{<t+1}$ (since again $\unspc{S}{U} = \unspc{\cup_{z=1}^t R_z}{U} < \psi$), so
	\begin{equation}
		 \label{eq:ubU}
		\unvolc{U^{<t+1}}{U^{<t+1}} \le \unvolc{U}{U} \, .
	\end{equation}
	We then have from~\eqref{eq:lbR},~\eqref{eq:ubR},~\eqref{eq:ubU}
	\begin{equation}
		\label{eq:sandwichR}
		 \theta^{\low}-b \cdot \theta \le \unvolc{R}{U} \le  \frac{5}{9}\unvolc{U^{<t+1}}{U^{<t+1}} \le  \frac{5}{9}\unvolc{U}{U} \, .
	\end{equation}
	Next we analyse the sparsity of $R$ by first upper-bounding the quantity $\uncut{R}{U}$:
	\begin{align}
		\uncut{R}{U} & \le {\uncut{R}{U^{<t+1}}+\uncut{(\cup_{i =1}^{t}R_i)}{U}} \\
		\text{since $\unspc{R}{U^{<t+1}} <\psi$ by assumption}\quad& < {\psi \cdot \unvolc{R}{U^{<t+1}}+\uncut{(\cup_{i =1}^{t}R_i)}{U}}\\
		\text{since $\unspc{\cup_{i =1}^{t}R_i}{U} <\psi$ from part~\eqref{part:1}} \quad & < {\psi \cdot \unvolc{R}{U^{<t+1}}+\psi \cdot \unvolc{\cup_{i =1}^{t}R_i}{U}} \\
		\text{since $\unvolc{R}{U} < \frac{5}{9}\unvolc{U}{U}$ from~\eqref{eq:sandwichR}} \quad& < {\psi \cdot  \frac{5}{9} \cdot \unvolc{U}{U}+\psi \cdot \unvolc{\cup_{i =1}^{t}R_i}{U}}\\
		\text{since $\unvolc{\cup_{i =1}^{t}R_i}{U}< \theta $ from  part~\eqref{part:1}}\quad & < \psi \cdot  \frac{5}{9} \cdot \unvolc{U}{U}+\psi \cdot \theta  \\
		\text{by assumption on $\theta$}\quad & < \psi \cdot {\unvolc{U}{U}} \, . \label{eq:ubcutR}
	\end{align}
On the other hand, the bounds in~\eqref{eq:sandwichR} yield
\begin{align}
	\min\{\unvolc{R}{U},\unvolc{U \setminus R}{U}\} & \ge 	\min\left\{ \theta^{\low}-b \cdot \theta, \frac{4}{9}\unvolc{U}{U}\right\} \\
	\text{since $\theta^{\low} \le \frac{4}{9} \unvolc{U}{U} $ by assumption} \quad & = \theta^{\low}-b \cdot \theta \, .  \label{eq:lbminvolR}
\end{align}
From the bound~\eqref{eq:ubcutR} and~\eqref{eq:lbminvolR} we finally conclude
\begin{equation*}
	\unspc{R}{U} = \frac{	\uncut{R}{U}}{\min\{\unvolc{R}{U},\unvolc{U \setminus R}{U}\}} < \frac{\psi \cdot {\unvolc{U}{U}}}{\theta^{\low}-b \cdot \theta} \, .
\end{equation*}
\end{proof}

\subsubsection{Relations between volume estimates}
\label{subsubsec:estim}
In this section we show how relations between volume estimates (such as the comparisons performed by \cref{alg:expdec} in line~\eqref{enum:balrecurse} and by \cref{alg:trim} in lines~\eqref{enum:relaxouter} and~\eqref{condition:baltrim}) translate to relations between the actual volumes of the corresponding cuts.

\begin{lemma}[Relations between volume estimates]
	\label{lem:algovols}
	Let $b, \phi, c \in (0,1)$ be such that $c \leq 1/4$ is a constant and $\phi \le b$. Fix a cluster $U \subseteq V$, and also let $H=(V,E',w)$ such that $H[U]^\circ \approx_c G[U]^\circ$, and let $\omega$ be a  $(\psi,\alpha',\lambda', c)$-BSCW of~$G[U]^\circ$ for some $\psi \in (0,\phi]$ and $\alpha',\lambda' \ge 1$. If $\omega = (R,\nu)$ for some $\emptyset \neq R \subsetneq U$ and $\nu \in \mathbb{R}_{>0}$, then for any $\rho,\tau \in [0,1]$ and $X \subseteq V$ we have
	\begin{enumerate}
		\item if $\nu \ge \rho \cdot \left(\wvolc{X}{X}\right)^\tau$, then $\unvolc{R}{U} \ge \rho/2 \cdot \left(\unvolc{X}{X}\right)^\tau$.
		\item if $\nu < \rho \cdot \left(\wvolc{X}{X}\right)^\tau$, then $\unvolc{R}{U} < 2\rho \cdot \left(\unvolc{X}{X}\right)^\tau$ and for every cut $\emptyset \neq  S \subsetneq U$ with $\unspc{S}{U} < \psi$ and $ \unvolc{S}{U} \le \unvolc{U \setminus S}{U}$ one has $\unvolc{S}{U} < 2\lambda' \rho \cdot \left(\unvolc{X}{X}\right)^\tau $.
	\end{enumerate}
\end{lemma}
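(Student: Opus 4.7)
The plan is to derive the result from two facts: the BSCW definition gives $\nu$ as a $(1\pm c)$-approximation of $\unvolc{R}{U}$, and the cluster sparsifier approximates the self-volumes $\unvolc{X}{X}$ to within a $(1\pm c)$ factor. For the latter, I will expand $\wvolc{X}{X}=\wvol{X}+\frac{b-\phi}{\phi}\wcutg{X}$ and $\unvolc{X}{X}=\vol(X)+\frac{b-\phi}{\phi}\uncutg{X}$, and invoke the global-cut part of \cref{def:spars}: applying it to singletons shows $\wvol{X}=(1\pm c)\vol(X)$, and applying it to $S=X$ gives $\wcutg{X}=(1\pm c)\uncutg{X}$. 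Combining these yields
\[
(1-c)\cdot\unvolc{X}{X}\le \wvolc{X}{X}\le (1+c)\cdot\unvolc{X}{X},
\]
and since $\tau\in[0,1]$ and $c\le 1/4$, the estimates $(1-c)^\tau\ge 1-c$ and $(1+c)^\tau\le 1+c$ upgrade this to a $(1\pm c)$-multiplicative relation between $(\wvolc{X}{X})^\tau$ and $(\unvolc{X}{X})^\tau$.

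For the first bullet, I would chain $\rho(\wvolc{X}{X})^\tau\le \nu$ with $\nu\le (1+c)\unvolc{R}{U}$ (property~\eqref{itdef:goodestim2} of \cref{def:bscw} with $\xi=c$) and $(\wvolc{X}{X})^\tau\ge (1-c)(\unvolc{X}{X})^\tau$, obtaining $\unvolc{R}{U}\ge \frac{1-c}{1+c}\rho(\unvolc{X}{X})^\tau$. Since $\frac{1-c}{1+c}\ge \frac{3}{5}\ge \frac12$ for $c\le 1/4$, the desired bound $\unvolc{R}{U}\ge \rho/2\cdot(\unvolc{X}{X})^\tau$ follows.

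For the second bullet, the bound on $\unvolc{R}{U}$ is symmetric: $\nu<\rho(\wvolc{X}{X})^\tau$ together with $\nu\ge(1-c)\unvolc{R}{U}$ and $(\wvolc{X}{X})^\tau\le(1+c)(\unvolc{X}{X})^\tau$ gives $\unvolc{R}{U}<\frac{1+c}{1-c}\rho(\unvolc{X}{X})^\tau$, which is at most $2\rho(\unvolc{X}{X})^\tau$ when $c\le 1/4$. To extend this to an arbitrary $\emptyset\neq S\subsetneq U$ with $\unspc{S}{U}<\psi$ and $\unvolc{S}{U}\le\unvolc{U\setminus S}{U}$, I invoke the balancedness clause, property~\eqref{item:balr} of \cref{def:bscw}, which is valid for the BSCW of $G[U]^\circ$: every such $S\neq R$ satisfies $\unvolc{S}{U}\le \lambda'\cdot \unvolc{R}{U}< 2\lambda'\rho(\unvolc{X}{X})^\tau$, and the case $S=R$ is absorbed by $\lambda'\ge 1$.

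There is no real obstacle here: the argument is essentially bookkeeping between the two $(1\pm c)$ approximations. The only subtlety worth flagging is that only the \emph{global} cut/volume property of cluster sparsifiers (second clause of \cref{def:spars}) is needed; the local clause plays no role, since $X$ ranges over arbitrary subsets of $V$ rather than subsets of $U$.
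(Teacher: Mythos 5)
Your proposal is correct and follows essentially the same route as the paper: property~\eqref{item:goodestimr} of the BSCW gives $\nu=(1\pm c)\unvolc{R}{U}$, the global-cut clause of \cref{def:spars} applied to singletons and to $X$ itself gives $\wvolc{X}{X}=(1\pm c)\unvolc{X}{X}$, and the balancedness clause transfers the bound to every sparse small-side cut $S$, with $c\le 1/4$ absorbing the $\frac{1\pm c}{1\mp c}$ factors into $2$ and $1/2$. Your explicit handling of the exponent $\tau$ and the $S=R$ case are minor points the paper leaves implicit, but the argument is the same.
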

\begin{proof}
	From property~\eqref{item:goodestimr} of the definition of BSCW (see \cref{def:bscw}), we know $(1-c)\unvolc{R}{U} \le \nu \le (1+c)\unvolc{R}{U} $, and from property~\eqref{item:balr} we know  that for every cut $\emptyset \neq  S \subsetneq U$ with $\unspc{S}{U} < \psi$ and $ \unvolc{S}{U} \le \unvolc{U \setminus S}{U}$ one has $\unvolc{S}{U} < \lambda'  \cdot \unvolc{R}{U}$.
	
	On the other hand, $(1-c)\unvolc{X}{X} \le \wvolc{X}{X} \le (1+c)\unvolc{X}{X}$ for any $X \subseteq V$. To see why, first note that $\wvolc{X}{X}=\wvol{X}+\frac{b-\phi}{\phi}\wbor{X}{X}=\wvol{X}+\frac{b-\phi}{\phi}\wcutg{X}$, and analogously \smash{$\unvolc{X}{X}=\vol(X)+\frac{b-\phi}{\phi}\uncutg{X}$}. Then, since $H[U]^\circ \approx_c G[U]^\circ$, we have the approximation guarantee for global cuts (see \cref{def:spars}): applying it to all singleton cuts in $X$ we can approximate $\vol(X)$, and applying it to $X$ we can approximate $\uncutg{X}$.
	
	From the above discussion it follows that for all $\rho,\tau \in [0,1]$ and $X \subseteq V$, $\nu \ge \rho \cdot \left(\wvolc{X}{X}\right)^\tau$ implies $\unvolc{R}{U} \ge \rho \frac{1-c}{1+c} \cdot \left(\unvolc{X}{X}\right)^\tau \ge \rho/2 \cdot \left(\unvolc{X}{X}\right)^\tau$, since $c \le 1/4$.  If instead $\nu < \rho \cdot \left(\wvolc{X}{X}\right)^\tau$, we have \smash{$\unvolc{R}{U} < \rho \frac{1+c}{1-c} \cdot \left(\unvolc{X}{X}\right)^\tau \le 2\rho \cdot \left(\unvolc{X}{X}\right)^\tau$}. Moreover, since by definition of $\omega$ one has $\unvolc{S}{U} < \lambda'  \cdot \unvolc{R}{U}$  for every cut $\emptyset \neq  S \subsetneq U$ with $\unspc{S}{U} < \psi$ and $ \unvolc{S}{U} \le \unvolc{U \setminus S}{U}$, we get $\unvolc{S}{U} < \lambda'  \cdot \unvolc{R}{U} \le \lambda' 2\rho \cdot \left(\unvolc{X}{X}\right)^\tau$ for any such $S$.
\end{proof}

\subsubsection{Bounding the number of iterations}
\label{subsubsec:iter}

\noindent
Combining the structural properties of nested cuts together with \cref{lem:algovols}, we can show the correctness of the algorithms. We begin by observing that whenever \cref{alg:trim} breaks the inner loop in line~\eqref{enum:break} we have $j \le k$. This ensures that restraining the outer loop to indices $j=1,\dots, k+1$ (see line~\eqref{enum:outerloop}) does not lead to undefined behaviour.

\begin{lemma}[\cref{alg:trim} does few outer loop iterations]
	\label{lem:boundouterloop}
	Let $b, \phi,c \in (0,1)$ such that $c \leq 1/30$ is a constant, $\phi < b$ and $b \le c$. Also let $C \ge 3$ be a constant, and let $\bsc$ be an $(\alpha,\lambda)$-BSCA where $\alpha \le \frac{1}{2b}$. If \cref{assump:sparsifiers2} holds, then if Algorithm~\ref{alg:trim} breaks in line~\eqref{enum:break} it is in some outer loop iteration $j$ with $j \le k$.
\end{lemma}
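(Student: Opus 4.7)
I would prove the lemma by contradiction: assume that at outer iteration $j = k+1$ \cref{alg:trim} executes the break in line~\eqref{enum:break}, and derive an impossibility. The motivating observation is that at $j = k+1$ the exponent $1 - (j-1)/k$ in condition~\eqref{condition:baltrim} vanishes, so the trimming threshold collapses to the constant $\frac{1}{C\lambda}$, independent of $\wvolc{U}{U}$. Hence the break firing means $\omega = (R,\nu) \neq \bot$ (otherwise we would have returned at line~\eqref{enum:expfound}) together with $\nu < \frac{1}{C\lambda} \leq \frac{1}{3}$, since $C \geq 3$ and $\lambda \geq 1$.

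Against this I would derive $\nu \geq 1 - c$, which contradicts the above because $c \leq 1/30$. The bridge is \cref{rmk:bsctrim}, which under \cref{assump:sparsifiers2} lifts the BSCW guarantee from $H[A]^\circ$ to $G[A]^\circ$. First I would check its preconditions: $c \leq 1/30$ is assumed; $\phi_j < b_j$ follows by induction on $j$, since line~\ref{line:alpha} rescales $b$ and $\phi$ by the same factor; $b_j \leq b_0 = b \leq c$; and $\alpha b_j \leq \alpha b \leq 1/2$. The remark then gives that $\omega$ is a $(\phi_j,(1+1/\log n)\alpha,(1+c)\lambda,c)$-BSCW of $G[A]^\circ$, so by property~\eqref{item:goodestimr} of \cref{def:bscw} we have $\nu \geq (1-c)\,\unvolc{R}{A}$.

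The remaining step is to show $\unvolc{R}{A} \geq 1$. Since $\omega$ is a zero-error BSCW of $H[A]^\circ$ by definition of $\bsc$, property~\eqref{item:sparser} forces $\Phi_{H[A]^\circ}(R)$ to be a finite positive real, so $\min\{\wvolc{R}{A},\wvolc{A \setminus R}{A}\} > 0$; in particular at least one edge of $H$ is incident on $R$. Because a cluster sparsifier is a weighted subgraph of $G$ (see \cref{def:spars}), the same edge lies in $G$, whence $\vol(R) \geq 1$ and therefore $\unvolc{R}{A} = \vol(R) + (b/\phi - 1)\,\unbor{R}{A} \geq 1$. This closes the loop: $\nu \geq 1 - c \geq 29/30 > 1/3 \geq 1/(C\lambda)$.

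The main obstacle I expect is this last micro-step $\unvolc{R}{A} \geq 1$: one must rule out that $\bsc$ could return a degenerate cut supported on isolated vertices. This does not fall out of the BSCW definition in complete isolation, but it follows cleanly from combining the sparsity upper bound in \cref{def:bscw} (which forces the denominator of $\Phi_{H[A]^\circ}(R)$ to be positive) with the subgraph property of cluster sparsifiers in \cref{def:spars} (which transports the witnessing edge into $G$). Everything else is a short chain of parameter checks and a single invocation of \cref{rmk:bsctrim}.
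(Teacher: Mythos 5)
Your proof is correct and follows essentially the same route as the paper: contradiction at $j=k+1$, where the exponent in condition~\eqref{condition:baltrim} vanishes so the break forces $\nu < \tfrac{1}{C\lambda}$, then lifting the witness to $G[A]^\circ$ via \cref{rmk:bsctrim} and using the volume-estimate property to force $\unvolc{R}{A} < 1$, which is incompatible with $R$ being a genuine nonempty cut. The only cosmetic difference is that the paper invokes \cref{lem:algovols} with $\tau=0$ where you inline the $(1\pm c)$ estimate directly, and you treat the degenerate zero-volume case more explicitly than the paper's terse ``$\unvolc{R}{A}<1$ implies $R=\emptyset$'' step.
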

\begin{proof}
	We prove this by contradiction, so suppose the algorithm enters line~\eqref{enum:break} with $j=k+1$ for some inner loop iteration $h \ge 1$. Let $A$ be the value of the corresponding variable in \cref{alg:trim} at the beginning of inner loop iteration $h$ within outer loop iteration $k+1$. Also let $H=(V,E',w)$ be the graph gotten from line~\eqref{enum:freshsampletrim} in the same iteration. From the description of the algorithm, since it entered line~\eqref{enum:break}, the call $\bsc(H[A]^\circ, (1+{1}/{(2\log n)}) \cdot \phi_{k+1})$ from line~\eqref{enum:trimcallbalsparse} has returned a BSCW of the form $(R,\nu)$ such that \smash{$\nu< \frac{1}{C\lambda}(\wvolc{U}{U})^{1-(k+1-1)/k}=\frac{1}{C\lambda}(\wvolc{U}{U})^0$}. Moreover, since we have $c \le 1/30$, $\phi < b$, $b \le c$, $\alpha \le \frac{1}{2b}$,  and $H[A] \approx_{\delta_{k+1}} G[A]$ (from \cref{assump:sparsifiers2}), \cref{lem:reduction} ensures that $(R,\nu)$ is a \smash{$(\phi_j,(1+\frac{1}{\log n})\alpha,(1+c)\lambda, c)$}-BSCW of~$G[A]^\circ$ (see \cref{rmk:bsctrim}).
	
	We now want to use Lemma~\ref{lem:algovols} with $\psi = \phi_{k+1}$, $ \alpha'=(1+\frac{1}{\log n})\alpha$, $\lambda' =(1+c)\lambda$, $\rho= \frac{1}{C\lambda}$, $\tau=0$, and $X=U$. These parameters verify the conditions for applying \cref{lem:algovols} to $(R,\nu)$ on $G[A]^\circ$: $c \le 1/30 \le 1/4$, $\psi = \phi_{k+1} \le \phi$ (by definition of $\phi_{k+1}$ in the algorithm), $\alpha',\lambda' \ge 1$, $\rho \le 1$, $\tau \in [0,1]$, $H[A] \approx_{c} G[A]$ (since $H[A] \approx_{\delta_{k+1}} G[A]$ and $\delta_{k+1} \le c$), and  hence $(R,\nu)$ is a \smash{$(\phi_j,(1+\frac{1}{\log n})\alpha,(1+c)\lambda, c)$}-BSCW of~$G[A]^\circ$. Then \cref{lem:algovols} gives
	\begin{equation*}
		\unvolc{R}{A} < 2 \rho (\unvolc{U}{U})^\tau = \frac{2}{C\lambda} < 1 \, ,
	\end{equation*}
	which in particular implies $R=\emptyset$. However, this contradicts the fact that $(R,\nu)$ is a BSCW of~$G[A]^\circ$ (see \cref{def:bscw}).
\end{proof}

\noindent
Next, we show that \cref{alg:trim} verifies the preconditions~\eqref{prec:sparse} and~\eqref{prec:balsparse} of \cref{lem:seqbounds}. Specifically, if \cref{alg:expdec} calls \cref{alg:trim}, there cannot be sparse and balanced cuts in $U$, and also every cut trimmed by \cref{alg:trim} is sparse and small in volume. This will later allow us to apply \cref{lem:seqbounds} for bounding the number of inner loop iterations (and also for bounding the recursion depth, in the next section).

\begin{lemma}[\cref{alg:trim} makes a sequence of nested sparse small cuts]
	\label{lem:boundsequencetrim}
	Let $b, \phi,c \in (0,1)$ such that $c \leq 1/30$ is a constant, $\phi < b$ and $b \le c$. Also let $C \ge 11$ be a constant, and let $\bsc$ be an $(\alpha,\lambda)$-BSCA where $\alpha \le \frac{1}{2b}$. Consider an execution of \cref{alg:trim} called from \cref{alg:expdec} on input cluster $U \subseteq V$. Also let $R_1,\dots,R_q \subseteq U$ be the sequence of $q$ cuts trimmed in line~\eqref{enum:trim} by \cref{alg:trim}, and let $U^{<i}=U \setminus (\cup_{z-1}^{i-1}R_z)$. If \cref{assump:sparsifiers1} and \cref{assump:sparsifiers2} hold, then
	\begin{enumerate}
		\item \label{satprec:sparse} for all \smash{$\emptyset \neq  S  \subsetneq U$} with \smash{$\unspc{S}{U} < \phi$, $ \unvolc{S}{U} \le \unvolc{U \setminus S}{U}$}, one has \smash{$\unvolc{S}{U}<\frac{3}{C}\unvolc{U}{U}$},
		\item \label{satprec:balsparse} for all $i \in [q]$ one has \smash{$\unspc{R_i}{U^{<i}} < \phi$} and $ \unvolc{R_i}{U^{<i}} < \frac{1}{5}\unvolc{U}{U}$.
	\end{enumerate}
\end{lemma}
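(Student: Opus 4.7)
The plan is to reduce both parts to combining the proxying lemma (\cref{lem:reduction}) with the volume-estimate translation lemma (\cref{lem:algovols}), applied respectively to the initial BSCA call in \cref{alg:expdec} (for part~\ref{satprec:sparse}) and to each BSCA call inside \cref{alg:trim} that is followed by a trim in line~\eqref{enum:trim} (for part~\ref{satprec:balsparse}). The hypotheses on $b,\phi,c$ together with $\alpha \le 1/(2b)$ meet all preconditions of \cref{lem:reduction}, and in both cases $\delta, \delta_j \le c$ so the sparsifier-approximation input to \cref{lem:algovols} is available from Assumptions~\ref{assump:sparsifiers1} and~\ref{assump:sparsifiers2}.

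For part~\ref{satprec:sparse}, I will observe that \cref{alg:trim} is entered only when the call $\bsc(H[U]^\circ,(1+\tfrac{1}{2\log n})\phi)$ in line~\eqref{enum:balsparsecall} of \cref{alg:expdec} returns some $\omega=(R,\nu)$ with $\nu < \tfrac{1}{C\lambda}\wvolc{U}{U}$ (otherwise the algorithm either returns in line~\eqref{enum:stoprec} or recurses in line~\eqref{enum:balrecurse}). By \cref{lem:reduction} this $\omega$ is a $(\phi,(1+\tfrac{1}{\log n})\alpha,(1+c)\lambda,c)$-BSCW of $G[U]^\circ$. Applying \cref{lem:algovols} with $\rho=\tfrac{1}{C\lambda}$, $\tau=1$, $X=U$, $\psi=\phi$ and $\lambda'=(1+c)\lambda$, the second bullet of its case~2 gives, for every cut $\emptyset \neq S \subsetneq U$ with $\unspc{S}{U}<\phi$ and $\unvolc{S}{U}\le \unvolc{U\setminus S}{U}$,
\[
  \unvolc{S}{U} < 2\lambda'\rho\,\unvolc{U}{U} = \tfrac{2(1+c)}{C}\,\unvolc{U}{U} \le \tfrac{3}{C}\,\unvolc{U}{U},
\]
using $c\le 1/30$, which is exactly part~\ref{satprec:sparse}.

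For part~\ref{satprec:balsparse}, I will consider, for each $i\in[q]$, the BSCA call in line~\eqref{enum:trimcallbalsparse} that produced $R_i$; at that moment the variable $A$ equals $U^{<i}$, the outer index is some $j\in[k+1]$, and by \cref{rmk:bsctrim} the returned witness $(R_i,\nu_i)$ is a $(\phi_j,(1+\tfrac{1}{\log n})\alpha,(1+c)\lambda,c)$-BSCW of $G[U^{<i}]^\circ$. Because $R_i$ is trimmed in line~\eqref{enum:trim}, the algorithm did not return in line~\eqref{enum:relaxouter}, so $\nu_i<\tfrac{1}{C\lambda}\wvolc{U}{U}$. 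From property~\eqref{item:sparser} of the witness and the recursion $\phi_j=\phi_{j-1}/((1+\tfrac{1}{\log n})\alpha)$ with $\phi_0=\phi$ and $j\ge 1$,
\[
  \unspc{R_i}{U^{<i}} < (1+\tfrac{1}{\log n})\alpha\,\phi_j \le \phi,
\]
which gives the sparsity bound. For the volume bound I will invoke \cref{lem:algovols} once more with the same $\rho,\tau,X$; its case~2 (applied to $R_i$ itself) yields
\[
  \unvolc{R_i}{U^{<i}} < 2\rho\,\unvolc{U}{U} \le \tfrac{2}{C}\,\unvolc{U}{U} < \tfrac{1}{5}\,\unvolc{U}{U},
\]
where the last inequality uses $C\ge 11$ so that $\tfrac{2}{C}\le \tfrac{2}{11}<\tfrac{1}{5}$.

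\textbf{Main obstacle.} Conceptually nothing is hard here, since both parts reduce to the already proven \cref{lem:reduction} and \cref{lem:algovols}. The only care required is bookkeeping: tracking exactly which branch of \cref{alg:expdec} triggers the call to \cref{alg:trim} (so that the hypothesis $\nu<\tfrac{1}{C\lambda}\wvolc{U}{U}$ is available for \cref{lem:algovols} case~2), verifying that the BSCA's degraded parameters $(1+\tfrac{1}{\log n})\alpha$ and $(1+c)\lambda$ coming out of \cref{lem:reduction}/\cref{rmk:bsctrim} still fit into the target bounds, and checking that the geometric decrease $\phi_j=\phi/((1+\tfrac{1}{\log n})\alpha)^j$ cancels the multiplicative loss $(1+\tfrac{1}{\log n})\alpha$ in the sparsity guarantee at every iteration $j\ge 1$. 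The numeric constants work out with room to spare thanks to $c\le 1/30$ and $C\ge 11$.
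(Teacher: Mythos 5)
Your proposal is correct and follows essentially the same route as the paper's proof: part~\ref{satprec:sparse} via \cref{lem:reduction} applied to the call in line~\eqref{enum:balsparsecall} of \cref{alg:expdec} (using that \cref{alg:trim} is only entered when $\nu < \tfrac{1}{C\lambda}\wvolc{U}{U}$) followed by \cref{lem:algovols} with $\rho=\tfrac{1}{C\lambda}$, $\tau=1$, $X=U$; and part~\ref{satprec:balsparse} via \cref{rmk:bsctrim} plus \cref{lem:algovols} for each trimmed $R_i$, with the sparsity bound coming from property~\eqref{item:sparser} of the witness and the telescoping definition of $\phi_j$. The parameter bookkeeping and constants match the paper's, so nothing further is needed.
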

\begin{proof}
	From the description of Algorithms~\ref{alg:expdec} and~\ref{alg:trim}, we see that Algorithm~\ref{alg:trim} starts only if line~\eqref{enum:balsparsecall} of \cref{alg:expdec} returns a BSCW $\omega=(R,\nu)$ such that $\nu < \frac{1}{C \lambda}\wvolc{U}{U}$, as per line~\eqref{enum:gotrim} (otherwise, \cref{alg:expdec} would not call \cref{alg:trim}). Now we want to apply \cref{lem:reduction} to $\omega$. Because we have $c\le 1/30$, $\phi < b$, $b \le c$, $\delta=c^2 b/\log n $, $\alpha \le \frac{1}{2b}$ and $H[U] \approx_{\delta} G[U]$ from \cref{assump:sparsifiers1}, all the preconditions of \cref{lem:reduction} are met for us to apply it to the call $\bsc(H[U]^\circ, (1+\frac{1}{2\log n}) \cdot \phi)$ from line~\eqref{enum:balsparsecall} of \cref{alg:expdec}: hence, $\omega$ is a  \smash{$(\phi,(1+\frac{1}{\log n})\alpha,(1+c)\lambda, c)$}-BSCW of~$G[U]^\circ$. Since we observed just above that $\omega$ is of the form $(R,\nu)$ and \smash{$\nu < \frac{1}{C \lambda}\wvolc{U}{U}$}, we now use \cref{lem:algovols} with parameters $\psi=\phi$, \smash{$ \alpha'=(1+\frac{1}{\log n})\alpha$}, $\lambda' =(1+c)\lambda$, \smash{$\rho = \frac{1}{C\lambda}$}, $\tau=1$, and $X=U$. These parameters verify the requirement of \cref{lem:algovols} since we have $ c\le 1/30 \le 1/4$, $H[U] \approx_{c} G[U]$ (from \cref{assump:sparsifiers1}, since $\delta \le c$), $\psi \le \phi$, $\alpha',\lambda' \ge 1$, $\rho,\tau \le 1$, and  $(R,\nu)$ is a \smash{$(\phi,(1+\frac{1}{\log n})\alpha,(1+c)\lambda,c)$}-BSCW of~$G[U]^\circ$. Then we get that for every cut $\emptyset \neq  S \subsetneq U$ with $\unspc{S}{U} < \phi$ and $ \unvolc{S}{U} \le \unvolc{U \setminus S}{U}$ one has
	\begin{equation*}
		\unvolc{S}{U} < 2\lambda' \rho \cdot \unvolc{U}{U} = \frac{2(1+c)\lambda}{C\lambda}  \unvolc{U}{U} \leq \frac{3}{C} \cdot \unvolc{U}{U} \, .
	\end{equation*}
	This gives part~\eqref{satprec:sparse}.
	
	For all $i \in [q]$, denote by $j(i) \in [k+1]$ the outer loop iteration of \cref{alg:trim} where $R_i$ was trimmed, and also let $U^{<i}=U \setminus (\cup_{z=1}^{i-1} R_z)$ (i.e., at the beginning of the iteration where $R_i$ has been trimmed we had $A = U^{<i}$). Then, from the description of the algorithm we know that $R_i$ was paired with a value $\nu_{R_i}$ such that $(R_i,\nu_{R_i})$ was the result of $\bsc(H[U^{<i}]^\circ, (1+\frac{1}{2\log n}) \cdot \phi_{j(i)-1})$, where $H=(V,E',w)$ verifies \smash{$H[U^{<i}] \approx_{\delta_{j(i)}} G[U^{<i}] $} because of \cref{assump:sparsifiers2}. As we have $ c\le 1/30$, $\phi < b$, $b \le c$, \smash{$\alpha \le \frac{1}{2b}$}, \cref{lem:reduction} ensures that the pair $(R_i,\nu_{R_i})$ is a \smash{$(\phi_{j(i)},(1+\frac{1}{\log n})\alpha,(1+c)\lambda,c)$}-BSCW of~$G[U^{<i}]^\circ$ for all $i \in [q]$ (see \cref{rmk:bsctrim}).  From the description of the algorithm  we also know that $\nu_{R_i}$ must also not satisfy condition~\eqref{enum:relaxouter}, as otherwise \cref{alg:trim} would have returned, so $\nu_{R_i} < \frac{1}{C \lambda}\wvolc{U}{U}$ for all $i \in [q]$. For each $i \in [q]$, we now use \cref{lem:algovols} with parameters $\psi=\phi_{j(i)}$, \smash{$ \alpha'=(1+\frac{1}{\log n})\alpha$}, $\lambda' =(1+c)\lambda$, $\rho = \frac{1}{C\lambda}$, $\tau=1$, and $X=U$. These parameters verify the requirement for applying \cref{lem:algovols} to $(R_i,\nu_{R_i})$ in $G[U^{<i}]$: $ c\le 1/30 \le 1/4$, $H[U^{<i}] \approx_{c} G[U^{<i}]$ (from \cref{assump:sparsifiers2} and $\delta_{j(i)} \le c$), $\psi=\phi_{j(i)}\le \phi$, $\alpha',\lambda' \ge 1$, $\rho,\tau \le 1$, and $(R_i,\nu_{R_i})$ is a \smash{$(\phi_{j(i)},(1+\frac{1}{\log n})\alpha,(1+c)\lambda, c)$}-BSCW of~$G[U^{<i}]^\circ$. Then, for all $i \in [q]$ \cref{lem:algovols} gives
	\begin{equation}
		\label{eqclaim:bal}
		\unvolc{R_i}{U^{<i}} < \frac{2}{C\lambda}\unvolc{U}{U} < \frac{1}{5}\unvolc{U}{U}\, ,
	\end{equation}
	since $C \ge 11$.
	Furthermore, because every $(R_i,\nu_{R_i})$ is a \smash{$(\phi_{j(i)},(1+\frac{1}{\log n})\alpha,(1+c)\lambda, c)$}-BSCW of~$G[U^{<i}]^\circ$, we have by \cref{def:bscw}
	\begin{equation}
		\label{eqclaim:sparse}
		\unspc{R_i}{U^{<i}} < \left(1+\frac{1}{ \log n}\right) \alpha \phi_{j(i)} = \left(1+\frac{1}{ \log n}\right) \alpha \frac{\phi}{\left( \left(1+\frac{1}{ \log n}\right) \alpha\right)^{j(i)}}  \le \phi \, ,
	\end{equation}
	since $j(i) \ge 1$. The bounds in~\eqref{eqclaim:bal} and~\eqref{eqclaim:sparse} give part~\eqref{prec:balsparse}.
\end{proof}

\noindent
Finally, we argue that \cref{alg:trim} goes through few inner loop iterations within each outer loop iteration. The idea for this is that we can upper bound the total volume trimmed in line~\eqref{enum:trim}, but we can also lower bound the volume of each individual cut.

\begin{lemma}[\cref{alg:trim} does few inner loop iterations]
	\label{lem:boundinnerloop}
	Let $b, \phi,c \in (0,1)$ such that $c \leq 1/30$ is a constant, $\phi < b$ and $b \le c$. Also let $C \ge 31$ be a constant, let $\bsc$ be an $(\alpha,\lambda)$-BSCA where $\alpha \le \frac{1}{2b}$ and $\lambda \le \frac{c}{2b}$, and let $k$ be an integer such that \smash{$k \ge \log \frac{n^2}{\phi}/\log \frac{c}{\lambda \cdot b}$}. If \cref{assump:sparsifiers1} and \cref{assump:sparsifiers2} hold, then \cref{alg:trim} goes through at most $1/{b}$ inner loop iterations within each outer loop iteration.
\end{lemma}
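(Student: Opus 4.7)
The plan is to reduce the question to counting how many cuts can be trimmed in line~\eqref{enum:trim} within a single outer loop iteration $j$, by observing that every inner iteration either trims, breaks, or returns; the inner loop therefore performs exactly one more iteration than the number of trims, so it suffices to show that the number $q_j$ of cuts trimmed in outer loop $j$ is strictly less than $1/b$. To bound $q_j$, I will apply \cref{lem:seqbounds} to the global sequence of all cuts trimmed by \cref{alg:trim} (across all outer loops), and then restrict the conclusions to the sub-sequence $R^{(j)}_1,\dots,R^{(j)}_{q_j}$ produced in outer loop $j$, where $A^{(j)}_i$ denotes the value of $A$ just before $R^{(j)}_i$ is trimmed.

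The first step is to set $\theta := (3/C)\unvolc{U}{U}$; since $C \ge 31$, this fulfils $\theta \le \unvolc{U}{U}/5$, and parts~\eqref{satprec:sparse} and~\eqref{satprec:balsparse} of \cref{lem:boundsequencetrim} supply, respectively, preconditions~\eqref{prec:sparse} and~\eqref{prec:balsparse} of \cref{lem:seqbounds} with $\delta = 1$. The second step is to identify a uniform lower bound $\theta^{\low}_j$ on $\unvolc{R^{(j)}_i}{A^{(j)}_i}$: the trimming test in line~\eqref{condition:baltrim} guarantees $\nu_i \ge (C\lambda)^{-1}(\wvolc{U}{U})^{1-(j-1)/k}$, and combining \cref{rmk:bsctrim} (to certify that the output of $\bsc$ is a BSCW of $G[A^{(j)}_i]^\circ$ with the appropriate error) with \cref{lem:algovols} (applied with $\rho = 1/(C\lambda)$ and $\tau = 1-(j-1)/k$) gives
\[
    \unvolc{R^{(j)}_i}{A^{(j)}_i} \;\ge\; \theta^{\low}_j \;:=\; \frac{1}{2C\lambda}\,(\unvolc{U}{U})^{1-(j-1)/k}.
\]
The third step is to translate these volumes from $A^{(j)}_i$ to $U$: applying \cref{lem:seqbounds} part~\eqref{part:2b} yields $\unvolc{R^{(j)}_i}{U} \ge \theta^{\low}_j - b\theta$, while part~\eqref{part:1} bounds the total volume of the full sequence by $\theta$. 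Since the $R^{(j)}_i$ are pairwise disjoint subsets of $U$, their $\unvolc{\cdot}{U}$-volumes are additive and therefore sum to less than $\theta$, giving $q_j(\theta^{\low}_j - b\theta) < \theta$, i.e.\ $q_j < \theta/(\theta^{\low}_j - b\theta)$.

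The last step is to convert this into the desired $q_j < 1/b$ bound by plugging in the choice of $k$. The hypothesis $k \ge \log(n^2/\phi)/\log(c/(\lambda b))$ rearranges to $(n^2/\phi)^{1/k} \le c/(\lambda b)$, which together with the crude upper bound $\unvolc{U}{U} \le n^2 \cdot b/\phi$ (arising from the self-loop factor $b/\phi$ in the boundary-linked volume) yields $(\unvolc{U}{U})^{1/k} \le c/(\lambda b)$ and hence $(\unvolc{U}{U})^{(j-1)/k} \le (c/(\lambda b))^{j-1}$. Substituting this into $q_j < \theta/(\theta^{\low}_j - b\theta)$ and using $\lambda b \le c/2$ together with $c \le 1/30$ should make $\theta^{\low}_j \ge 2 b \theta$, so that $q_j \le 2\theta/\theta^{\low}_j \le 12 \lambda (\unvolc{U}{U})^{(j-1)/k}$; multiplying by $b$ and again using $\lambda b \le c/2$ collapses this to the target $1/b$. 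The main obstacle will be managing the constants precisely enough so that $\theta^{\low}_j - b\theta$ stays bounded below by a positive fraction of $\theta^{\low}_j$ for every $j \in [k+1]$, in particular for $j$ close to $k+1$ where $(\unvolc{U}{U})^{(j-1)/k}$ is largest; if a direct substitution threatens to make the denominator non-positive in a corner case, I would fall back on an auxiliary telescoping argument on the potential $\unvolc{B_i}{B_i}$ (noting that each trim decreases it by at least $(1 - b)\theta^{\low}_j$) to handle the remaining iterations.
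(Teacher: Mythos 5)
Your case $j=1$ matches the paper's argument, but the way you handle the later outer iterations has a genuine gap. You compare the per-cut lower bound $\theta^{\low}_j = \frac{1}{2C\lambda}\,(\unvolc{U}{U})^{1-(j-1)/k}$ against the \emph{global} total-volume bound $\theta = \frac{3}{C}\unvolc{U}{U}$ coming from applying \cref{lem:seqbounds} to the whole trimmed sequence with $\delta=1$. For $j \ge 2$ these two quantities live on different scales: the resulting ratio is of order $\lambda\,(\unvolc{U}{U})^{(j-1)/k}$, which by your own estimate $(\unvolc{U}{U})^{1/k} \le c/(\lambda b)$ can be as large as $(c/(\lambda b))^{j-1}$ and thus vastly exceeds $1/b$ already at $j=3$ (and equals roughly $\lambda\,\unvolc{U}{U}$ at $j=k+1$). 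Worse, the denominator $\theta^{\low}_j - b\theta$ need not be positive, since $b\theta = \frac{3b}{C}\unvolc{U}{U}$ dominates $\theta^{\low}_j$ once $(\unvolc{U}{U})^{(j-1)/k} \gg 1/(b\lambda)$; your claimed inequality $\theta^{\low}_j \ge 2b\theta$ simply fails for $j\ge 3$ in the relevant parameter regime. The final step ``multiplying by $b$ \ldots collapses this to $1/b$'' does not go through: $12\lambda b\,(\unvolc{U}{U})^{(j-1)/k} \le 6c\,(c/(\lambda b))^{j-1}$ is $\gg 1$ for $j\ge 2$ when $b$ is small. The fallback potential argument has the same defect, since it again only yields a count of order $\lambda\,(\unvolc{U}{U})^{(j-1)/k}$.

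The missing idea is to use the \emph{break} condition of the previous outer iteration to shrink the total-volume budget at level $j$. When outer loop $j\ge 2$ starts, the last call to $\bsc$ in iteration $j-1$ returned a cut failing condition~\eqref{condition:baltrim}, i.e.\ $\nu < \frac{1}{C\lambda}(\wvolc{U}{U})^{1-(j-2)/k}$; via \cref{lem:algovols} this caps the volume of \emph{every} $\phi_{j-1}$-sparse (smaller-side) cut of the current cluster $A$ by $\frac{3}{C}(\unvolc{U}{U})^{1-(j-2)/k}$. The paper then applies \cref{lem:seqbounds} \emph{inside the cluster $A$} (not $U$), with $\delta=\phi_{j-1}/\phi$ and $\theta = \frac{3}{C}(\unvolc{U}{U})^{1-(j-2)/k}$, after first proving the auxiliary bound $\unvolc{U}{U} \le 2\unvolc{A}{A}$ (\cref{claim:voluvola}) so that the preconditions on the trimmed cuts' volumes relative to $\unvolc{A}{A}$ are met. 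With this level-dependent $\theta$, the total trimmed volume at level $j$ is only $\frac{3}{C}(\unvolc{U}{U})^{1-(j-2)/k}$ while each trimmed cut has volume at least $\frac{1}{2C\lambda}(\unvolc{U}{U})^{1-(j-1)/k} - b\theta$, so the ratio carries only a \emph{single} factor $(\unvolc{U}{U})^{1/k} \le c/(\lambda b)$, giving $O(c/b)+1 \le 1/b$ iterations. Your global-sequence shortcut avoids \cref{claim:voluvola} and the change of sparsity scale to $\phi_{j-1}$, but it is precisely these ingredients that make the count uniform in $j$; without them the lemma cannot be recovered along the route you propose.
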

\begin{proof}
	Let $U$ be the input cluster of \cref{alg:trim}. Fix hereafter an outer loop iteration $j=1,\dots, k+1$, and let $A$ be the value of the corresponding variable of the algorithm at the beginning of the $j$-th outer loop iteration. Because line~\eqref{enum:trim} of \cref{alg:trim} removes a non-empty cut from $A$ at every inner loop iteration, there are finitely many iterations. Denote then by $t$ the number of inner loop iterations for outer loop iteration $j$. Then, for every  inner loop iteration $h\in [t]$ define $A^{<h}$ to be the value of variable $A$ at the beginning of inner loop iteration $h$.
	
	Since \cref{assump:sparsifiers2} holds,  we also know that to every inner loop iteration $h \in [t]$ corresponds a graph $H=(V,E',w)$ from line~\eqref{enum:freshsampletrim} which satisfies $H[A^{<h}] \approx_{\delta_j} G[A^{<h}]$. As we have $c \le 1/30$, $\phi < b$, $b \le c$, $\alpha \le \frac{1}{2b}$,  and $H[A^{<h}] \approx_{\delta_j} G[A^{<h}]$, \cref{lem:reduction} ensures that calling $\bsc(H[A^{<h}]^\circ, (1+\frac{1}{2\log n}) \cdot \phi_j)$ gives a \smash{$(\phi_j,(1+\frac{1}{\log n})\alpha,(1+c)\lambda, c)$}-BSCW of~$G[A^{<h}]^\circ$ in line~\eqref{enum:trimcallbalsparse} of \cref{alg:trim} (see \cref{rmk:bsctrim}). Moreover, from the description of the algorithm, it must be the case that the BSCW obtained is of the form $(R_h,\nu_h)$ for all $h \in [t-1]$, where $\emptyset \neq R_h \subsetneq A^{<h}$ and $\nu_h \in \mathbb{R}_{> 0}$. One can then see that for every $h \in [t]$ we have $A^{<h}=A \setminus (\cup_{z=1}^{h-1}R_z)$, and because of conditions~\eqref{enum:relaxouter} and~\eqref{condition:baltrim} we have that for every $h \in [t-1]$ the following holds:
	\begin{equation*}
		\frac{1}{C\lambda}\left(\wvolc{U}{U}\right)^{1-(j-1)/k} \le \nu_h < \frac{1}{C\lambda}\wvolc{U}{U} \, .
	\end{equation*}
	We now want to translate the above bounds to $\unvolc{R_h}{A^{<h}}$ using \cref{lem:algovols}. To do so, note that $H[A^{<h}] \approx_{\delta_j} G[A^{<h}]$ in particular implies $H[A^{<h}] \approx_{c} G[A^{<h}]$ (since we have defined $\delta_j=c^2 b_j/\log n \le c$), and as discussed before we know that
	\begin{equation}
		\label{eq:bscw}
		\text{$(R_h,\nu_h)$ is a $\left(\phi_j,\left(1+\frac{1}{\log n}\right)\alpha,(1+c)\lambda,c\right)$-BSCW of~$G[A^{<h}]^\circ$}
	\end{equation}
	Let us then set $\psi=\phi_j$, $ \alpha'=(1+\frac{1}{\log n})\alpha$, $\lambda' =(1+c)\lambda$, $\rho = \frac{1}{C\lambda}$, $\tau$ being either $1-(j-1)/k$ (for the lower bound) or $1$ (for the upper bound), and $X=U$. These parameters verify the conditions for applying \cref{lem:algovols} to $(R_h,\nu_h)$ on $G[A^{<h}]^\circ$: $c \le 1/30 \le 1/4$, $\psi =\phi_j\le \phi$, $\alpha',\lambda' \ge 1$, $\rho \le 1$, both values of $\tau$ are bounded by $1$ (this is because $j \le k+1$), $H[A^{<h}] \approx_{c} G[A^{<h}]$ (since from \cref{assump:sparsifiers2} we have \smash{$H[A^{<h}] \approx_{\delta_j} G[A^{<h}]$}), and $(R_h,\nu_h)$ is a $(\phi_j,(1+\frac{1}{\log n})\alpha,(1+c)\lambda,c)$-BSCW of \smash{$G[A^{<h}]^\circ$} (see~\eqref{eq:bscw}). Then \cref{lem:algovols} gives
	\begin{equation}
		\label{sandwich:actual}
	\frac{1}{2C\lambda}\left(\unvolc{U}{U}\right)^{1-(j-1)/k} \le \unvolc{R_h}{A^{<h}} <\frac{2}{C\lambda}\unvolc{U}{U} \, ,
	\end{equation}
	where the lower bound is obtained with $\tau=1-(j-1)/k$ and the upper bound with $\tau=1$. The application of \cref{lem:algovols} with $\tau=1$ also gives that for every cut $\emptyset \neq  S \subsetneq A^{<h}$ with $\unspc{S}{A^{<h}} < \phi_j$ and $ \unvolc{S}{A^{<h}} \le \unvolc{A^{<h}\setminus S}{A^{<h}}$ one has
	\begin{equation}
		\label{eq:absolbal}
		\unvolc{S}{A^{<h}} < 2\lambda' \rho \cdot \unvolc{U}{U} = \frac{2(1+c)\lambda}{C\lambda}  \unvolc{U}{U} \le \frac{3}{C} \cdot \unvolc{U}{U} \, .
	\end{equation}
	We consider now the case of $j=1$ and $2 \le j \le k+1$ separately.
	
	\paragraph{Case $j=1$.} In this case $A$ is identical to the input cluster $U$.	We then want to apply Lemma~\ref{lem:seqbounds} with $\delta=1$ and $\theta = \frac{3}{C}\unvolc{U}{U}$ to the sequence of cuts $R_1,R_2, \dots, R_{t-1}$ in the cluster~$U$. Let for convenience $U^i = U \setminus (\cup_{z =1}^{i-1} R_z)$ for all $i \in [t]$.
	
	First note that this value of $\theta$ meets the requirement of \cref{lem:seqbounds} that $\theta< \frac{1}{5}\unvolc{U}{U}$, since $C \ge 16$. Secondly, our set of assumptions matches that of \cref{lem:boundsequencetrim}, so we get that for all \smash{$\emptyset \neq  S  \subsetneq U$} with \smash{$\unspc{S}{U} < \phi$, $ \unvolc{S}{U} \le \unvolc{U \setminus S}{U}$}, one has \smash{$\unvolc{S}{U}<\frac{3}{C}\unvolc{U}{U}$}, and for all $i \in [q]$ one has \smash{$\unspc{R_i}{U^{<i}} < \phi$} and $ \unvolc{R_i}{U^{<i}} < \frac{1}{5}\unvolc{U}{U}$. These facts match preconditions~~\eqref{prec:sparse} and~\eqref{prec:balsparse} of \cref{lem:seqbounds} for $\delta$ and $\theta$ as we set them above.
	
	Part~\eqref{part:1} of \cref{lem:seqbounds} then upper-bounds the total volume of the sequence as
	\begin{equation}
		\label{eq:totalubjone}
		\unvolc{\cup_{h=1}^{t-1} R_h}{U} < \frac{3}{C}\unvolc{U}{U} \, .
	\end{equation}
	Since the lower bound of~\eqref{eq:bscw} with $j=1$ means that $ \unvolc{R_h}{U^{<h}} \ge \frac{1}{2C\lambda}\unvolc{U}{U}$ for all $h \in [t-1]$, using part~\eqref{part:2b} of \cref{lem:seqbounds} with $\theta^{\low} =\frac{1}{2C\lambda}\unvolc{U}{U}$ we further get
		\begin{equation}
		\label{eq:indivlbjone}
		\unvolc{R_h}{U} \ge \frac{1}{2C\lambda}\unvolc{U}{U} - b\cdot \frac{3}{C}\unvolc{U}{U}  \ge  \frac{1}{3C\lambda}\unvolc{U}{U} \quad \text{ for all $h \in [t-1]$}\, ,
	\end{equation}
since  $\lambda \le \frac{1}{18b}$.
	Therefore, taking the ratio of the cumulative upper bound in~\eqref{eq:totalubjone} and the individual lower bounds from~\eqref{eq:indivlbjone}, we conclude that $t-1$ cannot exceed
	\begin{equation*}
		\frac{\unvolc{\cup_{h=1}^{t-1} R_h}{U}}{\min_{h \in [t-1]}\unvolc{R_h}{U}} \le \frac{\frac{3}{C}\unvolc{U}{U}}{\frac{1}{3C\lambda}\unvolc{U}{U}} \le 9 \lambda \, .
	\end{equation*}
Thus, the number of inner loop iterations for $j=1$ is at most $9\lambda +1 \le \frac{1}{2b}+1 \le 1/b$.
	
	\paragraph{Case $2 \le j \le k+1$.} The approach is analogous to the first iteration, except that we use the bounds from \cref{lem:seqbounds} on the sequence $R_1,\dots,R_{t-1}$ in the cluster $A$ (for $j=1$ we applied the lemma to the cluster $U$). We remark that, in this case, $A$ can also be seen as the value of the corresponding variable of \cref{alg:trim} at the very end of outer loop iteration $j-1$.
	
	Consider the call $\bsc(H[A]^\circ, (1+\frac{1}{2\log n}) \cdot \phi_{j-1})$ to $\bsc$ in the last inner loop iteration within outer loop iteration $j-1$, where $H=(V,E',w)$ is accordingly the graph obtained from line~\eqref{enum:freshsampletrim} at the last inner loop iteration within outer loop iteration~${j-1}$. Since \cref{assump:sparsifiers2} holds,  we know that the graph $H$ satisfies \smash{$H[A] \approx_{\delta_{j-1}} G[A]$}. As we have $c \le 1/30$, $\phi < b$, $b \le c$, $\alpha \le \frac{1}{2b}$,  and \smash{$H[A] \approx_{\delta_{j-1}} G[A]$}, \cref{lem:reduction} ensures that the call \smash{$\bsc(H[A]^\circ, (1+\frac{1}{2\log n}) \cdot \phi_{j-1})$} gives a \smash{$(\phi_{j-1},(1+\frac{1}{\log n})\alpha,(1+c)\lambda,c)$}-BSCW $\omega$ of $G[A]^\circ$ (see \cref{rmk:bsctrim}). On the other hand, from the description of the algorithm we know that iteration $j$ starts only if this $\omega$ is of the form $(R,\nu)$ with $\nu < \frac{1}{C \lambda}(\wvolc{U}{U})^{1-(j-2)/k}$ (due to condition~\eqref{condition:baltrim}). We now use \cref{lem:algovols} with parameters $\psi=\phi_{j-1}$, \smash{$ \alpha'=(1+\frac{1}{\log n})\alpha$}, $\lambda' =(1+c)\lambda$, $\rho = \frac{1}{C\lambda}$, $\tau=1-(j-2)/k$, and $X=U$. These parameters verify the requirement of \cref{lem:algovols} since we have $ c\le 1/30 \le 1/4$, $H[A] \approx_{c} G[A]$ (from \cref{assump:sparsifiers2} and $\delta_{j-1} \le c$), $\psi=\phi_{j-1}\le \phi$, $\alpha',\lambda' \ge 1$, $\rho,\tau \le 1$, and $(R,\nu)$ is a \smash{$(\phi_{j-1},(1+\frac{1}{\log n})\alpha,(1+c),c\lambda)$}-BSCW of~$G[A]^\circ$. Then we get that for every cut $\emptyset \neq  S \subsetneq A$ with $\unspc{S}{A} < \phi_{j-1}$ and $ \unvolc{S}{A} \le \unvolc{A \setminus S}{A}$ one has
	\begin{equation}
		\label{eq:generalcasenobsc}
		\unvolc{S}{A} < 2\lambda' \rho \left(\unvolc{U}{U}\right)^\tau =  \frac{2(1+c)\lambda}{C\lambda} \left( \unvolc{U}{U}\right)^{1-(j-2)/k} \le \frac{3}{C} \cdot \left(\unvolc{U}{U}\right)^{1-(j-2)/k} \, .
	\end{equation}
	This is still not enough to apply \cref{lem:seqbounds} to cluster $A$: the bound in~\eqref{sandwich:actual} bounds the volumes of nested cuts $R_h$ in $A^{<h}$ by $\frac{2}{C\lambda}\unvolc{U}{U}$, but applying \cref{lem:seqbounds} demands an upper bound of $\frac{1}{5}\unvolc{A}{A}$ on the volumes of cuts $R_h$ in $A^{<h}$. Hence, we first show the volume of $U$ to be not much larger than that of $A$.
	
	\begin{claim}
		\label{claim:voluvola}
		We have $\unvolc{U}{U} \le 2 \unvolc{A}{A}$.
	\end{claim}
	\begin{proof}
		If $A=U$ the statement is trivial. Otherwise, we know there must have been a sequence of nested cuts $S_1,\dots,S_r \subseteq U$ that \cref{alg:trim} trimmed off from $U$ in line~\eqref{enum:trim} so that $A=U \setminus (\cup_{i \in [r]} S_i)$. We then want to apply Lemma~\ref{lem:seqbounds} to the sequence of cuts $S_1,S_2, \dots, S_r$ in the cluster~$U$ with $\delta=1$ and $\theta = \frac{3}{C}\unvolc{U}{U}$. Let for convenience $U^{<i} = U \setminus (\cup_{z =1}^{i-1} S_z)$ for all $i \in [r+1]$.
		
		First note that this value of $\theta$ meets the requirement of \cref{lem:seqbounds} that $\theta< \frac{1}{5}\unvolc{U}{U}$, since $C \ge 16$. Secondly, our set of assumptions matches that of \cref{lem:boundsequencetrim}, so we get that for all \smash{$\emptyset \neq  S  \subsetneq U$} with \smash{$\unspc{S}{U} < \phi$, $ \unvolc{S}{U} \le \unvolc{U \setminus S}{U}$}, one has \smash{$\unvolc{S}{U}<\frac{3}{C}\unvolc{U}{U}$}, and for all $i \in [r]$ one has \smash{$\unspc{S_i}{U^{<i} } < \phi$} and $ \unvolc{S_i}{U^{<i}} < \frac{1}{5}\unvolc{U}{U}$ (the latter follows because part~\eqref{satprec:balsparse} of \cref{lem:boundsequencetrim} bounds the sparsity and volume of every cut in the entire sequence of cuts trimmed by the algorithm, and $S_1,\dots,S_r$ is a prefix of the entire sequence). These facts match preconditions~~\eqref{prec:sparse} and~\eqref{prec:balsparse} of \cref{lem:seqbounds} for $\delta$ and $\theta$ as set above. Part~\eqref{part:1} of \cref{lem:seqbounds} then upper-bounds the total volume of the sequence as
		\begin{equation}
			\label{bound:claim}
			\unvolc{\cup_{i=1}^{r} S_i}{U} < \frac{3}{C}\unvolc{U}{U} \, .
		\end{equation}
		To conclude it is sufficient to notice
		\begin{align*}
			\unvolc{U}{U} & = \unvolc{A}{U}+\unvolc{U \setminus A}{U}\\
			& = \unvolc{A}{U}+\unvolc{\cup_{i =1}^r S_i}{U} \\
			\text{$\unvolc{A}{A}$ cannot have less self-loops than $\unvolc{A}{U}$}\quad& \le \unvolc{A}{A}+\unvolc{\cup_{i =1}^r S_i}{U} \\
			\text{by~\eqref{bound:claim}}\quad & < \unvolc{A}{A}+\frac{3}{C}\unvolc{U}{U} \, ,
		\end{align*}
	and the claim follows since $C \ge 6$.
	\end{proof}
	
	\noindent
	By \cref{claim:voluvola} and~\eqref{sandwich:actual}, we have that for every $h \in [t-1]$
	\begin{equation}
		\label{eq:ubwrta}
		\unvolc{R_h}{A^{<h}} < \frac{2}{C\lambda}\unvolc{U}{U} \le \frac{4}{C\lambda}\unvolc{A}{A} < \frac{1}{5}\unvolc{A}{A} \, ,
	\end{equation}
	since $C \ge 21$. Furthermore, because every $(R_h,\nu_h)$ is a $(\phi_j,(1+\frac{1}{\log n})\alpha,(1+c)\lambda,c)$-BSCW of~\smash{$G[A^{<h}]^\circ$}, we have
	\begin{equation}
		\label{eq:sparse}
		\unspc{R_h}{A^{<h}} < \left(1+\frac{1}{ \log n}\right) \alpha \phi_{j} = \left(1+\frac{1}{ \log n}\right) \alpha \frac{\phi}{\left( \left(1+\frac{1}{ \log n}\right) \alpha\right)^{j}}  \le \phi_{j-1} \, .
	\end{equation}
	We are now in shape to apply Lemma~\ref{lem:seqbounds} to the sequence of cuts $R_1,R_2, \dots, R_{t+1}$ in the cluster~$A$. Specifically, we do this with $\delta=\phi_{j-1}/\phi$ and $\theta = \frac{3}{C}(\unvolc{U}{U})^{1-(j-2)/k}$. First observe that $\delta \le 1$ (by definition of $\phi_{j-1}$) and $ \theta \le  \frac{3}{C}\unvolc{U}{U} \le \frac{1}{5}\unvolc{A}{A}$ (again by \cref{claim:voluvola} and $C \ge 31$), as demanded by \cref{lem:seqbounds}. For these parameters, one can see that precondition~\eqref{prec:balsparse} is given by~\eqref{eq:ubwrta} and~\eqref{eq:sparse}. Precondition~\eqref{prec:sparse} is instead given by~\eqref{eq:generalcasenobsc}. Part~\eqref{part:1} of \cref{lem:seqbounds} then upper-bounds the total volume of the sequence as
	\begin{equation}
		\label{eq:ubgeneraliter}
		\unvolc{\cup_{h=1}^{t-1} R_h}{A} < \theta =  \frac{3}{C}\left(\unvolc{U}{U}\right)^{1-(j-2)/k} \, .
	\end{equation}
	We are left with lower bounding the volume of each individual cut. We do this with part~\eqref{part:2b} of \cref{lem:seqbounds} with $\theta^{\low} = \frac{1}{2C\lambda}(\unvolc{U}{U})^{1-(j-1)/k}$, which meets the requirement that $	\unvolc{R_h}{A} \ge \theta^{\low}$ thanks to the bound in~\eqref{sandwich:actual}. Part~\eqref{part:2b} of \cref{lem:seqbounds} then gives that for all $h \in [t-1]$
	\begin{align}
			\label{eq:lbgeneraliter}
		\unvolc{R_h}{A} \ge \theta^{\low} - b \cdot \theta = \frac{1}{2C\lambda}\left(\unvolc{U}{U}\right)^{1-(j-1)/k} - b \cdot  \frac{3}{C}\left(\unvolc{U}{U}\right)^{1-(j-2)/k}\, .
	\end{align}
	Therefore, taking the ratio of the upper bound on the total volume from~\eqref{eq:ubgeneraliter} and the lower bound on the volume of each individual cut from~\eqref{eq:lbgeneraliter}, we conclude that $t-1$ is at most
	\begin{align*}
		\frac{\unvolc{\cup_{h=1}^{t-1} R_h}{A}}{\min_{h \in [t-1]} \unvolc{R_h}{A}} & \le
		\frac{\frac{3}{C}\left(\unvolc{U}{U}\right)^{1-(j-2)/k}}{\frac{1}{2C\lambda}\left(\unvolc{U}{U}\right)^{1-(j-1)/k} - b \cdot \frac{3}{C}\left(\unvolc{U}{U}\right)^{1-(j-2)/k}} \\
		& = \frac{\frac{3}{C}\left(\unvolc{U}{U}\right)^{1/k}}{\frac{1}{2C\lambda} - b \cdot \frac{3}{C}\left(\unvolc{U}{U}\right)^{1/k}} \, .
	\end{align*}
	Next, we observe that $\unvolc{U}{U} \le \vol(V)\cdot b/\phi \le \vol(V)/{\phi} \le n^2/\phi$ (since in the worst case we over-count every edge in $G$ as a self-loop in $G[U]^{\circ}$ and  at most $b/\phi$ times). Then, using that \smash{$k \ge \log (\frac{n^2}{\phi}) / \log (\frac{c}{\lambda \cdot b})$}, it holds that \smash{$\left(\unvolc{U}{U}\right)^{1/k} \leq \frac{c}{\lambda \cdot b}$}. Now we can upper-bound the above as
	\begin{align*}
		\frac{\frac{3}{C}\left(\unvolc{U}{U}\right)^{1/k}}{\frac{1}{2C\lambda} - b \cdot \frac{3}{C}\left(\unvolc{U}{U}\right)^{1/k}}  \le \frac{\frac{3}{C}\left(\unvolc{U}{U}\right)^{1/k}}{\frac{1}{2C\lambda} - b \cdot \frac{3}{C}\frac{c}{\lambda \cdot b}} \le \frac{\frac{3}{C} \cdot \left(\frac{n^2}{\phi}\right)^{1/k}}{\frac{1}{5C \lambda} } \le \frac{\frac{3}{C} \cdot \frac{c}{\lambda \cdot b}}{\frac{1}{5C \lambda}} = 15 \frac{c}{b}\, ,
	\end{align*}
	since $c \le \frac{1}{30}$.	Hence, the number of inner loop iterations for $2 \le j \le k+1$ is at most $ \frac{15c}{b} +1\le \frac{1}{b}$.
\end{proof}

\subsubsection{Bounding the recursion depth}
\label{subsubsec:depth}
\noindent
From the description of the algorithms one can see that we recurse on both sides of a cut (in line~\eqref{rec:twosides} of \cref{alg:expdec}) only when the conditions~\eqref{enum:balrecurse} in \cref{alg:expdec} and~\eqref{enum:relaxouter} in \cref{alg:trim} suggest that the cut is balanced, so as to maintain the recursion depth small. When instead \cref{alg:expdec} recurses on one side only, in line~\eqref{rec:onside}, we can use the properties of nested cuts to upper bound the volume of the recursed side. To control the number of self-loops that we add after each recursion, we also want to ensure that we recurse on sparse cuts only.

\begin{lemma}[\cref{alg:expdec} recurses on balanced sparse cuts]
	\label{lem:recursioncut}
	Let $b, \phi,c \in (0,1)$ such that $c \le 1/30$ is a constant, $\phi < b$, $b \le c$. Also  let $C \ge 16$ be a constant, and let $\bsc$ be an $(\alpha,\lambda)$-BSCA where $\alpha \le \frac{1}{2b}$ and $\lambda \le \frac{1}{18b}$. If \cref{assump:sparsifiers1} and \cref{assump:sparsifiers2} hold, whenever Algorithm~\ref{alg:expdec} with input cluster $U \subseteq V$ makes a recursive call, the recursion is on a cluster $\emptyset \neq X \subsetneq U$ such that
	\begin{equation*}
		\unspc{X}{U} < (3C\lambda+2\alpha)\cdot \phi \quad \text{ and } \quad \unvolc{X}{U} \le \left(1-\frac{1}{3C\lambda}\right)\unvolc{U}{U} \, .
	\end{equation*}
\end{lemma}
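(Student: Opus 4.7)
The plan is to unpack the three places where Algorithm~\ref{alg:expdec} makes a recursive call---line~\eqref{enum:balrecurse} (recurse on both sides of a balanced cut found by ED's own $\bsc$ call), line~\eqref{rec:onside} (recurse on the complement of an expander returned by $\trim$), and line~\eqref{rec:twosides} (recurse on both sides of a cut returned by $\trim$)---and to verify the two claimed bounds in each case. Throughout, I will pass from BSCWs on sparsifiers to BSCWs on boundary-linked subgraphs of $G$ via the proxying lemma \cref{lem:reduction} (whose preconditions are met since $c\le 1/30$, $\phi<b\le c$, $\alpha\le 1/(2b)$, and \cref{assump:sparsifiers1},~\ref{assump:sparsifiers2} hold), and from BSCW volume estimates $\nu$ to true volumes via \cref{lem:algovols}.

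Case~\eqref{enum:balrecurse} is the easiest: the pair $(R,\nu)$ from line~\eqref{enum:balsparsecall} is a $(\phi,(1+\frac{1}{\log n})\alpha,(1+c)\lambda,c)$-BSCW of $G[U]^\circ$ by \cref{lem:reduction}, so property~\eqref{item:sparser} of \cref{def:bscw} immediately gives $\unspc{R}{U}<2\alpha\phi$, property~\eqref{item:smallsider} gives $\unvolc{R}{U}\le \frac{1+c}{2+c}\unvolc{U}{U}<\frac{5}{9}\unvolc{U}{U}$, and the branch condition $\nu\ge \frac{1}{C\lambda}\wvolc{U}{U}$ together with \cref{lem:algovols} (with $\rho=\frac{1}{C\lambda}$, $\tau=1$, $X=U$) yields $\unvolc{R}{U}\ge \frac{1}{2C\lambda}\unvolc{U}{U}$. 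Hence both $\unvolc{R}{U}$ and $\unvolc{U\setminus R}{U}$ lie in $[\frac{1}{2C\lambda},1-\frac{1}{2C\lambda}]\cdot \unvolc{U}{U}$, strictly inside the allowed band $[\frac{1}{3C\lambda},1-\frac{1}{3C\lambda}]\cdot \unvolc{U}{U}$ (using $C\ge 16$), and the two sides share the same sparsity.

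For the recursions triggered by $\trim$, let $R_1,\dots,R_q\subseteq U$ be the cuts trimmed in line~\eqref{enum:trim} of Algorithm~\ref{alg:trim} and let $A=U\setminus\bigcup_i R_i$. By \cref{lem:boundsequencetrim}, preconditions~\eqref{prec:sparse} and~\eqref{prec:balsparse} of \cref{lem:seqbounds} hold with $\delta=1$ and $\theta=\frac{3}{C}\unvolc{U}{U}$. For case~\eqref{rec:onside}, ED recurses on $U\setminus A=\bigcup_i R_i$, and part~\eqref{part:1} of \cref{lem:seqbounds} gives $\unspc{U\setminus A}{U}<\phi$ and $\unvolc{U\setminus A}{U}<\frac{3}{C}\unvolc{U}{U}$, both well within the required bounds.

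Case~\eqref{rec:twosides} is the most involved and is the main technical obstacle. Here $\trim$ returns $(R,\bot)$ at some iteration $(j,h)$ with $\nu\ge \frac{1}{C\lambda}\wvolc{U}{U}$; by \cref{rmk:bsctrim}, $(R,\nu)$ is a $(\phi_j,(1+\frac{1}{\log n})\alpha,(1+c)\lambda,c)$-BSCW of $G[A]^\circ$, so $\unspc{R}{A}<(1+\frac{1}{\log n})\alpha\phi_j\le \phi$, $\unvolc{R}{A}\le \frac{1+c}{2+c}\unvolc{A}{A}<\frac{5}{9}\unvolc{A}{A}$, and \cref{lem:algovols} gives $\unvolc{R}{A}\ge \theta^{\low}:=\frac{1}{2C\lambda}\unvolc{U}{U}$. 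I then invoke part~\eqref{part:2c} of \cref{lem:seqbounds} with this $\theta^{\low}$ and the same $\theta=\frac{3}{C}\unvolc{U}{U}$. The key arithmetic is that, using $\lambda\le \frac{1}{18b}$,
\begin{equation*}
\theta^{\low}-b\theta\ge \left(\frac{1}{2C\lambda}-\frac{3b}{C}\right)\unvolc{U}{U}\ge \frac{1}{3C\lambda}\unvolc{U}{U},
\end{equation*}
which yields both $\unvolc{R}{U}\ge \frac{1}{3C\lambda}\unvolc{U}{U}$ (so $\unvolc{U\setminus R}{U}\le (1-\frac{1}{3C\lambda})\unvolc{U}{U}$) and $\unspc{R}{U}<\phi\cdot \frac{\unvolc{U}{U}}{\theta^{\low}-b\theta}\le 3C\lambda\phi<(3C\lambda+2\alpha)\phi$; the upper bound $\unvolc{R}{U}\le \frac{5}{9}\unvolc{U}{U}$ from the same lemma handles the other side. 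Everything outside of this last computation is bookkeeping of the constants produced by \cref{lem:reduction} and \cref{lem:algovols}.
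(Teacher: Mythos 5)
Your proof is correct and follows essentially the same route as the paper's: case analysis over the three recursion sites, converting sparsifier BSCWs to BSCWs of $G[U]^\circ$ via \cref{lem:reduction}/\cref{rmk:bsctrim}, translating the volume-estimate thresholds via \cref{lem:algovols}, and handling the two $\trim$-driven cases through \cref{lem:boundsequencetrim} and parts~\eqref{part:1} and~\eqref{part:2c} of \cref{lem:seqbounds} with $\theta=\frac{3}{C}\unvolc{U}{U}$, $\theta^{\low}=\frac{1}{2C\lambda}\unvolc{U}{U}$, and the same $\lambda\le\frac{1}{18b}$ arithmetic. The only differences are cosmetic (e.g.\ the sharper $\frac{1+c}{2+c}$ bound in the balanced case and terser verification of the lemma preconditions), so no changes are needed.
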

\begin{proof}
	There are three lines where a recursive call can happen:~\eqref{enum:balrecurse},~\eqref{rec:onside},~\eqref{rec:twosides}. We analyse them separately.
	
	\paragraph{Recursion in line~\eqref{enum:balrecurse}.} Let $H=(V,E',w)$ be the graph returned in line~\eqref{enum:freshsampleed} of \cref{alg:expdec}. Since the algorithm recurses in line~\eqref{enum:balrecurse}, the call $ \bsc(H[U]^\circ, (1+\frac{1}{2 \log n}) \cdot \phi)$ of line~\eqref{enum:balsparsecall} must have returned a BSCW of the form $(R,\nu)$ such that
	\begin{equation}
		\label{eq:lbr}
		\nu \ge \frac{1}{C \lambda}\wvolc{U}{U} \, .
	\end{equation}
	As we have $c \le 1/30$, $\phi < b$, $b \le c$, $\alpha \le \frac{1}{2b}$, and $H[U] \approx_\delta G[U]$ (because of \cref{assump:sparsifiers1}),  Lemma~\ref{lem:reduction} gives that $(R,\nu)$ is a  \smash{$(\phi,(1+\frac{1}{\log n})\alpha,(1+c)\lambda,c)$}-BSCW of~$G[U]^\circ$. By \cref{def:bscw}, this means
	\begin{equation*}
		\unspc{R}{U} < \left(1+\frac{1}{\log n}\right)\alpha\cdot \phi \le 2\alpha \cdot \phi \, .
	\end{equation*}
	We now want to use Lemma~\ref{lem:algovols} with $\psi = \phi$, $ \alpha'=(1+\frac{1}{\log n})\alpha$, $\lambda' =(1+c)\lambda$, $\rho= \frac{1}{C\lambda}$, $\tau~=~1$, and $X=U$. These parameters verify the conditions for applying \cref{lem:algovols} to $(R,\nu)$ on $G[U]^\circ$: $c \le 1/30 \le 1/4$, $\psi \le \phi$, $\alpha',\lambda' \ge 1$, $\rho \le 1$, $\tau \le 1$, $H[U] \approx_{c} G[U]$ (since $H[U] \approx_{\delta} G[U]$ and $\delta \le c$), and $(R,\nu)$ is a \smash{$(\phi,(1+\frac{1}{\log n},c)\alpha,(1+c)\lambda)$}-BSCW of~$G[U]^\circ$. Then \cref{lem:algovols} with~\eqref{eq:lbr} gives
	\begin{equation*}
		\unvolc{R}{U} \geq  \frac{\rho}{2} (\unvolc{U}{U})^\tau = \frac{1}{2C\lambda} \unvolc{U}{U}\, ,
	\end{equation*}
	which we can use to upper bound the volume of $U \setminus R$:
	\begin{equation*}
		\unvolc{U \setminus R}{U} = \unvolc{U}{U} - \unvolc{R}{U} \leq \left( 1 -  \frac{1}{2C\lambda} \right) \unvolc{U}{U}.
	\end{equation*}
	Finally, by property \eqref{item:smallsider} of BSCW applied to $R$, $\unvolc{R}{U} \leq (1 + c) \unvolc{U \setminus R}{U}$. Because $c \leq 1/30 \le 1/2$ and $C \ge 2$,
	\begin{equation*}
		\unvolc{R}{U} \leq \frac{1+c}{2} \unvolc{U}{U} \leq \left( 1 -  \frac{1}{2C\lambda} \right) \unvolc{U}{U}.
	\end{equation*}
	Since in this case the algorithm recurses on $R$ and $U\setminus R$, we have the claim.
	
	\paragraph{Recursion in lines~\eqref{rec:onside} and~\eqref{rec:twosides}.} These cases can only happen if \cref{alg:expdec} resorts to \cref{alg:trim}. Let then $R_1,\dots, R_q \subseteq U$ be the sequence of cuts trimmed off by \cref{alg:trim} in line~\eqref{enum:gotrim}. We then want to apply Lemma~\ref{lem:seqbounds} to the sequence of cuts $R_1,R_2, \dots, R_{q}$ in the cluster~$U$ with $\delta=1$ and $\theta = \frac{3}{C}\unvolc{U}{U}$. Let for convenience $U^{<i} = U \setminus (\cup_{z =1}^{i-1} R_z)$ for all $i \in [q+1]$.
	
	First note that this value of $\theta$ meets the requirement of \cref{lem:seqbounds} that \smash{$\theta< \frac{1}{5}\unvolc{U}{U}$}, since $C \ge 16$. Secondly, our set of assumptions matches that of \cref{lem:boundsequencetrim}, so we get that for all \smash{$\emptyset \neq  S  \subsetneq U$} with \smash{$\unspc{S}{U} < \phi$, $ \unvolc{S}{U} \le \unvolc{U \setminus S}{U}$}, one has \smash{$\unvolc{S}{U}<\frac{3}{C}\unvolc{U}{U}$}, and for all $i \in [q]$ one has \smash{$\unspc{R_i}{U^{<i}} < \phi$} and $ \unvolc{R_i}{U^{<i}} < \frac{1}{5}\unvolc{U}{U}$. These facts match preconditions~~\eqref{prec:sparse} and~\eqref{prec:balsparse} of \cref{lem:seqbounds} for $\delta$ and $\theta$ as we set them above.
	\begin{itemize}
		\item \emph{Line~\eqref{rec:onside}}. Part~\eqref{part:1} of \cref{lem:seqbounds} then upper-bounds the total sparsity and volume of the sequence as
		\begin{equation}
			\label{bound:recursion}
			\unspc{\cup_{z=1}^{q} R_z}{U} < \phi \quad \text{and} \quad \unvolc{\cup_{z=1}^{q} R_z}{U} < \frac{3}{C}\unvolc{U}{U} \, .
		\end{equation}
		For \cref{alg:expdec} to recurse in line~\eqref{rec:onside}, it must be the case that \cref{alg:trim} returned a pair $(S,\top)$ in line~\eqref{enum:gotrim} (see condition~\eqref{condition:exp}). Looking into \cref{alg:trim}, one can then see that if it returns a pair of the form $(S,\top)$, it must be the case that $S$ is the value of variable $A$ at the end of the algorithm, i.e. $S=A=U \setminus (\cup_{z \in [q]} R_z)$. In line~\eqref{rec:onside}, the recursive call is $\ed(U \setminus S,\ell+1)$, so we recurse on $U \setminus S = \cup_{z \in [q]} R_z$.  From~\eqref{bound:recursion} we then have the claim since $C \ge 6$.
		
		\item \emph{Line~\eqref{rec:twosides}}. The difference from the previous case is that for \cref{alg:expdec} to recurse in line~\eqref{rec:twosides}, it must be the case that \cref{alg:trim} returned a pair $(S,\bot)$ in line~\eqref{enum:gotrim} (see condition~\eqref{condition:exp}), and in particular this implies that \cref{alg:trim} returned in line~\eqref{enum:relaxouter}.
		
		Let $p$ be the outer loop iteration where \cref{alg:trim} returns, and let $H=(V,E',w)$ be the graph from line~\eqref{enum:freshsampletrim} in the last inner loop iteration of outer loop iteration $p$ (i.e. the inner loop iteration where it returns). Since in such iteration we must have that variable $A$ actually equals $U^{<q+1} $, let $\omega$ be the result of $\bsc(H[U^{q+1}]^\circ, (1+\frac{1}{2 \log n})\phi_{p})$ in that same iteration.
		As we have $ c \le 1/30$, $\phi < b$, $b \le c$, \smash{$\alpha \le \frac{1}{2b}$}, and \smash{$H[U^{<q+1}] \approx_{\delta_p} G[U^{<q+1}]$} (from \cref{assump:sparsifiers2}), \cref{lem:reduction} ensures that $\omega$ is a \smash{$(\phi_p,(1+\frac{1}{\log n})\alpha,(1+c)\lambda,c)$}-BSCW of~$G[U^{<q+1}]^\circ$ (see \cref{rmk:bsctrim}). Moreover, since \cref{alg:trim} returned in line~\eqref{enum:relaxouter}, it must be the case that $\omega$ is of the form $(S,\nu)$, where $S$ is the cut returned from \cref{alg:trim} to \cref{alg:expdec} in the pair $(S,\bot)$. From the description of the algorithm it must also be the case that
		\begin{equation}
			\label{bound:tworeclb}
			\nu \ge \frac{1}{C\lambda} \wvolc{U}{U} \, .
		\end{equation}
		As we have $c \le 1/30 \le 1/4$, $\phi < b$, $b \le c$, \smash{$\alpha \le \frac{1}{2b}$}, and \smash{$H[U^{<q+1}] \approx_{c} G[U^{<q+1}]$} (from \cref{assump:sparsifiers2} and $\delta_p \le c$), the fact that $(S,\nu)$ is a \smash{$(\phi_p,(1+\frac{1}{\log n})\alpha,(1+c)\lambda,c)$}-BSCW of~$G[U^{<q+1}]^\circ$, and the bound~\eqref{bound:tworeclb}, \cref{lem:algovols} gives
		\begin{equation}
			\label{bound:tworeclbactual}
			\unvolc{S}{U^{<q+1}} \ge \frac{1}{2C\lambda} \unvolc{U}{U} \, .
		\end{equation}
		Since $(S,\nu)$ is a \smash{$(\phi_p,(1+\frac{1}{\log n})\alpha,(1+c)\lambda,c)$}-BSCW of~$G[U^{<q+1}]^\circ$, by \cref{def:bscw} we also know
		\begin{equation*}
			\unvolc{S}{U^{<q+1}} \le (1+c) \unvolc{U \setminus S}{U^{<q+1}} \, ,
		\end{equation*}
		so
		\begin{equation}
			\label{bound:notmuchmorethanhalf}
			\unvolc{S}{U^{<q+1}} \le \frac{5}{9}\unvolc{U}{U^{<q+1}} \, ,
		\end{equation}
	since $c \le 1/30 \le  1/10$. Now, setting $R=S$ and $\theta^{\low}=\frac{1}{2C\lambda} \unvolc{U}{U}$, the bounds in~\eqref{bound:tworeclbactual} and~\eqref{bound:notmuchmorethanhalf}, together with the fact that $\theta^{\low}  \le \frac{4}{9}\unvolc{U}{U}$ (since $C \ge 2$), allow us to apply part~\eqref{part:2c} of \cref{lem:seqbounds}. This gives
	\begin{align}
		\unspc{S}{U} & < \phi \frac{\unvolc{U}{U}}{\theta^{\low}-b\cdot \theta} \\
		& = \phi \cdot \frac{\unvolc{U}{U}}{\frac{1}{2C\lambda}\unvolc{U}{U}- b \cdot \frac{3}{C}\unvolc{U}{U}} \\
		\text{since $b \le \frac{1}{18 \lambda}$}\quad & \le 3C\lambda \cdot \phi \label{bound:sparse} \, , 
	\end{align}
	and
	\begin{equation}
		\frac{1}{3C\lambda}\unvolc{U}{U} \le \theta^{\low}-b\cdot \theta \le \unvolc{S}{U} \le \frac{5}{9} \unvolc{U}{U} \, .  \label{bound:bal}
	\end{equation}
	We can now conclude: since the result of \cref{alg:trim} is $(S,\bot)$, \cref{alg:expdec} recurses on both~$S$ and $U \setminus S$ in line~\eqref{rec:twosides}. Let $X \in \{S,U\setminus S\}$. Then from~\eqref{bound:sparse} we have that $\unspc{X}{U} < 3C\lambda \cdot \phi$, and from~\eqref{bound:bal} we have that $\unvolc{X}{U} \le (1-\frac{1}{3C\lambda}) \unvolc{U}{U}$ since $C \ge 2$.
	
	\end{itemize}
\end{proof}

\noindent
We now upper-bound the recursion depth of the algorithm, using the above lemma.

\begin{lemma}[\cref{alg:expdec} has small recursion depth]
	\label{lem:bounddepth}
	Let $b, \phi,c \in (0,1)$ such that $c \le 1/30$ is a constant, $\phi < b$, $b \le c$. Also let $C \ge 16$ be a constant, and let $\bsc$ be an $(\alpha,\lambda)$-BSCA where \smash{$\alpha \le \frac{1}{144C\lambda b}$, $\lambda \le\min\{ \frac{1}{18b}, \, \frac{1}{15C \sqrt{b}}\}$}. If \cref{assump:sparsifiers1} and \cref{assump:sparsifiers2} hold, then the recursion depth of $\ed(V,0)$ (i.e. \cref{alg:expdec}) is at most $9C\lambda \cdot \log n $.
\end{lemma}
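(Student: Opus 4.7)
The plan is to show that the potential $\unvolc{U}{U}$, computed at the current cluster $U$ in its own boundary-linked subgraph $G[U]^{\circ}$, decays geometrically along any root-to-leaf path of the recursion tree. \cref{lem:recursioncut} applies to every recursive step $\ed(U,\ell) \to \ed(X,\ell+1)$ and provides the decrease $\unvolc{X}{U} \le \left(1-\tfrac{1}{3C\lambda}\right)\unvolc{U}{U}$ together with the sparsity bound $\unspc{X}{U} < (3C\lambda+2\alpha)\phi$. The next level of the recursion, however, operates on $G[X]^{\circ}$ rather than on $G[X]$ viewed inside $G[U]^{\circ}$, so the relevant potential at the child is $\unvolc{X}{X}$, not $\unvolc{X}{U}$. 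The crux is to control the inflation $\unvolc{X}{X}-\unvolc{X}{U}$ introduced by the new self-loops on $X$ that appear when $U\setminus X$ is dropped from the view.

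For this, I would use the identity $\unbor{X}{X}=\unbor{X}{U}+\uncut{X}{U}$, which rewrites as $\unvolc{X}{X}=\unvolc{X}{U}+\tfrac{b-\phi}{\phi}\uncut{X}{U}$. Since $\unspc{X}{U}$ is normalised by the smaller side of the cut, the sparsity bound from \cref{lem:recursioncut} gives $\uncut{X}{U} < (3C\lambda+2\alpha)\phi\cdot\unvolc{X}{U}$, and hence, using $\phi<b$,
\[
\unvolc{X}{X} \;\le\; \bigl(1+b(3C\lambda+2\alpha)\bigr)\unvolc{X}{U} \;\le\; \left(1-\tfrac{1}{3C\lambda}\right)\bigl(1+b(3C\lambda+2\alpha)\bigr)\unvolc{U}{U}.
\]
Plugging in the parameter constraints, $\lambda \le \tfrac{1}{15C\sqrt{b}}$ yields $3C\lambda b \le \tfrac{\sqrt{b}}{5} \le \tfrac{1}{75\,C\lambda}$, and $\alpha \le \tfrac{1}{144\,C\lambda b}$ yields $2\alpha b \le \tfrac{1}{72\,C\lambda}$. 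Therefore $b(3C\lambda+2\alpha) < \tfrac{1}{36\,C\lambda}$, and a direct expansion gives $\left(1-\tfrac{1}{3C\lambda}\right)\left(1+\tfrac{1}{36C\lambda}\right) \le 1-\tfrac{11}{36\,C\lambda}$, so the potential contracts by a factor $1-\Theta(1/(C\lambda))$ at every recursion step.

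To finish, the initial potential is $\unvolc{V}{V}=\vol(V)\le n^2$, while any cluster $U$ with $\unvolc{U}{U}<1$ must consist of isolated vertices of $G$ (since then $\vol(U)=\unbor{U}{U}=0$), so $G[U]^{\circ}$ is edgeless and \cref{alg:expdec} returns at line~\eqref{enum:stoprec} without recursing. Hence, if the recursion reaches depth $d$, iterating the contraction above gives $\left(1-\tfrac{11}{36\,C\lambda}\right)^{d}n^2 \ge 1$, and using $\ln(1/(1-x))\ge x$ for $x\in(0,1)$ one concludes $d \le \tfrac{72}{11}\,C\lambda\ln n < 9\,C\lambda\log n$, matching the claimed bound. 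The main obstacle, as highlighted in the previous paragraph, is the volume-inflation term $\tfrac{b-\phi}{\phi}\uncut{X}{U}$: a priori, moving from $\unvolc{X}{U}$ to $\unvolc{X}{X}$ could dwarf the $1-\tfrac{1}{3C\lambda}$ contraction from \cref{lem:recursioncut}, and the parameter constraints on $\alpha$ and $\lambda$ are tuned precisely so that $b(3C\lambda+2\alpha)$ is dominated by $\tfrac{1}{3C\lambda}$, preserving the geometric decrease through the change of ambient subgraph.
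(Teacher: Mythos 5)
Your proof is correct and follows essentially the same route as the paper: it combines the per-step contraction and sparsity bound from \cref{lem:recursioncut} with a bound on the self-loop inflation $\unvolc{X}{X}-\unvolc{X}{U}$ (your inline computation is exactly inequality~(2) of \cref{lem:relationsvol} specialized to $T=S=X$), checks the same parameter constraints to retain a $1-\Theta(1/(C\lambda))$ contraction, and iterates from $\unvolc{V}{V}\le n^2$. The only difference is cosmetic: you make the termination step (potential below $1$ forces isolated vertices, hence no further recursion) explicit, which the paper leaves implicit.
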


\begin{proof}
	The crux is that recursion only happens on one of the sides of a not too large cut. As our set of assumptions matches that of \cref{lem:recursioncut} (note that the preconditions are the same except \smash{$\alpha \le \frac{1}{144C\lambda b} \le \frac{1}{2b}$ and $\lambda \le\min\{ \frac{1}{18b}, \, \frac{1}{15C \sqrt{b}}\}\le \frac{1}{18b}$}), we get the following: for any recursion from $U$ to $X$
	\begin{equation}
		\label{bound:depth}
		\unspc{X}{U} < (2\alpha+3C\lambda)\phi \quad \text{ and } \quad \unvolc{X}{U} \le \left(1-\frac{1}{3C\lambda}\right) \unvolc{U}{U}
	\end{equation}
	To get a bound on the recursion depth, we translate these bounds to a bound on $\unvolc{X}{X}$.
	
	We use \cref{lem:relationsvol} with $\delta=3C\lambda+2\alpha$, $S=X$, and $T=X$. These setting verifies the requirements of \cref{lem:relationsvol}: $\delta \le 1/b$ (since $\alpha \le \frac{1}{4b}$, and $\lambda \leq \frac{1}{6Cb}$), $\unspc{S}{U} < \delta \phi$ from~\eqref{bound:depth}, and $T \subseteq S$. Hence, inequality (2) from \cref{lem:relationsvol} gives
	\begin{align*}
		\unvolc{X}{X}& = \unvolc{T}{S} \\
		\text{inequality (2) of~\cref{lem:relationsvol}}\quad &< \unvolc{T}{U} + \delta b \cdot \min\{\unvolc{S}{U}, \unvolc{U \setminus S}{U}\} \\
		& \le  (1+\delta b) \cdot \unvolc{X}{U} \\
		\text{from~\eqref{bound:depth}}\quad & \le \left(1+\delta b\right) \left(1-\frac{1}{3C\lambda}\right) \unvolc{U}{U} \\
		\text{since $\alpha \le \frac{1}{144C\lambda b}$ and $\lambda \le \frac{1}{15C \sqrt{b}}$}\quad & \le \left(1+\frac{1}{36C\lambda}\right) \left(1-\frac{1}{3C\lambda}\right) \unvolc{U}{U} \\
		& \le  \left(1-\frac{1}{4C\lambda}\right) \unvolc{U}{U} \, .
	\end{align*}
	From this we conclude that if we do a recursion on $X$ at level $\ell$, then
	\begin{equation*}
	\unvolc{X}{X} \le  \left(1-\frac{1}{4C\lambda}\right)^\ell \unvolc{V}{V} =  \left(1-\frac{1}{4C\lambda}\right)^\ell \vol(V) <  \left(1-\frac{1}{4C\lambda}\right)^\ell n^2 \, .
	\end{equation*}
	The bound follows.
\end{proof}

\subsubsection{Correctness}
\label{subsubsec:correct}

Finally, one can see from the algorithms that they recurse only on the sides of cuts resulting from $\bsc$. We know such cuts to be sparse, so intuitively there should be few inter-cluster edges in the resulting partition of $V$. We also know that when the algorithm returns a cluster $\{U\}$, it is because the corresponding BSCW declares it to be an expander. One can then prove that the result of this process is a BLD.

\begin{lemma}[\cref{alg:expdec} outputs a BLD]
	\label{lem:bld}
	Let $b, \phi,c \in (0,1)$ such that $c \le 1/30$ is a constant, $\phi < b$, $b \le c$. Also let $C \ge 31$ be a constant, let $\bsc$ be an $(\alpha,\lambda)$-BSCA where \smash{$\alpha \le \frac{1}{144C\lambda b}$, $\lambda \le\min\{ \frac{1}{18b}, \, \frac{1}{15C \sqrt{b}}\}$}, and let $k$ be an integer such that $k \le \log n$. If \cref{assump:sparsifiers1} and \cref{assump:sparsifiers2} hold, then the result of $\ed(V,0)$ (i.e. \cref{alg:expdec}) 
	is a $(b,\epsilon,\phi,\gamma)$-BLD of $G$ where
	\begin{equation*}
		\epsilon = 4 (3C\lambda+2\alpha) \cdot \phi \cdot 9C\lambda \cdot \log n \cdot  \e^{2b (3C\lambda+2\alpha) 9C\lambda \cdot \log n} \quad \text{ and } \quad \gamma = 6 \alpha^{k+1} \, .
	\end{equation*}
\end{lemma}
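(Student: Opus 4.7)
The plan is to verify the two defining properties of a $(b,\epsilon,\phi,\gamma)$-BLD (\cref{def:expdec}) for the output partition $\mathcal{U}$ of $\ed(V,0)$. That $\mathcal{U}$ partitions $V$ is immediate by induction on the recursion structure: each call to $\ed$ either returns a singleton, or the union of one or two recursive decompositions whose input clusters together partition the current $U$.

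For Property~\eqref{property:expander}, I would examine how a leaf cluster $U \in \mathcal{U}$ can arise. Either (i) \cref{alg:expdec} returns $\{U\}$ in line~\eqref{enum:stoprec} because $\bsc(H[U]^\circ, (1+\frac{1}{2\log n})\phi)=\bot$, or (ii) \cref{alg:trim} returns $(U,\top)$ in line~\eqref{enum:expfound} with $\bsc(H[U]^\circ, (1+\frac{1}{2\log n})\phi_j)=\bot$ for some $j \in [k+1]$. In case (i), \cref{lem:reduction} (invoked under \cref{assump:sparsifiers1}) turns the $\bot$ output into the guarantee that $G[U]^\circ$ is a $\phi$-expander. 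In case (ii), \cref{rmk:bsctrim} (invoked under \cref{assump:sparsifiers2}) analogously yields that $G[U]^\circ$ is a $\phi_j$-expander. Since $\phi_j = \phi / ((1+\frac{1}{\log n})\alpha)^j$ and $j \le k+1 \le \log n + 1$, we obtain $\phi_j \ge \phi/(6\alpha^{k+1}) = \phi/\gamma$, matching the stated $\gamma$.

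For Property~\eqref{property:crossing}, I would charge crossing edges to the internal nodes of the recursion tree. For each internal node $U$, define its cost as $\uncut{X}{U}$ where $X$ is the recursed-on side at $U$ (or $U \setminus S$ in line~\eqref{rec:onside}); summing over internal nodes gives exactly $\frac{1}{2}\sum_{U \in \mathcal{U}}\uncutg{U}$, since each inter-cluster edge is produced once at the lowest common ancestor of the two final clusters it connects. By \cref{lem:recursioncut}, $\unspc{X}{U} < \beta := (3C\lambda+2\alpha)\phi$, so the cost at $U$ is at most $(\beta/2)\unvolc{U}{U}$. I would then group internal nodes by their recursion level $\ell \in \{0,\ldots,D-1\}$ with $D \le 9C\lambda\log n$ by \cref{lem:bounddepth}. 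The clusters present at level $\ell$ form a partition $\mathcal{P}_\ell$ of $V$ with $\sum_{U \in \mathcal{P}_\ell}\unvolc{U}{U} = \vol(V) + 2(b/\phi - 1)B_\ell$, where $B_\ell$ denotes the total crossing mass created through level $\ell$. This produces the recurrence $B_{\ell+1} \le (1+\beta b/\phi)B_\ell + (\beta/2)\vol(V)$, which unrolls (using $(1+x)^\ell - 1 \le \ell x \, e^{\ell x}$) to $B_D \le (\beta D/2)\vol(V) \cdot e^{D\beta b/\phi}$; dividing by $|E| = \vol(V)/2$ gives the stated bound on $\epsilon$ with some constant slack in the prefactor and in the exponent.

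The main obstacle I anticipate is composing the per-step sparsity bound of \cref{lem:recursioncut} cleanly over the recursion tree: the boundary-linked volumes $\unvolc{U}{U}$ accumulate self-loops as edges cross out of increasingly nested clusters, and the partition identity $\sum_{U \in \mathcal{P}_\ell}\vol(U) = \vol(V)$ at each level must be combined with \cref{lem:bounddepth}'s depth bound so that the exponential blowup stays manageable. The expansion translation needed for Property~\eqref{property:expander} is essentially one application of \cref{lem:reduction}, so the crossing-edge accounting is where the real work of the proof lies.
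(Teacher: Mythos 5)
Your proposal is correct and follows essentially the same route as the paper: Property~\eqref{property:expander} via the same two-case analysis through \cref{lem:reduction}/\cref{rmk:bsctrim} with the $\phi_j \ge \phi/(6\alpha^{k+1})$ computation, and Property~\eqref{property:crossing} via \cref{lem:recursioncut}, the depth bound of \cref{lem:bounddepth}, and a geometric recurrence --- your cumulative recurrence on $B_\ell$ using the identity $\sum_{U\in\mathcal{P}_\ell}\unvolc{U}{U} = \vol(V)+2(\frac{b}{\phi}-1)B_\ell$ is just a repackaging of the paper's per-level bound $|C_\ell|\le T(\ell)$ with the charging $\sum_{(U,X)\in\mathcal{I}^\ell}\uncutg{U}\le 2\sum_{\ell'<\ell}|C_{\ell'}|$, and it yields the stated $\epsilon$ with slack. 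The only item you omit is the appeal to \cref{lem:boundouterloop} to ensure \cref{alg:trim} never breaks in its last outer iteration (so the algorithm's behaviour, and hence the output partition, is well defined), a minor preliminary the paper records first.
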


\begin{proof}
	First note that our assumptions subsume the preconditions of \cref{lem:boundouterloop}. Hence, if Algorithm~\ref{alg:trim} breaks in line~\eqref{enum:break} it is in some outer loop iteration $j$ with $j \le k$, so the behaviour of the algorithm is well defined (see the outer loop for $j=1,\dots,k+1$ in line~\eqref{enum:outerloop} of \cref{alg:trim}).
	
	From the description we then see that $\ed(V,0)$ always outputs a valid partition $\mathcal{U}$ of~$V$. We show that $\mathcal{U}$ verifies the two properties of Definition~\ref{def:expdec}.
	
	\paragraph{Property~\eqref{property:expander}.} Let $X \in \mathcal{U}$, which can either be the result of the return statement in line~\eqref{enum:stoprec} or in line~\eqref{rec:onside} of Algorithm~\ref{alg:expdec}.
	\begin{itemize}
		\item \emph{Return statement in line~\eqref{enum:stoprec}}. Let $U \subseteq V$ be the input cluster to the  instance of \cref{alg:trim} that returned $X$. Also let $H=(V,E',w)$ be the graph from line~\eqref{enum:freshsampleed} and $\omega$ be the result of $\bsc(H[U]^\circ, (1+\frac{1}{2 \log n}) \cdot \phi)$ in line~\eqref{enum:balsparsecall} of that same instance. From the description of the algorithm, we know that the return statement in line~\eqref{enum:stoprec} only occurs if $\omega=\bot$, and in particular $X$ is the input cluster $U$.
		
		As our parameter regime implies $c \le 1/30$, $\phi < b$, $b \le c$, $\alpha \le \frac{1}{2b}$, and also we have $H[U] \approx_\delta G[U]$ from  \cref{assump:sparsifiers1}, we can use Lemma~\ref{lem:reduction}. This ensures that $\omega$ is a $(\phi,(1+\frac{1}{\log n})\alpha,(1+c)\lambda,c)$-BSCW of~$G[U]^\circ$. From \cref{def:bscw}, $\omega=\bot$ implies that $G[U]^\circ$ is a $\phi$-expander.
		
		\item \emph{Return statement in line~\eqref{rec:onside}}.  In this case, \cref{alg:expdec} must have called \cref{alg:trim}, which in turn returned a pair $(S,\top)$. Then, \cref{alg:trim} must have returned in line~\eqref{enum:expfound}. Denoting by~$A$ the value of the corresponding variable in \cref{alg:trim} when it returns, we must have $A=X$. Also let  $p \in [k+1]$ be the outer loop iteration where \cref{alg:trim} returns, let $H=(V,E',w)$ be the graph from line~\eqref{enum:freshsampletrim}, and let $\omega$ be the result of $\bsc(H[A]^\circ, (1+\frac{1}{2 \log n}) \cdot \phi_p)$ in line~\eqref{enum:balsparsecall} of the inner loop iteration where the algorithm returns.
		
		As our parameter regime implies $c \le 1/30$, $\phi < b$, $b \le c$, $\alpha \le \frac{1}{2b}$, and also we have $H[A] \approx_{\delta_p} G[A]$ from  \cref{assump:sparsifiers2}, we can use Lemma~\ref{lem:reduction}. This ensures that $\omega$ is a $(\phi_p,(1+\frac{1}{\log n})\alpha,(1+c)\lambda, c)$-BSCW of~$G[U]^\circ$ (see \cref{rmk:bsctrim}). From \cref{def:bscw}, $\omega=\bot$ implies that $G[A]^\circ=G[X]^\circ$ is a $\phi_p$-expander. Since $p \le k+1$, $k \le \log n$, and $\phi_p$ is defined as
		\begin{equation*}
			\phi_p = \frac{\phi}{\left(\left(1+\frac{1}{\log n}\right)\alpha\right)^p} \, ,
		\end{equation*}
		we conclude that $G[X]^\circ$ is in fact a $\psi$-expander with
		\begin{equation*}
			\psi = \frac{\phi}{\left(\left(1+\frac{1}{\log n}\right)\alpha\right)^p}  \ge \frac{\phi}{\left(\left(1+\frac{1}{\log n}\right)\alpha\right)^{k+1}} \ge \frac{\phi}{6\alpha^{k+1}} \, . 
		\end{equation*}
	\end{itemize}
	From the above case analysis we conclude that for every $X \in\mathcal{U}$, $G[U]^\circ$ is a $\phi/\gamma$-expander with $\gamma = 6 \alpha^{k+1}$.
	
	\paragraph{Property~\eqref{property:crossing}.} We will bound the number of inter-cluster edges by counting the number of edges cut at every recursion level, and then sum over levels. For convenience, let then $D=9C\lambda \cdot \log n$. We can then employ Lemma~\ref{lem:bounddepth} (since our set of assumptions matches that of the lemma), which bounds the recursion depth by $D$.
	
	Next, for $\ell=0,1,\dots, D$, let $\mathcal{I}^\ell$ be the set of instances of Algorithm~\ref{alg:expdec} at the $\ell$-th level of the recursion triggered from $\ed(V,0)$. Each instance $\mathcal{I} \in \mathcal{I}^\ell$ is a pair $\mathcal{I}=(U,X)$, where $U \subseteq V$ is the input cluster to the instance, and:
	\begin{enumerate}
		\item \label{case:ret} if the instance returns $\{U\}$ in line~\eqref{enum:stoprec}, then $X=\emptyset$;
		\item if the instance makes a recursive call, then $\emptyset \neq X \subsetneq U$ is a set on which the instance invokes Algorithm~\ref{alg:expdec}.
	\end{enumerate}
	Then, since our set of assumptions matches that of \cref{lem:recursioncut}, we get that $\unspc{X}{U} < (2\alpha+3C\lambda)\phi$ for every $\ell = 0,\dots, D$ and every $(U,X) \in \mathcal{I}^\ell$ such that $X\neq \emptyset$. Hence,
	\begin{equation}
		\label{bound:edgescrossing}
		\text{for all $\ell = 0,\dots, D$, and all $(U,X) \in \mathcal{I}^\ell$, we have }\quad \uncut{X}{U} < (2\alpha+3C\lambda)\phi \unvolc{X}{U} \, .
	\end{equation}
	With this notation, define the set of edges cut at level $\ell$ to be
	\begin{equation*}
		C_\ell = \bigcup_{(U,X) \in \mathcal{I}^\ell} E(X,U \setminus X) \, .
	\end{equation*}
	Now let $\mu = (3C\lambda+2\alpha)$, and for $\ell=0, \dots, D$ also define
	\begin{equation*}
		T(\ell) = \mu \phi \vol(V) +2  \mu \cdot b\sum_{\ell'=0}^{\ell-1}T(\ell') \, .
	\end{equation*}
	To bound the number of inter cluster edges, we first show that $|C_\ell| \le T(\ell)$, then we resolve the recurrence $T(\ell)$, and finally we sum over $\ell=0,\dots, D$ to conclude.
	
	\begin{claim}
		\label{claim:levelcut}
		For every $\ell=0,1,\dots, D$ one has $|C_\ell| \le T(\ell)$.
	\end{claim}
\begin{proof}
	We prove the statement by induction. At level $0$ there can be only one instance $(V,X) \in \mathcal{I}^0$, so the number of edges cut is $|C_0| = \uncut{X}{V}$. By~\eqref{bound:edgescrossing}
	\begin{equation*}
		|C_0| < \mu \phi  \unvolc{X}{V} =  \mu \phi  \vol(X) \le \mu \phi \vol(V) = T(0) \, ,
	\end{equation*}
thus proving the base case.

Next take any $1 \le \ell \le D$, and assume that $|C_{\ell'}|\le T(\ell')$ for all $\ell'=0,\dots, \ell-1$. By~\eqref{bound:edgescrossing} we know $\uncut{X}{U} < \mu \phi \cdot \unvolc{X}{U}$ for all $(U,X) \in \mathcal{I}^\ell$. Using the definition of $\unvolc{X}{U}$, we get
	\begin{align}
		\uncut{X}{U}  < \mu \phi \cdot \unvolc{X}{U} & =  \mu \phi \cdot \left(\vol(X) + \frac{b-\phi}{\phi}\unbor{X}{U}\right) \\
		\text{since $\unbor{X}{U} \le \uncutg{U}$}\quad & \le \mu \phi \cdot \vol(X) + \mu b \cdot \uncutg{U} \, . \label{bound:cutedges}
	\end{align}
	Observe that the union of all $E(U , V \setminus U)$ on level $\ell$ is a subset of all edges cut on previous recursion levels. This means that
	\begin{equation}
		\sum_{(U,X) \in \mathcal{I}^\ell} \uncutg{U} \leq 2 \sum_{\ell' =0}^{\ell-1} |C_{\ell'}| \, ,\label{bound:cutedges_sum}
	\end{equation}
	since every $e \in C_{\ell'}$ is counted at most twice by the sum over $\mathcal{I}^\ell$.

	Using the induction hypothesis and observing that the cuts $\{X : (U,X) \in \mathcal{I}^\ell\}$ must be disjoint, we can upper-bound $|C_\ell|$ as
	\begin{align*}
		|C_\ell| & = \sum_{(U,X) \in \mathcal{I}^\ell}\uncut{X}{U} \\
		\text{by~\eqref{bound:cutedges} and disjointness of the $X$'s}\quad & < \mu \phi \vol(V) + \mu  b \sum_{(U,X) \in \mathcal{I}^\ell} \uncutg{U}\\
		\text{by~\eqref{bound:cutedges_sum}}\quad & \le \mu\phi \vol(V) + 2 \mu  b \sum_{\ell' =0}^{\ell-1} |C_{\ell'}| \\
		\text{by the induction hypothesis}\quad & \le \mu\phi \vol(V) + 2 \mu b \sum_{\ell' =0}^{\ell-1} T(\ell') \\
		& = T(\ell)\, .
	\end{align*}
	thus showing the inductive statement.
\end{proof}
\noindent
Now we solve the recurrence.
\begin{claim}
	\label{claim:recurrence}
	$T(\ell) \le \mu \phi \vol(V) (1+2\mu b)^\ell$ for all $\ell \ge 0$.
\end{claim}
\begin{proof}
	Let $\ell\ge 1$ and assume the statement holds true for all $0\le \ell'\le \ell-1$. Then,
		\begin{align*}
		T(\ell) & = \mu\phi \vol(V) + 2\mu b \sum_{\ell'=0}^{\ell-1} T(\ell') \\
		& \le \mu\phi \vol(V) + 2\mu b \sum_{\ell'=0}^{\ell-1} \mu \phi \vol(V) (1+2\mu b)^{\ell'} \\
		& = \mu\phi \vol(V) \left(1+2\mu b\sum_{\ell'=0}^{\ell-1} (1+2\mu b)^{\ell'}\right) \\
		& = \mu\phi \vol(V) \left(1+2\mu b \frac{(1+2\mu b)^{\ell}-1}{1+2\mu b-1}\right) \\
		& = \mu\phi \vol(V) (1+2\mu b)^{\ell} \, .
	\end{align*}
\end{proof}
	\noindent
	 Now combing \cref{claim:levelcut} and \cref{claim:recurrence}, one has that the number of inter-cluster edges in $\mathcal{U}$ is
	\begin{align*}
		\sum_{U \in \mathcal{U}} \uncutg{U} \le 2 \sum_{\ell=0}^{D} |C_\ell| \le 2 \sum_{\ell=0}^{D} T(\ell) \le 2 \mu\phi \vol(V)\cdot (D+1) \cdot (1+2\mu b)^{D} \le 4 \mu \cdot \phi \cdot \vol(V) \cdot D \cdot  e^{2b\mu D}\, .
	\end{align*}
\end{proof}

\subsection{Dynamic stream implementation}
\label{subsec:dyn}
In the previous sections we analysed \cref{alg:expdec} and \cref{alg:trim} in an ideal, offline setting with ``free'' access to sparsifiers. To get \cref{lem:tech},  we only need to lift this assumption, and combine lemmas from the previous section.

\unified*

\begin{proof}
	Let $C=40, c=1/40$, and define the quantities $D=9C\lambda \cdot \log n$, $\delta=c^2b/\log n$, and $\delta_j=\delta \cdot (\alpha(1+1/\log n))^{-j}$ for $j \in [k+1]$. We process the stream as follows.
	\begin{enumerate}
		\item We maintain $D+1$ independent samples from the distribution~$\mathcal{D}_\delta$ using the algorithm of  \cref{lem:distrcutstreamadd}, resulting in a collection of sparsifiers $H_\ell$ for $0 \le \ell \le D$.
		\item For every $j \in [k+1]$, we maintain $(D+1) \cdot 1/b$ samples from the distribution~$\mathcal{D}_{\delta_j}$ using the algorithm of  \cref{lem:distrcutstreamadd}, resulting in a collection of sparsifiers $H_{\ell,j,h}$ for $0 \le \ell \le D$, $j~\in~[k+1]$, $h \in [1/b]$.
	\end{enumerate}
	
	\noindent
	After having processed the stream as described above, we want to use \cref{alg:expdec}  and \cref{alg:trim} to decode these samples into a BLD. In \cref{alg:expdec} and \cref{alg:trim} we have hardwired parameters $k,\alpha,\lambda,C,c$: the parameters $k,\alpha,\lambda$ are as in the lemma statement, and $C,c$ are set as defined above (i.e. $C=40, c=1/40$). \cref{alg:expdec} and \cref{alg:trim} also use BLD parameters $b,\phi$: $b$ is just the value in the lemma statement, and we set
	\begin{equation*}
		\phi = \frac{\epsilon}{4 (3C\lambda+2\alpha) \cdot D \cdot  \e^{2b (3C\lambda+2\alpha) D}} \, .
	\end{equation*}
	The only thing left to specify about the algorithms is the sparsifier-access. Since we process the stream by essentially computing a number of sparsifiers, there is a natural way to do this: for $0 \le \ell \le D$,  we define the procedure $\textsc{Sparsifier}^c_\ell(U,\cdot)$ from line~\eqref{enum:freshsampleed} in \cref{alg:expdec} to return $H_\ell$ for every $U \subseteq V$;  similarly, for $0 \le \ell \le D$, $j~\in~[k+1]$, $h \in [1/b]$, we define the procedure $\textsc{Sparsifier}^c_{\ell,j,h}(A,\cdot)$ from line~\eqref{enum:freshsampletrim} in \cref{alg:trim} to return $H_{\ell,j,h}$ for every $A \subseteq V$.
	
	\paragraph{Correctness.} Since $C,\lambda = O(1)$, $\alpha b \le 1/\log n$, $D = O(\log n)$, one can see that $\phi= \Omega(\frac{\epsilon}{\alpha \log n})$. From this value of $\phi$ we also get $\phi < b$, since $C,\lambda,\alpha \ge 1$, $\epsilon \le b \log n$, the exponential is at least $1$, and $D > \log n$ . Then, our parameter regime has all the conditions for using \cref{lem:bounddepth}, \cref{lem:boundinnerloop}, \cref{lem:bld}: we have an $(\alpha,\lambda)$-BSCA $\bsc$, $\phi < b$, $b \le 1/\log n \le c \le 1/30$, $C \ge 31$, $\alpha \le \frac{1}{b \log n} \le \frac{1}{144C \lambda b} \le \frac{1}{2b}$, $\lambda = O(1)$ so $\lambda \le \min\{\frac{1}{18b}, \, \frac{1}{15C\sqrt{b}}, \frac{c}{2b}\}$,  \smash{$ k \ge \frac{\log {n^5 \alpha } }{ \log  {b^{-1/2}}/{\lambda}} $} implies \smash{$ k \ge \log \frac{n^2}{\phi}/\log  \frac{c}{\lambda \cdot b}$} (since \smash{$\phi= \Omega(\frac{\epsilon}{\alpha \log n})$}, $\epsilon \ge 1/n^2$, and $c$ is a constant), and $k \le \log n$. We now just want to lift \cref{assump:sparsifiers1} and \cref{assump:sparsifiers2}, so as to apply these lemmas. From our definition of \smash{$\textsc{Sparsifier}_\ell(U,\cdot)$} and \smash{$\textsc{Sparsifier}_{\ell,j,h}(A,\cdot)$} above, we have the following claim.
	\begin{claim}
		\label{claim:lift}
		Assumptions~\ref{assump:sparsifiers1} and~\ref{assump:sparsifiers2} hold with high probability as long as $\ell \le D$ and $h \le 1/b$.
	\end{claim}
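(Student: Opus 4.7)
The plan is to prove the claim by a direct combination of independence and union bound. The crucial structural observation is that every sparsifier the algorithm uses is drawn from a fresh, independent instance of the sparsification algorithm from \cref{lem:distrcutstreamadd}; as a consequence, once we condition on the randomness of all previously used sparsifiers, the cluster passed to the current call is a deterministic quantity, and we can invoke \cref{lem:distrcutstreamadd} for a fixed cluster as a black box.

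Concretely, for \cref{assump:sparsifiers1}, the plan is to fix a level $\ell \le D$ and condition on the randomness of $H_0, \dots, H_{\ell-1}$ together with that of every trim-level sparsifier $H_{\ell', j, h}$ with $\ell' < \ell$. Under this conditioning, the collection of clusters $\{U_1, \dots, U_s\}$ on which \cref{alg:expdec} will invoke $\textsc{Sparsifier}^c_\ell(\cdot, b)$ at level $\ell$ is deterministic, and since these clusters are disjoint one has $s \le n$. Because $H_\ell \sim \mathcal{D}_\delta$ is independent of the conditioned randomness, \cref{lem:distrcutstreamadd} applied to each $U_i$, combined with a union bound over $i \in [s]$, ensures $H_\ell[U_i] \approx_\delta G[U_i]$ for all $i$ with probability $1 - 1/\poly(n)$. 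A final union bound over the $D + 1 = O(\log n)$ values of $\ell$ then establishes \cref{assump:sparsifiers1} with high probability.

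The same line of reasoning will handle \cref{assump:sparsifiers2} on a per-triple basis. For each $(\ell, j, h)$ with $\ell \le D$, $j \in [k+1]$, and $h \le 1/b$, I will condition on all randomness used strictly before the call $\textsc{Sparsifier}^c_{\ell,j,h}(\cdot, b_j)$, which fixes the unique cluster $A$ queried at that call. Since the error parameter $\delta_j = c^2 b_j / \log n$ matches the precondition in the statement of the assumption, and since $H_{\ell,j,h} \sim \mathcal{D}_{\delta_j}$ is independent of the conditioning, \cref{lem:distrcutstreamadd} yields $H_{\ell,j,h}[A] \approx_{\delta_j} G[A]$ with probability $1 - 1/\poly(n)$. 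A union bound over the $(D+1)(k+1)/b$ admissible triples — which is polynomial in $n$ since $D, k+1, 1/b$ are all $\poly(n)$ under the hypotheses of \cref{lem:tech} — then completes the proof.

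Since the argument is essentially a routine independence-plus-union-bound given the way the stream has been processed, I do not expect a conceptual obstacle; the only subtlety is to make the conditioning explicit so that the cluster passed to each sparsifier is genuinely deterministic relative to that sparsifier's own randomness, thereby justifying the application of \cref{lem:distrcutstreamadd}.
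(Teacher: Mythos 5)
Your proposal is correct and takes essentially the same route as the paper's own proof: the sparsifiers are fresh independent samples, every cluster fed to a given sparsifier is determined by randomness from strictly earlier levels/iterations, so \cref{lem:distrcutstreamadd} applies per fixed cluster, and a union bound over the at most $n$ disjoint calls per sparsifier and over all levels and triples finishes the argument. The only small imprecision is your claim of a ``unique cluster $A$'' per triple $(\ell,j,h)$: several parallel instances of \cref{alg:trim} at the same recursion level query the same $H_{\ell,j,h}$ on (disjoint) clusters, so one also needs the union bound over up to $n$ calls per triple — exactly as you already do for \cref{assump:sparsifiers1}.
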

	\begin{proof}
		The claim follows from the way we process the stream, described above, and from our definition of $\textsc{Sparsifier}^c_\ell(U,\cdot)$ and $\textsc{Sparsifier}^c_{\ell,j,h}(A,\cdot)$.
		\begin{itemize}
			\item For every $0 \le \ell \le D$ and any $U \subseteq V$, $\textsc{Sparsifier}^c_\ell(U,\cdot)$ outputs the graph $H_\ell$, independently sampled from $\mathcal{D}_\delta$.  \cref{alg:expdec} only calls $\textsc{Sparsifier}^c_\ell(U,b)$ for a cluster $U$ that was found using randomness from levels before $\ell$ (simply by the description of the algorithm, see recursive calls in lines~\eqref{enum:balrecurse},~\eqref{rec:onside},~\eqref{rec:twosides} of \cref{alg:expdec}). Thus, by \cref{lem:distrcutstreamadd} we know that $H_\ell[U] \approx_\delta G[U]$ with high probability. Since there can be at most $n$ calls to $\textsc{Sparsifier}^c_\ell(U,b)$ for any $\ell$, we can conclude that $H_\ell$ works for each of them with high probability.
			\item For every $0 \le \ell \le D,\,j \in[k+1],\,h \in [1/b]$ and any $A \subseteq V$, $\textsc{Sparsifier}^c_{\ell,j,h}(A,\cdot)$ outputs the graph $H_{\ell,j,h}$, independently sampled from $\mathcal{D}_{\delta_j}$. Similarly as before, we can see from the description of \cref{alg:trim} that $\textsc{Sparsifier}^c_{\ell,j,h}(A,b_j)$ can only be called on a cluster $A$ that depends on the randomness of a previous level (in case $A$ equals the input cluster $U$ to \cref{alg:trim}) or a previous iteration (in case $A$ was updated in line~\eqref{enum:trim}). Thus, $A$ is independent from $H_{\ell,j,h}$. Thus, by \cref{lem:distrcutstreamadd} we know that $H_{\ell,j,h}[A] \approx_{\delta_j} G[A]$ with high probability. Also in this case there can be at most $n$ calls to $\textsc{Sparsifier}^c_{\ell,j,h}(A,b_j)$ for any triple $\ell,j,h$, so $H_{\ell,j,h}$ works for each of them with high probability.
		\end{itemize}
	\end{proof}
	\noindent
	Provided  \cref{assump:sparsifiers1} and \cref{assump:sparsifiers2} hold, \cref{lem:bounddepth} and \cref{lem:boundinnerloop} are guaranteeing that there is no point in the execution of the algorithm with recursion level $\ell$ larger than $D$ or with inner loop index $h$ (see line~\eqref{enum:innerloop}) larger than $1/b$. On the other hand, \cref{claim:lift} ensures that \cref{assump:sparsifiers1} and \cref{assump:sparsifiers2} hold with high probability as long as $\ell \le D$ and $h \le 1/b$. Hence, we have that the results from \cref{lem:bounddepth}, \cref{lem:boundinnerloop}, \cref{lem:bld} hold simultaneously with high probability. In particular, the output $\mathcal{U}$ of $\ed(V,0)$
	is a $(b,\epsilon,\phi,\gamma)$-BLD of $G$ with high probability for $\gamma = O(\alpha^{k+1})$ (note that we set $\phi$ exactly by rearranging the value of $\epsilon$ given by \cref{lem:bld}).
	
	\paragraph{Space complexity.} The space requirement for processing the stream is dominated by the sparsifiers that we maintain, plus the space used by any call to $\bsc$ for the decoding phase. From \cref{lem:distrcutstreamadd} we know that sampling from $\mathcal{D}_{\delta}$ can be done by maintaining a linear sketch of \smash{$\otil(n/\delta^2)$} bits, and sampling from $\mathcal{D}_{\delta_j}$ can be done maintaining a linear sketch of $\otil(n/\delta_j^2)$ bits.  Since we maintain $D+1$ samples from~$\mathcal{D}_\delta$, and $(D+1) \cdot 1/b$ samples from $\mathcal{D}_{\delta_j}$ for every $j \in [k+1]$, the space complexity for processing the stream is
	\begin{equation*}
		(D+1) \cdot \otil\left(\frac{n}{\delta^2}\right) + (D+1) \cdot \frac{1}{b}\cdot \sum_{j=1}^{k+1} \otil\left(\frac{n}{\delta_j^2}\right) \, .
	\end{equation*}
	Because we defined $D=9C\lambda \cdot \log n$, $\delta=c^2b/\log n$, $\delta_j=\delta \cdot (\alpha(1+1/\log n))^{-j}$, and we have $C,\lambda = O(1),\,k = O(\log n), c = \Omega(1)$, we use at most
	\begin{equation*}
		O(\log n) \otil\left(\frac{n}{b^2}\right) + O(\log n)  O\left(\frac{1}{b}\right)O(k) \cdot \otil\left(\frac{n}{b^2}\right) \cdot \alpha^{O(k)} = \otil\left(\frac{n}{b^3}\right) \cdot \alpha^{O(k)}
	\end{equation*}
	bits of space for the sparsifiers. Finally, note that \Cref{alg:expdec,alg:trim} run $\bsc$ on sparsifiers with at most $n$ vertices and at most
	\begin{equation*}
		\max\left\{\otil(n/\delta^2),\max_{j \in [k+1]}\left\{\otil(n/\delta_{j}^2)\right\}\right\} =  \otil\left(\frac{n}{b^2}\right) \cdot \alpha^{O(k)}
	\end{equation*}
	edges, by \cref{lem:distrcutstreamadd}. As we can reuse the space for each call, this takes an extra $S(n,\otil(n/b^2)\alpha^{O(k)})$ bits of space, leading to the claimed space complexity for the decoding.
	
	\paragraph{Time complexity.}
	Throughout the recursion of $\ed(V,0)$,  \Cref{alg:expdec,alg:trim} make at most $\poly(n)$ calls to $\bsc$. As noted above, each of these calls is on a graph with at most $n$ vertices and \smash{$\otil(n/b^2)\alpha^{O(k)}$} edges. \cref{lem:distrcutstreamadd} also guarantees that decoding the linear sketch to get a sample from $\mathcal{D}_\delta$ and $\mathcal{D}_{\delta_j}$ takes $\poly(n)$ time. Hence the running time of the decoding is \smash{$\poly(n)\cdot T(n,\otil(n/b^2)\alpha^{O(k)})$}.
	
	\paragraph{Sparsity parameter in BSCA calls.} By the way we set $\phi$ and by definition of the algorithm, we have that every call to $\bsc(\cdot, \psi)$ is made with a sparsity parameter $\psi \in (0,1)$ that is $\psi \le \phi \le \frac{1}{10\alpha}$.
	
\end{proof}
\section{Lower bound for two-level expander decomposition}
\label{sec:lb}
The goal of this section is to show that repeatedly computing EDs on the inter-cluster edges necessarily incurs a dependence on the sparsity parameter. Formally, the result is the following.

\lb*
\noindent
By Yao's minimax principle, it is sufficient to give a distribution $\mathcal{G}$ over graphs with vertex set $V$ such that any deterministic streaming algorithm that computes, with probability at least $0.9$, an $(\epsilon,\phi,\ell)$-RED of $G \sim \mathcal{G}$  takes $\Omega(n/\epsilon)$ bits of space. We are not going to use the fact that the stream can be dynamic to show the lower bound, so any stream that reveals the edges of $G \sim \mathcal{G}$  in an arbitrary order serves our purposes.

We then proceed as follows. In \cref{subsec:harddistrib} we describe our hard distribution over graphs, and show a structural property of its two-level $(\epsilon,\phi)$-RED. In \cref{subsec:reductioncomm} we introduce a communication problem, $\recover$, and show a reduction to RED. Finally, in \cref{subsec:hardcomm} we give a communication lower bound for $\recover$. We combine these results into proving~\cref{th:lb} in \cref{subsec:lbproof}.

\subsection{Hard distribution}
\label{subsec:harddistrib}
The randomness of our hard distribution $\mathcal{G}$ arises from sampling Erd\H{o}s-R\'{e}nyi graphs. We recall the definition for convenience of the reader.
\begin{definition}\label{def:er}
	Let $N$ be an integer and let $p \in [0,1]$. The distribution of Erd\H{o}s-R\'{e}nyi graphs $\er(N, p)$ is the distribution over $N$-vertex graphs $H=(U,F)$ where each pair $e \in \binom{U}{2}$ is in $F$ with probability $p$ independently of other pairs.
\end{definition}
\noindent
Let $d,m$ be integers such that $3 \le d < m < n$. We now define our hard distribution $\mathcal{G}$ over $n$-vertex  graphs $G=(V, E)$, and we illustrate it in \Cref{fig:lb}.

\begin{definition}[Distributions $\mathcal{G}$ and $\mathcal{G}'$]
	\label{def:harddistr}
 We partition $V=[n]$ arbitrarily  into two sets $S$ and $T$ with $n/2$ vertices each, and further partition $S$ into $n/m$ sets $S_1,\dots,S_{n/m}$ with $m/2$ vertices each. The distribution $\mathcal{G}_{n,d,m}$ is supported over graphs $G=(V,E)$ where $E$ is defined as follows.

\begin{enumerate}
\item  For each $i~\in~[n/m]$, the induced subgraph $G[S_i]$ is sampled independently from the Erd\H{o}s-R\'{e}nyi distribution \smash{$\er({m}/{2},4d/m)$} (see Definition~\ref{def:er} above). We denote by $\mathcal{G}'_{n,d,m}$ the distribution of the subgraph $G[S]$, i.e. \smash{$\mathcal{G}'_{n,d,m}= \er({m}/{2},4d/m)^{\otimes n/m}$}.

\item The induced subgraph $G[T]$ is a {\em fixed} $d$-regular $\psi$-expander, for $\psi = \Omega(1)$ independent of $d, m, n$. 

\item We fix for convenience an arbitrary labelling $s_{i,1},\dots,s_{i,m/2}$ of the vertices in each $S_i, \, i \in [n/m]$, and we sample $K \sim \unif([m/2])$. Then, for every $i\in [n/m]$, we add $dm/2$ edges from $s_{i,K}$ to $T$ so that each $t \in T$ has $d$ incident edges in $E(S,T)$ (one way to do so is by partitioning $T$ into $n/(dm)$ subsets $T_1, \dots, T_{n/(dm)}$ with $dm/2$ vertices each and connecting $s_{i,K}$ to each vertex of $T_{\lceil i/d \rceil}$ for every $i \in [n/m]$).
\end{enumerate}
Since the parameters $n,d,m$ are fixed hereafter, we may drop the subscript from $\mathcal{G}_{n,d,m}$ and $\mathcal{G}'_{n,d,m}$ and simply denote them by $\mathcal{G}$ and $\mathcal{G}'$ respectively.
\end{definition}

\begin{figure}[h]
	\begin{minipage}[center]{\textwidth}
		\centering
		\includegraphics[scale=0.93]{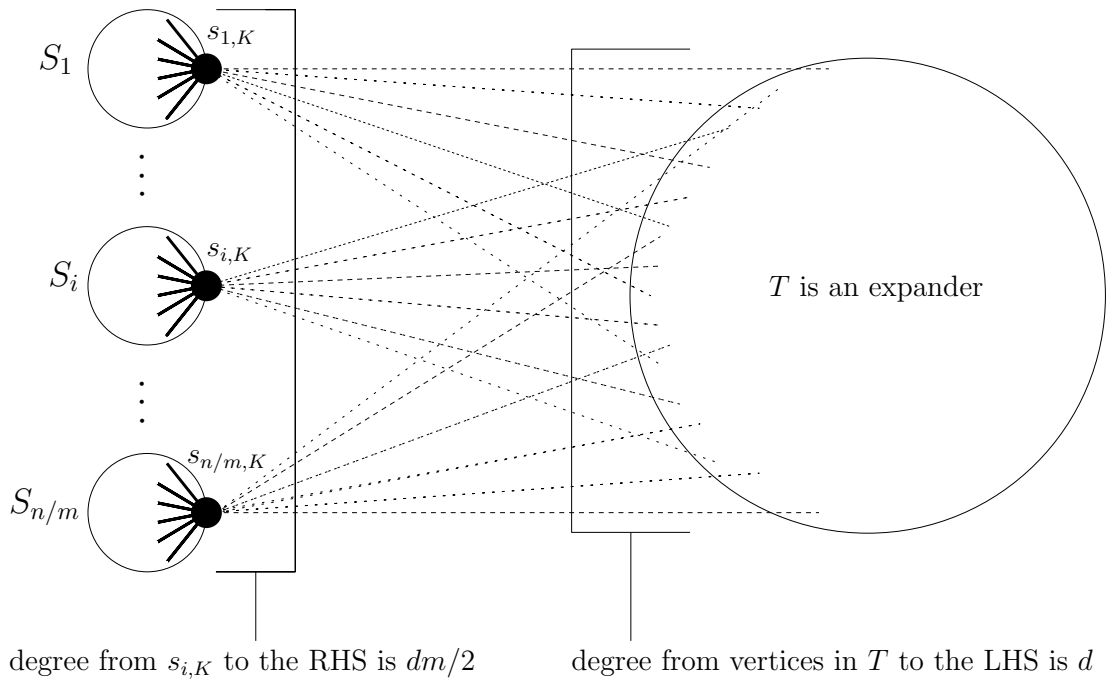}
	\end{minipage}
	\caption{Illustration of the graph we use for proving the lower bound. Thick bullets represent important vertices, thick lines represent important edges, dotted lines represent edges connecting the important vertices to $T$.}
	\label{fig:lb}
\end{figure}

\noindent
Roughly speaking, our hard instances should be composed of $n/m$ regular expanders that are densely connected to $T$, which is also an expander, through a selection of ``special'' vertices. We will show that the hardness arises from recovering information about certain important vertices and edges, defined below and also illustrated in \Cref{fig:lb}.

\begin{definition}[Important vertices and edges] 
	\label{def:impedges}
Let $G=(V,E) \in \supp(\mathcal{G})$. Denote by $k \in [m/2]$ the unique index such that $\{s_{i,k}:i \in [n/m]\}= \{s \in S: \{t \in T: \{s,t\} \in E\} \neq \emptyset \}$. Then, we define the set of important vertices $V^*=\{s_{i,k}:i \in [n/m]\}$ to be the set of vertices of $S$ that are adjacent to~$T$, and define the set of important edges $E^*=\{\{u, v^*\} :  u \in S\, , v^* \in V^*\}$ to be the set of edges in the induced subgraph $G[S]$ that are incident on $V^*$.
\end{definition}

\begin{remark}
	\label{rmk:harddistrib}
	Note that any sample $G \sim \mathcal{G}$ can be defined by $(G',K)$, where $G'=(S,E')$ is the graph consisting of the disjoint union of the subgraphs $(G[S_i])_{i \in [n/m]}$, and $K \in [m/2]$ is the random index determining the important vertices and edges. This is because the edges within $G[T]$ are the same for all $G \sim \mathcal{G}$, and the edges $E(S,T)$ are determined by $K$. By virtue of this observation we abuse notation and write $G=(G',K) \sim \mathcal{G}$. Moreover, for any $(G',K) \sim \mathcal{G}$ one has $G' \sim \mathcal{G}' $, and conversely, for any $G' \sim \mathcal{G}'$ and $K \sim \unif([m/2])$ one also has~$(G',K) \sim \mathcal{G}$.
\end{remark}

\noindent
In the remainder of the section, we call a graph $(1 \pm \delta)\Delta$-regular for an integer $\Delta>0$ and $\delta \in [0,1]$, if each of its vertices has degree in the range $[(1-\delta)\Delta , (1+\delta)\Delta]$. With this terminology, we show that in any valid $(\epsilon,\phi)$-ED of $G$, most of the important edges are crossing edges, as depicted in \Cref{fig:decomp}. This works under the assumption that each $G[S_i]$ is a near-regular expander.

\begin{figure}[h]
	\begin{minipage}[center]{\textwidth}
		\centering
		\includegraphics[scale=0.83]{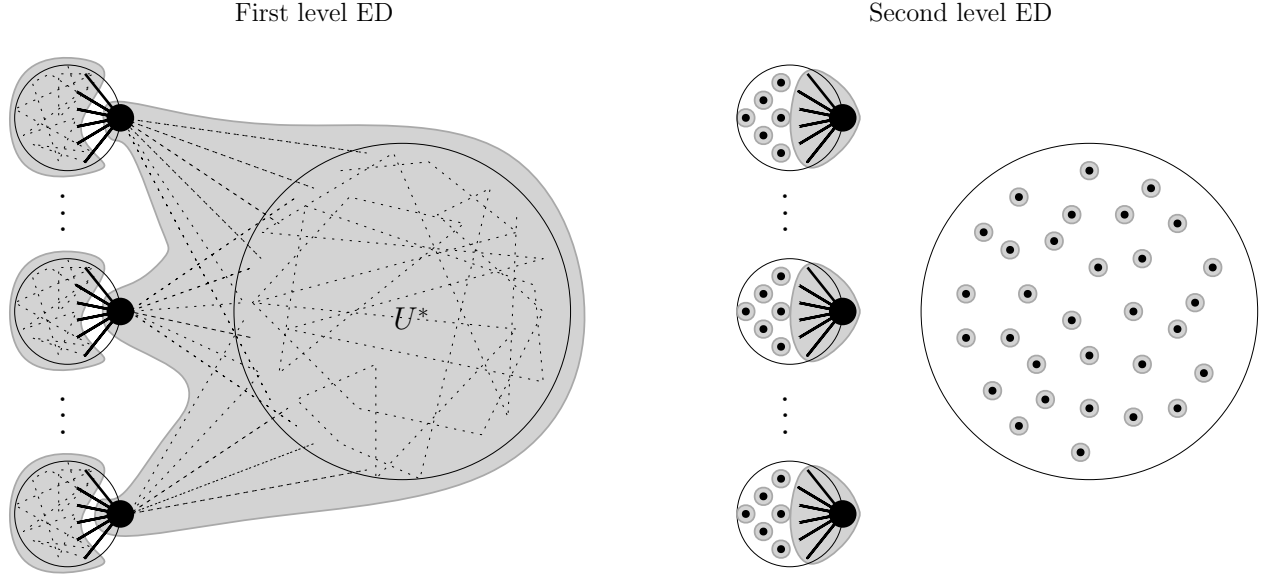}
	\end{minipage}
	\caption{Illustration of the ideal expander decomposition of the graph. Thick bullets represent important vertices, smaller bullets represent ordinary vertices, thick lines represent important edges, dotted lines represent the rest of the edges. Grey areas represent the clusters in the decomposition.}
	\label{fig:decomp}
\end{figure}

\begin{lemma} \label{lem:special-edge}
	Let $G=(V,E) \in \supp(\mathcal{G})$, and let $\epsilon,\phi \in (0,1)$ such that $\epsilon \le 10^{-5} \cdot \psi \cdot \frac{d}{m}$ and \smash{$\phi \ge 11/m$}. If $n \ge 100 m$, $m\ge 500$, and every subgraph $(G[S_i])_{i \in [n/m]}$ is a $(1 \pm 1/10)2d$-regular $\frac{1}{3}$-expander, then any $(\epsilon,\phi)$-ED~$\mathcal{U}$ of $G$ satisfies
	\begin{equation*}
		\left| E^* \setminus \mathcal{U}\right| \ge \frac{4}{5} \cdot \left| E^*\right| \, .
	\end{equation*}
\end{lemma}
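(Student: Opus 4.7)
My strategy follows the informal sketch in the overview, splitting the argument into two pieces: first identify a giant cluster $U^*$ that contains almost all of $T$ and almost all of $V^*$, then show that for essentially every $i \in [n/m]$ the cluster containing most of $S_i$ is \emph{not} $U^*$, which forces most important edges in $G[S_i]$ to be cut. The main quantitative bookkeeping is that $|E|=\Theta(dn)$ (since $|E(G[T])|,|E(G[S])|,|E(S,T)|$ are all $\Theta(dn)$), so the total number of crossing edges under $\mathcal{U}$ is at most $\epsilon|E|=O(\epsilon dn)$.

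The first step is to pick $U^*\in\mathcal{U}$ maximizing $|U^*\cap T|$. Using that $G[T]$ is a $d$-regular $\psi$-expander and that at most $\epsilon|E|$ of its edges can be crossing, a standard expansion argument gives $|T\setminus U^*|=O(\epsilon n/\psi)$. Next, since the $dm/2$ neighbors of each important vertex $s_{i,K}$ in $T$ are almost all inside $T\cap U^*$, any $s_{i,K}\notin U^*$ contributes roughly $dm/2$ crossing $S$-$T$ edges; comparing to the budget of $\epsilon|E|$ crossing edges and using the assumption $\epsilon \le 10^{-5}\psi d/m$ shows that $|V^*\setminus U^*|\ll n/m$. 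So almost all important vertices lie in the giant cluster $U^*$.

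The second step controls how $\mathcal{U}$ intersects each $S_i$. I pick a threshold $\alpha$ with $\alpha=\Theta(d/m)$: call $i$ \emph{trivial} if some cluster contains at least a $(1-\alpha)$-fraction of $\vol_{G[S_i]}(S_i)$. By the $1/3$-expansion and near-regularity of $G[S_i]$, any non-trivial $S_i$ forces $\Omega(\alpha dm)$ crossing edges inside $G[S_i]$. Comparing to $\epsilon|E|=O(\epsilon dn)$ shows that the number of non-trivial $i$'s is much smaller than $n/m$. For a trivial $i$ with $s_{i,K}\in U^*$, let $U_i$ be the dominant cluster of $S_i$. If $U_i=U^*$, consider the cut $A=(S_i\cap U^*)\setminus\{s_{i,K}\}$ inside $G[U^*]$: its boundary is only the edges of $G[S_i]$ from $A$ to $s_{i,K}$, which is at most $\deg_{G[S_i]}(s_{i,K})=O(d)$, while the $(1\pm 1/10)2d$-regularity forces $\vol_{G[U^*]}(A)=\Theta(dm)$ (using that almost all of $\vol_{G[S_i]}(A)$ stays inside $U^*$). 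The resulting sparsity $O(1/m)$ is smaller than $\phi\ge 11/m$, contradicting $\phi$-expansion of $G[U^*]$. Hence $U_i\ne U^*$ for every trivial $i$ with $s_{i,K}\in U^*$, and $|S_i\cap U^*|\le|S_i\setminus U_i|=O(\alpha m)=O(d)$ via the volume-to-vertex conversion.

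To conclude, note that for each important edge $\{s_{i,K},v\}$ that is \emph{not} crossing we must have $v\in U^*\cap S_i$ (when $s_{i,K}\in U^*$) or $v$ in the cluster of $s_{i,K}$ (when $s_{i,K}\notin U^*$). Summing $|N_{G[S_i]}(s_{i,K})\cap(S_i\cap U^*)|\le|S_i\cap U^*|$ over the trivial $i$'s with $s_{i,K}\in U^*$, I get at most $(n/m)\cdot O(d)$ non-crossing important edges, and choosing the implicit constant in $\alpha$ makes this $\le \tfrac{1}{5}|E^*|$; the contribution of non-trivial $i$'s and of $i$'s with $s_{i,K}\notin U^*$ is negligible by the earlier estimates. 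Therefore $|E^*\setminus\mathcal{U}|\ge\tfrac{4}{5}|E^*|$. The delicate point, which I expect to require the most care in the formal write-up, is calibrating $\alpha$ simultaneously to (a) make the sparse-cut argument in $G[U^*]$ yield sparsity strictly below $\phi\ge 11/m$, (b) keep the count of non-trivial $i$'s negligible, and (c) force $|S_i\cap U^*|$ small enough that $\sum_i|S_i\cap U^*|$ truly lies below $|E^*|/5$ rather than merely below $|E^*|$.
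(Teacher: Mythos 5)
Your proposal is correct and follows essentially the same route as the paper's proof: a giant cluster $U^*$ absorbing most of $T$ and of $V^*$, a dominant cluster for almost every $S_i$ that cannot be $U^*$ because the cut $(S_i\cap U^*)\setminus\{s_{i,K}\}$ would be $O(1/m)$-sparse inside $G[U^*]$ (contradicting $\phi\ge 11/m$), and a final edge count; your ``trivial $i$'' bookkeeping (volume threshold $\Theta(d/m)$, counting non-crossing important edges) is a cosmetic variant of the paper's ``good vertex'' condition (internal crossing budget $200\epsilon dm$, counting crossing important edges), and the calibration you flag does work out. The only small slip is asserting that \emph{each} important vertex has almost all of its $T$-neighbours inside $U^*$ — this holds only for most of them, which is exactly what the paper's two-step count (at most $d\,|T\setminus U^*|$ edges from $V^*$ to $T\setminus U^*$, then the crossing-edge budget) establishes, and your conclusion $|V^*\setminus U^*|\ll n/m$ is unaffected.
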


\begin{proof} 
	Let $G=(V,E) \in \supp(\mathcal{G})$ and let $k \in [m/2]$ such that the important vertices are $s_{i,k}$ for $i \in [n/m]$ (note that such value of $k$ can be uniquely inferred from seeing $G$ since the ordering of the vertices is fixed). Note that by \cref{def:harddistr} together with our assumption, the graph $G$ has at most $11/10\cdot dn/2+3/4 \cdot dn \le 1.3 dn$ edges.
	
	Hereafter, let $\mathcal{U}$ be any $(\epsilon,\phi)$-ED of $G$. We begin by arguing that there is a giant cluster in $\mathcal{U}$ that covers most of $T$ and most of the important vertices.
	
	\begin{claim}
		\label{claim:giantcluster}
		There is a component $U^* \in \mathcal{U}$ such that
		\begin{equation*}
			|T \cap U^*| \ge\left(1-12\frac{\epsilon}{\psi}\right) |T| \quad \text{and} \quad |V^* \cap U^*| \ge  \left(1-12\frac{\epsilon}{\psi}\right) |V^*| \, .
		\end{equation*}
	\end{claim}
	\begin{proof}
		For any valid $(\epsilon,\phi)$-ED $\mathcal{U}$, we must have $|E \setminus \mathcal{U}| \le \epsilon \cdot 1.3 dn$. Since $G[T]$ is a $d$-regular $\psi$-expander, there is a component $U^* \in \mathcal{U}$ that contains at least $(1-3 \epsilon/\psi) \cdot n/2$ vertices from $T$, i.e.
		\begin{equation*}
		|T \cap U^* | \ge\left(1-3\frac{\epsilon}{\psi}\right)\cdot \frac{n}{2} = \left(1-3\frac{\epsilon}{\psi}\right)|T|   \, .
		\end{equation*}
		To see why, note that if this was not the case then we could construct a cut $S = \cup_{U \in \mathcal{U}'} U \cap T$ for some $\mathcal{U}' \subseteq \mathcal{U}$ such that $3\epsilon/\psi \cdot |T| < |S| \le |T|/2$ and conclude that there would be at least $ \psi \vol_{G[T]}(S) > \epsilon \cdot 1.5 dn$ edges crossing $\mathcal{U}$.
		
		Since each vertex in $T$ has an edge to $d$ important vertices (see \cref{def:harddistr}), there are at least $(1-6\epsilon/\psi) \cdot n/m$ important vertices with at least $dm/4$ neighbours in $U^* \cap T$, i.e.
		\begin{equation*}
			\Big|\Big\{s \in V^*: \, \left|\{t \in U^* \cap T: \{s,t\} \in E\}\right| \ge \frac{1}{4}dm\Big\}\Big| \ge  \left(1-6\frac{\epsilon}{\psi}\right) \cdot \frac{n}{m} = \left(1-6\frac{\epsilon}{\psi}\right)\cdot |V^*| \, .
		\end{equation*}
		To see why, note that if this was not the case, we would have $|\{s,t\} \in E:\, s \in V^*\, , t\in T \setminus U^* \}|$ to be larger than $ (6\epsilon/\psi \cdot n/m) \cdot dm/4$, which is impossible since every $t \in T$ has $d$ incident edges in $E(S,T)$ and $|T \setminus U^*| \le 3\epsilon/\psi \cdot n/2$.
		
		Now, because most of the important vertices have many edges to $T \cap U^*$, most of these should also be in $U^*$, otherwise too many edges would cross: since $|E \setminus \mathcal{U}| \le \epsilon \cdot 1.3 dn$, there must be at most $6\epsilon/\psi \cdot n/m+\frac{\epsilon \cdot 1.3 dn}{dm/4} \le 12\epsilon/\psi \cdot n/m$ important vertices that are not in $U^*$, i.e.
		\begin{equation*}
			|V^* \cap U^*| \ge \left(1-12\frac{\epsilon}{\psi}\right) \cdot \frac{n}{m}  =  \left(1-12\frac{\epsilon}{\psi}\right) \cdot |V^*| \, .
		\end{equation*}
	\end{proof}
	\noindent
	We want to confine our attention to important vertices whose corresponding subgraph $G[S_i]$ locally preserves property~\eqref{property:crossingclassic} of \cref{def:expdecclassic}, i.e. has few inter-cluster edges as compared to its own volume. Formally, we consider the set of good vertices, defined as
	\begin{equation}
		\label{eq:defgoodvertex}
		V^*_{\text{g}} = \left\{s_{i,k} \in V^* : s_{i,k} \in U^* \, , \left|(E \setminus \mathcal{U}) \cap \binom{S_i}{2}\right| \le {200\epsilon} dm\right\} \, .
	\end{equation}
	We next show that the subgraph $G[S_i]$ corresponding to a good vertex is covered almost entirely by a single cluster, except the good vertex itself which is assigned to $U^*$.
	
	\begin{claim}
		\label{claim:separateclusters}
		Let $s_{i,k} \in V^*_{{\good}} $. Then, there is a cluster $U \in \mathcal{U}$ such that
		\begin{equation*}
			s_{i,k} \notin U \quad \text{and} \quad |U\cap S_i| \ge \left(1-700{\epsilon}\right)  |S_i| \, .
		\end{equation*}
	\end{claim}
\begin{proof}
	Since $G[S_i]$ is a $(1 \pm 1/10)2d$-regular $\frac{1}{3}$-expander by assumption, there is at least one component $U \in \mathcal{U}$ containing at least $(1-700\epsilon)m/2$ vertices in $S_i$, i.e.
	\begin{equation}
		\label{eq:badcomp}
		|U \cap S_i| \ge \left(1-700{\epsilon}\right) \frac{m}{2}  = \left(1-700{\epsilon}\right)  |S_i| \, ,
	\end{equation}
	To see why, note that if this was not the case we could construct a cut $R = \cup_{W \in \mathcal{U}'} W \cap S_i$ for some $\mathcal{U}' \subseteq \mathcal{U}$ such that \smash{$10/9 \cdot 700\epsilon \cdot |S_i| < |R| \le 9/11 \cdot  |S_i|/2$} and conclude that there would be at least $ 1/3\cdot \vol_{G[S_i]}(R) > 230\epsilon dm$ edges crossing $\mathcal{U}$ inside $S_i$, contradicting the fact that $s_{i,k} \in V^*_{{\good}} $ (see the definition in~\eqref{eq:defgoodvertex}).
	
	Hereafter, let $U \in \mathcal{U}$ be the cluster that satisfies~\eqref{eq:badcomp} (which is unique, since $\epsilon \le 10^{-5}$). We know from \cref{claim:giantcluster} that there is a cluster $U^* \in \mathcal{U}$ that contains all good vertices (by definition of good vertex, see~\eqref{eq:defgoodvertex}) and that covers most of $T$.	To show that $s_{i,k} \notin U$, we leverage the fact that $s_{i,k} \in U^*$ and that $G[U^*]$ must be a $\phi$-expander by property~\eqref{property:expanderclassic} of \cref{def:expdecclassic}. Suppose towards a contradiction that $U^* = U$, and let $Q = (U^* \cap S_i) \setminus \{s_{i,k}\}$. Because $G[S_i]$ is $(1 \pm 1/10)2d$-regular, the volume of this set in $G[U^*]$ is at least
	\begin{align*}
		\vol_{G[U^*]}(Q) & \ge -\deg_{G[S_i]}(s_{i,k}) + \sum_{u \in U^* \cap S_i} \deg_{G[U^* \cap S_i]}(u) \\
		& \ge  -\frac{11}{10}2d + \sum_{u \in U^* \cap S_i:\,\deg_{G[U^* \cap S_i]}(u) \ge (1-600{\epsilon}) \deg_{G[S_i]}(u)} \deg_{G[U^* \cap S_i]}(u) \, .
	\end{align*}
	We next observe that the sum above has at least $|U^* \cap S_i|/2$ many terms, for if this was not the case we would have at least $300\epsilon |U^* \cap S_i| \cdot 9/10 \cdot  2d \ge 200\epsilon dm$ edges from $U^* \cap S_i$ to $S_i \setminus U^*$, which would contradict $s_{i,k} \in V^*_{{\good}} $ (see the definition in~\eqref{eq:defgoodvertex}). We then continue from above and hence get
	\begin{align*}
		\vol_{G[U^*]}(Q) & \ge  -\frac{11}{10}2d + \frac{1}{2} |U^* \cap S_i|\cdot  \left(1-600{\epsilon}\right) \deg_{G[S_i]}(u)\\
		& \ge  -\frac{11}{10}2d + \frac{1}{2} |U^* \cap S_i|\cdot  \left(1-600{\epsilon}\right) \frac{9}{10} 2d\\
		\text{since $|S_i \cap U^*| \ge (1-700\epsilon) |S_i|$}\quad & \ge -\frac{11}{10}2d+\frac{9}{10}2d \frac{1}{2}\left(1-1400{\epsilon}\right) \frac{m}{2}\, .
	\end{align*}
	Moreover, we know that $|U^* \cap T| \ge (1-3\epsilon/\psi)n/2 \ge (1-1/8000)n/2$ by our assumption on $\epsilon$, so the fact that $G[T]$ is $d$-regular implies $\vol_{G[U^*]}(U^*) \ge \vol_{G[U^* \cap T]}(U^* \cap T) \ge dn/10$. Also, by near-regularity of $G$ and because $m \le n/100$, we have $\vol_{G[U^*]}(Q) \le 3d \cdot m/2 < dn/20 \le \frac{1}{2} \vol_{G[U^*]}(U^*)$. This means that $Q$ is the smaller side of the cut. On the other hand, the number of edges crossing the cut $Q$ in $G[U^*]$ is too small, since vertices in $Q$ can only connect to the rest of $U^*$ through edges incident on $s_{i,k}$, i.e.
	\begin{equation*}
		\uncut{Q}{U^*} = \deg_{G[S_i]}(s_{i,k}) \le  \frac{11}{10} 2d \, .
	\end{equation*}
	Therefore, $U_i = U^*$ implies
	\begin{equation*}
		\Phi_{G[U^*]} \le \Phi_{G[U^*]}(Q) = \frac{\uncut{Q}{U^*}}{\vol_{G[U^*]}(Q)} \le \frac{\frac{11}{10} 2d}{\frac{9}{10}d \left(1-1400{\epsilon}\right) \frac{m}{2}-\frac{11}{10}2d} \le \frac{10}{m} < \phi \, ,
	\end{equation*}
	a contradiction. Here we used that $m \ge 500$, $\phi \ge 11/m$, and $\epsilon \le 10^{-5}$.
\end{proof}
	\noindent
	By virtue of \cref{claim:separateclusters}, we have the following setting: for every good vertex $s_{i,k} \in V^*_{\text{g}}$, there is one component $U$ covering a $(1-700\epsilon)$ fraction of $S_i$ except $s_{i,k}$ itself. Therefore, many important edges incident on each good vertex are crossing edges, i.e.
	\begin{equation*}
		\{\{s,s'\} \in E^* : \, s = s_{i,k}, s' \notin U^*\} \subseteq E(U^* \cap S_i, S_i \setminus U^*) \subseteq E \setminus \mathcal{U} \quad \text{for all $s_{i,k} \in V^*_{\text{g}}$,}
	\end{equation*}
	and
	\begin{equation*}
		|U^* \cap S_i| \le  |U \setminus S_i| \le 700{\epsilon}  |S_i| = 700{\epsilon} \frac{m}{2}\quad \text{for all $s_{i,k} \in V^*_{\text{g}}$.}
	\end{equation*}
	In particular, the above conditions implies that, among the important edges incident on each good vertex, at most $700\epsilon \cdot m/2$ of them can land in the same cluster $U^*$ as the good vertex itself, i.e.
	\begin{equation}
		\label{eq:pervertexguarantee}
		|\{\{s,s'\} \in E^* : \, s = s_{i,k}, s' \notin U^*\}| \ge \frac{9}{10}2d - 700{\epsilon} \frac{m}{2}  \quad \text{for all $s_{i,k} \in V^*_{\text{g}}$.}
	\end{equation}
	If there are many good vertices, we can conclude that a lot of the important edges are crossing.
		\begin{claim}
		\label{claim:mannygood}
		One has $|V^*_{\good}| \ge  \frac{49}{50}|V^*| $.
	\end{claim}
	\begin{proof}
		Since the number of crossing edges is $|E \setminus \mathcal{U}| \le \epsilon \cdot 1.3 dn$, one has
		\begin{equation}
			\label{goodvertices:2}
			\left|\left\{s_{i,k} \in V^* : \, \left|(E \setminus \mathcal{U}) \cap \binom{S_i}{2}\right| \le {200\epsilon}dm \right\}\right| \ge \left(1-\frac{1}{100}\right)\frac{n}{m}= \left(1-\frac{1}{100}\right)|V^*| \, .
		\end{equation}
		Combining~\cref{claim:giantcluster} and~the above bound, we know $V^*_{\good}$ to be large:
		\begin{equation*}
			|V^*_{\text{g}}| \ge \left(1-12\frac{\epsilon}{\psi}-\frac{1}{100}\right) \cdot |V^*| \ge \left(1-\frac{1}{50}\right) \cdot |V^*| \, ,
		\end{equation*}
		where we use the assumption that $\epsilon \le 10^{-5} \cdot \psi \cdot \frac{d}{m}$.
	\end{proof}
\noindent
	By \cref{claim:mannygood} and the bound in~\eqref{eq:pervertexguarantee}, we can conclude the proof:
	\begin{align*}
		|E^*  \setminus \mathcal{U}| \ge \sum_{s_{i,k} \in V^*_{\text{g}}} |\{\{s,t\} \in E^* : \, s = s_{i,k}\}|   \ge   \left(\frac{9}{10} - 700{\epsilon} \frac{m}{2}\frac{1}{2d}\right)2d \cdot |V^*_{\text{g}}| \, ,
	\end{align*}
	and by \cref{claim:mannygood} we get
	\begin{align*}
	|E^*  \setminus \mathcal{U}|  \ge  \left(\frac{9}{10} - 700{\epsilon}\frac{m}{2}\frac{1}{2d}\right)2d \cdot \frac{49}{50} \cdot \frac{n}{m}  \ge \frac{4}{5} \cdot |E^*|  \, ,
	\end{align*}
	where we used $\epsilon \le 10^{-5} \psi \frac{d}{m}$ in the last inequality.
\end{proof}

\subsection{Reduction to a communication problem}
\label{subsec:reductioncomm}
We introduce the two-party one-way communication problem $\recover$, defined as follows.
\begin{definition}[Communication problem $\recover$]
	\label{def:recover}
	Let $\xi >0$  be a real parameter. In the communication problem $\recover_{\xi}$,
	Alice's input is a graph $G'=(S,E')$, and Bob's output is a set of pairs of vertices $F~\subseteq~\binom{S}{2}$ such that
	\begin{enumerate}
		\item \label{recover:fsmall} $|F| \le 6 \xi\cdot |E'| $, and
		\item \label{recover:flearns} at least a $1/10$ fraction of the pairs in $F$ are edges of $G'$, i.e. $|F \cap E'| \ge 1/10\cdot |E'|$.
	\end{enumerate}
\end{definition}
\noindent
We give a reduction from $\recover_{\epsilon m}$ to $(\epsilon, \phi,2)$-RED, in the setting where Alice's input $G'$ is an $n/2$-vertex graph drawn from $\mathcal{G}'$. We recall that \smash{$\mathcal{G}'=\mathcal{G}'_{n,d,m} = \er({m}/{2}, 4d/m)^{\otimes n/m}$} is the distribution over $n/2$-vertex graphs $G'=(S,E')$ consisting of the disjoint union of $n/m$ disjoint Erd\H{o}s-R\'{e}nyi graphs on $m/2$ vertices (see \cref{def:harddistr}). We then begin by verifying that, with high probability, an input $G' \sim \mathcal{G}'$ is a collection of $n/m$ nearly regular expanders.

\begin{lemma} \label{lem:leftexpander}
	Let $G' \sim \mathcal{G}'$. If $m\ge 500$ and $d \ge 600 \log n$ then,  with probability at least $1-1/n$, for every $i \in [n/m]$ one has that the subgraph $G'[S_i]$ is a $(1 \pm 1/10)2d$-regular $1/3$-expander
\end{lemma}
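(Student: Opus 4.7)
The plan is to fix $i \in [n/m]$, show that both properties of $G'[S_i]$ hold with probability at least $1 - O(1/n^2)$, and then union-bound over the $n/m$ indices. Throughout, $G'[S_i]$ is distributed as $\er(m/2, p)$ with $p = 4d/m$, so the potential edges are independent Bernoullis.

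For near-regularity, the degree of any fixed $v \in S_i$ is $\mathrm{Bin}(m/2-1, p)$ with mean $\mu = 2d - 4d/m$, which for $m \ge 500$ lies within a $(1 \pm 1/250)$ factor of $2d$. In particular, $\deg(v) \notin [1.8d, 2.2d]$ forces $|\deg(v) - \mu| \ge \mu/20$, and a multiplicative Chernoff bound gives probability at most $2\exp(-\Omega(d))$, which is at most $n^{-3}$ once $d \ge 600 \log n$ with the implicit constant chosen large enough. A union bound over all $n/2$ vertices then bounds the regularity failure probability by $O(1/n^2)$.

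For $1/3$-expansion, I will condition on regularity and use a clean reformulation. Write $X = |E(T, S_i \setminus T)|$ and $Y = |E(T, T)|$, so that $\vol(T) = 2Y + X$. The conductance inequality $X/\vol(T) \ge 1/3$ then becomes $X \ge Y$. Moreover, regularity combined with $\vol(T) \le \vol(S_i \setminus T)$ forces $1.8 d |T| \le \vol(S_i)/2 \le 0.55\, dm$, hence $|T| \le 11m/36$. So it suffices to show that with high probability, every nonempty $T \subseteq S_i$ of size $t \le 11m/36$ satisfies $X \ge Y$. For such $T$, a direct computation gives $\ex[Y]/\ex[X] = (t-1)/(2(m/2 - t)) \le 11/14$, so $\ex[X - Y] \ge (3/14)\ex[X] \ge dt/6$. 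The random variable $X - Y$ is a sum of at most $tm/2$ independent terms valued in $\{-1, 0, 1\}$ (one per potential edge of $\binom{T}{2} \cup T \times (S_i \setminus T)$), with total variance bounded by $(tm/2) p = 2dt$. Bernstein's inequality applied with deviation $dt/6$ yields $\Pr[X - Y \le 0] \le \exp(-\Omega(dt))$. Union-bounding over the $\binom{m/2}{t} \le (em/t)^t \le \exp(2t \log n)$ cuts of size $t$, and summing over $t \in \{1, \dots, 11m/36\}$, the exponential decay in $d \ge 600 \log n$ dominates the combinatorial factor and gives per-$i$ failure probability $n^{-\Omega(1)}$. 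A final union bound over $i \in [n/m]$ closes the argument.

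The main obstacle is the very small in-expectation slack for $1/3$-expansion near the boundary $|T| \approx 11m/36$, where $\ex[X]/\vol(T)$ is only marginally above $1/3$. A direct multiplicative Chernoff on $X$ alone would require unrealistic concentration precision, and additive Hoeffding on $X - Y$ is too weak because its variance would effectively behave as $\Theta(tm)$ rather than $\Theta(dt)$. The rewriting $\Phi(T) \ge 1/3 \iff X \ge Y$ turns the question into comparing two independent binomials whose means differ by a constant fraction, after which Bernstein on the $\{-1,0,1\}$-valued sum $X - Y$ produces a tail $\exp(-\Omega(dt))$ that is large enough to absorb the $(em/t)^t$ union bound once $d = \Omega(\log n)$ with a sufficiently large constant.
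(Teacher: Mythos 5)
Your argument is correct in substance but takes a genuinely different route from the paper. The paper proves a standalone fact about Erd\H{o}s--R\'enyi graphs (Fact A.2) spectrally: a matrix Bernstein bound gives $\nm{L-pK}<pN/10$, a scalar Chernoff bound gives near-regularity, Weyl's inequality then yields $\lambda_2 \ge 9/10$ for the normalized Laplacian, and a Courant--Fischer test-vector computation converts the spectral gap into conductance at least $1/3$; Lemma 6.2 is then a union bound over the $n/m$ blocks using $d \ge 600\log n$. You instead argue combinatorially, cut by cut: the rewriting $\Phi(T)\ge 1/3 \iff X \ge Y$ (with $X$ the crossing edges and $Y$ the internal edges of $T$), the regularity-forced restriction $|T|\le 11m/36$, and a scalar Bernstein bound on the $\{-1,0,1\}$-valued sum $X-Y$ with variance $O(dt)$ and mean gap $\ge dt/6$, which beats the $\binom{m/2}{t}\le e^{O(t\log n)}$ union bound precisely because $d=\Omega(\log n)$. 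Your route is more elementary (no matrix concentration or spectral graph theory) and makes explicit where the $d \gtrsim \log n$ hypothesis is used; the paper's route packages the cut union bound into a single spectral statement that is independent of $n$ and reusable, at the cost of heavier machinery. Your numerology for the expansion part checks out: $\ex[Y]/\ex[X]\le 11/14$ for $t\le 11m/36$, $\ex[X]\ge 7dt/9$, variance $\le 2dt$, giving roughly $\exp(-dt/150)$ per cut, which absorbs the union bound and, after summing over $t$ and over $i\in[n/m]$, stays well below $1/n$.

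Two small repairs are needed, neither affecting the approach. First, in the regularity step the deviation $\mu/20$ is too weak: with $d\ge 600\log n$ it only gives a per-vertex failure of order $n^{-1}$, and you cannot ``choose the implicit constant larger'' since $600$ and the $\pm 1/10$ window are fixed. Use the actual slack instead: $\mu = 2d(1-2/m)\ge 1.99d$ for $m\ge 500$, so leaving $[1.8d,2.2d]$ forces $|\deg(v)-\mu|\ge 0.19d \ge 0.09\mu$, and Chernoff then gives per-vertex failure $2\exp(-\Omega(d)) \le n^{-3}$, which survives the union over all $n/2$ vertices. Second, avoid literally ``conditioning on regularity'' (which perturbs the edge distribution): phrase it as the inclusion of the bad event in $\{\text{regularity fails}\} \cup \{\exists\, T,\ 1\le|T|\le 11m/36,\ X<Y\}$, each of which you bound unconditionally -- which is exactly what your Bernstein computation already does.
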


\begin{proof}
	The statement follows from the following fact about Erd\H{o}s-R\'{e}nyi graphs, whose proof is deferred to \cref{apndx:randomgraphs}.
	\begin{restatable}{fact}{erdos}
		\label{prop:er}
		Let $N \ge 10$ be an integer and $p \in [0,1]$. Then, letting $\bar{d} = p(N-1)$ one has
		\begin{equation*}
			\Pr_{H=(U,F) \sim \er(N,p)}\left[\left(\Phi_H < \frac{1}{3} \right)\text{ or } \left(\exists u \in U \text{ s.t. } \left|\deg_H(u)-\bar{d}\right|>\frac{1}{11} \bar{d}\right)\right] \le 4N \cdot \exp\left(-\frac{pN}{600}\right) \, .
		\end{equation*}
	\end{restatable}
	
	\noindent
	In our case, each subgraph $G'[S_i]$ is drawn from $\er({m}/{2}, 4d/m)$, and note that $m/2 \ge 250 \ge 10$, thus meeting the precondition of \cref{prop:er}. Hence, for any $i \in [n/m]$ \cref{prop:er} gives
	\begin{equation*}
		\Pr\left[\left(\Phi_{G'[S_i]} < \frac{1}{3}\right) \text{ or } \left(\exists u \in S_i \text{ s.t. } \left|\deg_{G'[S_i]}(u)-2d\right|>\frac{1}{10}2d\right)\right] \le 2m \cdot \exp\left(-\frac{d}{300}\right) \, ,
	\end{equation*}
	where we used $m \ge 500$ to express the degree bounds in terms of $2d$ instead of $\bar{d}$.
	Therefore, taking a union bound gives
	\begin{equation*}
		\Pr\left[\forall \, i \in [n/m]\, , \left( \Phi_{G'[S_i]} \ge \frac{1}{3} \text{ and } \forall u \in S_i\, , \left|\deg_{G'[S_i]}(u)-2d\right|\le\frac{1}{10} 2d\right)\right] \ge1- \frac{1}{n} \, ,
	\end{equation*}
	where we used $d \ge 600 \log n \ge 600 \ln n$.
\end{proof}
\noindent
By virtue of the above, most of Alice's inputs  can be embedded into a sample from $\mathcal{G}$ where a large fraction of important edges are inter-cluster edges of any ED, as per~\cref{lem:special-edge}. This will allow Alice and Bob to solve $\recover_{\epsilon m}$.

\begin{reduct}
	\label{red:recover}
	Let $\mathcal{A}$ be a deterministic streaming algorithm for computing a $2$-level $(\epsilon,\phi)$-RED of a graph given in a stream of edges. For a graph $G'=(S,E') \sim \mathcal{G}'$, we reduce $\recover_{\epsilon m}$ to computing a $2$-level $(\epsilon,\phi)$-RED as follows. Alice instantiates $\mathcal{A}$ for the vertex set $V$, and feeds her edges $E'$ and the fixed edges of $G[T]$ to $\mathcal{A}$. Then, she sends the memory state $\Pi$ of $\mathcal{A}$ to Bob. Then, Bob instantiates $m/2$ copies of $\mathcal{A}$, call them $\mathcal{A}_1, \dots, \mathcal{A}_{m/2}$, and sets the memory state of them to~$\Pi$. Next, for each $k \in [m/2]$ let $G_k = (V,E_k)$ be the graph defined by $(G',k)$, and note that Bob knows the edges between $S$ and $T$ (see \cref{def:harddistr}). Then, Bob feeds the edges $E_k(S,T)$ to~$\mathcal{A}_k$. For each $k \in [m/2]$, let $\mathcal{U}^k_1,\mathcal{U}^k_2$ be the output of $\mathcal{A}_k$. Bob finally constructs his output set $F$ as follows: for each $k \in [m/2]$, if the number of non-isolated vertices in $\mathcal{U}^k_2$ is at most $3 \epsilon d n$, add the pair $\{s,s_{i,k}\}$ to $F$ for every $i \in [n/m]$ and every $s \in S_i$ that is not an isolated vertex in $\mathcal{U}^k_2$. More formally, for a partition $\mathcal{U}$ of $V$, let $V \setminus \mathcal{U} = \{u \in V: \, \{u\} \notin \mathcal{U}\}$ be the set of non-isolated vertices in $\mathcal{U}$. With this notation, Bob outputs
	\begin{equation}
		\label{eq:boboutput}
		F = \bigcup_{k\in [m/2] : |V \setminus \mathcal{U}_2^k| \le 3\epsilon dn} F_k \, , \quad \quad \text{with } F_k=\left\{ \{s,s_{i,k}\} : i \in [n/m],\, s \in S_i \cap (V \setminus \mathcal{U}_2^k) \right\} \, .
	\end{equation} 
\end{reduct}
\noindent
We remark that if Bob learns the important edges $E^*_k$ for all $k \in [m/2]$, then he learns all of $E'$.
\begin{remark}
	\label{rmk:unionimportant}
	In the setting of~\cref{red:recover}, we have $E' = \bigcup_{k \in [m/2]} E^*_k$ and $|E'| = \frac{1}{2} \sum_{k \in [m/2]} |E^*_k|$.
\end{remark}
\noindent
As suggested by \cref{rmk:unionimportant}, it is enough to ensure that each instance $\mathcal{A}_k$ recovers the important edges of $G_k$ in order for Bob to learn the edges of $G'$. We in particular restrict ourselves to indices~$k$ for which the algorithm $\mathcal{A}_k$, the graph $G_k$, and the RED $\mathcal{U}_1^k,\mathcal{U}_2^k$ are well behaved, as defined below.

\begin{definition}
	\label{def:pk}
	In the setting of~\cref{red:recover}, for every $k \in [m/2]$ let $\mathcal{P}_k$ be the property that
	\begin{equation}
		\label{eq:propertypk}
		\text{$\mathcal{U}_1^k,\mathcal{U}_2^k$ is an $(\epsilon,\phi,2)$-RED of  $G_k$}\, ,\quad \left| E^*_k \setminus \mathcal{U}^k_1\right| \ge \frac{4}{5} \cdot \left|E^*_k\right| \, , \quad |E_k^*| \ge \frac{9}{5}\frac{dn}{m}\,, \quad |E_k|\le 1.3 dn\, .
	\end{equation}
\end{definition}
\noindent
We first show a preliminary result: if there are many $k \in [m/2]$ that satisfy $\mathcal{P}_k$, then Bob's output set contains many edges of $G'$.
\begin{lemma}
	\label{lem:almostthere}
	In the setting of~\cref{red:recover}, if $\epsilon \in (0,1)$ satisfies \smash{$\epsilon \le \frac{1}{30\sqrt{m}}$}, then
	\begin{equation*}
			|E' \cap F|  \ge \frac{49}{100} \frac{dn}{m}  |\{k\in [m/2] : \mathcal{P}_k\}| \, .
	\end{equation*}
\end{lemma}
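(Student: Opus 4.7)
The plan is to fix any $k$ with $\mathcal{P}_k$ holding, lower-bound the contribution $|E_k^*\cap F_k|$, and then combine across such $k$ using the fact that the $F_k$'s overlap by a factor of at most two on $E'$.

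As a preliminary step, I will verify that every $k$ with $\mathcal{P}_k$ satisfies the eligibility condition $|V\setminus\mathcal{U}_2^k|\le 3\epsilon dn$, so that $F_k\subseteq F$. Every non-singleton cluster $U\in\mathcal{U}_2^k$ induces a $\phi$-expander in $G_2^R$, hence is connected with at least $|U|-1$ intra-cluster edges in $G_2^R$; summing over non-singleton clusters and using $|U|\le 2(|U|-1)$ for $|U|\ge 2$ yields $|V\setminus\mathcal{U}_2^k|\le 2|E_k\setminus\mathcal{U}_1^k|\le 2\epsilon|E_k|\le 2.6\,\epsilon dn$, where I used the $(\epsilon,\phi)$-ED property of $\mathcal{U}_1^k$ on $G_k$ and the bound $|E_k|\le 1.3\,dn$ from $\mathcal{P}_k$.

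Next, I will lower-bound $|E_k^*\cap F_k|$ by observing that every important edge lying inside some cluster of $\mathcal{U}_2^k$ automatically belongs to $F_k$: both endpoints share a cluster of size $\ge 2$, so the non-important endpoint $s\in S_i$ satisfies $s\in V\setminus\mathcal{U}_2^k$ and the pair $\{s,s_{i,k}\}$ is added. By $\mathcal{P}_k$, at most $|E_k^*|/5$ important edges lie inside clusters of $\mathcal{U}_1^k$, and by the $(\epsilon,\phi)$-ED property of $\mathcal{U}_2^k$ on $G_2^R$, at most $\epsilon|E_k\setminus\mathcal{U}_1^k|\le \epsilon^2|E_k|\le 1.3\,\epsilon^2 dn$ important edges cross at the second level. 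Subtracting both counts from $|E_k^*|\ge \tfrac{9}{5}\cdot\tfrac{dn}{m}$, and using $\epsilon^2 m\le 1/900$ (which follows from the hypothesis $\epsilon\le 1/(30\sqrt m)$), gives
\begin{equation*}
|E_k^*\cap F_k|\ \ge\ \tfrac{4}{5}|E_k^*| - 1.3\,\epsilon^2 dn\ \ge\ \left(\tfrac{36}{25}-\tfrac{1.3}{900}\right)\cdot\tfrac{dn}{m}\ \ge\ \tfrac{143}{100}\cdot\tfrac{dn}{m}.
\end{equation*}

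Finally, I will combine the contributions. By the construction in \cref{def:harddistr}, each vertex of $S$ is the important vertex for exactly one value of $k\in[m/2]$, so any edge $e\in E'$ can belong to $F_k$ for at most two values of $k$, one per endpoint. Moreover $F_k\cap E'=F_k\cap E_k^*$ since a pair $\{s,s_{i,k}\}\in F_k$ is in $E'$ only if it is an edge incident on the important vertex $s_{i,k}$. Combining these observations,
\begin{equation*}
2\,|E'\cap F|\ \ge\ \sum_{k\,:\,\mathcal{P}_k}|E_k^*\cap F_k|\ \ge\ \tfrac{143}{100}\cdot\tfrac{dn}{m}\cdot|\{k\in[m/2]:\mathcal{P}_k\}|,
\end{equation*}
so $|E'\cap F|\ge \tfrac{49}{100}\cdot\tfrac{dn}{m}\cdot|\{k:\mathcal{P}_k\}|$ as claimed. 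The only subtle point is the two-to-one overlap between the $F_k$'s on $E'$; the slack that makes the argument go through comes from $\tfrac{4}{5}\cdot\tfrac{9}{5}=\tfrac{36}{25}\approx 1.44$, which after halving is still comfortably above $49/100$.
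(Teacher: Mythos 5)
Your proof is correct and follows essentially the same route as the paper's: you establish the eligibility bound $|V\setminus\mathcal{U}_2^k|\le 3\epsilon dn$ and the per-$k$ lower bound on $|E_k^*\cap F_k|$ exactly as in \cref{claim:pkrecover} (discarding important edges inside first-level clusters and those crossing at the second level), and then combine over $k$ with the same factor-two overlap accounting that the paper encapsulates in \cref{rmk:unionimportant}. The only deviations are cosmetic — counting via spanning trees of non-singleton clusters instead of per-vertex degrees, and carrying the $\tfrac{36}{25}$ slack rather than rounding it down early — so the argument matches the paper's proof in substance.
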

\begin{proof}
Recall from \cref{red:recover} that $F$ is made by the union of $F_k$ over 	$k \in [m/2]$. We then first give a guarantee on the $F_k$'s individually.
	\begin{claim}
		\label{claim:pkrecover}
		Let $k \in [m/2]$ such that $\mathcal{P}_k$ holds. Then
		\begin{enumerate}
			\item \label{pk:1} $|V \setminus \mathcal{U}_2^k| \le 3 \epsilon dn$, and
			\item \label{pk:2} $|F_k  \cap (E^*_k \setminus \mathcal{U}^k_1) | \ge \frac{dn}{m}-\epsilon^2 \cdot 1.3 dn$.
		\end{enumerate}
	\end{claim}
	\begin{proof}[Proof of~\eqref{pk:1}]
		Since $\mathcal{P}_k$ is verified, $\mathcal{U}_1^k,\mathcal{U}_2^k$ is an $(\epsilon,\phi,2)$-RED of  $G_k$ (see \cref{def:pk}). By definition of RED (see \cref{def:expdecrec}), we know $|E_k \setminus \mathcal{U}^k_1| \le \epsilon |E_k|$. Moreover, one can note that the number of non-isolated vertices in $ \mathcal{U}_2^k$ is at most twice the number of edges in $E_k \setminus  \mathcal{U}_1^k$. This is because every non isolated vertex must have at least one edge incident on it in $E_k \setminus  \mathcal{U}_1^k$ (which is the input to the second level ED), as otherwise property~\eqref{property:expanderclassic} of \cref{def:expdecclassic} would be violated for any $\phi > 0 $. Thus,
		\begin{equation*}
			|V \setminus \mathcal{U}_2^k| \le 2|E_k \setminus \mathcal{U}^k_1| \le 2\epsilon |E_k|  \le 3 \epsilon dn\, ,
		\end{equation*}
		where the last inequality follows because $|E_k|\le 1.3 dn$ (see \cref{def:pk}).
	\end{proof}
	\begin{proof}[Proof of~\eqref{pk:2}]
		From part~\eqref{pk:1}, together with the assumptions that $\mathcal{U}_1^k,\mathcal{U}_2^k$ is an $(\epsilon,\phi,2)$-RED of $G_k$ and $|E_k|\le 1.3 dn$ (see \cref{def:pk}), we have
		\begin{equation*}
			\left|V \setminus \mathcal{U}^k_2 \right|   \le 3\epsilon dn\quad \text{and} \quad \left|E_k \setminus \mathcal{U}_2^k \right| = \left|\left(E_k \setminus \mathcal{U}_1^k \right) \setminus \mathcal{U}_2^k\right| \le \epsilon \left|E_k \setminus \mathcal{U}_1^k \right| \le \epsilon^2 \cdot |E_k| \le \epsilon^2 1.3 dn\, .
		\end{equation*}
		We want to argue that many of the important edges $E^*_k$ of $G_k$ are present in the set of pairs $F_k$ (see \cref{red:recover}). To do this, we make the following observation for any edge $\{s,s'\} \in E^*_k \setminus \mathcal{U}^k_1 $: if none of $s,s'$ is an isolated vertex in $\mathcal{U}_2^k$, i.e. $\{s\} \notin \mathcal{U}_2^k$ and $\{s'\} \notin \mathcal{U}_2^k$, then $\{s,s'\} \in F_k$; otherwise, if $\{s\} \in \mathcal{U}_2^k$ or $\{s'\} \in \mathcal{U}_2^k$, then $\{s,s'\} \in(E_k \setminus \mathcal{U}_1^k) \setminus \mathcal{U}_2^k$ (and this is because  $\mathcal{U}_2^k$ is an ED of the graph $(V,E_k \setminus \mathcal{U}_1^k)$ and $\{s,s'\} \in E_k \setminus \mathcal{U}_1^k$ by assumption). Therefore,
		\begin{align}
			\label{eq:boundfk}
			|F_k \cap (E^*_k \setminus \mathcal{U}^k_1)| \ge |E^*_k \setminus \mathcal{U}^k_1| - |(E_k \setminus \mathcal{U}_1^k) \setminus \mathcal{U}_2^k| \, .
		\end{align}
		To conclude the claim, we use again the property $\mathcal{P}_k$: since $\mathcal{U}_1^k,\mathcal{U}_2^k$ is an $(\epsilon,\phi,2)$-RED of  $G_k$, we have
		\begin{equation*}
			\left|(E_k \setminus \mathcal{U}_1^k) \setminus \mathcal{U}_2^k \right| \le \epsilon |E_k \setminus \mathcal{U}_1^k| \le \epsilon^2 \cdot |E_k| \, .
		\end{equation*}
		Using again $\mathcal{P}_k$, we know $| E^*_k \setminus \mathcal{U}^k_1| \ge \frac{4}{5} \cdot |E^*_k|$,  $|E_k^*| \ge \frac{9}{5}{dn}/{m}$,  and $|E_k| \le  1.3 dn$, so from~\eqref{eq:boundfk} we conclude
		\begin{equation}
			|F_k \cap (E^*_k \setminus \mathcal{U}^k_1)| \ge |E^*_k \setminus \mathcal{U}^k_1| - |(E_k \setminus \mathcal{U}_1^k) \setminus \mathcal{U}_2^k| \ge  \frac{4}{5} \cdot |E^*_k|- \epsilon^2 1.3 dn \ge \frac{dn}{m}- \epsilon^2 1.3 dn\, .
		\end{equation}
	\end{proof}
	
	\noindent
	From the definition of Bob's output $F$ (see \cref{red:recover}) and by \cref{rmk:unionimportant}, one has
	\begin{align*}
		|E' \cap F|  = \left| \left(\bigcup_{k \in [m/2]} E_k^*\right) \cap  \left(\bigcup_{k \in [m/2] : |V \setminus \mathcal{U}_2^k| \le 3\epsilon dn} F_k \right) \right|
		& \ge \left| \bigcup_{k\in [m/2] : |V \setminus \mathcal{U}_2^k| \le 3\epsilon dn}  \left( E_k^* \cap  F_k \right) \right| \\
		\text{by \cref{rmk:unionimportant}} \quad & \ge \frac{1}{2} \sum_{k\in [m/2] : |V \setminus \mathcal{U}_2^k| \le 3\epsilon dn}  \left| E_k^* \cap  F_k \right| \, ,
	\end{align*}
	and we finish using \cref{claim:pkrecover}:
	\begin{align*}
		|E' \cap F|  & \ge \frac{1}{2} \sum_{k\in [m/2] : |V \setminus \mathcal{U}_2^k| \le 3\epsilon dn}  \left| E_k^* \cap  F_k \right|  \\
		\text{by part~\eqref{pk:1} of \cref{claim:pkrecover}} \quad & \ge  \frac{1}{2} \sum_{k\in [m/2] : \mathcal{P}_k  \text{ holds}}  \left| E_k^* \cap  F_k \right| \, ,
	\end{align*}
	and then
		\begin{align*}
		|E' \cap F|  \ge  \frac{1}{2} \sum_{k\in [m/2] : \mathcal{P}_k  \text{ holds}}  \left| E_k^* \cap  F_k \right|  \ge  \frac{1}{2} \sum_{k\in [m/2] : \mathcal{P}_k \text{ holds}}  \left| (E^*_k \setminus \mathcal{U}^k_1) \cap F_k\right| \, , 
	\end{align*}
	so by part~\eqref{pk:2} of \cref{claim:pkrecover} we have
	\begin{align*}
		|E' \cap F|  \ge \frac{1}{2} \sum_{k\in [m/2] : \mathcal{P}_k  \text{ holds}} \left(\frac{dn}{m} -\epsilon^2 \cdot 1.3 dn\right)   \ge \frac{49}{100} \frac{dn}{m}  |\{k\in [m/2] : \mathcal{P}_k\}| \, ,
	\end{align*}
	where we used $\epsilon \le \frac{1}{30\sqrt{m}}$ in the last inequality.
\end{proof}
\noindent
Finally, we prove that an efficient RED algorithm gives an efficient $\recover$ protocol, using the lemma we just showed.
\begin{lemma} \label{lem:ind-recover}
	Let $\epsilon,\phi \in (0,1)$ such that \smash{$\epsilon \le \min\{ 10^{-5}\psi \frac{d}{m}, \frac{1}{30\sqrt{m}}\}$} and \smash{$\phi \ge 11/m$}. If $n \ge 100 m$, $m\ge 500$, $d \ge 600 \log n$, and there is a deterministic $L$-bit space streaming algorithm $\mathcal{A}$ that computes an $(\epsilon,\phi,2)$-RED with probability at least $9/10$ over inputs $G  \sim \mathcal{G}$, then there is a deterministic protocol that solves $\recover_{\epsilon m}$ with probability at least $3/5$ over inputs $G' \sim \mathcal{G}'$ with communication complexity at most $L$.
\end{lemma}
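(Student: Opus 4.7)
The plan is to use the protocol from \cref{red:recover} verbatim: Alice feeds her edges $E'$ together with the deterministic edges of $G[T]$ into $\mathcal{A}$ and sends the $L$-bit memory state to Bob, who initializes $m/2$ copies $\mathcal{A}_1,\dots,\mathcal{A}_{m/2}$ from this state and completes each by feeding the deterministic edges $E_k(S,T)$. Bob outputs the set $F$ from~\eqref{eq:boboutput}. The communication is trivially $L$ bits, so what remains is to verify the two conditions of \cref{def:recover} with $\xi = \epsilon m$ with probability at least $3/5$ over $G' \sim \mathcal{G}'$.

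The main technical step is to show that $\mathcal{P}_k$ holds for many $k \in [m/2]$. I would introduce $X(G') = |\{k : \mathcal{A}_k \text{ does not output a valid } (\epsilon,\phi,2)\text{-RED of } G_k\}|$, and use \cref{rmk:harddistrib} together with the independence of $G'$ and $K$ under $\mathcal{G}$ to obtain
\[ \ex_{G' \sim \mathcal{G}'}[X(G')] \;=\; \frac{m}{2}\cdot\Pr_{G\sim\mathcal{G}}[\mathcal{A}(G)\text{ fails}] \;\le\; \frac{m}{20}. \]
Setting $T = 19m/49$, Markov gives $\Pr[X(G') > T] \le 49/380$. Let $\mathcal{E}$ be the event that every $G'[S_i]$ is $(1\pm 1/10)2d$-regular and a $1/3$-expander; \cref{lem:leftexpander} (whose preconditions $m \ge 500$ and $d \ge 600\log n$ are hypotheses) gives $\Pr[\mathcal{E}] \ge 1-1/n$. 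A union bound yields $\Pr[X(G') \le T \text{ and } \mathcal{E}] \ge 1 - 49/380 - 1/n \ge 3/5$.

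On this joint event, I would argue that $\mathcal{P}_k$ holds for every $k$ such that $\mathcal{A}_k$ succeeds: property~1 of \cref{def:pk} holds by definition; property~2 follows from \cref{lem:special-edge} (its preconditions $\epsilon \le 10^{-5}\psi d/m$, $\phi \ge 11/m$, $n \ge 100m$, $m \ge 500$ are hypotheses, and the structural requirement on $G[S_i]$ is exactly $\mathcal{E}$); property~3 follows from near-regularity, as $|E_k^*| = \sum_i \deg_{G'[S_i]}(s_{i,k}) \ge (n/m)\cdot(9/5)d = (9/5)dn/m$; and property~4 follows from the same near-regularity together with the fixed structure of $G[T]$ and $E_k(S,T)$, giving $|E_k| \le (11/20)dn + dn/4 + dn/2 = 1.3 dn$. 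Hence $|\{k : \mathcal{P}_k\}| \ge m/2 - T = 11m/98$, and \cref{lem:almostthere} (whose precondition $\epsilon \le 1/(30\sqrt{m})$ is explicit in the hypotheses) gives
\[ |E' \cap F| \;\ge\; \frac{49}{100}\cdot\frac{dn}{m}\cdot\frac{11m}{98} \;=\; \frac{11\,dn}{200} \;\ge\; \frac{|E'|}{10}, \]
where the last inequality uses $|E'| \le (11/20)dn$ on $\mathcal{E}$, confirming property~\eqref{recover:flearns} of \cref{def:recover}.

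The remaining task is the size constraint~\eqref{recover:fsmall}: each $F_k$ included in $F$ has $|F_k| \le |S \cap (V \setminus \mathcal{U}_2^k)| \le |V \setminus \mathcal{U}_2^k| \le 3\epsilon dn$ by the inclusion criterion in~\eqref{eq:boboutput}, hence $|F| \le (m/2)\cdot 3\epsilon dn = (3/2)\epsilon dnm$, which is at most $6\epsilon m |E'|$ since $|E'| \ge (9/20)dn$ on $\mathcal{E}$. The only delicate point in this proof is the choice of Markov threshold $T$: the two bounds meet almost exactly, so $T = 19m/49$ is close to optimal, but the slack from using the upper bound on $|E'|$ in the $|E'|/10$ target leaves the argument comfortable. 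Combining everything, the protocol solves $\recover_{\epsilon m}$ with probability at least $3/5$ and communication $L$.
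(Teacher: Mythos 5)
Your proposal is correct and follows essentially the same route as the paper's proof: the same reduction (\cref{red:recover}), the same expectation-plus-Markov argument over the $m/2$ simulated runs combined with \cref{lem:leftexpander}, and the same use of \cref{lem:special-edge} and \cref{lem:almostthere} to verify the two properties of \cref{def:recover}, differing only in the choice of constants (the paper takes a $3/10$ Markov parameter yielding $\ge m/3$ good indices, you take $T=19m/49$ yielding $\ge 11m/98$, which still suffices since $\frac{49}{100}\cdot\frac{11}{98}=\frac{11}{200}\ge\frac{1}{10}\cdot\frac{11}{20}$). One small remark: the Markov step itself has ample slack (you only need failure probability below $2/5-1/n$), so the only genuinely tight comparison is matching $11m/98$ good indices against the worst-case bound $|E'|\le\frac{11}{20}dn$, exactly as you note.
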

\begin{proof}
	The protocol is given by \cref{red:recover}.  We observe that the communication complexity bound immediately follows, since $\mathcal{A}$ has space complexity $L$ by assumption, and Alice only sends to Bob the state of the algorithm.
	
	We now focus on proving correctness of the protocol with constant probability. We do this by first proving that $F$ satisfies each of the two properties of \cref{def:recover} provided that some conditions hold, and then verify that such conditions occur with constant probability.
	
	\begin{claim}[Property~\eqref{recover:fsmall} of \cref{def:recover}]
		\label{claim:deterministic1}
		If $\left|\left\{k \in [m/2]: \, \mathcal{P}_k\right\} \right| \ge\frac{m}{3}$, then $|F| \le 6 \cdot \epsilon m |E'| $.
	\end{claim}

	\begin{claim}[Property~\eqref{recover:flearns} of \cref{def:recover}]
		\label{claim:deterministic2}
		If $\left|\left\{k \in [m/2]: \, \mathcal{P}_k\right\} \right| \ge\frac{m}{3}$, then $|F \cap E'| \ge 1/10 \cdot |E'|$.
	\end{claim}
	
	\begin{claim}
		\label{claim:manypks}
		One has $
			\Pr_{G' \sim \mathcal{G}'}\left[\left|\left\{k \in [m/2]: \, \mathcal{P}_k\right\} \right| \ge {m}/{3}\right] \ge {3}/{5}$.
	\end{claim}
	\noindent
	By \cref{claim:deterministic1} and \cref{claim:deterministic2}, the protocol given by \cref{red:recover} succeeds provided that the input $G'$ satisfies $\left|\left\{k \in [m/2]: \, \mathcal{P}_k\right\} \right| \ge\frac{m}{3}$. By \cref{claim:manypks}, the latter condition is verified with probability at least $3/5$. The lemma statement is then proven, modulo showing \cref{claim:deterministic1}, \cref{claim:deterministic1}, and \cref{claim:manypks}, which we do next.
	
		\begin{proof}[Proof of \cref{claim:deterministic1}]
		As per \cref{red:recover}, $F$ is the union of $F_k=\{ \{s,s_{i,k}\} : i \in [n/m],\, s \in S_i \cap (V \setminus \mathcal{U}_2^k) \}$ over $k$ such that $|V \setminus \mathcal{U}_2^k| \le 3\epsilon dn$. Note that $|\{ \{s,s_{i,k}\} : i \in [n/m],\, s \in S_i \cap (V \setminus \mathcal{U}_2^k) \}|$ is in fact upper-bounded by $|V \setminus \mathcal{U}_2^k|$. One then gets that $|F| \le 3\epsilon dn \frac{m}{2}$. In order to relate this bound to $|E'|$, we use our assumption: for at least $m/3$ values of $k \in [m/2]$, $|E^*_k| \ge 9/5 \cdot dn/m$. By \cref{rmk:unionimportant}, this means $|E'| \ge 1/2 \cdot m/3 \cdot (9/5 \cdot dn/m) = 3/10 \cdot dn$, so one concludes $|F| \le 6 \epsilon |E'| m$.
	\end{proof}
	
		\begin{proof}[Proof of \cref{claim:deterministic2}]
		We can use \cref{lem:almostthere}, since our bound on $\epsilon$ ensures the precondition of the lemma. We then have $|E' \cap F|  \ge \frac{49}{100} \frac{dn}{m}  |\{k\in [m/2] : \mathcal{P}_k\}| $.  Now, we conclude the claim by virtue of our assumption:
		\begin{equation*}
			|E' \cap F| \ge \frac{49}{100} \frac{dn}{m}  |\{k\in [m/2] : \mathcal{P}_k\}|  \ge \frac{m}{3} \frac{49}{100} \frac{dn}{m} \ge \frac{1}{10} |E'| \, ,
		\end{equation*}
		where in the last inequality we used that our assumption further implies $|E'| \le |E_k| \le 1.3dn$ for some $k \in [m/2]$.
	\end{proof}

	\begin{proof}[Proof of \cref{claim:manypks}]
		We recall that for $G' \sim \mathcal{G}'$ and $K \sim \unif([m/2])$ one has $(G',K) \sim \mathcal{G}$ (see \cref{rmk:harddistrib}). For~$k \in [m/2]$, we then interchangeably write $\mathcal{A}(G)$ and $\mathcal{A}(G',k)$ to refer to an execution of $\mathcal{A}$ on the graph $G$ associated with $(G',k)$. Our assumption on $\mathcal{A}$ rewrites as
		\begin{equation*}
			\Pr_{G' \sim \mathcal{G}', \, K \sim \unif([m/2])}\left[\text{$\mathcal{A}(G',K)$ fails}\right]  = \Pr_{(G',K) \sim \mathcal{G}}\left[\text{$\mathcal{A}(G',K)$ fails}\right] \le \frac{1}{10} \, .
		\end{equation*}
		Now recall \cref{red:recover}: for each $k \in [m/2]$, the result of $\mathcal{A}_k$ computed by Bob is exactly $\mathcal{A}(G',k)$, where $G'$ denotes Alice's input. Therefore, using that $G'$ and $K$ are independent and that $K$ is drawn uniformly, we get
		\begin{equation*}
			\ex_{G' \sim \mathcal{G}'} \left[ \left| \left\{k \in [m/2] : \text{ $\mathcal{A}_k$ fails }\right\}\right| \right] =\ex_{G' \sim \mathcal{G}'} \left[ \left| \left\{k \in [m/2] : \text{ $\mathcal{A}(G',k)$ fails }\right\}\right| \right] \le \frac{1}{10} \cdot \frac{m}{2} \, .
		\end{equation*}
		Hence, if we sample $G' \sim \mathcal{G}'$, only a small fraction of the values of $k$ will make $(G',k)$ a failing instance. Specifically, for any $\alpha \in [1/10, 1]$, Markov's inequality gives
		\begin{equation*}
			\Pr_{G' \sim \mathcal{G}'}\left[\left| \left\{k \in [m/2] : \text{ $\mathcal{A}(G',k)$ fails }\right\}\right| > \frac{1}{\alpha}  \frac{1}{10} \cdot\frac{m}{2}\right] \le \alpha\, .
		\end{equation*}
		In other words, at least a $1-\alpha$ fraction of the probability measure of $\mathcal{G}'$ yields a graph $G'$ such that $(G',k)$ is a succeeding instance for $\mathcal{A}$ with at least a $1-0.1/\alpha$ fraction of the possible values for $k \in [m/2]$. By \cref{lem:leftexpander} (which applies since we assume $m\ge 500$, $d \ge 600 \log n$), we also have that a $1-1/n$ fraction of the measure of $\mathcal{G}'$ yields a graph $G'$ where each component is a $(1 \pm 1/10)2d$-regular $\frac{1}{3}$-expander. This in particular implies that at least a $1-\alpha-1/n$ fraction of the probability measure of $\mathcal{G}'$ yields a graph $G'$ such that:
		\begin{itemize}
			\item for every $k \in [m/2]$, the graph $G_k=(V,E_k)$ given by $(G',k)$ has $|E_k| \le 1.3 dn$, by near-regularity and \cref{def:harddistr};
			\item for every $k \in [m/2]$, the graph $G_k=(V,E_k)$ given by $(G',k)$ has $|E^*_k| \ge \frac{9}{5} \frac{dn}{m}$, by near regularity and \cref{def:harddistr};
			\item with at least a $1-0.1/\alpha$ fraction of the possible values for $k \in [m/2]$ we have the following: $\mathcal{A}_k$ outputs an $(\epsilon,\phi,2)$-RED $\mathcal{U}^k_1,\mathcal{U}^k_2$ of the graph $G=(V,E_k)$ given by $(G',k)$. Moreover, by \cref{lem:special-edge} (which applies, since our assumptions on $n$, $m$, $\epsilon$, $\phi$ meet its preconditions) one has that $\mathcal{U}_1^k$ verifies the property
			\begin{equation*}
				\left| E^*_k \setminus \mathcal{U}^k_1\right| \ge \frac{4}{5} \cdot \left|E^*_k\right|  \, .
			\end{equation*}
		\end{itemize}
		We have then obtained that at least a $1-\alpha-1/n$ fraction of the probability measure of $\mathcal{G}'$ yields a graph $G'$ such that $\mathcal{P}_k$ holds for at least a $1-0.1/\alpha$ fraction of the possible values for $k \in [m/2]$. Setting $\alpha = 3/10$ gives the claim.
	\end{proof}
\end{proof}

\subsection{Hardness of the communication problem}
\label{subsec:hardcomm}
We show that $\recover_{\xi}$ requires linear communication in the number of edges when~$G' \sim \mathcal{G}'$ and~$\xi = \epsilon m$. Intuitively, this is because Bob is only allowed to output an $\approx \epsilon m$ factor more pairs than~$|E'|$, and for an appropriate parameter regime \smash{$\epsilon m | E'| < \frac{1}{1000} \binom{m/2}{2} \frac{n}{m} $} (where \smash{$ \binom{m/2}{2} \frac{n}{m} $} is the total number of vertex pairs in a graph consisting of $n/m$ disjoint subgraphs on $m/2$ vertices).

\begin{lemma} \label{lem:lb-recover}
	Let $\epsilon \in (0,1)$ such that $\epsilon \le {10^{-33}}/{d}$. If $m\ge 500$ and $d \ge 600 \log n$, then any deterministic protocol that solves $\recover_{\epsilon m}$ with probability at least $3/5$ over inputs $G' \sim \mathcal{G}'$ requires~$\Omega(dn)$ bits of communication.
\end{lemma}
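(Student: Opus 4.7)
The plan is to argue directly by a union bound. For any fixed candidate output set $F \subseteq \binom{S}{2}$, I will show that the probability over $G' \sim \mathcal{G}'$ that $F$ is a valid answer to $\recover_{\epsilon m}$ on input $G'$ is at most $2^{-\Omega(dn)}$; summing over the at most $2^L$ distinct outputs a deterministic $L$-bit protocol can produce then yields $L = \Omega(dn)$. Concretely, define
$$B_F = \bigl\{G'=(S,E') : |F| \leq 6\epsilon m |E'| \text{ and } |F\cap E'| \geq |E'|/10\bigr\}.$$
Whenever the protocol succeeds on input $G'$ and outputs $F$, we have $G' \in B_F$, so
$$\tfrac{3}{5} \leq \Pr_{G' \sim \mathcal{G}'}[\text{protocol succeeds}] \leq 2^L \cdot \max_F \Pr_{G' \sim \mathcal{G}'}[G' \in B_F],$$
since the deterministic protocol has at most $2^L$ distinct transcripts. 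It therefore suffices to upper bound $\Pr[G' \in B_F]$ uniformly in $F$ by $2^{-\Omega(dn)}$.

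To do so, I will use two concentration facts. First, $\mathbb{E}_{G' \sim \mathcal{G}'}[|E'|] = (n/m)\binom{m/2}{2}(4d/m) = \Theta(dn)$, and a standard Chernoff bound gives $\Pr\bigl[|E'| \notin [dn/3, dn]\bigr] \leq 2^{-\Omega(dn)}$. Second, case split on $|F|$. If $|F| > 10\epsilon m \cdot dn$, then $G' \in B_F$ forces $|E'| \geq |F|/(6\epsilon m) > dn$, which is a $2^{-\Omega(dn)}$ event. Otherwise $|F| \leq 10\epsilon m \cdot dn$, and restricting to the event $|E'| \geq dn/3$ the condition reads $|F \cap E'| \geq dn/30$. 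Crucially, $|F \cap E'|$ is a sum of independent $\ber(4d/m)$ random variables (one per pair of $F$ lying inside some $S_i$), with mean $\mu \leq |F| \cdot 4d/m \leq 40\epsilon d^2 n$. The multiplicative Chernoff bound then gives
$$\Pr\bigl[|F \cap E'| \geq dn/30\bigr] \leq \left(\frac{e\mu}{dn/30}\right)^{dn/30} \leq (1200 e \epsilon d)^{dn/30},$$
and the assumption $\epsilon \leq 10^{-33}/d$ makes $1200 e \epsilon d$ a tiny absolute constant, so this is at most $2^{-\Omega(dn)}$.

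The main point where care is needed is in the Chernoff calculation for $|F \cap E'|$: the target threshold $dn/30$ sits a factor of roughly $1/(\epsilon d)$ above the mean $40\epsilon d^2 n$, so one must invoke the ``far above the mean'' form of Chernoff and verify that the extremely small base $1200 e \epsilon d$ survives being raised to the power $dn/30$. The concrete choice $\epsilon \leq 10^{-33}/d$ in the hypothesis is exactly what is needed for the resulting exponent to dominate the $2^L$ factor in the union bound and the constant slack from the $3/5$ success probability, so that $L \geq \Omega(dn)$ follows. Nothing else in the argument uses the $\mathcal{G}'$-specific product structure beyond the fact that individual pairs inside the $S_i$'s are present independently, which is what makes Chernoff applicable to $|F \cap E'|$ in the first place.
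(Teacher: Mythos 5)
Your proof is correct, but it takes a genuinely different route from the paper. The paper's argument is information-theoretic: it slices the input distribution by edge count $t$, shows (Claim~\ref{claim:mutualunif}) that for any fixed message the number of \emph{correct} inputs in $\I_t^{(1)}$ consistent with that message is at most $\sum_{j\ge t/10}\binom{6\epsilon m t}{j}\binom{nm/8}{t-j}$, compares this against $|\I_t|\approx (nm/(16t))^t$ to lower bound the conditional entropy of Alice's message by $dn/3$, and then handles the fact that correctness only holds on part of the support via Claim~\ref{claim:conditionlargeprob} (which is where the hypotheses $m\ge 500$ and $d\ge 600\log n$ enter, through \cref{lem:leftexpander}). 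You instead fix any of the at most $2^{L+1}$ possible outputs $F$ of a deterministic $L$-bit protocol (Bob has no input, so $F$ is a function of the message alone), bound $\Pr_{G'\sim\mathcal G'}[G'\in B_F]\le 2^{-\Omega(dn)}$ uniformly in $F$ via Chernoff bounds on $|E'|$ and on $|F\cap E'|$ (using that intra-$S_i$ pairs are independent $\ber(4d/m)$ and inter-$S_i$ pairs are never edges), and conclude by a union bound against the $3/5$ success probability. The combinatorial heart is the same estimate — a set $F$ of size $O(\epsilon m\,|E'|)$ cannot capture a $1/10$ fraction of a typical $E'$ except with probability $2^{-\Omega(dn)}$, which is exactly what drives the paper's binomial counting — but your framing avoids the conditional-entropy bookkeeping and the correctness-conditioning claims entirely, and as a side effect does not actually use $d\ge 600\log n$ (your Chernoff bound on $|E'|$ replaces the near-regularity lemma). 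The numerics also check out: with $|F|\le 10\epsilon m\,dn$ the mean of $|F\cap E'|$ is at most $40\epsilon d^2 n$, so the threshold $dn/30$ exceeds the mean by a factor $1/(1200\epsilon d)\ge 10^{33}/1200$, and $(1200e\epsilon d)^{dn/30}\le 2^{-\Omega(dn)}$; the remaining tails on $|E'|$ (mean $\approx dn/2$) are likewise $2^{-\Omega(dn)}$, so $3/5\le 2^{L+1}\cdot 2^{-c_0 dn}$ forces $L=\Omega(dn)$.
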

\begin{proof}
	Let us fix hereafter $\mathcal{Q}$ to be a deterministic protocol that solves $\recover_{\epsilon m}$ with probability at least $3/5$ over inputs $G' \sim \mathcal{G}'$. We define $\I^{(1)} \subseteq \supp(\mathcal{G}')$  to be the subset of input graphs $G'=(S,E')$ in the support of $\mathcal{G}'$ such that $\mathcal{Q}$ outputs correctly on input $G'$. We also partition $\supp(\mathcal{G}')$ based on the number of edges, so for every integer $t$ we define $\I_t  \subseteq \supp(\mathcal{G}')$ to be the subset of graphs $G'=(S,E') \in \supp(\mathcal{G}')$ with~$|E'| = t$. We also let for convenience \smash{$\I^{(1)}_t =\I^{(1)}  \cap \I_t$} be the subset of correct graphs with $t$ edges. For any $G'=(S,E') \in \supp(\mathcal{G}')$, we denote by $M_{G'} \in \{0,1\}^*$ the message sent by Alice on input~\smash{$G'$} using the protocol $\mathcal{Q}$.
		
	\begin{claim}
			\label{claim:mutualunif}
			If $t$ is an integer such that $t \in [9/10 \cdot dn/2 , \, 11/10 \cdot dn/2]$ and  $|\mathcal{I}^{(1)}_{t}| \ge {1}/{2} \cdot |\mathcal{I}_t| $, then
			\begin{equation*}
				\entropy\left(M_{G'} \, \middle| \, G' \in \I_t^{(1)}\right)  \ge \frac{dn}{3} \, .
			\end{equation*}
	\end{claim}
	\begin{proof}
		For any $G' \in \mathcal{I}^{(1)}_{t}$, following the reception of  $M_{G'}$, Bob deterministically outputs at most  $6\cdot(\eps
		m)\cdot|E'| = 6 \cdot \epsilon m t$ pairs of vertices \smash{$F~\subseteq~\binom{S}{2}$}, such that at least $1/10 \cdot |E'| = t/10$ of them are actually edges of $G'$. Thus, for any~$t$ and any message \smash{$M \in \{0,1\}^{\le nm}$} (where we restrained messages to be at most $nm$-bit long without loss of generality), the number of \smash{$G'\in \mathcal{I}^{(1)}_{t}$} such that \smash{$M_{G'}=M$} is at most
		\begin{align*}
			\left| \left\{G' \in \mathcal{I}^{(1)}_{t}: \, M_{G'} = M\right\}\right| & \le \sum_{j=0.1 t}^{t} \binom{6 \cdot \epsilon m t}{j} \cdot \binom{\frac{n}{m} \binom{m/2}{2}}{t-j} \\ 
			&\le \sum_{j=0.1 t }^{t} \left(\frac{\e \cdot 6 \cdot \epsilon m t}{j} \right)^j \cdot \left(\frac{\e\cdot nm/8}{t-j}\right)^{t-j}\\
			&= \e^{t} \sum_{j=0.1t}^{t}  \left(\frac{60 \cdot \epsilon m t}{ t} \right)^j \cdot \left(\frac{ nm/8}{t}\right)^{t-j} \cdot \left(\frac{t}{t-j}\right)^{t-j} \, .
		\end{align*}
		We employ the following fact to bound the above, whose proof is deferred to \cref{apndx:ineq}.
		\begin{restatable}{fact}{math}\label{prop:math1}
			For any $0 < x \le y$, $(y/x)^x \le \e^{y/\e}$.
		\end{restatable}
		\noindent
		We apply \cref{prop:math1} to the factor $(t/t-j)^{(t-j)}$ with $x = t-j$ and $y=t$, and hence we get
		\begin{align}
			\left| \left\{G' \in\mathcal{I}^{(1)}_{t}: \, M_{G'} = M\right\}\right| & \le \e^{(1+1/\e)t} \sum_{j=0.1t}^{t}  \left(60 \epsilon m \right)^j \cdot \left(\frac{nm/8}{t}\right)^{t-j}\\
			& =  \e^{(1+1/\e)t} \left(\frac{nm}{8t}\right)^{t} \sum_{j=0.1t}^{t}  \left(60\cdot \frac{\epsilon 8t}{n} \right)^j\\
		&\le  \e^{(1+1/\e)t} \left(\frac{nm}{8t}\right)^{t}  \left( {500} \cdot \frac{\epsilon t}{n} \right)^{ t/10} t \, , \label{eq:ubcorrcond}
		\end{align}
		where in the last inequality we used the upper bounds on $\epsilon$ and $t$. On the other hand, by virtue of our assumption that $|\mathcal{I}_{t}^{(1)}| \ge {1}/{2} \cdot |\mathcal{I}_t| $, one has
		\begin{align}
			\left|\mathcal{I}^{(1)}_{t}\right| & \ge \frac{1}{2} \left|\left\{G' = (S,E') \in \supp(\mathcal{G}'): \, |E'|  = t \right\}\right|  \\
			&  \ge \frac{1}{2} \left|\left\{G' = (S,E') \in \supp(\mathcal{G}'): \, \forall \, i \in [n/m] ,\, \left|E' \cap \binom{S_i}{2}\right|  = \frac{t}{n/m} \right\}\right| \\
			& = \frac{1}{2}  \binom{\binom{m/2}{2}}{\frac{t}{n/m}}^{n/m} \ge \frac{1}{2}  \binom{\frac{m^2}{16}}{\frac{t}{n/m}}^{n/m} \ge \frac{1}{2}  \left(\frac{\frac{m^2}{16}}{\frac{t}{n/m}}\right)^{\frac{t}{n/m} \cdot n/m} =  \frac{1}{2} \left(\frac{ nm}{16t}\right)^{t}\, . \label{eq:lbcorr}
		\end{align}
		Next, we rewrite
		\begin{align*}
			\entropy\left(M_{G'} \, \middle| \,  G' \in \I_t^{(1)}\right)   & \ge \info\left(G' ; M_{G'} \, \middle| \, G' \in \I_t^{(1)}\right) \\
			&  =  \entropy\left(G' \, \middle| \, G' \in \I_t^{(1)}\right)  -  \entropy\left(G'  \, \middle| \, M_{G'},  G' \in \I_t^{(1)}\right) \, .
		\end{align*}
		Now, observing that all the graphs with same number of edges have the same probability of being sampled, one can employ the upper and lower bounds derived above:
		\begin{align*}
			 & \hphantom{=} \entropy\left(G' \, \middle| \, G' \in \I_t^{(1)}\right)  -  \entropy\left(G'  \, \middle| \,M_{G'},  { G' \in \I_t^{(1)}}\right)\\
			   & \ge  \log\left(	\left|\mathcal{I}^{(1)}_{t}\right|\right) - \log \left(\max_{M \in \{0,1\}^{\le nm}} \left| \left\{H \in \mathcal{I}^{(1)}_{t}: \, M_H = M_{G'} \right\}\right|\right) \\
			\text{by \eqref{eq:lbcorr} and \eqref{eq:ubcorrcond}} & \ge t \log\left(\frac{ \frac{1}{2} \frac{nm}{16t}}{\e^{(1+1/\e)}\cdot \frac{nm}{8t} \cdot  {(t)}^{1/t}\cdot \left( 500\cdot \frac{\epsilon t}{n} \right)^{1/10} }\right) \, .
		\end{align*}
		Then, we finish using $9/10 \cdot dn/2 \le t \le 11/10 \cdot dn/2$ and $\epsilon \le {10^{-33}}/{d}$:
		\begin{align*}
		\entropy\left(M_{G'} \, \middle| \, G' \in \I_t^{(1)}\right)   \ge t \log\left(\frac{ 1/120}{\left( 500 \cdot \frac{\epsilon 11 dn}{20n} \right)^{1/10} }\right)  \ge t \log 6 \ge \frac{9}{10} \frac{dn}{2}\ge \frac{dn}{3}\, .
		\end{align*}
	\end{proof}
	
	\begin{claim}
		\label{claim:conditionlargeprob}
		One has
		\begin{equation*}
			\Pr_{G'=(S,E') \sim \mathcal{G}'}\left[  |E'| \in \left[\frac{9}{20}dn , \, \frac{11}{20}dn\right]  \, \mathrm{ and } \, \frac{\left|\I_{|E'|}^{(1)}\right|}{\left|\I_{|E'| }\right|} \ge  \frac{1}{2}\right] \ge \frac{1}{10} \, .
		\end{equation*}
	\end{claim}
	\begin{proof}
		We begin by defining for convenience
		\begin{equation*}
			x  = \Pr_{G'=(S,E') \sim \mathcal{G}'}\left[  \frac{\left|\I_{|E'|}^{(1)}\right|}{\left|\I_{|E'| }\right|} < \frac{1}{2}\right]  \, .
		\end{equation*}
		Then, by \cref{lem:leftexpander} (which applies by our assumption on $m$ and $d$), one has
		 \begin{equation}
		 	\label{eq:unionbound}
		 		\Pr_{G'=(S,E') \sim \mathcal{G}'}\left[  |E'| \in \left[\frac{9}{20}dn , \, \frac{11}{20}dn\right]  \, \mathrm{ and } \, \frac{\left|\I_{|E'|}^{(1)}\right|}{\left|\I_{|E'| }\right|} \ge  \frac{1}{2}\right]  \ge 1-\frac{1}{n} - x\, .
		 \end{equation}
		 We are left with the task of upper-bounding the probability $x$. Letting $\T = \{0, \dots, \binom{m/2}{2} \frac{n}{m}\}$ be  the set of possible numbers of edges in $G' \sim \mathcal{G}'$, we have
		 \begin{align*}
		 	& \hphantom{=} \Pr_{G'=(S,E') \sim \mathcal{G}'}\left[\mathcal{Q} \text{ outputs correctly on input } G'\right]\\
		 	& = \Pr_{G'=(S,E') \sim \mathcal{G}'}\left[G' \in \I^{(1)}\right] \\
		 	& =\sum_{t \in \T} \Pr_{G'=(S,E') \sim \mathcal{G}'}[|E'|=t] \Pr_{G'=(S,E') \sim \mathcal{G}'}[G' \in \I^{(1)} \, | \, |E'|=t] \\
		 	& = \sum_{t \in \T}\Pr_{G'=(S,E') \sim \mathcal{G}'}[|E'|=t]  \cdot \frac{\left|\I_{t}^{(1)}\right|}{\left|\I_{t}\right|} \, ,
		\end{align*}
		where the last equality holds because all the graphs with $t$  edges have the same probability of being sampled. Hence, splitting the sum one has
		\begin{align*}
			& \hphantom{=} \Pr_{G'=(S,E') \sim \mathcal{G}'}\left[\mathcal{Q} \text{ outputs correctly on input } G'\right]\\
		 	& \le \sum_{\substack{t \in \T : \\ |\I_t^(1)| < \frac{1}{2}|\I_t^(1)| }}  \Pr_{G'=(S,E') \sim \mathcal{G}'}[|E'|=t] \cdot \frac{1}{2} + \sum_{\substack{t \in \T : \\ |\I_t^(1)| \ge \frac{1}{2}|\I_t^(1)| }} \Pr_{G'=(S,E') \sim \mathcal{G}'}[|E'|=t]\cdot 1\\
		 	& = \Pr_{G'=(S,E') \sim \mathcal{G}'}\left[  \frac{\left|\I_{|E'|}^{(1)}\right|}{\left|\I_{|E'| }\right|} < \frac{1}{2} \right] \cdot \frac{1}{2} + \Pr_{G'=(S,E') \sim \mathcal{G}'}\left[ \frac{\left|\I_{|E'|}^{(1)}\right|}{\left|\I_{|E'| }\right|} \ge \frac{1}{2}\right] \cdot 1\\
		 	& = x \cdot \frac{1}{2} + (1-x) \cdot 1 = 1-\frac{x}{2}\, .
		 \end{align*}
		 If we suppose for the sake of a contradiction that $x \ge 449/500$, then
		 \begin{equation*}
		 	\Pr_{G'=(S,E') \sim \mathcal{G}'}\left[\mathcal{Q} \text{ outputs correctly on input } G'\right] \le \frac{551}{1000} < \frac{3}{5} \, ,
		 \end{equation*}
		 which contradicts our assumption on $\mathcal{Q}$. Therefore, we plug $x < 449/500$ into~\eqref{eq:unionbound} together with the assumption $n > m \ge 500$, thus  getting the claim.
	\end{proof}
	\noindent
	Let now $\mathcal{C}$ be the event that $|E'| \in [9/10 \cdot dn/2 , \, 11/10 \cdot dn/2]$ and  \smash{$|\mathcal{I}^{(1)}_{|E'|}| \ge {1}/{2} \cdot |\mathcal{I}_{|E'|}|$}.
	The size of Alice's message on input $G'=(S,E') \sim \mathcal{G}'$ is lower bounded by
	\begin{align*}
			\entropy\left(M_{G'}\right)
			& \ge \entropy\left(M_{G'} \, \middle| \, \ind_{ \mathcal{C}}, |E'|\right)  \\
			& \ge\Pr\left[  \mathcal{C} \right]    \sum_{\substack{t \in  \left[\frac{9}{20}dn , \, \frac{11}{20}dn\right] \cap \mathbb{N}:\\ \left|\mathcal{I}^{(1)}_{t}\right| \ge {1}/{2} \cdot \left|\mathcal{I}_t\right| } } \Pr\left[|E'|=t \, \middle| \, \mathcal{C} \right] \cdot \entropy\left(M_{G'} \, \middle| \, {\mathcal{C}}, |E'|=t\right) \,,
		\end{align*}
		and because conditioning does not increase entropy we rewrite
		\begin{align*}
				\entropy\left(M_{G'}\right) & \ge\Pr\left[  \mathcal{C} \right]     \sum_{\substack{\frac{9}{20}dn \le t \le  \frac{11}{20}dn :\\ |\mathcal{I}^{(1)}_{t}| \ge {1}/{2} \cdot |\mathcal{I}_t| } }\Pr\left[|E'|=t \, \middle| \, \mathcal{C} \right] \cdot \entropy\left(M_{G'} \, \middle| \, \ind_{G' \in  \I^{(1)}} ,{\mathcal{C}}, |E'|=t\right)  \\
			& \ge\Pr\left[  \mathcal{C} \right]     \sum_{\substack{\frac{9}{20}dn \le t \le  \frac{11}{20}dn:\\ |\mathcal{I}^{(1)}_{t}| \ge {1}/{2} \cdot |\mathcal{I}_t| } }\Pr\left[|E'|=t \, \middle| \, \mathcal{C} \right] \cdot \Pr\left[G' \in  \I^{(1)} \, \middle| \,  |E'|=t\right] \cdot\entropy\left(M_{G'} \, \middle| \, { G' \in \I_t^{(1)}}\right) \, .
		\end{align*}
		Finally, we use \cref{claim:mutualunif} and  \cref{claim:conditionlargeprob} and get
		\begin{align*}
			 	\entropy\left(M_{G'}\right) & \ge \Pr\left[  \mathcal{C} \right]     \sum_{\substack{\frac{9}{20}dn \le t \le  \frac{11}{20}dn:\\ |\mathcal{I}^{(1)}_{t}| \ge {1}/{2} \cdot |\mathcal{I}_t| } } \Pr\left[|E'|=t \, \middle| \, \mathcal{C} \right] \cdot \frac{1}{2} \cdot  \frac{dn}{3}  \\
			 	& =  \frac{dn}{6} \Pr_{G'=(S,E') \sim \mathcal{G}'}\left[  |E'| \in \left[\frac{9}{20}dn , \, \frac{11}{20}dn\right]  \, \mathrm{ and } \, \frac{\left|\I_{|E'|}^{(1)}\right|}{\left|\I_{|E'| }\right|} \ge  \frac{1}{2}\right] \\
			 	& \ge  \frac{dn}{60} \, .
	\end{align*}
\end{proof}

\subsection{Proving  the lower bound}
\label{subsec:lbproof}
Finally, we set $d$ and $m$ depending on $\epsilon,\phi,n$ so as to combine \cref{lem:ind-recover} and \cref{lem:lb-recover} and conclude our streaming lower bound.
\lb*

\begin{proof}
	We recall that for any $\epsilon = 1-\Omega(1)$, computing an $(\epsilon,\phi)$-ED with constant probability requires $\Omega(n \log n)$ bits for all $\phi \in (0,\epsilon]$. To see why, consider an input graph that is a matching of size $n/10$, and notice that an ED is a $1-\epsilon$ fraction the edges. Therefore, we now consider the case $\epsilon =  O(1/\log n)$ and prove an $\Omega(n/\epsilon)$ lower bound.
	
	Let $\epsilon,\phi \in (0,1)$ be any ED parameters and define $c = \psi \cdot 10^{-40}$. Note that $c$ is a constant since $\psi = \Omega(1)$ is the fixed sparsity from \cref{def:harddistr}. In order to apply  \cref{lem:ind-recover} and \cref{lem:lb-recover}, we use distributions  $\mathcal{G}$ and $\mathcal{G}'$ from \cref{def:harddistr} with $d$ and $m$ set as follows:
	\begin{equation}
		\label{set:dm}
		d = \frac{\epsilon}{2c^3 \phi } + \frac{c}{2 \epsilon} \quad \text{and} \quad m = \frac{11}{c \phi} \, .
	\end{equation}
	We verify that these values meet the preconditions set by \cref{lem:ind-recover} and \cref{lem:lb-recover}, i.e. we show that $d$ and $m$ satisfy
	\begin{equation}
		\label{condition:dm}
		\epsilon \le \min \left\{10^{-5} \psi \frac{d}{m} , \, \frac{1}{30\sqrt{m}}, \, \frac{10^{-33}}{d}\right\}\, , \quad \phi \ge \frac{11}{m}\, , \quad 500 \le m \le \frac{n}{100}\, , \quad 600\log n \le d < m \, ,
	\end{equation}
	for all $\epsilon,\phi$ such that
	\begin{equation}
		\label{setting:epsphi}
		\epsilon \le \frac{c^2}{\log n} \, , \quad \phi \ge \frac{1}{c^2n} \, , \quad \text{and} \quad \frac{\epsilon^2}{c^4} \le \phi \le \epsilon \, .
	\end{equation}
	Substituting $d$ and $m$ as defined in~\eqref{set:dm} into~\eqref{condition:dm} we get
	\begin{equation*}
		10^{-5} \psi \frac{d}{m} = 10^{-5} \psi \left(\frac{\epsilon}{2c^3 \phi } + \frac{c}{2 \epsilon} \right) \cdot \frac{c\phi}{11} \ge 10^{-5} \psi \frac{\epsilon}{2c^2 11 } = 10^{75} \frac{\epsilon}{2\psi 11 } \ge \epsilon \, ,
	\end{equation*}
\begin{equation*}
	\frac{1}{30\sqrt{m}} = \frac{\sqrt{c}\sqrt{\phi}}{30\sqrt{11}} \ge \frac{\sqrt{c}\epsilon}{30\sqrt{11}c^2} \ge \epsilon\, ,
\end{equation*}
\begin{equation*}
	\frac{10^{-33}}{d} = 10^{-33}\frac{4c^3\epsilon\phi}{2\epsilon^2 + 2c^4\phi}=10^{-33}\frac{4c^3\epsilon}{2\epsilon^2/\phi + 2c^4} \ge 10^{-33}\frac{4c^3\epsilon}{2c^4 + 2c^4} = 10^{-33} \frac{\epsilon}{c} \ge \epsilon \, ,
\end{equation*}
\begin{equation*}
	\frac{11}{m} = c{\phi} \le \phi \, ,
\end{equation*}
\begin{equation*}
	m = \frac{11}{c \phi} \ge \frac{11}{c \epsilon}  \ge \log n \ge 500 \, ,
\end{equation*}
\begin{equation*}
	m = \frac{11}{c \phi} \le 11 c n \le \frac{n}{100}\, ,
\end{equation*}
\begin{equation*}
	d = \frac{\epsilon}{2c^3 \phi } + \frac{c}{2 \epsilon} \ge \frac{c}{2 \epsilon}  \ge \frac{\log n}{2c} \ge 600 \log n \, ,
\end{equation*}
\begin{equation*}
	d = \frac{\epsilon}{2c^3 \phi } + \frac{c}{2 \epsilon} \le \frac{\epsilon}{2c^3 \phi } + \frac{c}{2 \phi}  \le \frac{1}{2c \log n \phi } + \frac{c}{2 \phi}  < \frac{11}{c \phi} = m \, ,
\end{equation*}
so all the conditions~\eqref{condition:dm} are verified.
	
	With any $\epsilon,\phi$ as in~\eqref{setting:epsphi} and for our choice of $d,m$ as in~\eqref{condition:dm}, \cref{lem:ind-recover} gives that if there is an $L$-bit space deterministic streaming algorithm $\mathcal{A}$ that outputs an $(\epsilon,\phi,2)$-RED of $G \sim \mathcal{G}_{n,d,m}$ with probability at least $9/10$, then there is a deterministic protocol $\mathcal{Q}$ that solves $\recover_{\epsilon m}$ with probability at least $3/5$ over inputs $G' \sim \mathcal{G}'$ with communication complexity $L$. On the other hand, by \cref{lem:lb-recover} we know that, in the parameter regime of~\eqref{condition:dm}, the complexity of $\mathcal{Q}$ over $ \mathcal{G}'$ is at least $\Omega(dn)$. Hence,
	\begin{equation*}
		L = \Omega(dn)  = \Omega\left(\left(\frac{\epsilon}{\phi}+\frac{1}{\epsilon}\right) \cdot n \right) = \Omega\left(\frac{n}{\epsilon}\right)\, .
	\end{equation*}
 	By Yao's minimax principle, the theorem statement is proved with $C = c^{-4}$.
	
\end{proof}

\newpage

\addcontentsline{toc}{section}{References}
\bibliographystyle{alpha}
\bibliography{refs}

\newpage

\appendix

\section{Technical facts}
\subsection{Polynomial time balanced sparse cut algorithm}
\label{apndx:selfloops}
In this section, we prove the following fast BSCA, used for our polynomial time BLD construction.

\fastbalsparse*

\begin{proof}
	First we show how to reduce the BSCA task on graphs with self-loops to graphs without self-loops. Let $G=(V,E,w)$ be a weighted undirected graph with one weighted self-loop per vertex, where $w: E \cup V \rightarrow \mathbb{R}_{\ge 0}$ and $E \subseteq \binom{V}{2}$. The idea is to construct a graph with two copies of every vertex, where the surplus vertices are solely connected to their corresponding vertex with weight equal to half the weight of the self-loop. If a vertex has no self-loops then we make only one copy for it. Formally, let $A=\{u \in V:w(u) > 0\}$ be the set of vertices with self-loops. Then let $B$ be a copy of $A$ representing the self-loops of vertices in $A$, and let $s: A \rightarrow B$ be a bijection mapping $u \in A$ to its corresponding self-loop vertex $s(u) \in B$. We define the self-loop free graph to be $\hat{G}=(\hat{V},\hat{E},\hat{w})$, where $\hat{V} = V \cup s(A)$, $\hat{E}=E \cup (\cup_{u \in A} \{\{u, s(u)\}\})$, and $\hat{w}(e)=w(e)$ for every $e \in E$ and $\hat{w}(\{u, s(u)\})=\frac{1}{2}w(u)$ for all $u \in A$. This is illustrated in \Cref{fig:selfloopsredcution}. We highlight that $\hat{G}$ has the same number of vertices and edges as $G$, up to constants.

\begin{figure}[h]
	\begin{minipage}[center]{\textwidth}
		\centering
		\includegraphics[scale=0.9]{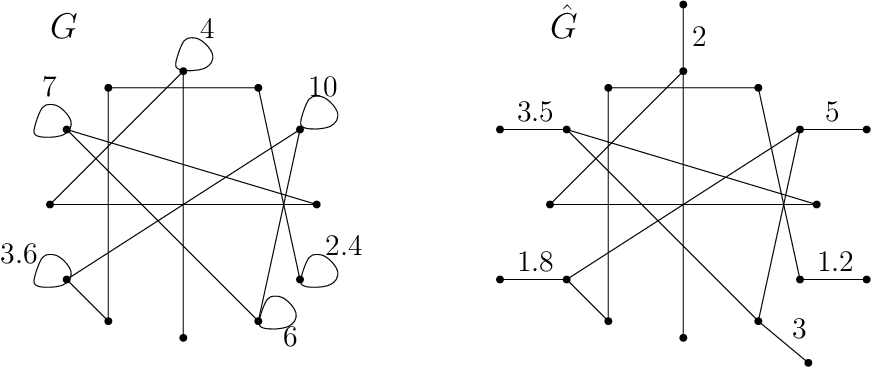}
	\end{minipage}
	\caption{On the left: the original graph with self-loops $G$. On the right: its self-loop free version~$\hat{G}$. The weights of the internal edges of $G$ are copied verbatim (not shown in the figure), while the weights of self-loops are halved.}
	\label{fig:selfloopsredcution}
\end{figure}

\noindent
We run a BSCA for self-loop free graphs on $\hat{G}$, and if it asserts $\hat{G}$ to be an expander we conclude the same for $G$, whereas if it returns a cut then we tweak it to get a solution for $G$. This is summarised in Algorithm~\ref{alg:selfloop}. The following claim gives the correctness of the reduction. The idea is that the subgraphs given by $\{u,s(u)\}$ for $u \in A$ are expanders, so there is no good reason for a sparse cut to cross them. Hence, $u$ and $s(u)$ should be on the same side of the sparse cuts.

\begin{algorithm}[h]
	\caption{\textsc{SelfLoopBalSparseCut}: reduction to self-loop free BSCA}
	\label{alg:selfloop}
	\begin{algorithmic}[1]
		\LeftComment{$G=(V,E,w)$ is a weighted undirected graph}
		\LeftComment{$\phi \in (0,1)$ is the sparsity parameter}
		\Procedure{\textsc{SelfLoopBalSparseCut}$(G,\phi)$}{}
		\State $\hat{\omega} \gets \bsc(\hat{G},\phi)$ \Comment{$\hat{G}$ is the self-loop free version of $G$ as described above}
		\If{$\hat{\omega} = \bot$}
			\State $\omega \gets \bot$
		\Else
		\State $(R,\nu) \gets \hat{\omega}$
		\State $X \gets R \cap V$
		\State $X^* \gets \argmin\{\vol_G(X),\vol_G(V \setminus X)\}$
		\State $\omega \gets (X^*, \vol_G(X^*))$
		\EndIf
		\State \Return $\omega$
		\EndProcedure
	\end{algorithmic}
\end{algorithm}

\begin{claim}
	\label{lem:selfloopfree}
	Let $\bsc$ be an $(\alpha,\lambda)$-BSCA. Then \textsc{SelfLoopBalSparseCut} is a $(2\alpha,4\lambda)$-BSCA, provided that the input sparsity parameter $\phi \in (0,1)$ satisfies $\phi \le \frac{1}{10\alpha}$.
\end{claim}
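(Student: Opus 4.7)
The plan is to establish a tight correspondence between cuts of $G$ and ``canonical'' cuts of $\hat{G}$---those that place each pair $u, s(u)$ on the same side---and then to argue that the non-canonical distortion of the cut $R$ returned by $\bsc$ on $\hat{G}$ is controlled precisely by $\Phi_{\hat{G}}(R) < \alpha\phi$. For any $S \subseteq V$, let $S^\circ = S \cup s(S \cap A)$ be its canonicalization in $\hat{V}$. Since each edge $\{u, s(u)\}$ in $\hat{E}$ replaces a self-loop of weight $w(u)$ by an edge of weight $w(u)/2$, a direct calculation yields
\begin{equation*}
w_{\hat{G}}(S^\circ, \hat{V}\setminus S^\circ) = w_G(S, V\setminus S) \quad\text{and}\quad \vol_{\hat{G}}(S^\circ) = \vol_G(S),
\end{equation*}
so $\Phi_G(S) = \Phi_{\hat{G}}(S^\circ)$. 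This immediately disposes of the case $\hat{\omega} = \bot$: any $\phi$-sparse cut of $G$ would canonicalize to a $\phi$-sparse cut of $\hat{G}$, so $\hat{G}$ being a $\phi$-expander forces $G$ to be one too.

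For $\hat{\omega} = (R, \nu)$, set $X = R \cap V$ and introduce the mismatch sets $B_1 = \{u \in A \cap X : s(u) \notin R\}$ and $B_2 = \{u \in A \setminus X : s(u) \in R\}$, together with $b_i = \tfrac{1}{2}\sum_{u \in B_i} w(u)$. A bookkeeping step yields
\begin{equation*}
\vol_{\hat{G}}(R) = \vol_G(X) - b_1 + b_2, \qquad w_{\hat{G}}(R, \hat{V}\setminus R) = w_G(X, V\setminus X) + b_1 + b_2,
\end{equation*}
and symmetrically $\vol_{\hat{G}}(\hat{V}\setminus R) = \vol_G(V\setminus X) + b_1 - b_2$. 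Since $b_1 + b_2$ is part of the cut weight of $R$ in $\hat{G}$ and $\Phi_{\hat{G}}(R) < \alpha\phi$, we get $b_1 + b_2 \le \alpha\phi \cdot \vol_{\hat{G}}(R)$, hence $|b_1 - b_2| \le \alpha\phi \cdot \vol_{\hat{G}}(R)$. Combined with $\alpha\phi \le 1/10$ (from the hypothesis $\phi \le 1/(10\alpha)$), both $\vol_G(X)$ and $\vol_G(V\setminus X)$ are at least $(1 - \alpha\phi)\vol_{\hat{G}}(R) \ge \tfrac{9}{10}\vol_{\hat{G}}(R)$; in particular $X$ is non-empty and distinct from $V$, and $\vol_G(X^*) \ge \tfrac{9}{10}\vol_{\hat{G}}(R)$.

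The four BSCW properties of $(X^*, \vol_G(X^*))$ now follow in a few lines. For property~\eqref{item:sparser},
\begin{equation*}
\Phi_G(X^*) \le \frac{w_G(X, V\setminus X)}{\vol_G(X^*)} \le \frac{w_{\hat{G}}(R, \hat{V}\setminus R)}{(1-\alpha\phi)\vol_{\hat{G}}(R)} = \frac{\Phi_{\hat{G}}(R)}{1-\alpha\phi} < \tfrac{10}{9}\alpha\phi < 2\alpha\phi.
\end{equation*}
For property~\eqref{item:balr}, any $\phi$-sparse $T \subseteq V$ with $\vol_G(T) \le \vol_G(V\setminus T)$ canonicalizes to $T^\circ$, for which $\Phi_{\hat{G}}(T^\circ) < \phi$ and $\vol_{\hat{G}}(T^\circ) \le \vol_{\hat{G}}(\hat{V}\setminus T^\circ)$; applying the BSCW guarantee of $\bsc$ on $\hat{G}$ then gives $\vol_G(T) = \vol_{\hat{G}}(T^\circ) \le \lambda\vol_{\hat{G}}(R) \le (10\lambda/9)\vol_G(X^*) < 4\lambda\vol_G(X^*)$. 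Properties~\eqref{item:smallsider} and~\eqref{item:goodestimr} are immediate from the choice $X^* \in \arg\min\{\vol_G(X), \vol_G(V\setminus X)\}$ and from $\nu = \vol_G(X^*)$ being set explicitly.

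The main obstacle is controlling the ``non-canonical leakage'', namely the cut mass and volume deposited at endpoints of edges $\{u, s(u)\}$ that straddle $R$ in $\hat{G}$. The crucial observation is that this leakage is itself a contribution to the cut weight of $R$, so the guarantee $\Phi_{\hat{G}}(R) < \alpha\phi$ automatically bounds it by $\alpha\phi \cdot \vol_{\hat{G}}(R)$; the hypothesis $\phi \le 1/(10\alpha)$ is exactly what turns this into a benign $(1\pm 1/10)$ multiplicative distortion, accounting for the constants $2$ and $4$ in the advertised $(2\alpha, 4\lambda)$-BSCA parameters.
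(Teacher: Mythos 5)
Your proposal is correct and follows essentially the same route as the paper: the same canonical correspondence between cuts of $G$ and self-loop-paired cuts of $\hat{G}$, and the same key observation that the mismatch between $R$ and its canonicalization is itself cut mass of $R$, hence bounded by $\alpha\phi\cdot\vol_{\hat{G}}(R)$ and rendered a mild multiplicative distortion by $\phi \le 1/(10\alpha)$. The only difference is organizational — you do the accounting in one shot via $b_1,b_2$ (which correspond exactly to the paper's ``missing'' and ``dangling'' self-loop terms in its two-step $R \to Q \to P$ modification), giving marginally sharper constants than needed.
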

\begin{proof}
	For any cut $\emptyset \neq X \subsetneq V$, let us define for convenience $\partial_G X=w(X,V \setminus X)$ to be the total weight of the edges crossing the cut $X$ in $G$, and for any cut \smash{$\emptyset \neq R \subsetneq \hat{V}$} define \smash{$\partial_{\hat{G}} R=\hat{w}(R,\hat{V}\setminus R)$} to be the total weight of edges crossing $R$ in $\hat{G}$. A preliminary observation is that for any cut $\emptyset \neq X \subsetneq V$ we have that \smash{$\vol_{\hat{G}}(X \cup s(X \cap A))=\vol_G(X)$}, \smash{$\partial_{\hat{G}} (X \cup s(X \cap A)) = \partial_G X$}, and consequently \smash{$\Phi_{\hat{G}}(X \cup s(X \cap A)) = \Phi_G(X)$}. By definition of $\bsc$, we know that $\hat{\omega}$ is a $(\phi, \alpha,\lambda,0)$-BSCW of~$\hat{G}$, i.e.
	\begin{enumerate}
		\item \label{itdef:expapp} if $\hat{\omega}=\bot$, then $\hat{G}$ is a $\phi$-expander;
		\item if $\hat{\omega}=(R,\nu)$, then:
		\begin{enumerate}
			\item \label{itdef:sparseapp} $\Phi_{\hat{G}}(R) < \alpha \cdot \phi $,
			\item \label{itdef:smallsideapp} $\vol_{\hat{G}}(R) \le \vol_{\hat{G}}(\hat{V} \setminus R) $,
			\item \label{itdef:balapp} for every other cut $\emptyset \neq T \subsetneq \hat{V}$ with $\Phi_{\hat{G}}(T) < \phi$ and $\vol_{\hat{G}}(T) \le \vol_{\hat{G}}(\hat{V} \setminus T)$ one has $ \vol_{\hat{G}}(T) \le \lambda \vol_{\hat{G}}(R)$:
			\item \label{itdef:goodestimapp} $\nu =\vol_{\hat{G}}(R)$.
		\end{enumerate}
	\end{enumerate}
	Our goal is to translate these properties to $G$ by showing that $\omega$, as constructed in \textsc{SelfLoopBalSparseCut}, is a  $(\phi,2\alpha,4\lambda,0)$-BSCW of~$G$, i.e.
	\begin{enumerate}
		\item \label{itdef:expapp2} if $\omega=\bot$, then ${G}$ is a $\phi$-expander;
		\item if $\omega=(X^*,\nu)$, then:
		\begin{enumerate}
			\item \label{itdef:sparseapp2} $\Phi_{{G}}(X^*) < \alpha \cdot \phi $,
			\item \label{itdef:smallsideapp2} $\vol_{{G}}(X^*) \le \vol_{{G}}({V} \setminus X^*) $,
			\item \label{itdef:balapp2} for every other cut $\emptyset \neq T \subsetneq {V}$ with $\Phi_{{G}}(T) < \phi$ and $\vol_{{G}}(T) \le \vol_{{G}}({V} \setminus T)$ one has $ \vol_{{G}}(T) \le \lambda \vol_{{G}}(X^*)$:
			\item \label{itdef:goodestimapp2} $\nu =\vol_{{G}}(X^*)$.
		\end{enumerate}
	\end{enumerate}
	\paragraph{Correctness of the case $\hat{\omega}=\bot$.} This  means that for every cut $\emptyset \neq R \subsetneq \hat{V}$ one has $\Phi_{\hat{G}}(R) \ge \phi$. This in particular holds for cuts of the form $X \cup s(X \cap A)$ with $\emptyset \neq X \subsetneq V$. As we observed that $\Phi_{\hat{G}}(X \cup s(X \cap A)) = \Phi_G(X)$, we conclude that $G$ is also a $\phi$-expander.
	
	\paragraph{Correctness of the case $\hat{\omega}=(R,\nu)$.}
	The idea is that adding missing self-loops and removing dangling self-loops makes $R$ a solution for $G$. This is illustrated in \Cref{fig:selfloopscuts}.
	
	\begin{figure}[h]
		\begin{minipage}[center]{\textwidth}
			\centering
			\includegraphics[scale=0.9]{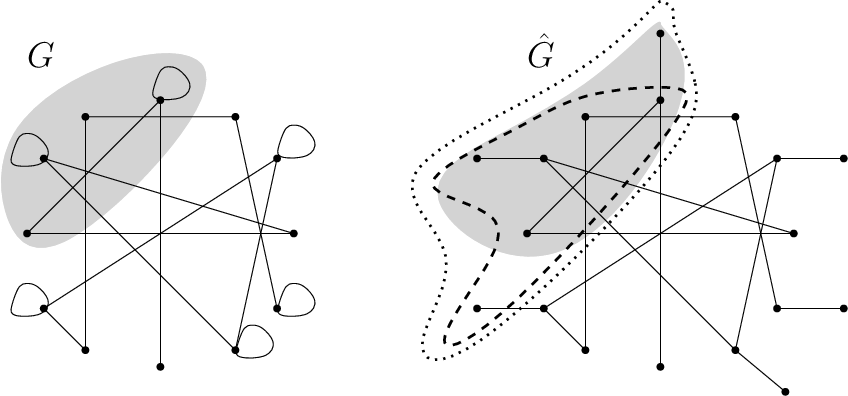}
		\end{minipage}
		\caption{There is a straightforward correspondence between the shaded cuts in $G$ and $\hat{G}$, since they have same volume and same number of crossing edges. The dashed cut has missing self-loops and dangling self-loops (think of it as the cut $R$). The dotted cut adds the missing self-loops (think of it as the cut $Q$). The shaded cut in $\hat{G}$ removes the dangling self-loops (think of it as the cut $P$), and corresponds to the shaded cut in $G$ (think of it as the cut $X$).}
		\label{fig:selfloopscuts}
	\end{figure}

	\noindent
	We know from~\eqref{itdef:sparseapp} that
	\begin{equation*}
		\frac{\partial_{\hat{G}} R}{\vol_{\hat{G}} (R)} < \alpha \phi \, .
	\end{equation*}
	Consider the cut $Q=R \cup s(R \cap A)$, i.e. a version of $R$ where we have added the self-loops of all the vertices in $R$ that have one. Then $Q \subsetneq \hat{V}$, as otherwise $V \subseteq R$, which would imply that $R$ contains more than half the volume of $\vol_{\hat{G}}(\hat{V})$, contradicting~\eqref{itdef:smallsideapp}. Next, we remark that the volume of $Q$ cannot decrease compared to $R$, since we added vertices, i.e.
	\begin{equation}
		\label{eq:volrq}
		\vol_{\hat{G}}(Q) = \vol_{\hat{G}}(R) + \partial_{\hat{G}}(s(R \cap A) \setminus (R \cap B)) \ge \vol_{\hat{G}}(R)\, .
	\end{equation}
	At the same time, we do not increase the number of crossing edges since we move the self-loop edges inside the cut, so
	\begin{equation*}
		\frac{\partial_{\hat{G}} Q}{\vol_{\hat{G}}(Q)} = \frac{\partial_{\hat{G}} R - \partial_{\hat{G}}(s(R \cap A) \setminus (R \cap B))}{\vol_{\hat{G}}(R) + \partial_{\hat{G}}(s(R \cap A) \setminus (R \cap B))} \le \frac{\partial_{\hat{G}} R}{\vol_{\hat{G}} (R)} \, .
	\end{equation*}
	Thus, we have added self-loops to $R$ without increasing its sparsity nor decreasing its volume. Now we remove the dangling self-loops from $Q$, i.e. we remove the set $D=(Q \cap B) \setminus s(Q \cap A)$, and consider the cut $P=Q \setminus D$. We remark that $P \neq \emptyset$, as otherwise $R$ is solely made of vertices from~$s(A)$, which contradicts $\Phi_H(R) < \alpha \phi \le 1/10 < 1$. Since we removed an equal amount of mass from both the crossing edges and the volume, we get
	\begin{equation*}
		\frac{\partial_{\hat{G}} P}{\vol_{\hat{G}}(P)} \le \min\left\{\frac{\partial_{\hat{G}} P}{\vol_{\hat{G}}(P)}, 1\right\} \le \frac{\partial_{\hat{G}} P + \partial_{\hat{G}} D}{\vol_{\hat{G}}(P)+\partial_{\hat{G}} D}  = \frac{\partial_{\hat{G}} Q}{\vol_{\hat{G}} (Q)} \, .
	\end{equation*}
	Thus, removing the dangling self-loops does not increase sparsity. However we may have decreased the volume, which threatens the balancedness property of the cut. We then bound the decrease: because we proved before ${\partial_{\hat{G}} Q}/{\vol_{\hat{G}} (Q)} < \alpha \phi$, we know
	\begin{equation*}
		\partial_{\hat{G}} D < \frac{\alpha \phi}{1-\alpha\phi}\vol_{\hat{G}}(P) \, ,
	\end{equation*}
	so
	\begin{equation}
		\label{eq:volpq}
		\vol_{\hat{G}}(Q) = \vol_{\hat{G}}(P) + \partial_{\hat{G}} D < 2 \vol_{\hat{G}}(P) \, .
	\end{equation}
	Now one can see that $P$ is identical to $(R \cap V) \cup s(R \cap A)$, and $X$ as defined in Algorithm~\ref{alg:selfloop} equals $R \cap V$. Hence $\partial_G X = \partial_{\hat{G}} P$ and $\vol_G(X)=\vol_{\hat{G}}(P)$. Therefore,
	\begin{equation}
		\label{eq:smallratio}
		\frac{\partial_G X}{\vol_G (X)} < \alpha \phi \, .
	\end{equation}
	To handle the case where $X$ is the larger volume side of the cut, \cref{alg:selfloop} takes $X^*$ to be the side of $(X,V\setminus X)$ with smaller volume. We show that the volume of $V\setminus X$ is within a small constant factor of the volume of $X$. To do so, recall that $\Phi_{\hat{G}}(R)<\alpha \phi$, so
	\begin{equation*}
		\frac{\partial_{\hat{G}} Q + \partial_{\hat{G}}(s(R \cap A) \setminus (R \cap B))}{\vol_{\hat{G}}(Q) - \partial_{\hat{G}}(s(R \cap A) \setminus (R \cap B))} < \alpha \phi \,,
	\end{equation*}
	which implies
	\begin{equation*}
		\partial_{\hat{G}}(s(R \cap A) \setminus (R \cap B)) < \alpha \phi \vol_{\hat{G}}(Q) \, .
	\end{equation*}
	Hence, by property~\eqref{itdef:smallsideapp} of $R$, we have
	\begin{equation*}
		\vol_G(X) \le \vol_{\hat{G}}(Q) < \frac{1}{1-\alpha\phi}\vol_{\hat{G}}(R) \le \frac{1}{2} \cdot  \frac{1}{1-\alpha\phi} \vol_{\hat{G}}(\hat{V}) = \frac{1}{2} \cdot  \frac{1}{1-\alpha\phi}\vol_G(V) \, .
	\end{equation*}
	Therefore, we have that $X^* = \argmin\{\vol_G(X), \vol_G(V\setminus X)\}$ satisfies
	\begin{equation}
			\label{eq:nearlysamevol}
		(1-2\alpha\phi)\vol_G(X) \le \vol_G(X^*) \le \vol_G(V\setminus X^*) \, ,
	\end{equation}
	and
	\begin{equation}
		\label{eq:largevol}
		\vol_{\hat{G}}(R) < \frac{2}{1-2\alpha \phi}\vol_G(X^*) \, ,
	\end{equation}
by~\eqref{eq:volrq},~\eqref{eq:volpq}. We now conclude by verifying properties~\eqref{itdef:sparseapp2},~\eqref{itdef:smallsideapp2},~\eqref{itdef:balapp2},~\eqref{itdef:goodestimapp2}.

\begin{itemize}
	\item \textit{Property~\eqref{itdef:sparseapp2}.} From~\eqref{eq:smallratio} and~\eqref{eq:nearlysamevol}, we get $\Phi_G(X^*) < 	\alpha\phi /({1-2\alpha \phi}) \le 2 \alpha \phi$.
	\item \textit{Property~\eqref{itdef:smallsideapp2}.} This property follows by definition of $X^*$ in \cref{alg:selfloop}.
	\item \textit{Property~\eqref{itdef:balapp2}.} Consider any cut $\emptyset \neq T \subsetneq V$ such that $\Phi_G(T) < \phi$ and $\vol_G(T) \le \vol_G(V \setminus T)$. Then we know that
	\begin{equation*}
		\Phi_G(T) = \Phi_{\hat{G}}(T \cup s(T \cap A)) < \phi \, ,
	\end{equation*}
	and
	\begin{equation*}
		\vol_{\hat{G}}(T \cup s(T \cap A)) = \vol_G(T) \le \vol_G(V \setminus T) = \vol_{\hat{G}}(\hat{V} \setminus (T \cup s(T \cap A))\, ,
	\end{equation*}
	as well as
	\begin{equation*}
		\vol_G(T) = \vol_{\hat{G}}(T \cup s(T \cap A)) \le \lambda \cdot \vol_{\hat{G}}(R) < \frac{2\lambda}{1-2\alpha \phi}\vol_G(X^*) \le 4\lambda \vol_G(X^*)\, ,
	\end{equation*}
	by an application of~\eqref{itdef:balapp} to the cut $T \cup s(T \cap A)$ and~\eqref{eq:largevol}.
	\item \textit{Property~\eqref{itdef:goodestimapp2}.} Follows directly by definition of $\omega$ in \cref{alg:selfloop}.
\end{itemize}
\end{proof}
\noindent
Using \cref{lem:selfloopfree}, we now conclude a proof of \cref{th:fastbalsparse}. To do so, we run two different algorithms, depending on how large $p$ and $q$ are.

\paragraph{The large case.}
An algorithm was recently given that, for $p$-vertex $q$-edge input graphs with edge weights bounded by $\poly(p)$ produces a $(O(\phi \log^3 p), \phi)$-ED\footnote{In this case, the definition of ED uses the weight of edges across cuts, and weighted volumes, including self-loops.} with probability $1-1/\poly(p)$ in $(p+q)\cdot \pylog(p)$ time, for any $\phi \in (0,1)$~\cite{liexpdec}. One can then use this ED algorithm to efficiently compute an approximate balanced $\psi$-sparse cut for all $\psi \in (0,1)$: as it was observed in~\cite{offlineexpdec}, given access to a graph and its $(O(\psi \log^3p), \psi)$-ED, one can easily obtain a $(\psi,c' \cdot \log^3 p, C',0)$-BSCW for some constants $c',C'\ge 1$. This gives a $(c' \cdot \log^3 p, C')$-BSCA that runs in $(p+q)\cdot \pylog(p)$ time for all input sparsity parameters~$\psi \in (0,1)$.

Whenever $p \ge n^{1/c_{\text{time}}}$, for a constant $c_{\text{time}} \ge 1$ to be determined later, we then apply the algorithm of \cite{offlineexpdec,liexpdec} together with \cref{lem:selfloopfree} setting $\alpha=c' \log^3 n$ and $\lambda = C'$ (where $c',C'\ge 1$ are the constants one gets from the self-loop free BSCA of~\cite{offlineexpdec,liexpdec}). Then, we get a $(c_{\sw} \log^3 n,C_{\sw})$-BSCA where $c_{\sw}=2c'$ and $C_{\sw} = 4C'$, provided that \smash{$\phi \le \frac{1}{10\alpha}$}. Using $\phi = \psi$ and since we impose \smash{$\psi \le  \frac{1}{10 c_{\sw} \log^3 n}$}, we get \smash{$\phi \le \frac{1}{10\alpha}$}.

Using our assumption on $p,q,W$, the space and random bit complexity trivially follow from the bound on the running time, which is in turn at most $\poly(n)$. The success probability is $1-1/\poly(p)$, which gives the claimed success probability of $1-1/\poly(n)$.

\paragraph{The small case.}
When $p < n^{1/c_{\text{time}}}$, we use an algorithm running in $T(p,q,W)$ time. The algorithm consists in using an algorithm that $(c'' \log p)$-approximates the sparsity of the input graph (such as the deterministic polynomial time algorithm of \cite{leightonrao,llr}, which allows to have weighted self-loops by incorporating them in the demand graph) as follows (which is essentially the idea of \cite{offlineexpdec}): maintain a cut $R$ initially empty and repeatedly compute the approximate sparsest cut $S$ of $G[V\setminus R]^1$; if $S$ has sparsity at least $2\psi c'' \log p$ then terminate, else update $R$ to be $R \cup S$; if  $R$ accounts for at least a $1/5$ fraction of the total volume of $G$ then terminate, else repeat.

First note that this algorithm terminates (e.g. when $V \setminus R$ is a single vertex). Also observe that if $R = \emptyset$ then we can conclude that $G$ is a $2\psi$-expander. Otherwise, $R$ has always sparsity at most $4\psi c'' \log p$ in $G$ and one of $R,V\setminus R$ hs volume at most half of the total. Moreover, if the smaller of $R,V\setminus R$ accounts for at least a $1/5$ fraction of the total volume of $G$ then every $\psi$-sparse cut with volume at most half the total is within a $5/3$ fraction of the volume of the smaller of $R,V\setminus R$. Finally, if $R$ accounts for less than a $1/5$ fraction of the total volume of $G$, then $G[V\setminus R]^1$ is a $2\psi$-expander; consider a cut $T$ with sparsity at most $\psi$ and volume at most half the total; then if $T$ had more than half its volume in $V \setminus R$, the weight of edges crossing the cut $T$ in $G$ would be more than a $\psi$ fraction of the volume of $T$, a contradiction; hence, $T$ is within a factor of $2$ of the volume of $R$. This gives a $(c'' \log n,2)$-BSCA.

We pick $c_{\text{time}}$ so that $T(p,q,W)$ is at most \smash{$\otil(n/b^2)\cdot \log^{O(\frac{\log n}{\log 1/b})}n$}, thus giving the claimed bound on the space and random bit complexity. In our case, $T(p,q,W)=\poly(p,q,\log W)$ (which can be achieved by solving the linear programming relaxation of \cite{leightonrao} with the ellipsoid method and rounding it deterministically with the algorithm of \cite{llr}) and recall that $W=\poly(n)$, so there exists a constant $c_{\text{time}} \ge 1$ such that $T(p,q,W)$ is at most \smash{$n/b^2 \cdot \log^{O(\frac{\log n}{\log 1/b})}n$}. The running time also follows, and the success probability is $1$.
\\~\\
 \cref{th:fastbalsparse} is concluded by combining the large and small case.

\end{proof}

\subsection{Expansion and regularity of random graphs}
\label{apndx:randomgraphs}
\erdos*
\begin{proof}
	Let $H=(U,F)$ be a sample from $\er(N,p)$, so that $|U|= N$. Also let $L=D-A$ be the Laplacian matrix of $H$, where $D$ denotes its degree diagonal matrix and $A$ denotes its adjacency matrix. We use the following matrix Bernstein concentration bound on $L$ to lower bound its second eigenvalue, and then lower bound the expansion of $H$ using Cheeger's inequality.
	\begin{theorem}[Theorem 1.6.2 in~\cite{matcon}]
		\label{th:bern} Let $S_1,\dots, S_\kappa \in \mathbb{R}^{N \times N}$ be independent symmetric random matrices such that $\ex[S_\iota]=0$ and $\nm{S_\iota} \le \lambda$ for every $\iota \in [\kappa]$, and let $S = \sum_{\iota=1}^{\kappa} S_\iota$. Then, for all~$\theta \ge 0$
		\begin{equation*}
			\Pr[\nm{S} \ge \theta] \le 2N \cdot \exp\left(-\frac{\theta^2/2}{\nm{ \sum_{\iota=1}^{\kappa} \ex[S_\iota^2]}+\lambda\theta/3}\right) \, .
		\end{equation*}
	\end{theorem}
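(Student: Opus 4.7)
The plan is to follow the canonical Ahlswede--Winter--Tropp route, based on the matrix Laplace transform together with Lieb's concavity theorem. Since $S$ is symmetric, $\|S\| = \max\{\lambda_{\max}(S), \lambda_{\max}(-S)\}$; by a union bound and the sign-symmetry of the hypotheses (the family $\{-S_\iota\}$ satisfies the same moment and boundedness assumptions as $\{S_\iota\}$), it suffices to control $\Pr[\lambda_{\max}(S) \ge \theta]$, at the cost of a factor of $2$. For this, apply Markov's inequality to the exponential: for any $t > 0$,
\[
\Pr[\lambda_{\max}(S) \ge \theta] \le e^{-t\theta}\, \ex[\mathrm{tr}\, e^{tS}],
\]
using $\lambda_{\max}(e^{tS}) \le \mathrm{tr}(e^{tS})$ since $e^{tS}$ is positive definite.

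The key inequality is the subadditivity of the matrix cumulant derived from Lieb's theorem: the map $A \mapsto \mathrm{tr}\, \exp(H + \log A)$ is concave on positive-definite matrices, and an induction on $\kappa$ that alternates Jensen's inequality with the tower rule yields $\ex[\mathrm{tr}\, e^{\sum_\iota tS_\iota}] \le \mathrm{tr}\, \exp\bigl(\sum_\iota \log \ex[e^{tS_\iota}]\bigr)$. For each summand, using $\|S_\iota\| \le \lambda$, $\ex[S_\iota] = 0$, and the Taylor expansion combined with the operator inequality $S_\iota^k \preceq \lambda^{k-2} S_\iota^2$ for all $k \ge 2$, one obtains the semidefinite bound $\ex[e^{tS_\iota}] \preceq \exp(g(t)\,\ex[S_\iota^2])$ where $g(t) = (e^{t\lambda} - 1 - t\lambda)/\lambda^2$. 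Taking operator logs (monotone on positive-definite matrices), summing over $\iota$, and using monotonicity of $A \mapsto \mathrm{tr}(e^A)$ in the Loewner order gives
\[
\ex[\mathrm{tr}\, e^{tS}] \le \mathrm{tr}\, e^{g(t) V} \le N \cdot e^{g(t)\|V\|}, \qquad V := \sum_\iota \ex[S_\iota^2],
\]
where the final estimate uses $\mathrm{tr}(e^A) \le N \cdot e^{\lambda_{\max}(A)}$ for symmetric $A \in \mathbb{R}^{N\times N}$.

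Combining yields $\Pr[\lambda_{\max}(S) \ge \theta] \le N \exp(-t\theta + g(t)\|V\|)$. The last step is calibration: use the elementary bound $g(t) \le t^2/\bigl(2(1 - t\lambda/3)\bigr)$, valid for $t\lambda < 3$, and optimize by choosing $t = \theta/(\|V\| + \lambda\theta/3)$. A short scalar computation then produces the exponent $-\theta^2/\bigl(2(\|V\| + \lambda\theta/3)\bigr)$. The prefactor $2N$ of the theorem comes from the union bound over $\pm S$ at the outset.

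The main obstacle in this outline is the invocation of Lieb's concavity theorem, which is the single deep ingredient and the only nontrivial input beyond scalar Chernoff-style bookkeeping. A Golden--Thompson-based variant (the original Ahlswede--Winter argument) dispenses with Lieb but loses a factor of $2$ in the variance term of the exponent; the sharp constants $1/2$ and $1/3$ that appear in the theorem genuinely require the Lieb route. Everything else reduces to Taylor expansion, a standard univariate optimization, and the monotonicity properties of $\log$, $\exp$, and $\mathrm{tr}\,\exp$ on the cone of positive-definite matrices.
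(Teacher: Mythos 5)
Your proof sketch is correct and faithfully reproduces the argument in the cited source: the paper does not prove this inequality but invokes it directly from Tropp's monograph (Theorem 1.6.2 of~\cite{matcon}), and the route you describe --- symmetrize via the union bound over $\pm S$, apply the matrix Laplace transform, invoke Lieb's concavity to get subadditivity of the matrix cgf, bound the per-summand mgf with the semidefinite inequality $S_\iota^k \preceq \lambda^{k-2} S_\iota^2$, and calibrate with $g(t) \le t^2/(2(1-t\lambda/3))$ and $t = \theta/(\|V\| + \lambda\theta/3)$ --- is exactly how that reference establishes it. Every step you outline checks out, including the final scalar optimization producing the exponent $-\theta^2/(2(\|V\|+\lambda\theta/3))$. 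Your side remark that a Golden--Thompson (Ahlswede--Winter) variant avoids Lieb at the cost of sharper constants is slightly imprecise --- the actual deficit of the Golden--Thompson route is that it replaces $\|\sum_\iota \ex[S_\iota^2]\|$ by $\sum_\iota \|\ex[S_\iota^2]\|$, a potentially dimension-dependent loss rather than a fixed factor of two --- but this aside has no bearing on the validity of the proof you gave.
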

	\noindent
	We start by rewriting $L$ so that we can apply the above result. For each $e=\{u,v\} \in \binom{V}{2}$ let $z_e \sim \ber(p)$ be an independent Bernoulli random variable with bias $p$, and also let $L_e $ be the Laplacian matrix of the graph $(U,\{e\})$, i.e. $(L_e)_{ab} = 1$ if $a=b=u$ or $a=b=v$, $(L_e)_{ab} = -1$ if $\{a,b\}=e$, and $(L_e)_{ab}=0$ otherwise. Define $S_e = z_e L_e - \ex[z_eL_e]$ and \smash{$S = \sum_{e \in \binom{V}{2}} S_e$}. One can now note that
	\begin{equation*}
		\ex[S_e]=0 \quad \text{and} \quad \nm{S_e} \le \max\left\{\nm{(1-p)L_e},\nm{-pL_e}\right\} \le 2\quad \text{for all $e\in \binom{V}{2}$.}
	\end{equation*}
	Letting $J$ be the all-ones $N \times N$ matrix we  also have
	\begin{equation*}
		S = \sum_{e \in \binom{V}{2}} z_e L_e - \sum_{e \in \binom{V}{2}} \ex[z_eL_e] = L- p \cdot (N \cdot I-J)\, .
	\end{equation*}
	Hence, \cref{th:bern} with $\lambda=2$ gives
	\begin{equation}
		\label{bern:appl}
		\Pr[\nm{L-p \cdot (N \cdot I-J)} \ge \theta] \le 2N \cdot \exp\left(-\frac{\theta^2/2}{\nm{ \sum_{e \in \binom{V}{2}} \ex[(z_e L_e - \ex[z_eL_e])^2]}+2\theta/3}\right) \,.
	\end{equation}
	Now we observe that the first term in the denominator can be rewritten as
	\begin{align*}
		\nm{ \sum_{e \in \binom{V}{2}} \ex\left[(z_e L_e - \ex[z_eL_e])^2\right]} & = \nm{ \sum_{e \in \binom{V}{2}} 2L_e\ex\left[(z_e-z_e\cdot p+p^2)\right]} \\
		& = \nm{ \sum_{e \in \binom{V}{2}} 2pL_e} \\
		& = 2p \nm{N \cdot I-J} \\
		& = 2pN	\, ,
	\end{align*}
	where the last equality holds because $K \coloneqq N \cdot I-J$ is the Laplacian matrix of the complete graph on $N$ vertices. If we now set $\theta = pN/10$ in~\eqref{bern:appl}, we get
	\begin{equation}
		\label{eq:specapprox}
		\Pr\left[\nm{L-p \cdot K} \ge \frac{1}{10}pN\right] \le  2N \cdot \exp\left(-\frac{pN}{600}\right) \, .
	\end{equation}
	Now, in order to use a Cheeger type of inequality, it is convenient to view $H$ as a regular graph, so that its normalized Laplacian matrix is simply a scalar multiple of $L$. If we denote with $\deg(u)$ the degree of $u \in U$ in $H$, we get by a Chernoff bound that
	\begin{equation}
		\label{eq:degrees}
		\Pr\left[\text{$\exists u \in U$, } \left|\deg(u) - p\cdot (N-1)\right| > \frac{1}{14}p \cdot (N-1)\right] \le 2N \cdot \exp\left(-\frac{p\cdot (N-1)}{600}\right) \,.
	\end{equation}
	From~\eqref{eq:specapprox} and~\eqref{eq:degrees}, we then have
	\begin{equation}
		\label{eq:sectolast}
			\Pr\left[\left(\nm{L-p K} \ge \frac{1}{10}pN\right)\text{ or } \left(\exists u \in U : \left|\deg_H(u)-\bar{d}\right|>\frac{1}{14} \bar{d}\right)\right] \le 4N \cdot \exp\left(-\frac{pN}{600}\right) \, .
	\end{equation}
	We conclude by showing that if the event in the above probability does not occur, then the event argument of the probability in the statement of \cref{prop:er} does not occur either. To see this, let $\lambda_2$ be the second smallest eigenvalue of $1/\bar{d}  \cdot L$, and $\mu_2$ be the second smallest eigenvalue of $1/(N-1) \cdot K$. Hence, $\nm{L-p K} < \frac{1}{10}pN$ implies $\lambda_2 \ge \mu_2-1/9$ by Weyl's inequality. Moreover, we know that $\mu_2 = N/(N-1)$, so $\lambda_2 \ge 9/10$. Finally, let $\emptyset \neq S \subsetneq U$ be a cut with $\vol(S) \le \vol(U\setminus S)$ such that $\Phi_H(S) = \Phi_H$. Assuming $\left|\deg_H(u)-\bar{d}\right|\le \frac{1}{14} \bar{d}$ for all $u \in U$, we have that $\vol(S) \in [(1-1/14)\bar{d} |S|, (1+1/14)\bar{d} |S|]$. Let $\vt{y} \in \mathbb{R}^U$ be defined as $y_u = 1/|S|$ if $u \in S$ and $y_u=-1/|U \setminus S|$ otherwise. By Courant-Fischer we then have
	\begin{align}
		\label{eq:courfisch}
		\lambda_2 = \min_{\substack{\vt{x} \in \mathbb{R}^U\setminus \{\vt{0}\}:\\ \langle \vt{x}, \ind \rangle= 0}} \, \frac{1}{\bar{d}} \cdot \frac{\vt{x}^\top L\vt{x}}{ \vt{x}^\top \vt{x}} \le \frac{1}{\bar{d}} \cdot  \frac{\vt{y}^\top L\vt{y}}{\vt{y}^\top \vt{y}}  = \frac{\uncut{S}{U}}{\bar{d}|S|} \cdot \frac{|U|}{|U\setminus S|}  \le \left(1+\frac{1}{14}\right)\Phi_H \cdot \frac{|U|}{|U\setminus S|} \, .
	\end{align}
	Again assuming $\left|\deg_H(u)-\bar{d}\right|\le \frac{1}{14} \bar{d}$ for all $u \in U$, one can check that $\vol(S) \le \vol(U\setminus S)$ implies $|U\setminus S| \ge 6/7 \cdot |S|$. Thus,~\eqref{eq:courfisch} gives
	\begin{equation*}
		\lambda_2 \le \left(1+\frac{1}{14}\right)\cdot \left(1+\frac{7}{6}\right) \Phi_H \le \frac{27}{10} \Phi_H \, ,
	\end{equation*}
	so $\Phi_H \ge 1/3$.
\end{proof}

\subsection{A useful inequality}
\label{apndx:ineq}
\math*
\begin{proof}
	Write $x=y\cdot \e^{\rho-1}$ for some $\rho \le 1$. Then $(y/x)^x \le \e^{y/\e}$ if and only if $(1-\rho)e^{\rho-1} \le 1/\e$. Define $f(\rho)=(1-\rho)\e^{\rho}$. We prove the claim by showing that $f(\rho) \le 1$ for all $ \rho \in (-\infty, 1]$.
	
	We begin by noting that $f$ is continuous over $(-\infty, 1]$. Then, we take the derivative of $f$, and we get $f'(\rho)=-\rho \e^\rho$. We observe that $f'(\rho) > 0$ for $\rho <0$, and $f'(\rho) > 0$ for $\rho >0$. We further observe that $f(0) = 1$, $f(1)=0$, and $\lim_{\rho \rightarrow -\infty} f(\rho) = 0$. Hence, $f(\rho)$ is upper-bounded by $1$ for all~$\rho \in (-\infty, 1]$.
\end{proof}

\end{document}